\begin{document}

\title{Logics for   Epistemic Actions:\\ Completeness,
Decidability, Expressivity
}

    \author{
           $\begin{array}{c}\mbox{Alexandru Baltag}\\
\mbox{ILLC, University of Amsterdam}\\
\mbox{1090 GE Amsterdam} \\
                     \mbox{Netherlands}\\
           \mbox{\tt  TheAlexandruBaltag@gmail.com}
\\
           \end{array}$
\and
            $\begin{array}{c}\mbox{Lawrence S.~Moss}\\
\mbox{Mathematics Department} \\ 
\mbox{Indiana University} \\ \mbox{Bloomington, IN 47405-7106 USA}\\
           \mbox{\tt lmoss@indiana.edu}\\
           \end{array}$
\and
    $\begin{array}{c}\mbox{S{\l}awomir Solecki}\\
\mbox{Mathematics Department} \\ 
\mbox{Cornell University} \\
\mbox{Ithaca, New York 14853 USA} \\
 \mbox{\tt ss3777@cornell.edu}\\
           \end{array}$      }



\date{ }

\maketitle

\begin{center}
\fbox{
\begin{minipage}{5.25in}
\medskip

This paper was intended to be the ``journal'' version of our 1998 paper 
``The logic  of common knowledge, public announcements,
and private suspicions.''  It was mainly written in 2004, with a few bibliographic additions
coming a few years later.
\medskip
\end{minipage}
}
\end{center}

\begin{abstract}
We consider dynamic versions of epistemic logic
as formulated in~\cite{kra}. 
That paper proposed a family
of logical languages $\lang(\bSigma)$ parameterized by
   {\em action signatures}.  In addition to 
$\lang(\bSigma)$, we consider two fragments
$\lang_0(\bSigma)$  and $\lang_1(\bSigma)$ of  it.
We review the syntax and semantics of these languages
$\lang_0(\bSigma), \lang_1(\bSigma)$, and $\lang(\bSigma)$,
as well as the  sound
proof systems for the validities in them.
It was shown in~\cite{miller} that validity in $\lang(\bSigma)$
is $\Pi^1_1$-complete, so there are no recursively axiomatized
complete logical systems  for  it.  On the positive  side,
this paper
proves the strong completeness of the axiomatization 
of $\lang_0(\bSigma)$ and the weak completeness
of $\lang_1(\bSigma)$.  The work involve a detour
into term rewriting theory.   And since at the heart of the argument
is modal filtration, it gives the finite model
property and hence decidability.  We also give a translation
of $\lang_1(\bSigma)$ into PDL, hence we obtain a second proof
of the
decidability of $\lang_1(\bSigma)$.  The paper closes with some 
results on expressive power.  These are mostly concerned with 
comparing $\lang_1(\bSigma)$ with modal logic together with
transitive closure operators.  We also answer a natural question
about the 
languages we get by  varying the action signature.
In particular, we prove that a logical language with operators
for private announcements is more expressive  than one for  
public announcements.
\end{abstract}

\pagebreak

\tableofcontents

\pagebreak

\section{Introduction}

One of the goals of  {\em dynamic epistemic  logic\/} is  to
 construct logical languages which allow one to represent 
a variety of possible types of changes affecting the
information states
of agents in a multi-agent setting.   
One wants (formal) logical systems with primitive operations corresponding
to (informal) notions  such as {\em public announcement},
{\em completely private announcement}, 
{\em  private announcement to one agent with suspicion by another},
etc.  And then after the logics are formulated, one would ideally
want technical tools to use in their study and application.

The full formulation of such logics is somewhat of a complicated
story.  In the first place, one needs a reasonable syntax.
On the semantic side, there is an unusual feature in that the
truth of a sentence at a point in one model often depends on
the truth of a related sentence in a {\em different\/} model.
In effect, the kinds of  actions we are interested in give
rise to functions, or relations more generally, on the class
of all possible models.  

There have been some proposals on logical systems for dynamic  epistemic
logic beginning  with the  work of 
Plaza~\cite{plaza},
Gerbrandy~\cite{gerbrandy98,gerbrandyphd}, and
Gerbrandy and Groeneveld~\cite{gerbrandygroeneveld}.
These papers formulated logical systems for the informal notions
of public announcement and completely private announcement.  They
left open the matter of axiomatizing the logics in the presence
of common knowledge operators.  (Without the
common knowledge operators, the systems are seen to be
variants on standard multi-model logic.)  And they also left open the 
question of the decidability of these systems.  
Our work began with complete axiomatizations for  these systems
and for more general systems.  Our results were presented
in~\cite{bms}.  As it happens, the logics which we constructed
in~\cite{bms} did not have the friendliest syntax.  
The first two authors pursued this matter for  some time and eventually
came to the proposals in~\cite{kra}.  
The logical systems for the validities of
the ultimate languages are  certainly related to those in~\cite{bms}.
But due to the different formulation of the overall syntax
and  semantics, all of the work on soundness and completeness had
to be completely reworked.  The main purpose of the present paper
is to present these results.  Secondarily, we present some technical
results  on the systems such as results  on expressive power.

\paragraph{Contents of this paper: a high level view}
This paper is a long technical development, and for this reason
we did not  include
a full-scale motivation of the logical systems themselves.
For a great deal of  motivational material,
 one should see~\cite{bms}.
Section~\ref{section-bigpicture} formulates all of the definitions
needed in the paper.   The presentation is based closely on
the work in~\cite{kra}, so readers familiar with that paper
may use Section~\ref{section-bigpicture} as a review.
Other readers of this paper could take the logical systems here to
simply be extensions  of
propositional 
dynamic logic which allow one to  make 
``transitions from model to model''
in addition to  transitions ``inside a given model.''
In addition, Section~\ref{section-bisimulation} is new in this paper.

The logical languages studied in this paper are presented in 
Section~\ref{section-language}.
Section~\ref{section-Cn} presents some 
examples of the semantics,   chosen to foreshadow
work in Section~\ref{section-expressivepower} on expressive power.
The logical system is presented in Section~\ref{section-langzero},
along with a soundness result for most of the system.    One especially significant
inference rule, called the \emph{Action Rule} is studied in Section~\ref{section-biggersystem}.
The completeness theorem for the logic comes in Section~\ref{section-completeness}.
The final section deals with questions of expressive power, and it may be read
after Section~\ref{section-langzero}.

Two aspects of our overall machinery are worth pointing out.
The first is the use of the \emph{canonical action model}.   This is a semantic
object built from syntactic objects (sequences of \emph{simple actions},  terms  
in our language $\lang(\bSigma)$).    We would like to 
think that    the use of
the canonical model makes for a more elegant presentation than one would
otherwise have.    

The second feature is a \emph{term rewriting system}
for dealing with assertions in the language.  The 
semantic equivalences in dynamic epistemic logic are sufficiently complicated that
the `subsentence' relation is not the most natural
or useful one for many purposes.   Instead,
one needs to handcraft various ordering relations for use in inductive proofs,
or in defining translations.   For just one example, in the logic of public announcements
one has an \emph{announcement-knowledge axiom} in a form such as
\begin{equation}
[!\phi]\necc \psi 
\iiff
(\phi\iif \necc [!\phi]\psi).
\label{law}
\end{equation}
This is for a logic with just one agent, and in this discussion we are forgetting
about common knowledge.   To prove that the logic has a translation $t$
back to ordinary modal logic, one wants to use the equivalence 
in (\ref{law}) above
as the key step in the translation, defining $([\phi]\necc]\psi)^t$ to
be $\phi^t\iif (\necc[!\phi]\psi)^t$.   But then one needs to have some reason
to say that both $\phi$ and (more critically) $\necc[!\phi]\psi$
are of \emph{lower complexity} than the original 
formula $[\phi]\necc\psi$.     There are several ways to make this precise.
One is to directly assign an element of some well-founded set to each 
formula and then use this as a measure of complexity.   This is done in~\cite{dhk},
and the well-founded set is the natural numbers.     Our treatment is 
different,  mainly because it goes via term rewriting.  
In effect, one takes the well-founded set to be the sentences themselves,
with the order given by substitution using laws such as (\ref{law}), but oriented
in a specified direction (left-to-right, for example).   
Then the fact that we have a well-founded relation is a result one proves
about oriented sets of laws (term rewriting systems).   
In our case, we use an interpretation constructed by hand.

Our term rewriting system is presented and studied in Section~\ref{section-proofs}.
It actually deals not with $\lang(\bSigma)$ but with a different language called $\lang_1(\bSigma)$.
 An important by-product of our completeness proof via rewriting
 is a Normal Form Theorem for $\lang_1(\bSigma)$.
 This is an interesting result in itself, and it does not follow from other completeness
 proofs such as those in~\cite{dhk}, or in the paper~\cite{dhk}, which we discuss next.
  In order to use the rewriting system.
  one needs to know that substitution of ``equivalent'' objects preserves
``equivalence''  and that the proof system itself is strong enough to 
reduce sentences to normal form.
 Unfortunately, this ``obvious''  point takes a great deal of work.
 The details are all in this paper, and to our knowledge no other source presents
 complete proofs on these matters.

\paragraph{Comparison with some other work}
The first version of this paper is our  1998 conference publication~\cite{bms}.
We ourselves had several versions of this paper issued as technical
reports or posted on web sites, and so in some sense the results 
here are \emph{public} but not \emph{published}.   
In the past ten years, the subject  of dynamic epistemic logic
has taken off in a serious way.    It sees several papers a year.
But it seems fair to say that the overall topics of investigation are not 
the logical systems presented in the any of the original papers but rather
are adaptations of the logics to settings involving probability, belief revision,
quantum information, and the like.  
Still, the work reported here has been the subject of several
existing  publications.
And so it makes sense
for us to make the case that the results in this paper are still relevant
and do not follow from previously published results.

The book \emph{Dynamic Epistemic Logic} by 
Hans van Ditmarsch, Wiebe van der Hoek, and Barteld Kooi~\cite{dhk}
is a textbook presentation containing
 some of the content of this paper and~\cite{kra},
with additional material on model checking, belief revision, and other topics.
Section 6.4 presents the syntax of what it calls \emph{action model logic}
and writes as $\lang^{\rm act}_{KC\otimes}(A,P)$.
At first glance, the syntax of $\lang^{\rm act}_{KC\otimes}(A,P)$
seems fairly close to the language $\lang(\bSigma)$ presented in~\cite{kra}
and reviewed
in Section~\ref{section-language} of this paper.  
A relatively small difference concerns the \emph{action models} in 
$\lang^{\rm act}_{KC\otimes}(A,P)$.
(In general, action models are like Kripke models together with
a ``precondition'' function mapping worlds to sentences.)
These are required to have the property that
each agent's accessibility relation is an equivalence relation; our treatment
is more general and hence can present logics for epistemic actions such as
``cheating'' in games.   
But the main difference is that our language $\lang(\bSigma)$
uses what everyone would take to be  a bona fide
 syntax: sentences and program expressions are
linearly ordered strings of symbols.   
In contrast, the syntax of $\lang^{\rm act}_{KC\otimes}(A,P)$
employs action models \emph{directly}.     Structured but unordered 
objects occur inside of sentences.  In a different context, it would be 
like studying formal language and their connection to automata by studying 
something that was like a regular expression but allowed 
 finite automata to directly occur inside some syntactic object.
  
  We have no objection ourselves to 
this move.    In fact, this was the presentation we chose in
our first version~\cite{bms} of this paper.
But over the years we have found quite a lot of resistance to
this presentation
on the grounds that one is ``mixing syntax and semantics.''   Again, we do not
assert this objection, but we \emph{address} it by employing the large technical
machinery that we see in~\cite{kra} and also Sections~\ref{section-bigpicture}
and~\ref{section-language}.   (We are keenly aware of the irony of our being
criticized for ``mixing syntax and semantics'' in the ten-year-old publication~\cite{bms},
and after we ``un-mixed'' them in~\cite{kra}, we find others making the same natural
move.)

 We would also like to mention  the paper ``Logics of Communication and Change'',
  van Benthem,   van Eijck and  Kooi~\cite{bek}.  
This  paper   presents completeness theorem
for a logical system in the same family as ours.   
But as it happens, the paper studies a different system.  
To see the differences, it is worthwhile to note that the source of much 
of the work in this area concerns assertions of common knowledge following
an action of some sort.   In our setting, these would be written $[\alpha]\necc^*\phi$.
It turns out that there is no equivalence of the form (\ref{law}) for assertions like
this.   This leads us to our Action Rule, an inference rule allowing the derivation
of sentences of the kind under discussion.   
The idea in~\cite{bek}
 is to start with propositional dynamic logic, called E-PDL
in the paper (`E' for `epistemic').   Then one adds
modalities corresponding to update models
$[{\sf U},e]$; this modality would be like our $[\alpha]$, but in~\cite{bek}
as in~\cite{bms} and~\cite{dhk}, this {\sf U} is an action model
rather than a bona fide syntactic object.   Then one proves that 
the addition can be translated away.   This very interesting and suggestive
result is not available in our setting.      The paper~\cite{bek} notes
that ``Indeed, reduction axioms are not available, as the logic with
epistemic updates is more expressive than the logic without them.
We think this is an infelicity of design.''    This is somewhat debatable,
since the additional expressive power in E-PDL is not something in wide use.
``We have found no practical use for these [complex combinations of agent
acessibility relations] at present, but they are the price that we cheerfully pay
for having a language living in expressive harmony with its dynamic superstructure
$\ldots$.''   It also holds that ``it's completeness theorem [from~\cite{bms}] is correspondingly
mess.''   This extra ``mess'' is
due  partly to the strictly ``syntactic syntax'', partly to the explicit rewriting work
(taken for granted in~\cite{bek}), and partly to the noted ``disharmony''.
It is not known whether E-PDL is actually stronger than the logical
systems in this paper.   If it is, then this would show that the completeness result here
is definitely not a special case of that in~\cite{bek}.   But again, the matter is open at this time:
one cannot derive our result from~\cite{bek} since the systems are
prima facie different.

\Section{Definitions}

This section provides all of our definitions.  It is short on 
motivation, and we have situated the 
examples in  Section~\ref{section-Cn},
 following all the definitions.
See  also Sections 1 and 2 of~\cite{kra} for the motivation from
epistemic  logic and for a more leisurely presentation.

\label{section-big-picture}
\label{section-bigpicture}

\subsection{State models and propositions}
\label{section-state-models}

We fix a set $\AtSen$ of {\em atomic sentences\/} and also
a set $\Agents$ of {\em agents}.  All of our definitions are
relative to these sets.

A  {\em state model \/}   is a triple
 ${\bS}=(S,\arrowAgents_{\!\!\bS},\|\cdot\|_{\bS})$
consisting of a set $S$ of ``states''; a family 
$\arrowA_{\!\!\bS}$ of
 binary accessibility
relations $\arrowA_{\!\!\bS}\subseteq S\times S$, one for each agent $A\in\Agents$;
and a ``valuation'' (or a ``truth'' map)
$\map{{\|.\|}_{\bS}}{\AtSen}{\pow(S)}$, assigning to
each   atomic sentence $p$ a set $\|p\|_{\bS}$ of states.
These are exactly Kripke models generalized by having one
accessibility relation for each agent.   We use the terminology of
state models because there are other kinds of  models, 
{\em action models\/}
and {\em  program models}, in our  study.
When dealing with a single fixed state model $\bS$, we often drop
the subscript  $\bS$ from
all the notation.

\begin{definition}
Let $\StateModels$ be the collection of all state models.
 An {\em epistemic proposition\/} is an operation 
${\propositionphi}$ defined on $\StateModels$ such that for all
$\bS\in \StateModels$, $\propositionphi_{\bS}\subseteq S$.
\end{definition}
\label{section-epistemic-prop-definition}

\noindent
The collection of epistemic propositions is closed in
various ways.

\begin{enumerate}
\item For each atomic sentence $p$ we have an atomic proposition $\propositionp$ 
with $\propositionp_{\bS} = \|p\|_{\bS}$.
\item If
$\propositionphi$ is an epistemic proposition,  
then so is
$\nott\propositionphi$, where 
$(\nott\propositionphi)_{\bS} = S\setminus\propositionphi_{\bS}$.
\item  \label{partc}
If  $C$ is a set or class of epistemic propositions, then $\bigwedge
C$ is an epistemic proposition, 
where $$\semanticsoff{\bigwedge C}{\bS} \quadeq
\bigcap \set{\propositionphi_{\bS} : 
\propositionphi\in C}.$$
\item Taking $C$ above to be empty, we
have an {\em ``always true''} epistemic proposition $\trueproposition$, with
$ \trueproposition_{\bS} =
S$.
\item  We also may take $C$ in part \ref{partc} to be a two-element set
$\set{\propositionphi,\propositionpsi}$; here we write 
$\propositionphi\andd\propositionpsi$ instead of $\bigwedge\set{\propositionphi,\propositionpsi}$.
We see that  if $\propositionphi$ and $\propositionpsi$ are
epistemic propositions, then so is
$\propositionphi\andd\propositionpsi$, with
 $\semanticsoff{\propositionphi\andd\propositionpsi}{\bS}
= 
\propositionphi_{\bS}\cap \propositionpsi_{\bS}$.
\item \label{partnecc}
If $\propositionphi$ is an epistemic proposition and $A\in \Agents$, then
$\necc_A\propositionphi$ is an epistemic proposition, with
\begin{equation}
\semanticsoff{\necc_A\propositionphi}{\bS} \quadeq \set{s\in S: \mbox{if $s\arrowA
t$, then
$t\in\propositionphi_{\bS}$}}.
\label{eq-semanticsbox}
\end{equation}
\item 
If $\propositionphi$ is an epistemic proposition and $\BB\subseteq \Agents$, then
$\necc^*_{\BB}\propositionphi$ is an epistemic proposition, with
$$\semanticsoff{\necc^*_{\BB}\propositionphi}{\bS} \quadeq \set{s\in S: \mbox{if $s\arrowstarBB t$,
then
$t\in\propositionphi_{\bS}$}}.
$$
Here $s\arrowstarBB t$ iff there is a sequence
$$ s = u_0 \quad 
\stackrel{A_1}{\rightarrow} \quad u_1 \quad
 \stackrel{A_2}{\rightarrow}  \quad 
\cdots \stackrel{A_n}{\rightarrow}  \quad u_{n+1} = t$$
where $A_1,\ldots, A_n\in \BB$.
In other words, there is a 
sequence of arrows
labelled with agents from the set $\BB$ taking $s$ to $t$.
We allow $n = 0$ here, so $\arrowstarBB$ includes
the identity relation on $S$.
\end{enumerate}

\paragraph{Syntactic and semantic notions}
It will be important for us to make a sharp distinction between
syntactic and semantic notions.  We have already begun to do
this, speaking of
atomic {\em sentences\/} and atomic {\em propositions}.
The difference for us is that atomic sentences are entirely
syntactic objects: we won't treat an atomic sentence $p$ as 
anything except an unanalyzed mathematical object.  On the
other hand, this atomic sentence $p$ also has associated
with it the atomic proposition $\propositionp$.  
As defined in point 1,
$\propositionp$ will be a function whose domain is the
(proper class of) state models, and it is defined by
\begin{equation}
\propositionp_{\bS} \quadeq
\set{s\in S : s \in \|p\|_{\bS}}.
\label{eq-pedantic}
\end{equation}
This difference may seem pedantic at first, and surely
there are times when it is sensible to blur it.  But
for various reasons that will hopefully become clear,
we need to insist on it.    

\medskip

Up until now, the only syntactic objects
have been the atomic sentences $p\in\AtSen$.  But we can build
the collections of {\em finitary\/} and {\em infinitary
sentences\/} by the same definitions that we have seen,
and then the work of the past section is the {\em semantics\/}
of our logical languages.   For example, we have sentences
$p\andd q$, $\necc_A\nott p$, and $\necc^*_{\BB} q$.  These 
then  have corresponding epistemic propositions as their semantics:
$\propositionp\andd \propositionq$, $\necc_A\nott \propositionp$,
and $\necc^*_{\BB} \propositionq$,
respectively.   Note that the latter is a properly infinitary proposition
(and so  $\necc^*_{\BB} q$ is a properly infinitary sentence);
it abbreviates an infinite conjunction.

\subsection{Updates}
\label{section-epistemic-actions}
\label{section-updates}

A {\em transition relation\/} between state models $\bS$ and $\bT$ is 
a relation 
 between the sets $S$ and $T$;
i.e., a subset of $S\times T$.
 An {\em update\/} $\update$ is a pair
of operations
 $$\actionalpha \quadeq
(\bS\mapsto\actionmodel{\bS}{\actionalpha},\bS\mapsto \actionalpha_{\bS}),$$
where for each $\bS\in\StateModels$,
$\map{\actionalpha_{\bS}}{\bS}{\actionmodel{\bS}{\actionalpha}}$
 is a transition relation. 
We call $\bS\mapsto\actionmodel{\bS}{\actionalpha}$ the {\em update map},
and $\bS\mapsto \actionalpha_{\bS}$ the {\em update relation}.

We continue our general discussion by noting that 
 the collection of updates is closed in various ways.
\begin{enumerate}
\item {Skip}:
there is an update called ``Skip'' and denoted
$\skippaction$, with $\modelaction{\bS}{\skippaction} =
\bS$, and
${\skippaction}_\bS$ is the identity relation on $\bS$. 
\item {\em Sequential Composition}: if $\actionalpha$ and $\actionbeta$ are epistemic updates, then 
their composition $\actionalpha\then \actionbeta$ is again an epistemic update, 
where $\bS(\actionalpha\then\actionbeta) = 
\modelaction{\modelaction{\bS}{\actionalpha}}{\actionbeta}$, and
$(\actionalpha\then\actionbeta)_\bS = 
\actionalpha_{\bS}
 \then {\actionbeta}_{\modelaction{\bS}{\actionalpha}}$. Here, we use on the
right side the usual composition $\then$ of relations.\footnote{We 
are writing relational composition in left-to-right order in this
paper.}
\item \label{pointx} {\em Union} (or {\em Non-deterministic choice}):
If $X$ is any set of epistemic updates,
then the union $\Union X$ is an epistemic update,
defined as follows. 
For each $\bS$, the set of states of the model 
$\bS(\Union X)$
is the {\em disjoint union} of all the sets of states in each model $\bS(\actionalpha)$
for $\actionalpha\in X$: 
$$\set{(s,\actionalpha) : \actionalpha\in X \mbox{ and } s\in \bS(\actionalpha)}.$$
Similarly, each accessibility relation $\arrowA$ is defined as
the disjoint union of the corresponding accessibility relations in
each model: 
$$ (t,\actionalpha) \arrowA (u,\actionbeta)
\quadiff \mbox{ if $\actionalpha=\actionbeta$ and $t\arrowA u$ in 
$\bS(\actionalpha)$}.$$
The valuation $\|p\|_{\bS(\Union X)}$ in $\bS(\Union X)$ is the 
disjoint union of the valuations in each
state model:
$$\|p\|_{\bS(\Union X)} \quadeq \set{(s,\actionalpha): \actionalpha\in X \mbox{ and }
s \in \|p\|_{\bS(\actionalpha)}}.$$
Finally, the 
update relation $(\Union X)_{\bS}$ 
between $\bS$ and $\bS(\Union X)$ is 
the union of all the update relations $\actionalpha_{\bS}$:
$$  t \ (\Union X)_{\bS}\ (u,\actionalpha)
\quadiff 
t \ \actionalpha_{\bS}\ u.$$
\item Special case: {\em binary union}.
The (disjoint) union of two epistemic updates $\actionalpha$ and
$\actionbeta$ is an update
$\actionalpha\union  \actionbeta$, given by
$\actionalpha\union\actionbeta =\Union\{\actionalpha,\actionbeta\}$.
\item Another special case: {\em Kleene star} ({\em iteration}).
We have the operation of Kleene star on updates:
$$\actionalpha^{*} \quadeq
\Union\set{\skippaction,\actionalpha,\actionalpha\cdot\actionalpha, \ldots,
\actionalpha^n,\ldots}$$
where $\actionalpha^n$ is recursively defined by $\actionalpha^0=
\skippaction$,
$\actionalpha^{n+1}=\actionalpha^n \then \actionalpha$.
\item {Crash}: We can also take $X = \emptyset$ in
part~\ref{pointx}.  This gives an update called ``Crash'' and denoted
 $\actionzero$ such that
$\bS(\actionzero)$ is the empty model for each $\bS$, and ${\actionzero}_{\bS}$ is the empty relation.
\end{enumerate}

The operations $\actionalpha \then \actionbeta$, $\actionalpha \union\actionbeta$
 and $\actionalpha^*$ are
the natural analogues of the operations of union of relations,
relational composition and iteration, and of
the regular operations on programs in PDL. 
The intended meanings
are: for $\actionalpha \then \actionbeta$,
 sequential composition ({\em do $\actionalpha$, then do $\actionbeta$});
for $\actionalpha \union\actionbeta$, non-deterministic choice ({\em do 
either $\actionalpha$ or
$\actionbeta$}); for
$\actionalpha^*$, iteration ({\em repeat $\actionalpha$ some finite number of times}).

\paragraph{Standard updates}
An update $\actiona$ is {\em standard\/} if for
all state models $\bS$, $(\actiona_{\bS})^{-1}$ is a partial function.
(The relation  $(\actiona_{\bS})^{-1}$
 the inverse of the update
relation; it is a subset of $\bS(\actiona)\times \bS$.)

\begin{proposition} The update
$\skippaction$ is
 standard.   The composition of standard updates
is standard, as is any union of standard updates.
\label{proposition-standard}
\end{proposition}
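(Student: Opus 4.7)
The plan is to verify the three claims by unfolding the definitions from the preceding list and using the elementary fact that the class of partial functions is closed under (relational) composition.

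For Skip, the relation $\skippaction_{\bS}$ is the identity on $S$, so $(\skippaction_{\bS})^{-1}$ is also the identity on $S$, which is trivially a partial function. Next, for sequential composition $\actionalpha\then\actionbeta$, the defining clause gives
\[
(\actionalpha\then\actionbeta)_{\bS} \quadeq \actionalpha_{\bS}\then\actionbeta_{\modelaction{\bS}{\actionalpha}},
\]
so taking inverses in the usual (order-reversing) way yields
\[
\bigl((\actionalpha\then\actionbeta)_{\bS}\bigr)^{-1} \quadeq \bigl(\actionbeta_{\modelaction{\bS}{\actionalpha}}\bigr)^{-1}\then \bigl(\actionalpha_{\bS}\bigr)^{-1}.
\]
By the standardness hypotheses on $\actionalpha$ and $\actionbeta$, each factor on the right is a partial function, and the composition of two partial functions is again a partial function. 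Hence $\actionalpha\then\actionbeta$ is standard.

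For union, let $X$ be a set of standard updates and fix $\bS$. Recall that an element of $\bS(\Union X)$ is a tagged pair $(u,\actionalpha)$ with $\actionalpha\in X$ and $u\in \bS(\actionalpha)$, and
\[
t\ (\Union X)_{\bS}\ (u,\actionalpha) \quadiff t\ \actionalpha_{\bS}\ u.
\]
Thus $(u,\actionalpha)\ \bigl((\Union X)_{\bS}\bigr)^{-1}\ t$ iff $u\ (\actionalpha_{\bS})^{-1}\ t$. Because the tag $\actionalpha$ appears explicitly in the input $(u,\actionalpha)$, it selects a unique $\actionalpha_{\bS}$, and standardness of $\actionalpha$ then yields at most one $t$ with $u\ (\actionalpha_{\bS})^{-1}\ t$. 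Hence $\bigl((\Union X)_{\bS}\bigr)^{-1}$ is a partial function, as required.

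There is no real obstacle in this argument; the only point that deserves attention is the bookkeeping in the union case, namely noticing that the tagging by $\actionalpha$ in the disjoint union prevents the inverse relation from collapsing different $\actionalpha_{\bS}$'s together, so that standardness is preserved without any cardinality restriction on $X$.
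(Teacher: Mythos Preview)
Your argument is correct. The paper itself does not supply a proof of this proposition; it is stated and then used, so there is no approach to compare against. Your direct verification from the definitions is exactly what is needed, and the one nontrivial observation---that the disjoint-union tagging in $\bS(\Union X)$ makes the inverse relation factor through a single $\actionalpha_{\bS}$---is handled cleanly.
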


\paragraph{Updates Determine Dynamic
Modalities}\label{section-newop} 
 If $\propositionphi$ is an epistemic proposition and
$\actionalpha$ an update,
then $[\actionalpha]\propositionphi$ is an epistemic proposition defined by
\begin{equation}
 \semanticsoff{[\actionalpha]\propositionphi}{\bS}  \quadeq \set{s\in S: 
\mbox{for all $t$ such that $s\ \actionalpha_{\bS}\ t$,  $t\in
\propositionphi_{\actionmodel{\bS}{\actionalpha}}$}}.
\label{eq-alpha-phi}
\end{equation}
We should compare (\ref{eq-alpha-phi}) and
(\ref{eq-semanticsbox}).  The point is that we may treat updates
in a similar manner to other box-like modalities; the 
structure given by an update allows us to do this.  

We also define the dual proposition  $\pair{\actionalpha}\propositionphi$
by
$$ \semanticsoff{\pair{\actionalpha}\propositionphi}{\bS}  \quadeq \set{s\in S: 
\mbox{for some $t$ such that $s\ \actionalpha_{\bS}\ t$,  $t\in
\propositionphi_{\actionmodel{\bS}{\actionalpha}}$}}.
$$

\subsection{Action models and program models}
\label{section-simple-action-structures}

 Let $\Phi$ be the collection of
all epistemic propositions.
An   {\em (epistemic)  action model\/}  is a triple
$\bSigma=(\Sigma, \arrowAgents, \Presemantic)$, 
where $\Sigma$ is a set of {\em simple actions},
$\arrowAgents$ is an $\Agents$-indexed family of binary
relations on $\Sigma$, and $\map{\Presemantic}{\Sigma}{\Phi}$.

\rem{
An epistemic action model 
is similar to  a  state
model. But we call the members
of the set $\Sigma$  ``simple actions'' (instead
of states). 
We use different notation
and terminology because
of a technical difference and a bigger conceptual point.
The technical difference is that $\map{\Presemantic}{\Sigma}{\Phi}$
(that is, the codomain is the collection
of all epistemic propositions). 
The conceptual point is that we
think of ``simple'' actions as being {\em deterministic} actions
whose epistemic impact is {\em uniform on states} (in the sense
explained in our Introduction). 
So we think of ``simple'' actions as
particularly simple kinds of deterministic actions, whose appearance
to agents is uniform: the agents' uncertainties concerning the
current action are independent of their uncertainties concerning the
current
state. This allows us to abstract away the action uncertainties
and represent them as a Kripke structure of actions, in effect
forgetting the state uncertainties. 

As announced in the Introduction, this uniformity of appearance is
restricted only to the action's domain of applicability, defined
by its preconditions. Thus, 
for a simple action $\sigma\in\Sigma$,
we interpret $\Presemantic(\sigma)$ as giving the {\em
 precondition  of   $\sigma$}; this is
the fact that needs to hold at a state
(in a state model) in order for action $\sigma$ to
be ``accepted'' in that state. So   $\sigma$
will be executable in  $s$ iff  its precondition 
$\Presemantic(\sigma)$ holds at $s$.

At this point we have mentioned the ways in which 
action models and state models differ.  What they
have in common is that they use
accessibility
relations to express each agent's uncertainty concerning
something.  For state models, the uncertainty has to do with 
which state is the real one; for action models,
it has to do with which action is taking place.
}

\rem{
\begin{ex}
{\rm
Here is an action model:
We take $\Sigma=\set{\sigma,\tau}$; $\sigma\arrowA\sigma$, $\sigma\arrowB\tau$,
$\tau\arrowA\tau$, $\tau\arrowB\tau$; $\Presemantic(\sigma) = \propositionp$,
and $\Presemantic(\tau)= \trueproposition$; recall that
$\trueproposition$ is the {\em ``always true''} proposition. 

This action model may be used in the modeling
 a {\em completely private announcement to $A$ of the proposition 
$\propositionp$}.
\label{ex-uio}
}\end{ex}
}

\label{section-epistemic-program-models}
\label{section-epistemic-action-models}

To model non-deterministic actions and non-simple actions (whose appearances
to agents
are not uniform
on states), we define {\em epistemic program models}. In effect, this means
that we decompose complex actions (`programs') into ``simple'' ones: 
they correspond to sets of simple, deterministic actions from a given
action
model.

A  {\em program model\/}  is defined as
a pair $\pi = (\bSigma,\Gamma)$ consisting
of an action model
$\bSigma$ and a set $\Gamma\subseteq \Sigma$ of
\emph{designated simple actions}. Each of the simple actions $\gamma\in\Gamma$
may be thought of as a possible ``deterministic
resolution'' of the non-deterministic action $\pi$.
As announced above, the intuition about the map called $\Presemantic$ is that
an action is executable in a given state only if all its preconditions
hold at that state.
We often spell out an epistemic program model
 as $(\Sigma, \arrowAgents, \Presemantic,\Gamma)$
rather than $((\Sigma, \arrowAgents, \Presemantic),\Gamma)$.
Also, we usually drop the word ``epistemic'' and just refer
to these as {\em program models}.

\subsection{The update product}

\label{section-update-product-1}

Given a state model $\bS=(S,\arrowAgents_{\!\!\bS},\|\cdot\|_{\bS})$ 
and an action model
$\bSigma=(\Sigma,\arrowAgents,\Presemantic)$,
we define their {\em update product\/}
to be the  state model  
$$\bS\otimes\bSigma \quadeq (S\otimes\Sigma,\arrowAgents,\|.\|_{\bS\otimes\bSigma}),$$
given by
the following: the new states are pairs of old states $s$
and simple actions $\sigma$
which  are ``consistent'',  in the sense that all  preconditions
of the action $\sigma$ ``hold'' at the state $s$
\begin{equation}
S\otimes\Sigma \quadeq \set{(s,\sigma)\in S\times \Sigma:
s\in\Presemantic(\sigma)_{\bS}}.
\label{eq-consistent}
\end{equation}
The new accessibility relations are taken to be
the ``products'' of the corresponding accessibility relations
in the two frames; i.e., for $(s,\sigma),(s',\sigma')\in \bS\otimes\bSigma$ we put
\begin{equation}
(s,\sigma)\arrowA (s',\sigma') \quadiff s\arrowA s' \mbox{ and
}\sigma\arrowA\sigma',
\label{eq-product}
\end{equation}
and the new valuation map
$\map{{\|.\|}_{\bS\otimes\bSigma}}{\AtSen}{\pow(S\otimes\Sigma)}$
is essentially given by the old valuation:
\begin{equation}
\|p\|_{\bS\otimes\bSigma} \quadeq \set{(s,\sigma)\in
S\otimes\Sigma:s\in\|p\|_{\bS}}.
\label{eq-new-valuation}
\end{equation}

\rem{
\begin{remark}
The update operation is
defined whenever $\bS$ is a  state model and whenever 
  the codomain of $\Presemantic$ is the 
collection of all epistemic propositions. 
A different point concerns the interpretations
$\semantics{\propositionphi}{\bS\otimes\bSigma}$ in the  
product $\bS\otimes\bSigma$.  
By (\ref{eq-new-valuation}) and 
(\ref{eq-pedantic}) from Section~\ref{section-guide},
we see that 
 $$\propositionp_{\bS\otimes\bSigma} \quadeq \set{(s,\sigma)\in
S\otimes\Sigma:s\in\propositionp_{\bS}}.$$ 
We should stress
that this last fact holds only
for {\em atomic\/} propositions $\propositionp$.  It
will not hold in general for 
arbitrary epistemic propositions $\propositionphi$.
\end{remark} }

\rem{
The {\it update $\bS\otimes\alpha$ of a state model $\bS$ with an
update $\alpha$} is defined as just the model
$\bS\otimes\alpha=:\bS\otimes\bSigma(\alpha)$. In $\bS\otimes\alpha$,
the {\it update of a state $s\in\bS$ with the update
$\actionalpha$} is the set of all possible outputs
$(s,\gamma)\in\bS\otimes\alpha$, with $\sigma\in |\alpha|$; so we put:
$$s\otimes\alpha=:\{(s,\sigma)\in\bS\otimes\alpha:\sigma\in |\alpha|\}.$$
We write $s\arrowalpha s'$ iff $s'\in s\otimes\alpha$, and we call
this {\it the $\alpha$-transition relation}: it is the binary
accessibility relation\footnote{Observe that unlike in 
$PDL$, our transition relations are between states {\it living in
different epistemic
models}. This corresponds that to the fact that such actions affect
the global epistemic structure of the model: for each agent, the
set of ``epistemically possible'' states may change after such an
action.
Theoretically, it is possible to regain the one-model picture by
taking the sum of all the models obtainable by any iterated 
application of any update to the initial model, and consider this as one big
model. However, this is computationally unfeasible: the model
obtained in this way is huge (usually infinite).}
describing the input-output behavior of the
action $\alpha$. 
Finally, we define the {\it update $(\bS,s)\otimes\alpha$ of an epistemic state $(\bS,s)$
with an update $\actionalpha$}  by:
$$(\bS,s)\otimes\alpha=:(\bS\otimes\alpha,s\otimes\alpha).$$
}

\rem{
\paragraph{Intended interpretation} The update
product restricts the
full Cartesian product $S\times\Sigma$ to the smaller set
$S\otimes\Sigma$ in order to insure that  {\em states survive actions\/}
in the appropriate sense.

For each agent $A$, the product arrows $\arrowA$ on the output frame represent
agent  $A$'s epistemic uncertainty about the output state. The
intuition is that the components of our action models are ``simple
actions'', 
so the uncertainty regarding the action is
assumed to be independent of the uncertainty regarding the current
(input) state. This independence allows us to ``multiply'' these
two uncertainties in order to  compute  the
uncertainty regarding the output state: if whenever the input state is
$s$, agent $A$ thinks the input might be
some other state $s'$, and if
whenever the current action happening is $\sigma$, agent $A$ thinks
the current action might be some other action $\sigma'$, then
whenever the output state $(s,\sigma)$ is reached, agent $A$ thinks
the alternative output state $(s',\sigma')$ might have been reached.
Moreover, these are the only output states that $A$ considers possible.

As for the valuation, we essentially take the same valuation as
the one in the input model. If a state $s$ survives an action,
then the same facts $p$ hold at the output state
$(s,\sigma)$
as at the input state $s$. This means that {\it our actions, if
successful,
do not change the facts}. This condition can of course be relaxed
in various ways, to allow for fact-changing actions. But
in this paper we are primarily concerned with  purely epistemic 
actions, such as the earlier examples in this section.
}

\subsection{Updates induced by program models}
\label{section-epistemic-program-models-give-updates}

Recall that we defined updates
 in Section~\ref{section-epistemic-actions}.
And above, in Section~\ref{section-epistemic-program-models},
we defined epistemic  {\em program models}.  
Note that there is a big difference: 
the updates are pairs of  operations on the class of all state models,
and the program models are typically finite structures. 
 We think of {\it program models\/} as capturing specific mechanisms,
 or algorithms, for inducing updates. This connection is
made precise in the following definition.

\begin{definition}
Let $(\bSigma, \Gamma)$ 
be a  program model.  We define 
an update which we also denote $(\bSigma, \Gamma)$ as follows:
\begin{enumerate}
\item $\bS(\bSigma,\Gamma) = \bS \otimes \bSigma$.
\item $s\ (\bSigma,\Gamma)_{\bS}\ (t, \sigma)$ 
iff $s = t$ and $\sigma\in \Gamma$.
\end{enumerate}
We call this {\em the update induced by $(\bSigma, \Gamma)$.}
\end{definition}

Note that   updates
of the form $(\bSigma, \Gamma)$ 
have the property that for all state models $\bS$,
the inverse of the update relation
$(\bSigma, \Gamma)_{\bS}$ 
is a partial function.  That is, these updates are standard.

\subsection{Operations on program models}

\label{section-operations-on-program-models}

\paragraph{$\skippaction$ and  $\actionzero$} 
We define program models $\skippaction$ and $\actionzero$ as follows:
$\skippaction$ is a one-action set $\set{\sigma}$ with $\sigma\arrowA\sigma$
for all $A$, $\pre(\sigma) = \trueproposition$, and with distinguished
set $\set{\sigma}$.   The point here is that the update induced by this
program model is exactly the update $\skippaction$  from
Section~\ref{section-updates}. 
 We
purposely use the same notation.  Similarly, we let 
$\actionzero$ be the empty program model.  Then its induced 
update is what we called $\actionzero$ in 
Section~\ref{section-updates}.

\paragraph{Sequential Composition}
In all settings involving ``actions'' in some sense or other, sequential 
composition is a natural operation.   In our setting, we would like
to define a composition operation on program models, corresponding to
the sequential composition of updates. 
Here is the relevant definition.

Let $\bSigma = (\Sigma, \arrowAgents,\presemantic_\Sigma,\Gamma_{\bSigma})$ and
$\bDelta = (\Delta, \arrowAgents,\presemantic_\Delta,\Gamma_{\bDelta})$
 be  program models. We define the {\em composition\/} 
$$\bSigma\then \bDelta
\quadeq (\Sigma\times\Delta,
\arrowAgents,\presemantic_{\Sigma\then\Delta},\Gamma_{\Sigma\then\Delta})
$$ to be the
following   program model:
\begin{enumerate}
\item $\Sigma\times \Delta$ is the cartesian product
of the sets $\Sigma$ and $\Delta$.
\item $\arrowAgents$ in the 
composition $\bSigma\then \bDelta$
is the family
of product relations, in the natural way:
$$(\sigma, \delta) \arrowA (\sigma', \delta') \quadiff \sigma\arrowA \sigma'\mbox{ and  }
\delta\arrowA\delta'.$$

\item  $\presemantic_{\Sigma\then\Delta}(\sigma,\delta) =
\pair{(\bSigma,\sigma)}\presemantic_\Delta(\delta)$.
\item $\Gamma_{\Sigma\then\Delta} = \Gamma_{\bSigma} \times \Gamma_{\bDelta}$.
\end{enumerate}
In the definition of $\presemantic$,  
$(\bSigma,\sigma)$ is an abbreviation for the induced update
$(\bSigma,\set{\sigma})$ as defined in
Section~\ref{section-epistemic-program-models-give-updates}.

\paragraph{Unions}
 If
 $\bSigma = (\Sigma, \arrowAgents,\presemantic_\Sigma,\Gamma_{\bSigma})$ and
$\bDelta = (\Delta, \arrowAgents,\presemantic_\Delta,\Gamma_{\bDelta})$,
we take $\bSigma \union \bDelta$ to be the disjoint union of the models,
with union of the  distinguished actions. The intended meaning is the {\em 
non-deterministic choice} between the programs represented by
$\bSigma$ and $\bDelta$. 
Here is the definition in more  detail, generalized to arbitrary
(possibly infinite) disjoint unions:
 let $\{\bSigma_i\}_{i\in I}$ be
a
family of program models, with
$\bSigma_i =(\Sigma_i, \arrowAgents, \presemantic_i, \Gamma_i)$; we
define their {\em (disjoint) union\/}
$$\Union_{i\in I} {\bSigma_i} 
\quadeq  \biggl(\Union_{i\in I}\Sigma_i, \arrowAgents,
\presemantic, \Gamma \biggr)$$
to be the model given by:
\begin{enumerate}
\item $\Union_{i\in I}\Sigma_i$ is $\bigcup_{i\in I} (\Sigma_i\times\set{i})$,
 the
disjoint union of the sets $\Sigma_i$.
\item $(\sigma,i)\arrowA (\tau, j)$ 
iff $i = j$ and $\sigma \arrowA_i\tau$.
\item $\presemantic(\sigma,i) = \presemantic_i(\sigma)$.
\item $\Gamma = \bigcup_{i\in I} (\Gamma_i\times\set{i})$.
\end{enumerate}  

\paragraph{Iteration}
Finally,   
we define an iteration
operation
by
$\bSigma^* = \Union\set{\bSigma^n: n\in N}$.
Here $\bSigma^0 = \skippaction$, and $\bSigma^{n+1} = \bSigma^n \then  \bSigma$.

\rem{
\paragraph{Summary}
We have seen enough examples of the update product
to make the case that complicated representations of 
epistemic situations, including ones which are not easy to construct
by hand, may be obtained from simpler representations
of initial situations using the update product with
program models.
The notion of a  program model is closely related to the
 familiar
   multi-agent
Kripke modes.  Each program model
induces an update in a natural way.  And the collection of
program models has a natural composition operation.
}

\rem{
Two immediate examples.  First, let $\bSigma$ be a
singleton $\set{\sigma}$ with $\sigma\arrowA\sigma$
for all $A$,   
 $\Presemantic(\sigma)= \trueproposition$,
and $\Gamma = \set{\sigma}$.  Then it is not hard
to check that the induced update of $(\Sigma,\Gamma)$
 is what we wrote as  $\skippaction$
in Section~\ref{section-update-composition}.
So it makes sense to write $\skippaction$
for the program model $(\Sigma,\Gamma)$.
Second, if $\Sigma = \emptyset = \Gamma$, then
the induced update is $\actionzero$; we therefore
write $(\bSigma,\Gamma)$ as $\actionzero$ in this case.
}

\medskip

Our definition of the operations
on program models
are   faithful to the
corresponding operations on updates from
Section~\ref{section-updates}.

\begin{proposition} [\cite{kra}]
The
 update induced by a
composition of program models is isomorphic
to the composition of the induced updates.
Similarly for sums and iteration, mutatis mutandis.
\label{proposition-composition-verified}
\end{proposition}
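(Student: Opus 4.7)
The plan is to exhibit an explicit isomorphism in each of the three cases (composition, union, iteration) and check that it preserves states, accessibility, valuation, and the associated update relation. I will do composition in detail, since union is a routine disjoint-union bookkeeping exercise and iteration follows by combining the previous two.

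For composition, fix a state model $\bS$ and program models $\bSigma$ and $\bDelta$. On one side we have the update induced by $\bSigma\then\bDelta$, which sends $\bS$ to $\bS\otimes(\bSigma\then\bDelta)$; on the other we have the composition of the induced updates, which sends $\bS$ to $(\bS\otimes\bSigma)\otimes\bDelta$. The candidate isomorphism is the obvious re-bracketing
$$h\,:\ (s,(\sigma,\delta))\ \longmapsto\ ((s,\sigma),\delta).$$
The key step, and the only nonroutine calculation, is to verify that $h$ is a bijection between the underlying state sets. This is exactly where the definition $\presemantic_{\Sigma\then\Delta}(\sigma,\delta)=\pair{(\bSigma,\sigma)}\presemantic_\Delta(\delta)$ does its work: unfolding Section~\ref{section-update-product-1} and the clause for the induced update of $(\bSigma,\set{\sigma})$, one gets
$$s\in \pair{(\bSigma,\sigma)}\presemantic_\Delta(\delta)_{\bS}\quadiff s\in\presemantic_\Sigma(\sigma)_{\bS}\ \mbox{and}\ (s,\sigma)\in\presemantic_\Delta(\delta)_{\bS\otimes\bSigma},$$
and the right-hand side is precisely the consistency requirement (\ref{eq-consistent}) for $((s,\sigma),\delta)$ to belong to $(\bS\otimes\bSigma)\otimes\bDelta$. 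This is the one spot where one has to chase the nested update products carefully; everything else is straightforward.

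The remaining checks on $h$ are direct computations from (\ref{eq-product}) and (\ref{eq-new-valuation}): accessibility on both sides unfolds to the triple conjunction $s\arrowA s'$ and $\sigma\arrowA\sigma'$ and $\delta\arrowA\delta'$, and the valuation on both sides reduces to $s\in\|p\|_{\bS}$. For the update relations one notes that $(\bSigma\then\bDelta)_{\bS}$ relates $s$ to $(s,(\sigma,\delta))$ iff $(\sigma,\delta)\in\Gamma_\bSigma\times\Gamma_\bDelta$, while the relational composition $\bSigma_{\bS}\then\bDelta_{\bS\otimes\bSigma}$ relates $s$ to $((s,\sigma),\delta)$ via the intermediate $(s,\sigma)$ under exactly the conditions $\sigma\in\Gamma_\bSigma$ and $\delta\in\Gamma_\bDelta$; these match under $h$.

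For unions, the update induced by $\Union_{i\in I}\bSigma_i$ applied to $\bS$ is, by (\ref{eq-consistent}) and the disjoint-union clauses in Section~\ref{section-operations-on-program-models}, isomorphic to the disjoint union (as state models) of the $\bS\otimes\bSigma_i$, and the induced update relation is the disjoint union of the $(\bSigma_i,\Gamma_i)_{\bS}$; this is literally clause~\ref{pointx} in Section~\ref{section-updates}. For iteration, $\bSigma^{*}$ is a union of the $\bSigma^{n}$ and each $\bSigma^{n}$ is an $n$-fold composition, so the composition and union cases combine (by induction on $n$) to give the required isomorphism with $\actionalpha^{*}$, where $\actionalpha$ is the update induced by $\bSigma$.
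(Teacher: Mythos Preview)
Your proposal is correct and matches the paper's approach exactly: the paper omits the proof in the published text, but the intended argument (visible in the source) uses the same re-bracketing map $(s,(\sigma,\delta))\mapsto((s,\sigma),\delta)$ and verifies the bijection via the same chain of equivalences unfolding $\presemantic_{\Sigma\then\Delta}(\sigma,\delta)=\pair{(\bSigma,\sigma)}\presemantic_\Delta(\delta)$. Your sketch for unions and iteration is also what one would expect.
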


Since we shall not use this result, we omit the proof.

\rem{
\begin{proof}
Let $(\bSigma,\Gamma_{\bSigma})$ and $(\bDelta,\Gamma_{\bDelta})$ be
program models. We denote by $\actionalpha$ the update induced by
$(\bSigma,\Gamma_{\bSigma})$, by
$\actionbeta$ the update induced by $(\bDelta,\Gamma_{\bDelta})$, and by
$\actiongamma$ the update induced by $(\bSigma,\Gamma_{\bSigma}) \then
 (\bDelta,\Gamma_{\bDelta})$. We need to prove that
$\actionalpha\then\actionbeta=\actiongamma$. 
Let $\bS = (S,\arrowAgents_{\!\!\bS},\semantics{.}{\bS})$ be a state
model.  Recall that
 $$\bS(\actionalpha\then\actionbeta) \quadeq
\modelaction{\modelaction{\bS}{\actionalpha}}{\actionbeta} 
\quadeq 
 (\bS
\otimes(\bSigma,\Gamma_{\bSigma}))\otimes(\bDelta,\Gamma_{\bDelta}).$$
We claim that this is isomorphic to 
$\bS\otimes (\bSigma\then\bDelta,\Gamma_{\bSigma\then\bDelta})$,
and indeed the isomorphism is 
$(s,(\sigma,\delta)) \mapsto ((s,\sigma),\delta)$.
We   check that 
$(s,(\sigma,\delta)) \in \bS\otimes (\bSigma\then\bDelta) $
iff  $((s,\sigma),\delta) \in (\bS \otimes\bSigma)\otimes\bDelta$.
And the following are equivalent:
\begin{enumerate}
\item $(s,(\sigma,\delta)) \in \bS\otimes (\bSigma\then\bDelta) $.
\item $s\in \|\presemantic_{\Sigma\then\Delta}(\sigma,\delta)\|_{\bS}$.
\item $s\in \|\pair{(\bSigma,\sigma)}\presemantic_\Delta(\delta)\|_{\bS}$.
\item $(s,\sigma)\in \bS \otimes\bSigma$  and 
$(s,\sigma)\in \|\presemantic_{\bDelta} (\delta)\|_{\bS \otimes\bSigma}$.
\item  $((s,\sigma),\delta) \in (\bS \otimes\bSigma)\otimes\bDelta$.
\end{enumerate}
The rest of  verification 
of  isomorphism is fairly direct.

We also need to check that $\actiongamma_\bS$ and 
$(\actionalpha\then\actionbeta)_\bS$ are related by the isomorphism.
Now 
$$\actiongamma_{\bS}\quadeq
\set{(s, (s,(\sigma,\delta))) \in  S\otimes(\Sigma\then\Delta) : \sigma\in \Gamma_{\bSigma},
\delta\in \Gamma_{\bDelta}}.$$
Recall that  $(\actionalpha\then\actionbeta)_\bS =
\actionalpha_{\bS}
 \then {\actionbeta}_{\modelaction{\bS}{\actionalpha}}$
and that this is a  relational composition in left-to-right order.
And indeed,
$$\begin{array}{lcl}
 \actionalpha_{\bS} & \quadeq  & \set{(s,(s,\sigma)) : (s,\sigma)\in S\otimes\Sigma,
\sigma\in 
 \Gamma_{\bSigma}}\\
{\actionbeta}_{\modelaction{\bS}{\actionalpha}} & \quadeq & 
\set{((s,\sigma), ((s,\sigma),\delta)) \in \bS(\actionalpha)\otimes \Delta: 
\delta\in \Gamma_{\bDelta}}.\\
\end{array}$$
This completes the proof for composition.  We omit the proofs for 
sums and iteration.
\end{proof}
}

\subsection{Action signatures}
\label{section-action-signatures}

\begin{definition}
An {\em action signature\/} is a structure 
$$\bSigma \quadeq (\Sigma,\arrowAA,(\sigma_1,\sigma_2,\ldots,\sigma_n))$$
where  $\bSigma = (\Sigma,\arrowAA)$ is a finite
Kripke frame, and $\sigma_1,\sigma_2,\ldots,\sigma_n$ is an 
  enumeration of $\Sigma$
in a list without repetitions.  
We call the elements of $\Sigma$ {\em action types}.
When we deal with an action signature, our notation for the
action types usually includes a subscript (even though this is
occasionally redundant); thus the action types come with a
number that indicates their poisition in the fixed enumeration
of the action signature.
\end{definition}

An action signature $\bSigma$ together
with an assignment of  epistemic propositions to the action types
in $\Sigma$ gives us a full-fledged action model.
And this is the exact sense in which an \emph{action signature} is
an abstraction of the notion of \emph{action model}.
We shall  use  action signatures in constructing logical languages.

\rem{
\begin{examples}
{\rm

Here is a very simple
 action signature  which we call $\ActionSig_{\skipp}$.
$\Sigma$ is a singleton $\set{\skipp}$,
$\Presemantic(\skipp) = \trueproposition$,
 and  $\skipp\arrowA \skipp$ for all agents $A$.
In a sense which we shall make clear later,
this is an action in which 
``nothing happens'', and moreover it
is common knowledge that this is the case.

The next  simplest action signature
 is the ``test'' signature $\bSigma_{?}$.
We take $\bSigma_{?}=\set{?,\skipp}$, with the enumeration $?,\skipp$.
 We also take  $?\arrowA\skipp$, and
$\skipp\arrowA\skipp$ for all $A$.  
This turns out to be a totally opaque form of test: $\varphi$
is tested on the real world, but nobody knows this is happening. 
Its function 
will be to generate tests $?\varphi$, which affect the states precisely in the way
dynamic logic tests do.

For each set $\BB\subseteq\Agents$ of agents, we define
 the action signature $\Pri^{\BB}$ of 
{\em completely private announcements to the group $\BB$}.  It has
 $\Sigma=\{\Pri^{\BB},\skipp\}$; 
$\Pri^{\BB}\arrowB \Pri^{\BB}$ for all $B\in\BB$, $\Pri^{\BB}\arrowC \skipp$
for $C\not\in\BB$, and $\skipp\arrowA \skipp$ for all agents $A$.

Next, we consider the action signature $\Prss_k^{\BB}$ of  
 {\em private announcements to  the group $\BB$ with secure suspicion\/}
of $k$ possible announcements by the outsiders.
It has the following components:
$\Sigma=\set{1,2,\ldots,k\}\cup\{1',2',\ldots,k'\}\cup\{\skipp}$
(we assume these sets to be disjoint); 
$i\arrowB i'$ for all $B\in\BB$ and all $i\leq k$; $i\arrowC j$ for all $i,j\leq k$ and $C\not\in\BB$;
$i'\arrowB i'$ for $i\leq k$ and $B\in\BB$; $i'\arrowC \skipp$ for all $i\leq
k$ and $C\not\in\BB$; and finally $\skipp\arrowA \skipp$ for all agents $A$. 

The action signature $\Cka_k^{\BB}$ is given by: $\Sigma=\{1,\ldots,k\}$;
 $i\arrowB i$ for $i\leq k$ and $B\in\BB$; and
finally $i\arrowC j$ for $i,j\leq k$ and $C\not\in\BB$. 
This action signature is called the signature of
{\em common knowledge of  alternatives  for an announcement to the group $\BB$}.
}
\label{ex-program-signatures}
\end{examples}
}

\paragraph{Signature-based program models}
\label{section-signature-based-epistemic-action-models}
\label{section-signature-based-epistemic-program-models}
\label{section-signature-based-program-models}

Let $\bSigma$ be an action signature,
let $n$ be the number of action types in $\Sigma$, 
let $\Gamma\subseteq \Sigma$, and let
$\vec{\propositionpsi} = \propositionpsi_1, 
\ldots, \propositionpsi_n$  be a list of
epistemic propositions.
We obtain a  program model 
$(\bSigma, \Gamma,\vec{\propositionpsi})$
in the following way:
\begin{enumerate}
\item The set of simple actions is $\Sigma$, and the 
accessibility relations are those given by the action signature.
\item For $j = 1, \ldots, n$, $\Presemantic(\sigma_j) = \propositionpsi_j$.
\item  The set of distinguished actions is $\Gamma$.
\end{enumerate}

In the special case that $\Gamma$ is the singleton set $\set{\sigma_i}$,
we write the resulting  signature-based program model as 
$(\bSigma,\sigma_i,\vec{\propositionpsi})$.

Finally, recall from Section~\ref{section-epistemic-program-models-give-updates}
 that every
signature-based program model induces an update.

To summarize:
{\em every action signature, set of distinguished action types in it, and tuple of 
epistemic propositions gives a program model in a canonical way.
Every program model induces a standard update}.

\rem{
\begin{proposition}
Let $\actiona$ be the update induced by $(\bSigma,\sigma_i,\vec{\psi})$.
Let $\bS$ be a state model.
For $s\in S$ and   $t'\in \bS(\actiona)$ the  following are equivalent:
\begin{enumerate}
\item There is some $t$ such that $s\ \actiona_{\bS}\ t$
and $t \arrowA t'$.
\item  There are $s'$ and $\sigma_j\in \Sigma$ such that
$s\arrowA s'$, $\alpha\arrowA \alpha'$, and $s'\ \actionb_{\bS}\
t' $, where $\actionb$ 
is the update induced by $(\bSigma,\sigma_j,\vec{\psi})$.
\end{enumerate} 
$$
\xymatrix{ s \ar[r]^{{\actiona}_{\bS}}  \ar[d]_{A} & t \ar[d]^{A} \\
s' \ar[r]_{\actionb_{\bS}} & t'   \\
}
$$
\label{proposition-square-semantic}
\end{proposition}

\begin{proof}
Both of the assertions above are equivalent to the following:
$s\in (\propositionpsi_i)_{\bS}$ and $(s,\sigma_i)\arrowA t'$.
\end{proof}

\begin{proposition}
Let $\actiona$ be the update induced by 
the signature-based program model
$(\bSigma,\sigma_i,\vec{\propositionpsi})$.
Then $\dom(\actiona) = \propositionpsi_i$. 
That is, for all state models $\bS$,
the  domain of the relation $\actiona_{\bS}$ is $(\propositionpsi_i)_{\bS}$.
\label{prop-new46}
\end{proposition}

\begin{proof}
Our definitions imply that
 $s\in \dom(\actiona_{\bS})$
iff $s\in (\Presemantic(\sigma_i))_{\bS} = (\propositionpsi_i)_{\bS}$.
\end{proof}
}

\subsection{Bisimulation-based notions of equivalence}
\label{section-bisimulation}
In this section, we discuss natural notions of equivalence
for some of the definitions which we have already seen.
We begin by recalling  the most important 
notion of equivalence for state models, {\em bisimulation}.

\begin{definition}
Let $\bS$ and $\bT$ be state models.  A {\em bisimulation\/} between
$\bS$ and $\bT$ is a relation $R\subseteq S\times T$ such that
whenever $s\ R\ t$, the 
following three properties hold:
\begin{enumerate}
\item $s\in \| p\|_{\bS}$ iff $t\in \|p\|_{\bT}$ for all atomic sentences  $p$.
\item  For  $A\in \Agents$ and    $s'$ such that
$s\arrowA s'$, there is some   $t'$ such that $t\arrowA t'$ and  
$s'\ R\ t'$.
\item
  For  $A\in \Agents$ and    $t'$ such that
$t\arrowA t'$, there is some   $s'$ such that $s\arrowA s'$ and  
$s'\ R\ t'$.
\end{enumerate}
$R$ is a {\em total bisimulation\/} if it is a bisimulation 
and in addition: for  all $s\in S$ there is some $t\in T$ such 
that $s\ R\ t$; and vice-versa.
\end{definition}

\begin{proposition} If there is a bisimulation $R$
such that 
 $s\ R\ t$,  then $s$ and $t$ agree on all sentences $\phi$ in
{\em infinitary\/} modal logic:
 $s\in \semantics{\phi}{\bS}$ iff
$t\in\semantics{\phi}{\bT}$.
\label{proposition-bisim-infinitary}
\end{proposition}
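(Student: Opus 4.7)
The plan is a routine structural induction on the construction of $\phi$ as an infinitary sentence, exploiting the closure properties listed in Section~\ref{section-epistemic-prop-definition}. Since $\necc^*_{\BB}\psi$ is introduced as an abbreviation for an infinite conjunction, the inductive clauses we actually need are: atomic sentences, negation, arbitrary (set- or class-indexed) conjunction, and the single-agent modality $\necc_A$.

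The base case is immediate from clause~1 in the definition of bisimulation together with~(\ref{eq-pedantic}): if $s\,R\,t$, then $s\in\propositionp_{\bS}$ iff $s\in\|p\|_{\bS}$ iff $t\in\|p\|_{\bT}$ iff $t\in\propositionp_{\bT}$. The negation step uses $\semanticsoff{\nott\phi}{\bS} = S\setminus\semanticsoff{\phi}{\bS}$ directly. For an arbitrary conjunction $\bigwedge C$, the hypothesis gives, for every $\phi\in C$, that $s\in\semanticsoff{\phi}{\bS}$ iff $t\in\semanticsoff{\phi}{\bT}$; intersecting over $C$ yields the conclusion via part~\ref{partc} of Section~\ref{section-epistemic-prop-definition}. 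The ``always true'' case and finite conjunction drop out as instances of this.

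The modal step is where the back-and-forth conditions of bisimulation are used. Suppose $s\,R\,t$ and $s\in\semanticsoff{\necc_A\phi}{\bS}$; to show $t\in\semanticsoff{\necc_A\phi}{\bT}$, take any $t'$ with $t\arrowA t'$. Condition~3 of bisimulation supplies an $s'$ with $s\arrowA s'$ and $s'\,R\,t'$, so $s'\in\semanticsoff{\phi}{\bS}$, and by the induction hypothesis $t'\in\semanticsoff{\phi}{\bT}$. The converse direction uses condition~2 symmetrically.

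Once these clauses are in place, the case of $\necc^*_{\BB}\phi$ follows automatically: unfolding the abbreviation, $\necc^*_{\BB}\phi$ is the infinitary conjunction of sentences of the form $\necc_{A_1}\necc_{A_2}\cdots\necc_{A_n}\phi$ with $A_i\in\BB$, each of which is handled by finitely many applications of the modal clause. I do not expect a genuine obstacle here; the only mildly delicate point is making sure the infinitary language is defined so that the given closure operations exhaust it, which is exactly how the semantics was set up in Section~\ref{section-state-models}. Thus the induction is self-contained and yields the full statement.
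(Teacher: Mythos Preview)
Your proof is correct and is exactly the standard argument. The paper itself does not supply a proof of this proposition, treating it as a well-known fact, so there is nothing to compare against; your induction on the infinitary modal syntax (atoms, negation, arbitrary conjunction, $\necc_A$, with $\necc^*_{\BB}$ absorbed into the conjunction clause) is the expected route and goes through without issue.
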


\label{section-bisim-preserve}

Recall that  $\StateModels$   is the class of all state models.  
We have spoken of states as the elements of state models, but we
also use the term {\em states\/} to refer to the pairs $(\bS,s)$ with
$\bS\in \StateModels$ and
$s\in S$.  The class of all states is itself a state model, except
that its collection of states is a proper class rather than a set.
Still, it makes sense to talk about bisimulation relations on the class
of all states.  The largest such is given by 
$$(\bS,s) \equiv (\bT, t) \quadiff
\mbox{there is a bisimulation $R$ between $\bS$ and $\bT$ such that $s\ R\ t$}.
$$
This relation $\equiv$ is indeed an equivalence relation.   

When $\bS$ and $\bT$ are clear from the context, we write $s\equiv t$
instead of  $(\bS,s) \equiv (\bT, t)$.

\paragraph{Equivalence and preservation by bisimulation}
Since we are discussing notions of equivalence here, it makes sense
to also think about epistemic propositions.
Two propositions $\propositionphi$ and $\propositionpsi$
are equal if they are the same  operation on $\StateModels$.
That is, for all $\bS$, $\propositionphi_{\bS} = \propositionpsi_{\bS}$.
Later we shall introduce syntactically defined languages 
and also proof systems to go with them; in due course we
shall see other interesting notions of equivalence.  

Moving towards a  connection of propositions with bisimulation.
we say that a proposition $\propositionphi$ 
  is {\em  preserved by bisimulations\/} if whenever 
$(\bS,s) \equiv (\bT,t)$, then $s\in \propositionphi_{\bS}$
iff $t\in \propositionphi_{\bT}$.

\begin{proposition}  [\cite{kra}]
The  propositions which are preserved by bisimulation
include  $\trueproposition$ 
and the atomic propositions $\propositionp$,
and they are closed under all of the 
(infinitary) operations on propositions 
from Section~\ref{section-state-models}.
\label{prop-bisimulation-preservation}
\end{proposition}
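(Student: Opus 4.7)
The plan is to verify each item on the list in turn, using the definition of bisimulation directly and the definition of $\equiv$ as the union of all bisimulations. All the clauses are straightforward except the modal cases, where the standard back-and-forth argument does the work.

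First I would handle $\trueproposition$ and the atomic propositions. For $\trueproposition$, preservation is immediate since $\trueproposition_{\bS} = S$ for every $\bS$. For an atomic $\propositionp$, the first clause in the definition of bisimulation gives that $s\ R\ t$ implies $s\in\|p\|_{\bS}$ iff $t\in\|p\|_{\bT}$, which is exactly what is required. Next, the Boolean cases: if $\propositionphi$ is preserved, then a single line using complementation shows $\nott\propositionphi$ is preserved; for an arbitrary class $C$ of preserved propositions, the equivalence $s\in(\bigwedge C)_{\bS}$ iff $s\in\propositionphi_{\bS}$ for every $\propositionphi\in C$ propagates through the preservation of each member to give preservation of $\bigwedge C$.

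For the modal clause $\necc_A\propositionphi$, assume $(\bS,s)\equiv(\bT,t)$ via a bisimulation $R$ and that $s\in\semanticsoff{\necc_A\propositionphi}{\bS}$. Pick any $t'$ with $t\arrowA t'$; by the back clause of bisimulation there is $s'$ with $s\arrowA s'$ and $s'\ R\ t'$. Then $s'\in\propositionphi_{\bS}$ by the hypothesis on $s$, and preservation of $\propositionphi$ lifts this to $t'\in\propositionphi_{\bT}$. The converse direction uses the forth clause symmetrically, yielding $t\in\semanticsoff{\necc_A\propositionphi}{\bT}$.

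The common-knowledge clause $\necc^*_{\BB}\propositionphi$ is the one place that requires a small extra step, so this is where I would slow down. The right auxiliary fact is: if $(\bS,s)\equiv(\bT,t)$ and $t = u_0 \stackrel{A_1}{\rightarrow}\cdots\stackrel{A_n}{\rightarrow} u_n = t'$ with each $A_i\in\BB$, then there exist $s = v_0,\ldots,v_n = s'$ with $v_{i-1}\arrow_{A_i} v_i$ and $v_i\equiv u_i$ for every $i$. This is proved by a short induction on $n$: the base case is trivial, and the step is a single application of the back clause (and symmetrically for the other direction, to handle the $s$-to-$t$ orientation). Given this lemma, preservation of $\necc^*_{\BB}\propositionphi$ follows by the same argument used for $\necc_A$, applied to arbitrary $\BB$-paths rather than single $A$-arrows. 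The only real obstacle here is making sure the bisimulation used along the path is consistent, i.e.\ that we are simply iterating the single relation $\equiv$, which works because $\equiv$ is itself a bisimulation.
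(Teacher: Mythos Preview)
Your argument is correct and is the standard one. The paper itself does not prove this proposition: it is quoted from~\cite{kra} and stated without proof here, so there is nothing to compare against beyond noting that your direct verification of each closure condition (with the path-induction for $\necc^*_{\BB}$) is exactly what one expects and what the cited source does.
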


\paragraph{Equivalence of updates}
We continue our discussion of equivalence
with the relevant notion for updates.

\begin{definition}
Updates $\actiona$ and $\actionb$ are \emph{equivalent} if for
all state models $\bS$ and $\bT$
and all total bisimulations $R$ between 
$\bS$ and $\bT$, $(\actiona_{\bS})^{-1};R;\actionb_{\bT}$ is a
total bisimulation between $\bS(\actiona)$ and $\bT(\actionb)$.
We write $\actiona\sim\actionb$ for this relation.
\end{definition}

\begin{proposition}
Let $\actiona\sim\actionb$  and  $\actionc\sim\actiond$.
Then $\actiona;\actionc\sim\actionb;\actiond$,
and also
$\actiona\union\actionc\sim\actionb\union\actiond$.
\label{prop-update-composition-equivalence}
\label{prop-update-union-equivalence}
\end{proposition}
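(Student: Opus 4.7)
The plan is to prove each of the two claimed equivalences by constructing, for any state models $\bS, \bT$ and any total bisimulation $R$ between them, a total bisimulation of the shape demanded by the definition of $\sim$ between the corresponding updated models. In both cases this amounts to unfolding the definitions and manipulating composites of relations, without introducing any genuinely new combinatorial content beyond what is already packaged in the two hypotheses $\actiona\sim\actionb$ and $\actionc\sim\actiond$.

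For the composition claim $\actiona;\actionc \sim \actionb;\actiond$, the plan is to apply $\sim$ twice in succession. First, applying $\actiona\sim\actionb$ to the bisimulation $R$ yields that $R_1 := (\actiona_\bS)^{-1};R;\actionb_\bT$ is a total bisimulation between $\bS(\actiona)$ and $\bT(\actionb)$. Second, feeding this $R_1$ back in as the bisimulation hypothesis of $\actionc\sim\actiond$ (now applied to the state models $\bS(\actiona)$ and $\bT(\actionb)$) produces a total bisimulation
\[
R_2 \;:=\; (\actionc_{\bS(\actiona)})^{-1};R_1;\actiond_{\bT(\actionb)}
\]
between $\bS(\actiona)(\actionc)=\bS(\actiona;\actionc)$ and $\bT(\actionb)(\actiond)=\bT(\actionb;\actiond)$. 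To close the argument, I would use $(\actiona;\actionc)_\bS = \actiona_\bS;\actionc_{\bS(\actiona)}$, the standard identity $(X;Y)^{-1}=Y^{-1};X^{-1}$, and associativity of $;$ to verify that $R_2 = ((\actiona;\actionc)_\bS)^{-1};R;(\actionb;\actiond)_\bT$, which is exactly the witness the definition of $\sim$ demands.

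For the union claim $\actiona\union\actionc \sim \actionb\union\actiond$, I would again extract from the hypotheses total bisimulations $R_1$ between $\bS(\actiona)$ and $\bT(\actionb)$ and $R_2$ between $\bS(\actionc)$ and $\bT(\actiond)$. Because the accessibility relations in a disjoint-union model, as defined in Section~\ref{section-updates}, never link states carrying different component tags, the tagged disjoint union $R_1\oplus R_2$ --- relating $(u,\actiona)$ to $(v,\actionb)$ iff $u\,R_1\,v$, and $(u,\actionc)$ to $(v,\actiond)$ iff $u\,R_2\,v$ --- is easily seen to be a total bisimulation between $\bS(\actiona\union\actionc)$ and $\bT(\actionb\union\actiond)$. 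The remaining work, and the place I expect the main obstacle, is to reconcile $R_1\oplus R_2$ with the relation $((\actiona\union\actionc)_\bS)^{-1};R;(\actionb\union\actiond)_\bT$ specifically named in the definition of $\sim$: one must unfold the defining clause $t\,(\actiona\union\actionc)_\bS\,(u,\actionalpha)\iff t\,\actionalpha_\bS\,u$ on each side, and track carefully how the tags on the endpoints of the composite relation are forced by the intermediate updates, so that the triple composite collapses onto the tag-matched union $R_1\oplus R_2$ rather than producing spurious cross-tag pairs that would not respect the valuation or accessibility structure.
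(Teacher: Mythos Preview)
Your treatment of composition is correct and is exactly the paper's argument: unfold $(\actiona;\actionc)_\bS = \actiona_\bS;\actionc_{\bS(\actiona)}$, use $(X;Y)^{-1}=Y^{-1};X^{-1}$ and associativity, and apply the two hypotheses $\actiona\sim\actionb$ and $\actionc\sim\actiond$ in succession.

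For the union case, the obstacle you flag is real and cannot be overcome as stated. Unfolding the definitions shows that $((\actiona\union\actionc)_\bS)^{-1};R;(\actionb\union\actiond)_\bT$ relates $(u,\alpha)$ to $(v,\beta)$ whenever $(u,v)\in(\alpha_\bS)^{-1};R;\beta_\bT$, for \emph{every} choice of $\alpha\in\{\actiona,\actionc\}$ and $\beta\in\{\actionb,\actiond\}$. So besides the two diagonal pieces you call $R_1$ and $R_2$, the composite also contains the cross terms coming from $(\actiona_\bS)^{-1};R;\actiond_\bT$ and $(\actionc_\bS)^{-1};R;\actionb_\bT$, and nothing in the hypotheses controls these. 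For a concrete failure take $\actiona=\actionb=\skippaction$ and let $\actionc=\actiond$ be the update that leaves every frame and update relation the identity but complements the valuation of a fixed atom $p$; each of these is trivially $\sim$-equivalent to itself. On a one-point model with $R$ the identity, the composite then relates $(s,\skippaction)$, where $p$ holds, to $(s,\actionc)$, where $p$ fails, so it is not a bisimulation at all. The paper's own proof simply asserts that the composite ``is the disjoint union'' of the two bisimulations, overlooking precisely the cross terms you were worried about; so the union half, for arbitrary updates, does not go through by this route.
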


\begin{proof}
Fix   $\bS$, $\bT$ and $R$.
Then 
$$\begin{array}{lcl}
(\actiona;\actionc)_{\bS}^{-1}; R;
(\actionb;\actiond)_{\bT}  & \quadeq & 
(\actiona_{\bS};\actionc_{\bS(\actiona)})^{-1}; R;
(\actionb_{\bT};\actiond_{\bT(\actionb)})
\\
& \quadeq & 
(\actionc_{\bS(\actiona)})^{-1} ;
(\actiona_{\bS})^{-1}; R ; \actionb_{\bT}; \actiond_{\bT(\actionb)}
\end{array}
$$
Since $(\actiona_{\bS})^{-1}; R ; \actionb_{\bT}$ is a 
total bisimulation between $\bS(\actiona)$ and $\bT(\actionb)$,
we have the first assertion.

For the second, note that
 $(\actiona\union\actionc)_{\bS}^{-1}; R;
(\actionb\union\actiond)_{\bT}$
is the disjoint union of the total bisimulations
$(\actiona_{\bS})^{-1};R;\actionb_{\bT}$
and 
$(\actionc_{\bS})^{-1};R;\actiond_{\bT}$.
\end{proof}

\begin{proposition} Let $\actiona$ and $\actionb$ be standard
updates.
If  $\propositionphi$ is preserved by bisimulations
and $\actiona\sim\actionb$, then
 $[\actiona]\propositionphi = [\actionb]\propositionphi$,
and also 
 $\pair{\actiona}\propositionphi = \pair{\actionb}\propositionphi$.
\label{proposition-preserve-sim}
\end{proposition}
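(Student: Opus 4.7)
The plan is to instantiate the hypothesis $\actiona\sim\actionb$ with $\bT=\bS$ and $R$ the identity relation on $S$ (which is a total bisimulation between $\bS$ and itself). This yields that $R' := (\actiona_\bS)^{-1};\actionb_\bS$ is a total bisimulation between $\bS(\actiona)$ and $\bS(\actionb)$; explicitly, $t\ R'\ t'$ exactly when there exists some $u\in S$ with $u\ \actiona_\bS\ t$ and $u\ \actionb_\bS\ t'$. Since $\sim$ is symmetric (the inverse of a total bisimulation is a total bisimulation, so one can apply $\actiona\sim\actionb$ to $R^{-1}$ and take inverses), it suffices to show $\semanticsoff{[\actiona]\propositionphi}{\bS}\subseteq \semanticsoff{[\actionb]\propositionphi}{\bS}$ and the analogous inclusion for the diamond.

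For the box inclusion, fix $s\in \semanticsoff{[\actiona]\propositionphi}{\bS}$ and a $t'$ with $s\ \actionb_\bS\ t'$; I need $t'\in \propositionphi_{\bS(\actionb)}$. By totality of $R'$ on the $\bS(\actionb)$-side, pick $t\in \bS(\actiona)$ with $t\ R'\ t'$, so there is some $u\in S$ with $u\ \actiona_\bS\ t$ and $u\ \actionb_\bS\ t'$. This is where standardness enters: because $\actionb$ is standard, $(\actionb_\bS)^{-1}$ is a partial function, so from $s\ \actionb_\bS\ t'$ and $u\ \actionb_\bS\ t'$ we conclude $u=s$. Hence $s\ \actiona_\bS\ t$, and the assumption on $s$ gives $t\in \propositionphi_{\bS(\actiona)}$. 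Since $t\ R'\ t'$ witnesses that $t$ and $t'$ are bisimilar and $\propositionphi$ is preserved by bisimulations, we obtain $t'\in \propositionphi_{\bS(\actionb)}$.

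The diamond inclusion runs the same script in reverse: starting from a witness $t$ for $s\in \semanticsoff{\pair{\actiona}\propositionphi}{\bS}$ (so $s\ \actiona_\bS\ t$ and $t\in \propositionphi_{\bS(\actiona)}$), use totality of $R'$ on the $\bS(\actiona)$-side to produce $t'\in \bS(\actionb)$ with $t\ R'\ t'$. Unwinding gives $u$ with $u\ \actiona_\bS\ t$ and $u\ \actionb_\bS\ t'$, and standardness of $\actiona$ (so $(\actiona_\bS)^{-1}$ is a partial function) forces $u=s$. Then $s\ \actionb_\bS\ t'$, and $t'\in \propositionphi_{\bS(\actionb)}$ by bisimulation-preservation, so $s\in \semanticsoff{\pair{\actionb}\propositionphi}{\bS}$.

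The only delicate step is the use of standardness to pin the ambient witness $u$ back to the given state $s$; without it, totality of $R'$ would only produce \emph{some} state in $S$ having suitably aligned $\actiona$- and $\actionb$-successors, rather than the specific $s$ we started from, and the argument would not close. Everything else is bookkeeping with the definitions of $\sim$, of the dynamic modalities in (\ref{eq-alpha-phi}), and of bisimulation preservation.
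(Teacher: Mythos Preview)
Your proof is correct and follows essentially the same approach as the paper: instantiate the equivalence with $\bT=\bS$ and the identity bisimulation, then use totality of the induced bisimulation together with standardness to pin the intermediate state back to $s$. You are in fact slightly more careful than the paper, which only writes out the box direction and invokes standardness of $\actiona$ where it is really standardness of $\actionb$ that forces $u=s$; your version attributes the hypotheses correctly and also spells out the diamond case and the symmetry of $\sim$.
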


\begin{proof}
We check the first assertion only.
Fix $\bS$, and let $s\in S$
be such that that $s\in ([\actiona]\propositionphi)_{\bS}$.
Let $t\in \bS(\actionb)$ be such that 
$s\ \actionb_{\bS}\ t$.  We must show that $t\in \propositionphi_{\bS(\actionb)}$.

The identity relation $I_{\bS}$ is a total 
bisimulation on $\bS$.   Let $R = (\actiona_{\bS})^{-1};I_{\bS};\actionb_{\bS}$.
 The definition
of action equivalence $\sim$ implies that 
$R$ is a 
total bisimulation between
$\bS(\actiona)$ and $\bS(\actionb)$.   Let $u\in\bS(\actiona)$ 
be such that $u\ R\ t$.  Thus there is some $s'\in S$ such that
$u\  (\actiona_{\bS})^{-1}\ s'\ \actionb_{\bS}\ t$.
Since $(\actiona_{\bS})^{-1}$ is a partial function,
we have $s' = s$.
Thus $s\ \actiona_{\bS}\ u$.  Since  $s\in ([\actiona]\propositionphi)_{\bS}$,
we have $u\in \propositionphi_{\bS(\actiona)}$.  And as
 $\propositionphi$ is preserved by the bisimulation $R$, we have the desired conclusion:
$t\in \propositionphi_{\bS(\actionb)}$. 
\end{proof}

\begin{proposition}
 Let $\actiona$ and $\actionb$ be standard.
If   $\actiona\sim\actionb$, then $\actiona$ and $\actionb$ have
the same domain.  That is, for all state models $\bS$,
$\dom(\actiona_{\bS}) = \dom(\actionb_{\bS})$.
 \label{proposition-preserve-sim-dom}
\end{proposition}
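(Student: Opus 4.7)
The plan is to use the definition of $\sim$ with the trivial choice of bisimulation, namely the identity on $\bS$. Specifically, I would take $\bT = \bS$ and $R = I_{\bS}$, the identity relation on $\bS$, which is clearly a total bisimulation from $\bS$ to itself. By the definition of $\actiona \sim \actionb$, the composite $(\actiona_{\bS})^{-1}; I_{\bS}; \actionb_{\bS} = (\actiona_{\bS})^{-1}; \actionb_{\bS}$ is then a total bisimulation between $\bS(\actiona)$ and $\bS(\actionb)$. The proof will exploit only the totality condition of this bisimulation, combined with the standardness hypotheses.

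For the forward inclusion $\dom(\actiona_{\bS}) \subseteq \dom(\actionb_{\bS})$, I would take $s \in \dom(\actiona_{\bS})$ and pick a witness $t \in \bS(\actiona)$ with $s\ \actiona_{\bS}\ t$. Totality of the bisimulation applied on the $\bS(\actiona)$ side yields some $u \in \bS(\actionb)$ with $t\ (\actiona_{\bS})^{-1};\actionb_{\bS}\ u$, so there is $s'$ with $s'\ \actiona_{\bS}\ t$ and $s'\ \actionb_{\bS}\ u$. Since $\actiona$ is standard, $(\actiona_{\bS})^{-1}$ is a partial function, forcing $s' = s$, and hence $s\ \actionb_{\bS}\ u$, i.e.\ $s \in \dom(\actionb_{\bS})$.

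For the reverse inclusion $\dom(\actionb_{\bS}) \subseteq \dom(\actiona_{\bS})$, I would use totality of the same bisimulation on the $\bS(\actionb)$ side. Given $s \in \dom(\actionb_{\bS})$ with witness $u$, there is some $t \in \bS(\actiona)$ with $t\ (\actiona_{\bS})^{-1};\actionb_{\bS}\ u$, producing $s'$ with $s'\ \actiona_{\bS}\ t$ and $s'\ \actionb_{\bS}\ u$. Here $s'$ need not a priori equal $s$, so to conclude $s \in \dom(\actiona_{\bS})$ I would invoke the standardness of $\actionb$: since $(\actionb_{\bS})^{-1}$ is a partial function and both $s$ and $s'$ map to $u$ under $\actionb_{\bS}$, we get $s = s'$, and then $s\ \actiona_{\bS}\ t$ gives $s \in \dom(\actiona_{\bS})$.

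There is no real obstacle; the only subtle point to be careful about is that the totality condition in a bisimulation is bidirectional, and that both directions of the domain equality require a standardness assumption (the forward direction uses standardness of $\actiona$, the reverse uses standardness of $\actionb$). The argument is structurally identical to the one used for Proposition~\ref{proposition-preserve-sim} just above, where standardness is likewise what lets one conclude $s' = s$ from the partial-functionality of the inverse update relation.
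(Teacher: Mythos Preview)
Your proof is correct, but the paper takes a shorter route. Rather than unfolding the definition of $\sim$ again, the paper simply observes that $\dom(\actiona_{\bS}) = (\pair{\actiona}\trueproposition)_{\bS}$ and $\dom(\actionb_{\bS}) = (\pair{\actionb}\trueproposition)_{\bS}$, and then invokes Proposition~\ref{proposition-preserve-sim} with $\propositionphi = \trueproposition$ (which is trivially preserved by bisimulations). Your argument is essentially the proof of Proposition~\ref{proposition-preserve-sim} specialized to $\trueproposition$, as you yourself note at the end; the paper just reuses that proposition rather than re-running its proof. The paper's approach is more economical, while yours is self-contained and makes explicit where each standardness hypothesis is used.
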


\begin{proof}
The domains are $(\pair{\actiona}\trueproposition)_{\bS}$ and 
$(\pair{\actionb}\trueproposition)_{\bS}$.
So the result follows from
Proposition~\ref{proposition-preserve-sim}
and the fact that $\trueproposition$ is preserved by bisimulations.
\end{proof}

\begin{definition}
An
update $\actionalpha$ {\em preserves bisimulations\/}
 if the following two conditions hold:
\begin{enumerate}
\item  If    $s\ \actionalpha_{\bS}\ s'$ and $(\bS,s) \equiv (\bT,t)$,
then there is some $t'$ such that
 $t\ \actionalpha_{\bT}\ t'$ and $(\bS(\actionalpha),s') \equiv  (\bT(\actionalpha),t')$.
\item  If   $t\ \actionalpha_{\bT}\ t'$ and $(\bS,s) \equiv (\bT,t)$,
then there is some $s'$
such that  $s\ \actionalpha_{\bS}\ s'$
 and $(\bS(\actionalpha),s') \equiv (\bT(\actionalpha),t')$.
\end{enumerate}

An action model $\bSigma$
\emph{preserves bisimulations} if 
$\Presemantic(\sigma)$ is preserved under bisimulations for
all $\sigma\in \Sigma$.
\end{definition}

\begin{proposition} [\cite{kra}]
Concerning bisimulation preservation:
\begin{enumerate}
\item   The bisimulation preserving updates are closed under composition and 
(infinitary) unions.
\item  If $\propositionphi$ is preserved by
bisimulations and $\actionalpha$ preserves bisimulations,
then  $[\actionalpha]\propositionphi$ is 
preserved by
bisimulations.
\end{enumerate}
\label{prop-bisimulation-preservation-again}
\end{proposition}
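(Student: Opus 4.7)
The plan is to verify each part directly from the definitions, using the zig-zag conditions of bisimulation preservation as the main tool.

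For part 1, consider first sequential composition. Suppose $\actionalpha$ and $\actionbeta$ both preserve bisimulations, take $(\bS,s)\equiv(\bT,t)$, and suppose $s\ (\actionalpha\then\actionbeta)_{\bS}\ s''$. By the definition of $(\actionalpha\then\actionbeta)_{\bS}=\actionalpha_{\bS}\then\actionbeta_{\bS(\actionalpha)}$, there is an intermediate $s'\in\bS(\actionalpha)$ with $s\ \actionalpha_{\bS}\ s'$ and $s'\ \actionbeta_{\bS(\actionalpha)}\ s''$. Apply the forward zig-zag of $\actionalpha$ to obtain $t'\in\bT(\actionalpha)$ with $t\ \actionalpha_{\bT}\ t'$ and $(\bS(\actionalpha),s')\equiv(\bT(\actionalpha),t')$; then apply the forward zig-zag of $\actionbeta$ between these updated state models to obtain $t''$ witnessing the required property at $s''$. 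The backward zig-zag is symmetric, and the case of unions is even easier: by the explicit construction in Section~\ref{section-updates} part~\ref{pointx}, a transition in $(\Union X)_{\bS}$ has the form $s\ \actionalpha_{\bS}\ u$ for some $\actionalpha\in X$ landing in the $\actionalpha$-indexed component, so bisimulation preservation of that particular $\actionalpha$ supplies the needed witness, and since $\arrowA$ in $\bS(\Union X)$ never crosses components, the equivalence lifts to the union.

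For part 2, assume $\propositionphi$ is preserved by bisimulations and $\actionalpha$ preserves bisimulations. Let $(\bS,s)\equiv(\bT,t)$ with $s\in([\actionalpha]\propositionphi)_{\bS}$; I must show $t\in([\actionalpha]\propositionphi)_{\bT}$. Fix any $t'$ with $t\ \actionalpha_{\bT}\ t'$. The backward zig-zag in the definition of update bisimulation preservation supplies an $s'$ with $s\ \actionalpha_{\bS}\ s'$ and $(\bS(\actionalpha),s')\equiv(\bT(\actionalpha),t')$. From $s\in([\actionalpha]\propositionphi)_{\bS}$ I get $s'\in\propositionphi_{\bS(\actionalpha)}$, and then bisimulation preservation of $\propositionphi$ yields $t'\in\propositionphi_{\bT(\actionalpha)}$, as required. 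The argument for $\pair{\actionalpha}\propositionphi$ is dual, using the forward zig-zag instead. Finally, the reverse implication $t\in([\actionalpha]\propositionphi)_{\bT}$ implies $s\in([\actionalpha]\propositionphi)_{\bS}$ follows by symmetry of bisimulation equivalence.

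The arguments are essentially book-keeping with the definitions, so I do not anticipate a conceptual obstacle. The one point to be careful about is using the correct direction of each zig-zag condition: in part 2, the key use is the backward condition (from $\bT$ to $\bS$), because we start with an arbitrary $\actionalpha$-successor on the $\bT$-side and need a matching one on the $\bS$-side where the hypothesis $s\in([\actionalpha]\propositionphi)_{\bS}$ can be applied. Standardness of $\actionalpha$ is not needed here (unlike in Proposition~\ref{proposition-preserve-sim}); the stronger hypothesis of bisimulation preservation of $\actionalpha$ replaces the role that standardness plus $\sim$-equivalence played there.
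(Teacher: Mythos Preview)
Your proof is correct and is the natural argument. The paper does not actually supply a proof of this proposition: it is attributed to \cite{kra} and left unproved here, so there is no argument in the paper to compare against. (A suppressed proof of part~2 appears in a \texttt{\textbackslash rem\{\}} block elsewhere in the source, and it is line-for-line the same as your argument: pick $t'$ with $t\ \actionalpha_{\bT}\ t'$, use the backward zig-zag to get $s'$, apply $s\in([\actionalpha]\propositionphi)_{\bS}$, then transfer via preservation of $\propositionphi$.) Your treatment of part~1 is the expected unfolding of the definitions, and your observation that standardness is not needed here is correct.
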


\begin{proposition} [\cite{kra}] Let $\bSigma$ be a
bisimulation-preserving action model.  Let $\Gamma\subseteq \Sigma$
be arbitrary. 
Then the update induced by $(\bSigma, \Gamma)$
 preserves bisimulation.
\label{proposition-bisim-preserve-induced}
\end{proposition}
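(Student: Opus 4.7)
The plan is to lift any bisimulation $R$ witnessing $(\bS,s)\equiv(\bT,t)$ to a bisimulation on the update products by pairing up states with \emph{identical} action components. That is, given $R\subseteq S\times T$, define
\[
R'\,=\,\{((s',\sigma),(t',\sigma))\in (\bS\otimes\bSigma)\times (\bT\otimes\bSigma):s'\,R\,t'\}.
\]
With this gadget in hand, both clauses of the definition of bisimulation preservation for the update induced by $(\bSigma,\Gamma)$ reduce to routine checks.

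First I would unfold what it means for $s$ to be in the domain of $(\bSigma,\Gamma)_{\bS}$: by definition, $s\,(\bSigma,\Gamma)_{\bS}\,(s,\sigma)$ iff $\sigma\in\Gamma$ and $s\in\Presemantic(\sigma)_{\bS}$. So suppose $s\,(\bSigma,\Gamma)_{\bS}\,(s,\sigma)$ and $(\bS,s)\equiv(\bT,t)$. Because $\bSigma$ preserves bisimulations, $\Presemantic(\sigma)$ is preserved by bisimulations, so $t\in\Presemantic(\sigma)_{\bT}$; hence $(t,\sigma)\in\bT\otimes\bSigma$ and $t\,(\bSigma,\Gamma)_{\bT}\,(t,\sigma)$. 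This produces the witness on the $\bT$-side that clause~1 requires. The backward clause is handled by the symmetric argument.

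Next I would verify that $R'$ is in fact a bisimulation relating $(s,\sigma)$ and $(t,\sigma)$. The atomic clause is immediate from (\ref{eq-new-valuation}): valuation on the product depends only on the state component, and $s'\,R\,t'$ already gives agreement on atomic sentences. For the forth clause, suppose $((s',\sigma'),(t',\sigma'))\in R'$ and $(s',\sigma')\arrowA (s'',\sigma'')$ in $\bS\otimes\bSigma$. By (\ref{eq-product}) this means $s'\arrowA s''$ and $\sigma'\arrowA\sigma''$. Since $s'\,R\,t'$, choose $t''$ with $t'\arrowA t''$ and $s''\,R\,t''$. Here is the point where bisimulation preservation is essential: we need $(t'',\sigma'')\in\bT\otimes\bSigma$, i.e.\ $t''\in\Presemantic(\sigma'')_{\bT}$. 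But $(s'',\sigma'')\in\bS\otimes\bSigma$ gives $s''\in\Presemantic(\sigma'')_{\bS}$, and since $\Presemantic(\sigma'')$ is preserved by bisimulations, $t''\in\Presemantic(\sigma'')_{\bT}$, as required. Then $(t',\sigma')\arrowA(t'',\sigma'')$ in $\bT\otimes\bSigma$ and $((s'',\sigma''),(t'',\sigma''))\in R'$. The back clause is symmetric.

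The only nontrivial step is the appeal to bisimulation preservation of preconditions to guarantee that the forth/back witnesses really live inside the update product; without that hypothesis on $\bSigma$ the construction would collapse. Everything else is a direct transcription of the product definitions (\ref{eq-consistent})--(\ref{eq-new-valuation}) and of the definitions in Section~\ref{section-epistemic-program-models-give-updates}.
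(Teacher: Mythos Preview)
Your proof is correct and follows exactly the same approach as the paper's own (commented-out) argument: lift a bisimulation $R_0$ between $\bS$ and $\bT$ to the relation $\{((s',\sigma),(t',\sigma)) : s'\,R_0\,t'\}$ on the update products, using bisimulation preservation of the preconditions both to produce the witness $(t,\sigma)$ and to ensure forth/back targets land inside the product. There is nothing to add.
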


\begin{definition}
Let 
$\bSigma=(\Sigma, \arrowAgents, \Presemantic)$
and $\bDelta=(\Delta, \arrowAgents, \Presemantic)$
be action models.  As this notation indicates,
we shall not introduce additional notation to 
differentiate the arrows and the $\Presemantic$ functions
on these action models.
  A {\em bisimulation\/} between
$\bSigma$ and $\bDelta$ is a relation $R\subseteq \Sigma\times \Delta$ such that
whenever $\sigma\ R\ \delta$, the 
following three properties hold:
\begin{enumerate}
\item $\Presemantic(\sigma)=\Presemantic(\delta)$.
\item  For  $A\in \Agents$ and    $\sigma'$ such that
$\sigma\arrowA \sigma'$, there is some   $\delta'$ such that
 $\delta\arrowA \delta'$ and  
$\sigma'\ R\ \delta'$.
\item  For  $A\in \Agents$ and    $\delta'$ such that
$\delta\arrowA \delta'$, there is some   $\sigma'$ such that
 $\sigma\arrowA \sigma'$ and  
$\sigma'\ R\ \delta'$.
\end{enumerate}
We write $\sigma\sim\delta$ if there is a bisimulation $R$
between $\bSigma$ and $\bDelta$ such that $\sigma\ R\ \delta$.

Let $\pi = (\bSigma,\Gamma)$ and $\rho = (\bDelta, B)$ be 
program models.  We write $\pi\sim\rho$ if there is  a bisimulation
between $\bSigma$ and $\bDelta$ such that
the following hold:
\begin{enumerate}
\item Each
$\sigma\in \Gamma$ belongs to the domain of $R$,
and each $\delta\in B$ belongs to the image of $R$.
\item If $\sigma\in\Gamma$ and $\sigma\ R \ \delta$, then $\delta\in B$.
\item If $\delta\in B$ and $\sigma\ R \ \delta$, then  $\sigma\in\Gamma$.
\end{enumerate}
We write $\pi \sim \rho$ in this case.
\end{definition}

\rem{
\paragraph{Action models}
Action models $\bSigma=(\Sigma, \arrowAgents, \Presemantic)$
and $\bDelta=(\Delta, \arrowAgents, \Presemantic)$ are equivalent
if there is a relation $R\subseteq \Sigma\times \Delta$
which is a bisimulation between the Kripke models
$(\Sigma, \arrowAgents)$ and $(\Delta, \arrowAgents)$, and
with the additional property that if $\sigma\ R\ \delta$, then
$\Presemantic(\sigma) =\Presemantic(\delta)$.
We write $\bSigma\sim\bDelta$ in this case.
}

\begin{proposition} Equivalent  bisimulation-preserving 
program models
induce equivalent updates.
\label{proposition-equivalence-action-updates}
\end{proposition}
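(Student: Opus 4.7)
My plan is to build a bisimulation between the two product state models directly from the two available bisimulations: the state-model bisimulation $R$ given between $\bS$ and $\bT$, and the action-model bisimulation $R'$ between $\bSigma$ and $\bDelta$ witnessing $\pi\sim\rho$. Writing $\actiona$ and $\actionb$ for the updates induced by $\pi = (\bSigma,\Gamma)$ and $\rho = (\bDelta,B)$, the goal is to verify that $Q := (\actiona_\bS)^{-1}\then R\then\actionb_\bT$ is a total bisimulation between $\bS\otimes\bSigma$ and $\bT\otimes\bDelta$. Unpacking the induced-update definitions, $(s,\sigma)\mathrel{Q}(t,\delta)$ holds precisely when $\sigma\in\Gamma$, $\delta\in B$, and $s\mathrel{R}t$.

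The guiding idea is to introduce the companion relation
\[
\hat Q \;=\; \{((s,\sigma),(t,\delta)) : s\mathrel{R}t \text{ and } \sigma\mathrel{R'}\delta\},
\]
which captures what $R$ and $R'$ jointly say about product states. The first check is that $\hat Q$ is a well-defined subset of $(\bS\otimes\bSigma)\times(\bT\otimes\bDelta)$: if $(s,\sigma)\in\bS\otimes\bSigma$ and $\sigma\mathrel{R'}\delta$, then because bisimulation of action models equates preconditions we have $\presemantic(\delta)=\presemantic(\sigma)$; bisimulation-preservation of $\presemantic(\sigma)$ applied to $s\mathrel{R}t$ then yields $t\in\presemantic(\delta)_\bT$.

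The core verification is that $\hat Q$ is a bisimulation between the product state models. Atomic agreement reduces to the first coordinate by the definition of the valuation on products together with the bisimulation property of $R$. For the forth clause, given $(s,\sigma)\arrowA(s',\sigma')$ in $\bS\otimes\bSigma$ we have $s\arrowA s'$ and $\sigma\arrowA\sigma'$; using $R$'s forth condition pick $t'$ with $t\arrowA t'$ and $s'\mathrel{R}t'$, and using $R'$'s forth condition pick $\delta'$ with $\delta\arrowA\delta'$ and $\sigma'\mathrel{R'}\delta'$. Matching preconditions (again from $\sigma'\mathrel{R'}\delta'$) and bisimulation-preservation show that $t'\in\presemantic(\delta')_\bT$, so $(t,\delta)\arrowA(t',\delta')$ is a legitimate transition in $\bT\otimes\bDelta$ and $(s',\sigma')\mathrel{\hat Q}(t',\delta')$. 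The back clause is symmetric.

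The final step ties $\hat Q$ back to the relation $Q$ required by the definition of update equivalence. Conditions 1--3 defining $\pi\sim\rho$ say that $R'$ restricted to $\Gamma\times B$ is surjective onto each of $\Gamma$ and $B$, with no $R'$-links crossing between $\Gamma$ and $\Sigma\setminus\Gamma$ or between $B$ and $\Delta\setminus B$. From these we obtain totality of $Q$ on the images of $\actiona_\bS$ and $\actionb_\bT$ (using totality of $R$ to find a partner state), and the bisimulation clauses established for $\hat Q$ transfer to $Q$ because $Q$-related pairs are in particular $\hat Q$-related. The main obstacle, as in many such arguments, is bookkeeping: keeping track of the side conditions on preconditions and ensuring each transition produced by chasing $R$ and $R'$ lands legitimately inside the update products. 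The bisimulation-preservation hypothesis on $\bSigma$ and $\bDelta$ is precisely what makes this bookkeeping go through.
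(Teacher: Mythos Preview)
Your verification that $\hat Q$ is a bisimulation between $\bS\otimes\bSigma$ and $\bT\otimes\bDelta$ is sound, but the final step---transferring the bisimulation property from $\hat Q$ down to $Q$---does not go through. The claim ``$Q$-related pairs are in particular $\hat Q$-related'' is false: $(s,\sigma)\mathrel{Q}(t,\delta)$ unpacks to $s\mathrel{R}t$, $\sigma\in\Gamma$, $\delta\in B$, whereas $\hat Q$ demands $\sigma\mathrel{R'}\delta$. Conditions 1--3 on $R'$ only guarantee that each $\sigma\in\Gamma$ has \emph{some} $R'$-partner in $B$ and vice versa; they do not force every $\sigma\in\Gamma$ to be $R'$-related to every $\delta\in B$, so in general $Q\not\subseteq\hat Q$. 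And even granting the inclusion, bisimulations are not closed downward: the forth clause for $Q$ must produce a $Q$-related target, which the $\hat Q$-bisimulation property alone does not supply. Your totality remark (``on the images of $\actiona_\bS$ and $\actionb_\bT$'') also undershoots what the definition of update equivalence requires, namely totality on all of $\bS\otimes\bSigma$ and $\bT\otimes\bDelta$, including states $(s,\sigma)$ with $\sigma\notin\Gamma$.

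The paper does not route through an auxiliary relation. It argues directly that $Q$ itself is a total bisimulation: from $(s,\sigma)\mathrel{Q}(t,\delta)$ and a transition $(s,\sigma)\arrowA(s',\sigma')$, it uses the state-model bisimulation to produce $t'$ with $t\arrowA t'$, and the action-model bisimulation to produce $\delta'$ with $\delta\arrowA\delta'$ related to $\sigma'$, then invokes the conditions defining program-model equivalence to keep the matched pair inside $\Gamma\times B$, yielding $(s',\sigma')\mathrel{Q}(t',\delta')$. The raw ingredients are the same ones you assembled for $\hat Q$, but they are applied directly to $Q$ rather than to a larger relation that then has to be cut back down.
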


\begin{proof}
Suppose that $R$ is a bisimulation
showing that $(\bSigma,\Gamma)\sim(\bDelta,B)$.
To check that the induced updates are equivalent, 
let $\bS$ and $\bT$ be state models, and let $R'$
be a total bisimulation between them.
We check that $((\bSigma,\Gamma)_{\bS})^{-1}; R'; (\bDelta,B)_{\bT}$ is a
total bisimulation between $\bS(\bSigma,\Gamma)$ and $\bT(\bDelta,B)$.
To save on some notation, call this relation $Q$.

We verify the
  bisimulation properties of $Q$.
Let $(s,\sigma) \ Q\ (t,\delta)$.  Recall that the updates induced
by program models are standard.
Hence we have
 $$(s,\sigma)\ ((\bSigma,\Gamma)_{\bS})^{-1}\ s\ 
R' \ t \  (\bDelta,B)_{\bT} \ (t,\delta).$$
It is clear from this that $(s,\sigma)$ and $(t,\delta)$ satisfy the same
atomic sentences.
Suppose next that $(s',\sigma')\in   \bS(\bSigma,\Gamma)$ and
 $(s,\sigma) \arrowA (s',\sigma')$.
Note that since $(s',\sigma')\in   \bS(\bSigma,\Gamma)$,
 we also have  $\sigma'\in \Gamma$.
Let $t'$ be such that  $t\arrowA t'$ and $s' \ R'\ t'$.
Let $\delta'$ be such that $\delta\arrowA\delta'$ and 
$\sigma'\ R\ \delta'$.
We have $\delta'\in B$ by one of the conditions in the definition of equivalence.
Thus
 $$(s',\sigma')\ ((\bSigma,\Gamma)_{\bS})^{-1}\ s'\ 
R' \ t' \  (\bDelta,B)_{\bT} \ (t',\delta').$$
And $(t,\delta)\arrowA (t',\delta')$.
This shows the second bisimulation condition, and the third is similar.

For the totality of $Q$, let $(s,\sigma)\in S\otimes \Sigma$.
Let $t\in T$
 and $\delta\in B$ be such that $s\ R'\ t$ and $\sigma\ R\ \delta$.
Then   $(t,\delta)\in T\otimes \Delta$ since 
$\Presemantic(\sigma) = \Presemantic(\delta)$, and since
this proposition is preserved by bisimulations.
We   have  $(s,\sigma)\ Q \ (t,\delta)$ because
 $$(s,\sigma)\ ((\bSigma,\Gamma)_{\bS})^{-1}\ s\ 
R' \ t \  (\bDelta,B)_{\bT} \ (t,\delta).$$
This completes the verification that $Q$ is a 
total bisimulation.
\end{proof}

\section{The languages $\lang(\bSigma)$}
\label{section-language}

At this point, we have enough general definitions
to present the syntax and semantics of 
the language  $\lang(\bSigma)$.  
Most of the remaining parts of this paper
study these languages.

\subsection{The syntax and semantics of $\lang(\bSigma)$}
\label{section-logics-based-on-program-signatures}

Fix a action signature $\bSigma$.
  We   present
in Figure~\ref{fig-lang} a logical language 
$\lang(\bSigma)$ which we study in the remainder of this paper.
Actually, we have three languages there, the full language
$\lang(\bSigma)$ and the smaller fragments $\lang_0(\bSigma)$ 
and $\lang_1(\bSigma)$.

The number $n$ which figures
into the syntax is 
the   number of action types in $\Sigma$.
In the programs of the form
$\sigma\psi_1\cdots\psi_{n}$ we have 
{\em sentences\/} $\vec{\psi}$ rather than 
{\em epistemic propositions\/} (which we had written
using boldface letters $\propositionpsi_1, 
\dots, \propositionpsi_n$ in
Section~\ref{section-epistemic-actions}).
Also, the signature $\bSigma$ figures into the semantics
exactly in those programs $\sigma\psi_1\cdots\psi_{n}$;
in those we require that $\sigma\in \Sigma$.

The second thing to note is that, as in $PDL$, we have two sorts
of syntactic objects: sentences and programs.
We call programs of the form  $\sigma\psi_1\cdots\psi_{n}$
{\em basic actions}.  Note that they might not be ``atomic''
in the sense that the sentences $\psi_j$ might themselves
contain programs.

\begin{figure}[tbp]
\fbox{
\begin{minipage}{6.0in}
$$
\begin{array}{lcc|c|c|c|c|c|c}
\mbox{\bf sentences $\phi$}  &  & \true & p_i  & \neg\varphi &  
\varphi\wedge\psi
 &\necc_A\varphi  & \necc_{\BB}^*\varphi
 & [\pi]\varphi \\  
\mbox{\bf programs $\pi$}  &      &    \skipp & \crash &  
\sigma_i
\psi_1\cdots\psi_{n}  
& \pi \union \rho  &
\pi \then \rho   & \pi^* 
\end{array}
$$

$$
\begin{array}{lcl}
\semanticsnm{\true} & \quadeq & \trueproposition \\
\semanticsnm{p}  & \quadeq   &  \propositionp
\\
\semanticsnm{\phi\andd\psi}  & \quadeq &  
\semanticsnm{\phi}\andd \semanticsnm{\psi}
\\
\semanticsnm{\nott\phi}  & \quadeq &  \nott\semanticsnm{\phi} \\
\semanticsnm{\necc_A\phi}  & \quadeq &
\necc_A \semanticsnm{\phi}  \\
 \semanticsnm{\necc^*_{\BB}\phi}  & \quadeq &
\necc^*_{\BB}\semanticsnm{\phi}   \\
\semanticsnm{[\pi]\phi}  & \quadeq &  [\semanticsnm{\pi}]\semanticsnm{\phi} 
\\
\end{array}
\qquad
\begin{array}{lcl}
\semanticsnm{\skipp} & \quadeq & \skippaction \\
\semanticsnm{\crash} & \quadeq & \crashaction \\
 \semanticsnm{\sigma_i\psi_1\ldots\psi_n} & \quadeq &
 (\bSigma,\sigma_i,\semanticsnm{\psi_1} \cdots \semanticsnm{\psi_n})
 \\
\semanticsnm{\pi\then\rho} & \quadeq &
\semanticsnm{\pi}\then\semanticsnm{\rho} \\
\semanticsnm{\pi \union \rho} & \quadeq &
\semanticsnm{\pi}\union\semanticsnm{\rho} \\
\semanticsnm{\pi^*}  &\quadeq &   \semanticsnm{\pi}^*  \\
\\
\end{array}
$$

\noindent For $\lang_1(\bSigma)$, we drop the $\pi^*$ construct.
For $\lang_0(\bSigma)$, we drop the $\pi^*$ and
 $\necc^*_{\BB}$ constructs.

\end{minipage}
}
\caption{The language $\lang(\bSigma)$ and its semantics,
and the fragments $\lang_0(\bSigma)$, and $\lang_1(\bSigma)$. \label{fig-lang}
\label{fig-semantics}}
\end{figure}

\rem{
\begin{figure}[t]
\fbox{
\begin{minipage}{6.0in}
$$
\begin{diagram}[tight,size=2.84em]  
\begin{array}{llllllll|
\lang_0 & \rInto & \lang_1 & & & & \\
\dInto &  & \dInto &  & & &  \\
\lang_0(\bSigma) & \rInto & \lang_1(\bSigma) & \rInto & \lang(\bSigma)
 & \rDashto & \lang^{\omega}_0  &  \rInto &\lang^{\infty}_0 \\
 & &  & \rdDashto &  & \ruDashto &    \\
&  &  & &  \mbox{PDL}   &  
 &   &   \\
\end{diagram}
$$
\end{minipage}
}
\caption{Languages in this paper\label{fig-languages}}
\end{figure}
}}

We use the following standard abbreviations:
$\false =\neg\true$, $\phi\vee\psi = \neg(\neg\phi\wedge\neg\psi)$,
 $\phi\iif\psi =\nott(\phi\andd\nott\psi)$,
$\Diamond_A \phi = \neg\Box_A\neg \phi$, 
$\Diamond^*_{\BB} = \neg\Box^*_{\BB}\neg \phi$, and
$\pair{\pi}\phi = \neg [\pi]\neg\phi$.

\paragraph{The semantics} defines two operations by
simultaneous recursion on $\lang(\bSigma)$:
\begin{enumerate}
\item $\phi\mapsto \semanticsnm{\phi}$, taking the
sentences of $\lang(\bSigma)$ into epistemic propositions; and
\item  $\pi\mapsto  \semanticsnm{\pi}$,
taking the programs of $\lang(\bSigma)$ into program models
(and hence into induced updates).
\end{enumerate}
The formal definition is given in Figure~\ref{fig-semantics}.
The first map $\phi\mapsto \semanticsnm{\phi}$ might be called
the {\em truth map\/} for the language. 
When we began our study of state models, we 
started with a 
``valuation'' (or a ``truth'' map)
$\map{{\|.\|}_{\bS}}{\AtSen}{\pow(S)}$, assigning to
each   atomic sentence $p$ a set $\|p\|_{\bS}$ of states.
The truth map here extends this to sentences and actions
of  $\lang(\bSigma)$.   The overall definition is 
by simultaneous recursion on the    $\lang(\bSigma)$.
We employ the standard device of speaking of the
definition in terms of a temporal metaphor.  That is,
we think of the definition of the semantics of 
a sentence or action as coming ``after'' the 
definitions of its subsentences and subactions.

With one key exception,
the operations on   the right-hand sides
are immediate
applications of our general definitions of the closure
conditions on epistemic propositions
from Section~\ref{section-state-models}
 and the operations
on program models from 
Section~\ref{section-operations-on-program-models}.
A good example to explain this is the clause for
the semantics of sentences $[\pi]\phi$.
Assuming that we have a  program model $\semanticsnm{\pi}$,
we get an induced update in Section~\ref{section-epistemic-program-models-give-updates}
which we again denote $\semanticsnm{\pi}$.
We also have  an epistemic  proposition $\semanticsnm{\phi}$.
We can therefore form the epistemic proposition
$[\semanticsnm{\pi}]\semanticsnm{\phi}$
(see equation (\ref{eq-alpha-phi}) in Section~\ref{section-epistemic-actions}).
Note that we have overloaded the square bracket notation;
this is intentional, and we have done the same with other
notation as well.

Similarly,
the semantics of $\skipp$ and $\crash$ are the program models
$\skippaction$ and $\actionzero$ of Section~\ref{section-operations-on-program-models}.

We also discuss the  definition of the semantics
for basic actions $\sigma_i\vec{\psi}$.
For this, recall that 
we have a general definition of
a signature-based program model
$(\bSigma,\Gamma, \propositionpsi_1,\ldots,\propositionpsi_n)$,
where $\Gamma\subseteq\Sigma$ and the $\propositionpsi$'s
are any epistemic propositions.  What we have in the
semantics of $\sigma_i\vec{\psi}$ is the special case of
this where $\Gamma$ is the singleton $\set{\sigma_i}$ and 
$\propositionpsi_i$ is $\semanticsnm{\psi_i}$,  a proposition
which
we already have defined when we come to define 
$\semanticsnm{\sigma_i\vec{\psi}}$.

\rem{
\begin{figure}[tbp]
\fbox{
\begin{minipage}{6.0in}
\label{truth}
$$
\begin{array}{lcl}
\semanticsnm{p}  & \quadeq   &  \propositionp
\\
\semanticsnm{\phi\andd\psi}  & \quadeq &  
\semanticsnm{\phi}\andd \semanticsnm{\psi}
\\
\semanticsnm{\nott\phi}  & \quadeq &  \nott\semanticsnm{\phi} \\
\semanticsnm{\necc_A\phi}  & \quadeq &
\necc_A \semanticsnm{\phi}  \\
 \semanticsnm{\necc^*_{\BB}\phi}  & \quadeq &
\necc^*_{\BB}\semanticsnm{\phi}   \\
\semanticsnm{[\pi]\phi}  & \quadeq &  [\semanticsnm{\pi}]\semanticsnm{\phi} 
\\
\end{array}
\qquad
\begin{array}{lcl}
\semanticsnm{\skipp} & \quadeq & \skippaction \\
\semanticsnm{\crash} & \quadeq & \crashaction \\
 \semanticsnm{\sigma_i\psi_1\ldots\psi_n} & \quadeq &
 (\bSigma,\sigma_i)\semanticsnm{\psi_1},\ldots,\semanticsnm{\psi_n})
 \\
\semanticsnm{\pi\cdot\rho} & \quadeq &
\semanticsnm{\pi}\cdot\semanticsnm{\rho} \\
\semanticsnm{\pi\union\rho} & \quadeq &
\semanticsnm{\pi}\union\semanticsnm{\rho} \\
\semanticsnm{\pi^*}  &\quadeq &   \semanticsnm{\pi}^*  \\
\end{array}
$$
\end{minipage}
}
\caption{The semantics of $\lang(\bSigma)$. \label{fig-semantics}}
\end{figure}
}

\rem{
At this point, it is probably good to go back to our earlier discussions
in Section~\ref{section-big-picture}
of epistemic propositions and updates.  What we have done overall is
to give a fully syntactic presentation of languages of these
epistemic propositions and updates.   The constructions of the language
correspond to the closure properties noted in 
Section~\ref{section-big-picture}.  (To be sure, we have restricted to
the finite case at several points because we are interested in a syntax,
and at the same time we have re-introduced some infinitary aspects
via the Kleene star.)
}

\subsection{Examples}
\label{section-Cn}

In this section, we provide examples of the concepts from 
Sections~\ref{section-state-models}--\ref{section-logics-based-on-program-signatures}.
These examples are chosen with an eye towards one of the inexpressivity results
in Sections~\ref{section-Cn-results} and~\ref{section-privatepower}
near the end of the paper.  
So we know that they may appear
artificial.  Still, we trust that having a suitably detailed example may
help the reader.

In these examples, our set $\AtSen$ of 
atomic sentences is a two-element set $\set{p,q}$,
and the set  $\Agents$ of agents
is the two-element  set  $\set{A,B}$.

First, here is a family of state models $C_n$ for even positive
numbers $n$.
  $C_n$   is a cycle of
$5n$ points 	$a_1, \ldots, a_{5n}$ arranged as follows:
$$\xymatrix{
a_1 \ar@{<->}[r]^{A} & a_2 \ar@{<->}[r]^{B} & a_3
& \cdots & a_{5n-1} \ar@{<->}[r]^{A} & a_{5n} \ar@{<->}[r]^{B} & a_1
}
$$
Since $n$ is even, for $1\leq i\leq 5$, the connection is $a_{in-1}\arrowA a_{in}\arrowB
a_{in+1}$.  (We are taking subscripts modulo $5n$ here.)
 We also specify that 
$p$ is true at all points except $a_1$ and $a_{2n+1}$,
and $q$ is true {\em only\/} at $a_{4n+1}$.

Let $\bSigmapub$ be the   action signature  given by
  $\Sigma = \set{\Pub}$, 
   $\Pub\arrowA \Pub$, and $\Pub\arrowB\Pub$.
This action signature is one of the most important 
in the applications since it is used in the representation of
{\em public announcements}.

Then we have $\lang(\bSigmaPub)$ and  its sublanguages
$\lang_0(\bSigmaPub)$ and $\lang_1(\bSigmaPub)$.
One example of a sentence of  $\lang_1(\bSigmaPub)$ is
 $\pair{\Pub\ p}\poss_{A,B}^* q$. (Without abbreviations,
this would be 
 $\nott[\Pub\ p]\necc_{A,B}^* \nott q$.)
The point of this first example is to 
calculate $\semantics{\pair{\Pub\ p}\poss_{A,B}^* q}{C_n}$.
This takes a few steps.

The definitions tell us that $\semanticsnm{p} =
\propositionp$, and so
$$\semantics{p}{C_n} \quadeq \propositionp_{C_n} \quadeq \|p\|_{C_n}
\quadeq \set{a_i : i \neq 1, 2n+1}.$$

 We have a signature-based program model
$(\bSigmaPub, \Pub,\propositionp)$.
(In more detail,  
$\bSigmaPub$ is an action signature, 
our distinguished set $\Gamma\subseteq \Sigma$ is
$\set{\Pub}$  and $\Pre(\Pub)$ is the
 the proposition $\propositionp$.)
We can take the update product of this with $C_n$
to get the model $C_n\otimes (\bSigmaPub, \Pub,\propositionp)$.
The universe of this model is
$$ \set{(a_i,\Pub) : 1\leq i \leq 5n \ \& \ i\neq 1 \ \& \ i\neq 2n+1}.
$$ 
The accessibilities are given by
$$\xymatrix{
(a_2, \Pub) \ar@{<->}[r]^-{B} & (a_3, \Pub) \ar@{<->}[r]^-{A}
& \cdots & (a_{2n-1}, \Pub) \ar@{<->}[r]^-{A} &
 (a_{2n}, \Pub)\\ (a_{2n+1}, \Pub) \ar@{<->}[r]^-{A} &  (a_{2n+2}, \Pub) 
\ar@{<->}[r]^-{B}  & \cdots & (a_{5n-1}, \Pub) \ar@{<->}[r]^-{A} &
(a_{5n},
\Pub) }
$$
In  this model $p$ is true at all points, and $q$ is true only
at $(a_{4n+1},\Pub)$.
The update relation $(\bSigmaPub, \Pub,\propositionp)_{C_n}$ is
$$\set{(a_i,(a_i,\Pub)) : 1\leq i \leq 5n \ \& \ i\neq 1 \ \& \ i\neq 2n+1}.
$$
The intuition is that
when we relativize the model to $p$
(that is, we update the model with a public announcement
of $p$), $a_1$ and $a_{2n+1}$ 
disappear.  The cycle breaks into two disconnected
components.

To save on notation, let us write $D_n $ for the model 
$C_n\otimes (\bSigmaPub, \Pub,\propositionp)$.
It follows that $$\semantics{q}{D_n} \quadeq 
\|\propositionq\|_{D_n} \quadeq \set{(a_{4n+1},\Pub)},$$
and therefore that
 $$\semantics{\poss^*_{A,B} q}{D_n} \quadeq (\poss^*_{A,B}\propositionq)_{D_n}
\quadeq \set{(a_i,\Pub) : 2n + 1 \leq i \leq 5n}.$$
It now follows that
$$\begin{array}{lcl}
\semantics{\pair{\Pub\ p}\poss_{A,B}^* q}{C_n} & \quadeq &     
\set{a_i : (a_i,\Pub) \in D_n  \ \&\ (a_i,\Pub)\in \semantics{\poss^*_{A,B} q}{D_n}} \\
& \quadeq &  
\set{a_i :  2n +2 \leq i \leq 5n}
\end{array}
$$
Later, we are going to be especially interested in the points
$a_{n+1}$ and $a_{3n+1}$.  The analysis above shows that
$a_{n+1}$ does not belong to 
$\semantics{\pair{\Pub\ p}\poss_{A,B}^* q}{C_n}$, but
$a_{3n+1}$ does belong to it.
The intuition is that 
after we relativize  the original cycle to $p$
by deleting $a_1$ and $a_{2n+1}$,
there is no path from $a_{n+1}$ to $a_{4n+1}$.  But even after
we make the deletion, there is a path from $a_{3n+1}$ to $a_{4n+1}$,
and this is why $a_{3n+1}$ satisfies
the sentence $\pair{\Pub\ p}\poss_{A,B}^* q$.

\paragraph{A second example}
We next consider a different example based on another action signature.
This is the action signature
$\Pri^{A}$ of 
{\em completely private announcements to $A$}.  It
concerns two agents, $A$ and $B$, and it has
 $\Sigma=\{\Pri^{A},\skipp\}$.
The fixed enumeration here is that $\Pri^{A}$ comes first
and $\skipp$ comes second.
We also have  
$\Pri^{A}\arrowA \Pri^{A}$, $\Pri^{A}\arrowB \skipp$,
$\skipp\arrowA \skipp$, and $\skipp\arrowB\skipp$.

 $\lang_1(\bSigmaPri)$  contains
the sentence $\pair{\Pri^A\ p\,\true} \poss_A^*\poss_B \nott p$.
We show in Section~\ref{section-privatepower}
that this sentence is not expressible in 
$\lang_1(\bSigmaPub)$, even by a set of sentences.
We take this to be a formal 
confirmation that a logical system
with 
private announcements is more powerful than a
system with only public announcements.

Let $N^+$ be the set $\set{1,2, \ldots}$ of non-zero
natural numbers.
   We shall construct
models
$\bS_f$, where $J\subseteq N^+$ and $\map{f}{J}{N^+}$.  
(Note that we leave the domain $J$ out of the notation $\bS_{f}$.)
We also construct
models
$\bT_{f,j}$ where $J\subseteq N^+$,
 $\map{f}{J}{N^+}$,
and $j\in J$.  The models 
$\bS_f$ and $\bT_{f,j}$ have the same state set:
$$\set{a,b}\cup
\set{c^{i}_k : i\in J \mbox{ and } 1 \leq k \leq f(i)}.
$$
The atomic proposition $p$ is true at all points except $b$.
The arrows in $\bS_f$ are given by
\begin{enumerate}
\item $x\arrowA x$ for all $x$.
\item  $a\arrowA  b$.
\item $b\arrowA c^i_1$ for all $i\in J$.
\item $c^i_k \arrowA b$ for all $i\in J$ and $1\leq i \leq f(i)$.
\item $c^i_{f(i)}\arrowB b$ for all $i\in J$. 
\end{enumerate}
$\bT_{f,j}$ has all these arrows and exactly one more:
$$  a\arrowA c^{j}_1.
$$
The model $\bS_{f}$ are shown in Figure~\ref{fig-model}.
Note that we only show the case when $J = \set{j,k}$; the general case
is of course similar.  We did not show the reflexive
$\arrowA$ relations.
And we remind the reader that $\bT_{f,j}$ has 
one more arrow.

\begin{figure}[tbp]
\fbox{
\begin{minipage}{6.0in}
$$
\xymatrix{
a \ar[r]^-{A} & b  \ar[rd]
\ar[r]^-{A} & c^{j}_1 \ar[r]^-{A}
   & c^{j}_2 \ar[r]^-{A} & \cdots    
\ar[r]^-{A} & c^{j}_i  \ar[r]^-{A} 
& \cdots &
\ar[r]^-{A} &
\ar@/_4pc/[lllllll]_{B} 
c^{j}_{f(j)}    \\
& & c^{k}_1 \ar[r]^{A} &  & \cdots & 
\ar[r]^-{A} &
\ar@/^1pc/[lllllu]_{B} 
c^{k}_{f(k)}    \\
}  
$$
\end{minipage}
}
\caption{The model $\bS_f$ for $J$ a 
two-element set $\set{j,k}$,
omitting the reflexive 
arrows. \label{fig-model}}
\end{figure}
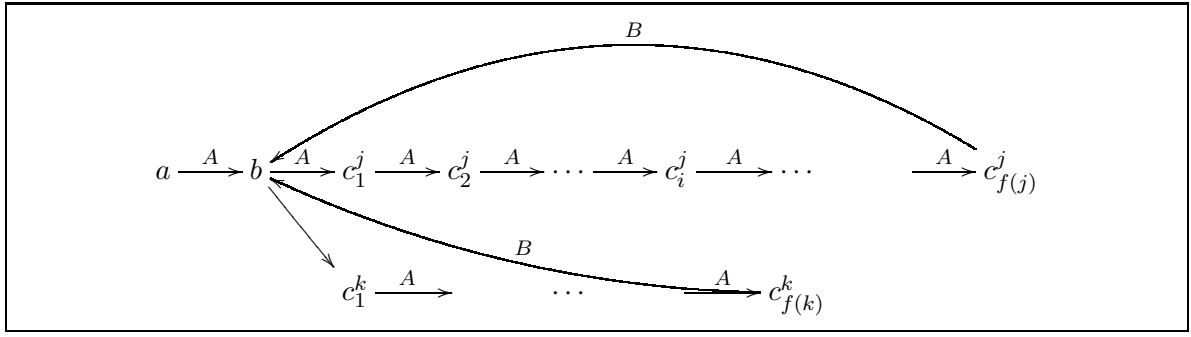

We shall need to calculate
 $\bS_f\otimes (\bSigmaPri,\Pri^A,\propositionp,\trueproposition)$.
We need the same thing for the models $\bT_{f,j}$.
To save on notation, we call the updated models
$\hat{\bS}_f$ and $\hat{\bT}_{f,j}$.
Finally, we need the update relations between $\bS_f$ and $\hat{\bS}_f$
and between $\bT_{f,j}$ and $\hat{\bT}_{f,j}$.

Here are some facts about the structure of  
$\hat{\bS}_f$:
\begin{enumerate}
\item $(a, \Pri^A)\arrowA  (a, \Pri^A)$.
\item  $(a, \Pri^A)\arrowA  (b, \Pri^A)$ does not hold in 
 $\hat{\bS}_f$.
\item  The only $\arrowA$-successor of $(a, \Pri^A)$
is itself. 
\item $(x, \skipp)\arrowA (y, \skipp)$
whenever $x\arrowA x$ in $\bS_f$; similarly for
$(x, \skipp)\arrowB (y, \skipp)$.
\item
$c^i_k\arrowA c^i_{k+1}$ for all $i\in J$ and $1\leq k<f(i)$.
\end{enumerate}

The structure of  $\hat{\bT}_{f,j}$ differs from $\hat{\bS}_f$.
Since $a\arrowA c^j_1$ in $\bT_{f,j}$,  
we   have a path in   $\hat{\bT}_{f,j}$:
$$(a, \Pri^A)\ \arrowA\  (c^j_1, \Pri^A)\ \arrowA \\
\ \cdots\ \arrowA\ (c^j_{f(j)}, \Pri^A)\ \arrowB\ (b, \skipp).
$$
Once again, $\bS_{f}$ contains no path from 
$(a,\Pri^A)$ to any other point, in particular, no
path  to $(b,\skipp)$.

The update relation between $\bS_f$ and $\hat{\bS}_f$ is 
 $\set{(x,(x,\Pri^A)) : x\neq b}.$ 
And from this, we see that 
$$\begin{array}{lcl}
\semantics{\pair{\Pri^A\ p,\true} \poss_A^*\poss_B \nott p}{\bS_{f}} 
& \quadeq & \set{x : (x,\Pri^A)\in \semantics{\poss_A^*\poss_B \nott
p}{\hat{\bS}_f}}\\
& \quadeq & \set{c^{i}_k : i\in J \mbox{ and } 1 \leq k \leq f(i)}
\end{array}
$$
In particular, $a\notin
\semantics{\pair{\Pri^A\ p,\true} \poss_A^*\poss_B \nott p}{\bS_{f}}$.

The update relation between $\bT_{f,j}$ and $\hat{\bT}_{f,j}$
is the same, and this time
$$\begin{array}{lcl}
\semantics{\pair{\Pri^A\ p,\true} \poss_A^*\poss_B \nott p}{\bT_{f,j}} 
& \quadeq & \set{x : (x,\Pri^A)\in \semantics{\poss_A^*\poss_B \nott
p}{\hat{\bT}_{f,j}}}\\
& \quadeq & \set{x : x \neq b}\\
\end{array}
$$
And this time, 
 $a\in
\semantics{\pair{\Pri^A\ p,\true} \poss_A^*\poss_B \nott p}{\bS_{f}}$.

In Section~\ref{section-privatepower}, we shall use these models to prove
that our sentence 
$\pair{\Pri^A\ p,\true} \poss_A^*\poss_B \nott p$ is not expressible by
any sentence or even any set of sentences using public announcements only.
The proof does not use any of our
other results, and the reader may turn to 
it at this point.

\rem{

The proof uses some models $G_n$ for $n\geq 1$ described
as follows:
  We begin with
a cycle in $\arrowA$:
\begin{equation}
b\ 
\arrowA \ a_n\ \arrowA\ a_{n-1}\ \arrowA\ \cdots\ \arrowA\ a_2 \ \arrowA\  a_1\
\arrowA \ a_{\infty}\ \arrowA\ b\
\label{eq:cycle}
\end{equation}
We add edges   $a_i\arrowA b$ for
all $i$  (including $i =1$ and $i=\infty$), and also $a_i\arrowA a_{\infty}$
for all $x$ (again including $i =1$ and $i=\infty$).  
The only $\arrowB$ edge is 
 $a_1\arrowB b$.  (See the figure for this frame.)
The atomic sentence $p$ is true at all points except $b$.

We shall need to calculate $\bS_f\otimes (\bSigmaPri,\Pri^A,\propositionp,\trueproposition)$;
we also shall need the same thing for the models $\bT_{f,j}$.

 $$
\begin{array}{l}
 (a_n,\Pri^A)\ \arrowA\ (a_{n-1},\Pri^A)\ \arrowA\ \cdots\
 \arrowA\ (a_2,\Pri^A) \ \arrowA\ 
(a_1,\Pri^A)\
\arrowA \ (a_{\infty},\Pri^A)    \\
 (a_n,\skipp)\ \arrowA\ (a_{n-1},\skipp)\ \arrowA\ \cdots\
 \arrowA\ (a_2,\skipp) \ \arrowA\ 
(a_1,\skipp)\
\arrowA \ (a_{\infty},\skipp)  \\
(b,\skipp)\ 
\arrowA \   (a_n,\skipp) \\
 (a_{\infty},\skipp) \ \arrowA \   (b,\skipp)  \\
(a_1,\Pri^A)\
\arrowB \  (b,\skipp)\ 
\end{array}
$$
The only point where $p$ is false is $(b,\skipp)$.
Now  $$\semantics{\poss_A^*\poss_B \nott p}{H_n} \quadeq   \set{ (a_n,\Pri^A),
\ldots, (a_1,\Pri^A)}\cup \set{(x,\skipp) : x\in G_n}.$$
 The important point is that there is no path from
$(a_{\infty},\Pri^A)$ to $(b,\skipp)$.
 
The update relation between $G_n$ and $H_n$ is 
$$\set{(a_i,(a_i,\Pri^A)) : 1\leq i \leq \infty}.$$
And from this, we see that 
$$\begin{array}{lcl}
\semantics{\pair{\Pri^A\ p,\true} \poss_A^*\poss_B \nott p}{G_n} 
& \quadeq & \set{a_i : 1\leq i \leq \infty \ \& \ a_i\in \semantics{\poss_A^*\poss_B \nott
p}{H_n}}\\
& \quadeq & \set{a_i : 1 \leq i \leq n}
\end{array}
$$
}

\rem{
The first thing to note is that 
in $G_n(\Pri^A\ p)$, 
$a_i\models\poss_A^*\poss_B \nott p$ for all $i < \infty$.
The relevant path is $a_i \arrowA\cdots\arrowA a_1  \arrowB b$;
the important point is that since the announcement was private, the edge
$a_1 \arrowB b$ survives the update.
On the other  hand, in this same model $G_n(\Pri^A\ p)$,
 $a_{\infty}\models\nott\poss_A^*\poss_B \nott p$.
This is  because the only path in the original model from $a_{\infty}$ to $b$ 
is  $$a_{\infty}\ \arrowA\ b\ \arrowA\ a_n\ 
\arrowA\ \cdots\ \arrowA\ a_1 \ \arrowB\ b,$$ and 
the edge $a_{\infty}\arrowA b$ is lost in the update.
}

\rem{
\begin{figure}[t]
\fbox{
\begin{minipage}{6.0in}
$$
\xymatrix{
b \ar[r]^-{A} & a_n \ar[r]^-{A} & 
 & \cdots & \ar[r]^-{A}& a_{i} \ar[r]^-{A} \ar@/_2pc/[lllll]_{A}
\ar@/^2pc/[rrrr]^{A}
 &  \cdots & 
 & a_1  \ar[r]^{A} \ar@/^2pc/[llllllll]^{B} 
 & a_{\infty} 
}
$$
\end{minipage}
}
 \label{figure-induced}
\end{figure}
}

\rem{
\subsection{Fragments}
\label{section-fragments}

In applications, it is often convenient to work with
special subsets of our languages $\lang(\bSigma)$ that
we call {\em fragments}.

\begin{definition}
A {\em fragment\/} of $\lang(\bSigma)$ is 
a set $F$ of sentences and programs such that the following hold:
\begin{enumerate}
\item $F$ is closed under subsentences and subprograms.
\item $F$ is closed under the sentence forming operations of
$\nott$, $\andd$, $\necc_A$, $\necc^*_{\BB}$, and $[\pi]\phi$.
\item $F$ is closed under the program forming operations of $\skipp$, union
$\union$, composition $\then$, and iteration  $\pi^*$.
\item If $\sigma \psi_1,\ldots, \psi_n$ belongs to $F$ and $\sigma\arrowA\tau$ for
some $A$, 
then also $\tau \psi_1,\ldots, \psi_n$ belongs to $F$.
\end{enumerate} 
\end{definition}

For example, consider the language of private announcements to the agent $A$.
The fragment of interest is the one where in basic actions $\sigma\ \psi_1,\psi_2$,
we have the following:
 if $\sigma$ is $\Pri^A$, then   $\psi_2$ is $\true$.
When dealing with this fragment, it is sensible 
abbreviate, for example, 
 $[\Pri^A\  \phi, \true]\psi
 $
as $[\Pri^A\ \phi]\psi$.  
We also would abbreviate any action $\skipp^X \vec{\psi}$ to just $\skipp$.
\footnote{{\bf from larry}: I need to check this point, and also the
related discussion of fragments after the completeness theorem.}
}

\rem{
\subsection{A guide to the concepts in this paper}
\label{section-guide}

In this section, we catalog the main notions that we shall introduce
in due course. After this, we turn to a discussion of the
particular languages that we study, and we situate them with
existing logical languages.

Recall the we insist on a distinction of syntactic and semantic
objects in this paper.
We list in Figure~\ref{fig-notions} the main notions
that we will need.   We do this mostly to help
 readers as they explore the different notions.  We mention
now that the various notions are not developed in the order
listed; indeed, we have tried hard to organize this paper
in a way that will be easiest to read and understand.
For example, one of our main goals is to present a set
of languages (syntactic objects) and their semantics
(utilizing semantic objects). 
}

\rem{
\begin{figure}[t]
\fbox{
\begin{minipage}{6.0in}
$$
\begin{diagram}[tight,size=2.84em]  
\lang_0 & \rInto & \lang_1 & & & & \\
\dInto &  & \dInto &  & & &  \\
\lang_0(\bSigma) & \rInto & \lang_1(\bSigma) & \rInto & \lang(\bSigma)
 & \rDashto & \lang^{\omega}_0  &  \rInto &\lang^{\infty}_0 \\
 & &  & \rdDashto &  & \ruDashto &    \\
&  &  & &  \mbox{PDL}   &  
 &   &   \\
\end{diagram}
$$
\end{minipage}
}
\caption{Languages in this paper\label{fig-languages}}
\end{figure}
}

\rem{
\paragraph{Languages}
This paper studies a number of languages, and to help the reader we
list these in Figure~\ref{fig-languages}.  Actually, what we study
are not individual languages, but rather {\em families\/} of
languages parameterized by different choices of primitives. 
It is standard in modal logic to begin with a set of atomic
propositions, and we do the same.  The difference is that we
shall call these  atomic {\em sentences\/} in order to make
a distinction between these essentially syntactic objects
and the {\em semantic\/} propositions that we study
beginning in Section~\ref{section-epistemic-actions}. This
is our first parameter, 
 a set $\AtSen$ of 
{\em atomic sentences}.  The second is a set $\Agents$ of {\em agents}.

Given these, $\lang_0$ is ordinary modal logic with
the elements of $\AtSen$ as atomic sentences and
with  agent-knowledge (or belief)
modalities $\necc_A$ for $A\in \Agents$.

We add {\em common-knowledge operators\/} $\necc_{\BB}^*$, for sets of 
agents $\BB\subseteq \Agents$,
to get a larger language $\lang_1$.
In Figure~\ref{fig-languages}, we note the fact that $\lang_0$ is 
literally a subset of $\lang_1$ by using the inclusion arrow.
The syntax and semantics of  $\lang_0$ and 
 $\lang_1$ are presented in
 Figure~\ref{fig-modal-lgs}.

Another close neighbor of the system in this paper is 
Propositional Dynamic Logic (PDL).  
PDL was first formulated by 
Fischer and Ladner~\cite{FischerLadner77}, following the introduction
of dynamic logic in
Pratt~\cite{pratt76}.
 The syntax and the main
clauses in the semantics of PDL are shown in Figure~\ref{fig-modal-pdl}.

We may also take $\lang_0$ and close under countable conjunctions
(and hence also disjunctions).  We call this language $\lang_0^{\omega}$.
Note that $\lang_1$ is not literally a subset of $\lang_0^{\omega}$, but
there is a translation of $\lang_1$ into $\lang_0^{\omega}$ that preserves
the semantics.  We would indicate this in a chart with a dashed arrow.

PDL is propositional dynamic logic, formulated with atomic
programs $a$ replacing the agents $A$ that we have fixed.
We might note that we can translate $\lang_1$ into PDL.
The main clauses in the translation are:
$$\begin{array}{lcl}
(\necc_A\phi)^t & \quadeq & [a]\phi^t \\
(\necc^*_{\BB}\phi)^t & \quadeq & [(\bigcup_{b\in B} b)^*]\phi^t \\
\end{array}
$$

Beginning in Section~\ref{section-logics-based-on-program-signatures},
we study some new languages.  These will be based on a third parameter,
an {\em action signature} $\bSigma$.  For each ``type of
epistemic action'' there will be an action signature
$\bSigma$.  For each $\bSigma$, we'll
have languages $\lang_0(\bSigma)$,  $\lang_1(\bSigma)$, and  $\lang(\bSigma)$. 
These will be related to the other languages as indicated in the figure.

So we shall extend modal logic by
adding one or another type of epistemic action.  The idea is then we can
generate a logic from an action signature $\bSigma$ corresponding to an
intuitive action. For example, corresponding
to the notion of a public announcement is a particular action signature
$\bSigmapub$, and then the languages 
 $\lang_0(\bSigmapub)$,   $\lang_1(\bSigmapub)$, and  $\lang(\bSigmapub)$
will have something to do with our notion of a ``logic of public announcements.''

In Section~\ref{section-expressivepower}, we compare the expressive
power of the languages mentioned so far.  
It turns out that all of the arrows in  Figure~\ref{fig-languages}
correspond to proper increases in expressive power.
(There is one exception: $\lang_0$ turns out to equal $\lang_0(\bSigma)$
in expressive power for all $\bSigma$.)
It is even more interesting to compare expressive power as we change
the action signature.  For example, we would like to compare logics
of public announcement to logics of private announcement to groups.
Most of the natural questions in this area are open as of this writing.

\begin{figure}[t]
\fbox{
\begin{minipage}{6.0in}
$$\begin{array}{llll}
  \mbox{\bf Syntax} & \mbox{\bf Sentences $\phi$} & \begin{array}{c|c|c|c|c}
  p_i  & \neg\varphi &  
\varphi\wedge\psi
 &\necc_A\varphi  & \necc_{\BB}^*\varphi
\end{array}\\ \\
 \mbox{\bf Semantics} & \mbox{\bf Main Clauses} & 
\begin{array}{lcl}
\semantics{\necc_A\phi}{\bS}   & \quadeq &
\necc_A \semantics{\phi}{\bS}  \\
 \semantics{\necc^*_{\BB}\phi}{\bS}   & \quadeq &
\necc^*_{\BB}\semantics{\phi}{\bS}    \\
\end{array}
\\ 
\rem{
\mbox{\bf Proof System}
& \mbox{\bf Basic Axioms} \\
  & \mbox{All sentential validities}\\
& \mbox{($\necc_A$-normality)} & \proves\necc_A(\pphi\iif \psi)\iif (\necc_A\pphi\iif\necc_A\psi) &
  \\
{*} &  \mbox{($\necc^*_{\CC}$-normality)}
 &  \proves\necc^*_{\CC}(\pphi\iif \psi)\iif 
       (\necc^*_{\CC}\pphi\iif\necc^*_{\CC}\psi) &
 \\ \\
{*} & \mbox{\bf Mix Axiom}
&  \proves \necc^*_{\CC}\phi\iif \phi\andd
\bigwedge \set{\necc_{A}\necc^*_{\CC}\phi: A\in \CC}
\\
\\
 & \mbox{\bf Modal Rules} \\
& \mbox{(Modus Ponens)}& \mbox{From } \proves\phi \mbox{ and }
   \proves \phi\iif\psi, \mbox{ infer } \proves \psi \\
& \mbox{($\necc_A$-necessitation)}&  \mbox{From }\proves \pphi, \mbox{ infer } \proves \necc_A \pphi &  \\
{*} &  \mbox{($\necc^*_{\CC}$-necessitation)}
&   \mbox{From }\proves \pphi, \mbox{ infer } \proves \necc^*_{\CC} \pphi 
\\
\\
{*} &  \mbox{\bf Induction Rule} &
\mbox{From $\proves  \chi\iif\psi$
and  $\proves  \chi\iif\necc_A\chi$ for $A\in \CC$,}\\
 & & \mbox{infer $\proves \chi\iif
\necc^*_{\CC}\psi$}
\\
} 
\end{array}
$$
\end{minipage}
}
\caption{The languages $\lang_0$ and $\lang_1$.
 For $\lang_0$, we drop the 
 $\necc^*_{\BB}$
construct.
 \label{fig-modal-lgs}}
\end{figure}

\begin{figure}[tbp]
\fbox{
\begin{minipage}{6.0in}
$$\begin{array}{llll}
  \mbox{\bf Syntax}  
 &  \begin{array}{l} \mbox{\bf Sentences $\phi$} \\
\mbox{\bf Programs $\pi$} \end{array}
 &
\begin{array}{c|c|c|c|c}
    p_i  & \neg\varphi &  
\varphi\wedge\psi
 &[\pi]\phi \\
a &  ?\phi  & \pi;\sigma & \pi \cup \sigma & \pi^* \\
\end{array}\\
 \\
 \mbox{\bf Semantics} & \mbox{\bf Main Clauses} & 
\semantics{[\pi]\phi}{\bS} = \set{s: \mbox{if $s \semantics{\pi}{\bS} 
t$, then
$t\in \semantics{\phi}{\bS}$}} \\
 & & 
\semantics{?\phi}{\bS} = \set{(s,s) : s\in\semantics{\phi}{\bS}} \\
& &
\semantics{\pi;\sigma}{\bS} = \semantics{\pi}{\bS}\then \semantics{\sigma}{\bS}    \\
& &
\semantics{\pi \cup \sigma}{\bS} = \semantics{\pi}{\bS}\cup \semantics{\sigma}{\bS}    \\
& &
\semantics{\pi^*}{\bS} = (\semantics{\pi}{\bS})^*
\\  
\rem{
\mbox{\bf Proof System}
& \mbox{\bf Basic Axioms} \\
 & \mbox{All sentential validities}\\
& \mbox{($[\pi]$-normality)} & \proves [\pi](\pphi\iif \psi)\iif
([\pi]\pphi\iif[\pi]\psi) &
 \\ \\
& \mbox{\bf Program Axioms} \\
  & \mbox{(Test)} & \proves [?\phi]\psi \iiff (\phi\iif \psi)\\
& \mbox{(Skip)} & \proves [\skipp]\phi \iiff \phi \\
& \mbox{(Composition)} & \proves [\pi;\sigma]\pphi \iiff [\pi][\sigma]\pphi \\
& \mbox{(Choice)} & \proves [\pi \union \sigma]\pphi\iiff 
([\pi]\pphi \andd [\sigma]\pphi) 
  &
 \\ \\
  & \mbox{\bf Mix Axiom}
&  \proves [\pi^*]\phi\iif \phi\andd [\pi][\pi^*]\phi
\\
\\
 & \mbox{\bf Modal Rules} \\
& \mbox{(Modus Ponens)}& \mbox{From } \proves\phi \mbox{ and }
   \proves \phi\iif\psi, \mbox{ infer } \proves \psi \\
& \mbox{($[\pi]$-necessitation)}&  \mbox{From }\proves \pphi, \mbox{ infer } \proves
[\pi] \pphi &  
\\
  &  \mbox{\bf Induction Rule} &
\mbox{From $\proves  \chi\iif\psi$
and  $\proves  \chi\iif [\pi]\chi$}\\
 & & \mbox{infer $\proves \chi\iif
[\pi^*]\psi$}
}  
\end{array}
$$
\end{minipage}
}
\caption{Propositional Dynamic Logic (PDL)
 \label{fig-modal-pdl}}
\end{figure}

}

\subsection{Basic properties}
 \label{section-bisim-preserve-updates}

\begin{proposition}
Let $\phi$ be a sentence of $\lang(\bSigma)$,
and let $\alpha$ be an action of $\lang(\bSigma)$.
Then
\begin{enumerate}
\item $\semanticsnm{\phi}$
is  preserved by bisimulation.
\item $\semanticsnm{\alpha}$ is standard and preserves bisimulation.
\end{enumerate}
\label{prop-bisim-preserve-again}
\end{proposition}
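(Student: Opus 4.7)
The plan is to prove both assertions simultaneously by induction on the joint syntactic structure of sentences and programs in $\lang(\bSigma)$. The induction couples the two claims because $[\pi]\phi$ uses the semantics of a program in the interpretation of a sentence, while $\sigma_i\psi_1\cdots\psi_n$ uses the semantics of sentences inside the interpretation of a program. Both directions of dependence are strictly decreasing in subformula/subprogram complexity, so the simultaneous recursion is well-defined.

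For the sentence cases, the atomic clauses $\true$ and $p$ are handled directly by Proposition~\ref{prop-bisimulation-preservation}, and the Boolean and modal clauses $\nott\phi$, $\phi\andd\psi$, $\necc_A\phi$, $\necc^*_{\BB}\phi$ follow from the induction hypothesis together with the closure of bisimulation-preserving propositions under the operations of Section~\ref{section-state-models}, again by Proposition~\ref{prop-bisimulation-preservation}. The dynamic clause $[\pi]\phi$ is where the coupling matters: by induction, $\semanticsnm{\phi}$ is bisimulation-preserving and $\semanticsnm{\pi}$ is a standard, bisimulation-preserving update, so part~2 of Proposition~\ref{prop-bisimulation-preservation-again} delivers that $[\semanticsnm{\pi}]\semanticsnm{\phi} = \semanticsnm{[\pi]\phi}$ is bisimulation-preserving.

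For the program cases, $\skippaction$ and $\actionzero$ are standard and trivially bisimulation-preserving (the identity relation and the empty relation satisfy both properties), which covers $\skipp$ and $\crash$. For the composition, union, and star clauses, I would combine Proposition~\ref{proposition-standard} with part~1 of Proposition~\ref{prop-bisimulation-preservation-again}: standardness and bisimulation-preservation are each closed under sequential composition and arbitrary unions, so by induction $\semanticsnm{\pi\then\rho}$, $\semanticsnm{\pi\union\rho}$, and $\semanticsnm{\pi^*} = \Union_n \semanticsnm{\pi}^n$ inherit both properties (the star case reducing to a secondary induction on $n$ that each finite iterate $\semanticsnm{\pi}^n$ is standard and bisimulation-preserving, followed by the infinitary-union closure).

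The one genuinely new case, and the step I expect to require the most care, is the basic action $\sigma_i\psi_1\cdots\psi_n$, whose semantics is the update induced by the signature-based program model $(\bSigma,\sigma_i,\semanticsnm{\psi_1},\ldots,\semanticsnm{\psi_n})$. Here I would first invoke the induction hypothesis for sentences to conclude that each $\semanticsnm{\psi_j}$ is preserved by bisimulation, whence the underlying action model is bisimulation-preserving in the sense of Section~\ref{section-bisimulation}. Proposition~\ref{proposition-bisim-preserve-induced} then yields that the induced update preserves bisimulation, and the remark immediately after the definition in Section~\ref{section-epistemic-program-models-give-updates} yields that the induced update is standard. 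This closes the induction, proving both parts of the proposition.
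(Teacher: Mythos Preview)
Your proof is correct and follows essentially the same approach as the paper: a simultaneous induction on the syntax of $\lang(\bSigma)$, invoking Proposition~\ref{prop-bisimulation-preservation} for the atomic and modal sentence cases, Proposition~\ref{prop-bisimulation-preservation-again} part~2 for $[\pi]\phi$, Proposition~\ref{proposition-standard} and Proposition~\ref{prop-bisimulation-preservation-again} part~1 for the program constructors, and Proposition~\ref{proposition-bisim-preserve-induced} together with the standardness remark for the basic actions $\sigma_i\vec{\psi}$. Your write-up is in fact slightly more explicit than the paper's, in particular in spelling out the base cases $\skipp$ and $\crash$ and the secondary induction for $\pi^*$.
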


\begin{proof}
By induction on $\lang(\bSigma)$.
For $\true$ and the atomic sentences $p_i$, we use
Proposition~\ref{prop-bisimulation-preservation}.
The same result takes care of the induction steps
for all the sentential operators except $[\pi]\phi$.
For this, we use 
Proposition~\ref{prop-bisimulation-preservation-again}
part 2,
and also the induction hypothesis.
Turning to the programs, the standardness comes from
Proposition~\ref{proposition-standard} and the 
observation that signature-based program models induce
standard updates.
The assertion that the interpretation of programs of
 $\lang(\bSigma)$ preserve bisimulation comes from
Proposition~\ref{prop-bisimulation-preservation-again}
part 1; for the programs $\sigma_i\vec{\psi}$ we
use Proposition~\ref{proposition-bisim-preserve-induced} 
and the induction hypothesis.
\end{proof}

In what follows,
we shall use Proposition~\ref{prop-bisim-preserve-again}
without mentioning it.  
\Section{The logical system for validity of $\lang(\bSigma)$ sentences}
\label{section-langaction}
\label{section-langzero}

We write $\models\phi$ to mean that for all state models $\bS$
and all $s\in S$, $s\in\semantics{\phi}{\bS}$.
In this case, we say that $\phi$ is {\em valid}.

\paragraph{Sublanguages} 
We are of course interested in the full
languages $\lang(\bSigma)$.  As have already been mentioned,
the satisfiability problems for 
these languages are in general not recursively axiomatizable.
(See~\cite{miller} for details on this.)
This is one of the reasons we also consider sublanguages
$\lang_0(\bSigma)$ and $\lang_1(\bSigma)$:
$\lang_1(\bSigma)$ is the fragment without the action
iteration construct $\pi^*$;
$\lang_0(\bSigma)$ is the fragment without $\pi^*$ and $\necc^*_{\BB}$.
It turns out that $\lang_0(\bSigma)$ is the easiest to study:
it is of the same expressive power as ordinary multi-modal logic.
The main completeness result of  the paper is
a {\em sound and complete\/} proof system for the validities
in $\lang_1(\bSigma)$.

In Figure~\ref{figure-logical-system} below
 we present a   logic for $\lang(\bSigma)$.   We write $\proves\phi$
if $\phi$ can be obtained from the axioms
of the system using its inference rules.  We often omit the 
turnstile $\proves$ when it is clear from the context.


\rem{The central issue in this paper is the study of the logical
systems presented in Figure~\ref{figure-logical-system}.
In this section, we take some first steps. 
}

\rem{
\begin{proof}
We show the last part.
Suppose that $s\in \semanticsoff{[\actionalpha]\propositionphi}{\bS}$,
and let $(\bS,s) \equiv (\bT,t)$.
To show that $t\in \semanticsoff{[\actionalpha]\propositionphi}{\bT}$,
let $t\ \actionalpha_{\bT}\ t'$. 
 Then by condition (2) above,  there is 
some $s'$ such that  $s\ \actionalpha_{\bS}\ s'$
 and $(\bS(\actionalpha),s') \equiv (\bT(\actionalpha),t')$.   Since 
$s\in \semanticsoff{[\actionalpha]\propositionphi}{\bS}$, we have 
$s'\in {\propositionphi}_{\bS(\actionalpha)}$.
And then
$t'\in {\propositionphi}_{\bT(\actionalpha)}$, since $\propositionphi$ too
is preserved by bisimulation.
\end{proof}
}

\rem{
\begin{proof} 
 Write  $\actionalpha$ for the update induced 
by 
the program model
  $(\bSigma, \Gamma)$. Fix $\bS$ and $\bT$, and suppose that $s\equiv t$
via the relation $R_0$.
Suppose that $s\ {\actionalpha}_{\bS}\ s'$, so $s'\in \bS(\actionalpha)$
 is of the form
$(s,\sigma)$ for some $\sigma\in\Gamma$.  
Then $(t,\sigma)\in \bT(\actionalpha)$, and
 clearly $t\ {\actionalpha}_{\bT}\ (t,\sigma)$.
We need only show that $(s,\sigma)\equiv (t,\sigma)$.  But
the following relation $R$
 is a bisimulation between $\bS(\actionalpha)$ and 
$\bT(\actionalpha)$:
$$ (s',\tau_1)  \ R\ (t',\tau_2) 
\quadiff \mbox{$s' \ R_0\ t'$ and $\tau_1 = \tau_2$}. 
$$
The verification of the bisimulation properties is easy.
And $R$ shows that  $(s,\sigma)\equiv (t,\sigma)$,
as desired.
\end{proof}
}

\paragraph{Logics generated by families of signatures}
 Given a family  $\mathcal{S}$ of action signatures, we would
like to combine all the logics $\{\lang(\bSigma)\}_{\bSigma\in\mathcal{S}}$ into a single
logic. Let us assume the signatures $\bSigma\in\mathcal{S}$ are mutually disjoint
(otherwise, just choose mutually disjoint copies of these signatures).
We define {\em the logic $\lang(\mathcal{S})$ generated by the family $\mathcal{S}$} in the following way: the syntax is
defined by taking the same definition we had in Figure~\ref{fig-semantics}
for the syntax of $\lang(\bSigma)$, but in
which on the side of the programs we take instead as {\em basic actions} all
expressions of the form
$$\sigma_i \psi_1 \cdots\psi_n$$
where $\sigma\in \Sigma$, for some 
 signature $\bSigma\in\mathcal{S}$, and $n$ is the length
of the listing of non-trivial action types of $\bSigma$. 
The semantics is again given by the same definition as in
Figure~\ref{fig-semantics},
but in which the clause about $\sigma\psi_1\cdots\psi_n$ refers to
the appropriate signature: for every $\bSigma\in\mathcal{S}$, every
$\sigma\in\Sigma$, if $n$ is the length of 
the listing of  $\bSigma$,
then 
$$\semanticsnm{\sigma_i\psi_1 \cdots \psi_n} 
\quadeq
(\bSigma,\sigma,\semanticsnm{\psi_1},
\ldots,\semanticsnm{\psi_n}).
$$

\rem{
Most of  the system
will be quite standard from modal logic.  The Action Axioms
are    new, however.   These include the 
 in the Atomic Permanence axiom; note that in this axiom
$p$ is an atomic sentence.
The axiom  says that announcements
do not change the brute fact  of whether or not $p$ holds.
This axiom reflects the fact that our actions do not change
any kind of local state.   
}

\rem{
The
 Action-Knowledge Axiom
gives a criterion for knowledge after an action.
It is perhaps easier to appreciate in dual form:
 $$\pair{\sigma_i\vec{\psi}}\poss_A\phi
\iiff   ( \psi_i \andd
\bigvee\set{
 \poss_A [\sigma_j\vec{\psi}]\phi :
\sigma_i\arrowA\sigma_j \mbox{ in $\bSigma$}}).$$
In words, two sentences are equivalent: first, the assertion
that the action $\sigma_i\vec{\psi}$ may be executed and as 
a result, $A$ will consider $\phi$ possible;
and second, 
the precondition $\psi_i$ of the action holds, and 
$A$ considers it possible (before the action) that 
the action is really  $\sigma_j$ and that there is some possible
world in which that action 
$\sigma_j$  may be executed and results in a state satisfying $\phi$.
This axiom should be compared with the Ramsey axiom in conditional
logic. 

The Action Rule gives a necessary criterion for 
{\em common\/} knowledge after a simple action.   Since common knowledge
is formalized by the $\necc^*_{\BB}$ construct, 
this rule is
a kind of induction rule.  (The sentences $\chi_{\beta}$ 
play the role of strong induction hypotheses.) 

}

\rem{
As it happens,  the logic without the $**$ axioms is complete for
$\lang_1(\bSigma)$.  Dropping both the $*$ and $**$ axioms gives
a system complete for $\lang_0(\bSigma)$.  Our completeness results
appear in a separate paper.
}

\begin{figure}[tbp]
\fbox{
\begin{minipage}{6.0in}
\label{logic}
\medskip
$$\begin{array}{rlll}
& \mbox{\bf Basic Axioms} \\
  & \mbox{All sentential validities}\\
& \mbox{($[\pi]$-normality)} &
\proves [\pi]  (\phi \iif  \psi)
   \iif ([\pi] \phi \iif [\pi] \psi)\\
& \mbox{($\necc_A$-normality)} & \proves\necc_A(\pphi\iif \psi)\iif (\necc_A\pphi\iif\necc_A\psi) &
  \\
{*} &  \mbox{($\necc^*_{\CC}$-normality)}
 &  \proves\necc^*_{\CC}(\pphi\iif \psi)\iif 
       (\necc^*_{\CC}\pphi\iif\necc^*_{\CC}\psi) &
 \\ \\
 & \mbox{\bf Action Axioms}\quad \quad \quad &
\\
 & \mbox{(Atomic Permanence)} & \proves [\sigma_i\vec{\psi}]  p\iiff(\psi_i\iif p)\\
 &  \mbox{(Partial  Functionality}) & \proves [\sigma_i\vec{\psi}]\nott\chi \iiff
(\psi_i\iif  \nott [\sigma_i\vec{\psi}]\chi) &\\
 & \mbox{(Action-Knowledge)} &\proves [\sigma_i\vec{\psi}]\necc_A\phi
\iiff   ( \psi_i \iif
\bigwedge\set{
 \necc_A [\sigma_j\vec{\psi}]\phi :
\sigma_i\arrowA\sigma_j \mbox{ in $\bSigma$}}) &\\
\\
{**} & \mbox{\bf Action Mix Axiom}
&  \proves [\pi^*]\phi\iif \phi\andd [\pi][\pi^*]\phi
\\ 
{*} & \mbox{\bf Epistemic Mix Axiom}
&  \proves \necc^*_{\CC}\phi\iif \phi\andd
\bigwedge \set{\necc_{A}\necc^*_{\CC}\phi: A\in \CC}
\\
\\
 & \mbox{\bf Skip  Axiom}
  & \proves[\skipp]\phi \iiff  \phi
\\ 
& \mbox{\bf Crash  Axiom}
  & \proves[\crash]\false 
\\  
 & \mbox{\bf Composition Axiom}
  & \proves[\pi][\rho]\phi \iiff [\pi\then\rho]\phi
\\
 & \mbox{\bf Choice  Axiom}
  & \proves[\pi \union \rho]\phi \iiff [\pi]\phi \andd [\rho]\phi
\\ \\
\hline \\ 
 & \mbox{\bf Modal Rules} \\
& \mbox{(Modus Ponens)}& \mbox{From } \proves\phi \mbox{ and }
   \proves \phi\iif\psi, \mbox{ infer } \proves \psi \\
& \mbox{($[\pi]$-necessitation)}
&   \mbox{From }\proves \psi, \mbox{ infer } \proves [\pi] \psi   \\
& \mbox{($\necc_A$-necessitation)}&  \mbox{From }\proves \pphi, \mbox{ infer } \proves \necc_A \pphi &  \\
{*} &  \mbox{($\necc^*_{\CC}$-necessitation)}
&   \mbox{From }\proves \pphi, \mbox{ infer } \proves \necc^*_{\CC} \pphi 
\\ \\
{**} & \mbox{\bf Program Induction Rule} & \mbox{From }
\proves \chi\iif \psi\andd [\pi]\chi,
\mbox{infer } \proves\chi\iif [\pi^*]\psi
\\
{*} &  \mbox{\bf Action Rule}\\
\end{array}
$$
\begin{narrower}
Let $\psi$ be a  sentence, let $\alpha$ be a simple action,
and let $\CC$ be a set
of agents. Let there be sentences $\chi_{\justplain{\beta}}$ for
all $\justplain{\beta}$ such that
 $\justplain{\alpha}\rightarrow_{\CC}^* \justplain{\beta}$
(including $\justplain{\alpha}$ itself),
and such  that
\begin{enumerate}
\item $\proves\chi_{\justplain{\beta}}\iif [\justplain{\beta}]\psi$;
\item
if  $A\in\CC$ and $\justplain{\beta}\arrowA\justplain{\gamma}$,
then $\proves (\chi_{\justplain{\beta}} \andd\Pre(\beta) )
\iif \necc_A \chi_{\gamma}$.
\end{enumerate}
{} \noindent From these assumptions, infer
 $\proves\chi_{\justplain{\alpha}}\iif [\justplain{\alpha}]\necc_{\CC}^*\psi$.
\end{narrower}
\medskip
\end{minipage}
}
\vskip .25in
\caption{The logical system for $\lang(\bSigma)$.
For $\lang_1(\bSigma)$, we drop the ${**}$ axioms and rule;
for $\langaction$, we also drop the
 $*$ axioms and rules.
The definition of $\Pre$ in the Action Rule is given in 
Section~\ref{section-bOmega}.
\label{figure-logical-system}}
\end{figure}

\rem{
\subsection{Some derivable principles}

  In this section, we prove the soundness of
all the axioms and rules of the system except for the 
Action Rule.
We delay  discussion of the Action Rule
 mostly because it  takes a fair amount or
work to prove the soundness.  Also, delaying things
emphasizes   the significance of  adding   the
infinitary operators $\necc_{\BB}^*$ to 
$\langaction$ and then the program repetition construct $\pi^*$ to
this.

\begin{lemma}
The Action-Knowledge Axiom is provable for all simple actions
$\alpha$:
\begin{equation}  
\proves [\alpha]\necc_A\phi
\iiff   ( \Pre(\alpha) \iif
\bigwedge\set{
 \necc_A [\beta]\phi :
\alpha\arrowA\beta \mbox{ in $\bOmega$}})
\label{eq-aka-general}
\end{equation}
\label{lemma-aka-general}
\end{lemma}

\begin{lemma}
For all $A\in \CC$, all simple $\alpha$, and all
$\beta$
such that
 $\alpha\rightarrow_{A}  \beta$,
\begin{enumerate}
\item $\proves [\alpha]\necc_{\CC}^*\psi
\iif [\alpha]\psi$.
\label{part-ffff}
\item $\proves [\alpha]\necc_{\CC}^*\psi\andd\Pre(\alpha)
\iif \necc_A
[\beta]\necc_{\CC}^*\psi$.
\label{part-hhhh}
\end{enumerate}
\label{lemma-converse-rule}
\end{lemma}

\begin{proof}
 Part (\ref{part-ffff})
follows easily from the Epistemic Mix Axiom and
modal reasoning.
For part (\ref{part-hhhh}), we start with a consequence of
  the Epistemic Mix Axiom:
$\proves  \necc_{\CC}^*\psi\iif
 \necc_A\necc_{\CC}^*\psi$.
Then by modal reasoning,
$\proves  [\alpha] \necc_{\CC}^*\psi\iif
 [\alpha] \necc_A\necc_{\CC}^*\psi$.
By the Action-Knowledge Axiom generalized as we have it 
in Lemma~\ref{lemma-aka-general},  we have
$\proves [\alpha]\necc_{\CC}^*\psi\andd\Pre(\alpha)\iif
  \necc_A
[\beta]\necc_{\CC}^*\psi$.
\end{proof}
}

\subsection{Soundness of the axioms}
\label{section-soundness-langzero}

In this section, we check the soundness of the axioms
of the system.  The basic axioms are all routine, and so
we omit the details on them.

\begin{proposition}
The Atomic Permanence Axiom $[\sigma_i\vec{\psi}]  p\iiff(\psi_i\iif p)$ 
 is sound.
\label{proposition-atomic-permanence-soundness}
\label{prop-actions-preserve-atomics}
\end{proposition}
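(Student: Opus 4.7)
The plan is to unpack the semantics on both sides and do a two-case split on whether the precondition $\psi_i$ holds at the state in question. Fix an arbitrary state model $\bS$ and a state $s\in S$; we want to show that $s\in \semanticsnm{[\sigma_i\vec{\psi}]p}_{\bS}$ iff $s\in \semanticsnm{\psi_i\iif p}_{\bS}$.

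First I would unfold the left-hand side using the semantic clause for $[\pi]\phi$ (equation (\ref{eq-alpha-phi})) together with the fact that $\semanticsnm{\sigma_i\vec{\psi}}$ is the update induced by the signature-based program model $(\bSigma,\sigma_i,\semanticsnm{\psi_1},\ldots,\semanticsnm{\psi_n})$. By the definition of the induced update in Section~\ref{section-epistemic-program-models-give-updates}, combined with the consistency clause (\ref{eq-consistent}), the update relation sends $s$ to $(s,\sigma_i)$ precisely when $s\in (\semanticsnm{\psi_i})_{\bS}$, and to nothing otherwise. So $s\in \semanticsnm{[\sigma_i\vec{\psi}]p}_{\bS}$ is equivalent to: either $s\notin (\semanticsnm{\psi_i})_{\bS}$, or $s\in (\semanticsnm{\psi_i})_{\bS}$ and $(s,\sigma_i)\in \semanticsnm{p}_{\bS\otimes\bSigma}$.

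Next I would simplify the $(s,\sigma_i)\in \semanticsnm{p}_{\bS\otimes\bSigma}$ clause. Since $p$ is an atomic sentence, $\semanticsnm{p}=\propositionp$, and by the definition of the product valuation (\ref{eq-new-valuation}) together with the definition of $\propositionp$ in terms of $\|p\|$, we have $(s,\sigma_i)\in \propositionp_{\bS\otimes\bSigma}$ iff $s\in \|p\|_{\bS}=\propositionp_{\bS}=\semanticsnm{p}_{\bS}$. So the left-hand side reduces to: $s\notin \semanticsnm{\psi_i}_{\bS}$ or $s\in \semanticsnm{p}_{\bS}$, which is exactly the truth condition for $\psi_i\iif p$ at $s$. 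This gives the equivalence at every state of every state model, so the biconditional is valid.

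There is no real obstacle here: the only subtle point is to keep the syntactic/semantic distinction straight (the $\psi_i$ on the left of the axiom is a sentence, its denotation $\semanticsnm{\psi_i}$ is what serves as the precondition in the induced program model) and to invoke the \emph{atomicity} of $p$ at exactly one place, namely when appealing to (\ref{eq-new-valuation}) to conclude that the valuation of $p$ is inherited pointwise by the product. The axiom would fail for a compound sentence $\phi$ in place of $p$, precisely because the product valuation clause (\ref{eq-new-valuation}) only governs atomic sentences.
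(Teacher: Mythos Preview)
Your proposal is correct and follows essentially the same approach as the paper: unfold the semantics of $[\sigma_i\vec{\psi}]p$, observe that the update relation sends $s$ at most to $(s,\sigma_i)$ and does so exactly when $s\in\semanticsnm{\psi_i}_{\bS}$, and then use the product-valuation clause (\ref{eq-new-valuation}) for atomic $p$ to reduce $(s,\sigma_i)\in\propositionp_{\bS\otimes\bSigma}$ to $s\in\propositionp_{\bS}$. The paper arranges these steps as a chain of six equivalent statements rather than as your case split on whether $\psi_i$ holds, but the content is the same, and you correctly flag that atomicity of $p$ is used at exactly the step invoking (\ref{eq-new-valuation}).
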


\begin{proof}
Recall from Section~\ref{section-signature-based-epistemic-program-models}
 how
$\semanticsnm{\sigma_i\vec{\psi}}$ works; call this update
$\actionalpha$.
 Fix a state model $\bS$.
The following are equivalent:
\begin{enumerate}
\item  $s\in \semantics{[\sigma_i\vec{\psi}]{p}}{\bS}$.
\item  $s\in \semanticsoff{[\actionalpha]{\propositionp}}{\bS}$.
\item If $s\  {\actionalpha}_{\bS} \ t$, then $t\in
\propositionp_{\bS(\actionalpha)}$.
\label{pt3}
\item If $(s,\sigma_i)\in  \bS(\actionalpha)$, then $(s,\sigma_i)\in
\propositionp_{\bS(\actionalpha)}$.
\label{pt4}
\item If $s\in \semantics{\psi_i}{\bS}$, then $s\in \|p\|_{\bS} =
\propositionp_{\bS}$.
\label{pt5}
\item $s\in \semantics{\psi_i\iif {p}}{\bS}$.
\end{enumerate}
Most of the individual equivalences are easy, and we only 
comment on  (\ref{pt3})$\Longleftrightarrow$(\ref{pt4})
and (\ref{pt4})$\Longleftrightarrow$(\ref{pt5}).
the definition of the update
$\actionalpha = (\bSigma,\sigma_i)$ implies that
$\bS(\actionalpha) = \bS \otimes\bSigma$.
And $\actionalpha_{\bS}$ has the property that everything
which $s\in S$
is related to   by $\actionalpha_{\bS}$ is of the form
$(s,\sigma_i)$, where $\sigma_i$ is from the statement of 
this axiom
and $(s,\sigma_i)\in  \bS(\actionalpha)$.
These points imply the equivalence 
(\ref{pt3})$\Longleftrightarrow$(\ref{pt4}).  For
(\ref{pt4})$\Longleftrightarrow$(\ref{pt5})
note that 
   $(s,\sigma_i)\in  \bS(\actionalpha)$ iff 
$s\in \Presemantic(\sigma_i)_{\bS}= \semantics{\psi_i}{\bS}$; 
also  the definition of the update product in 
equation (\ref{eq-new-valuation}) implies that
$(s,\sigma_i)\in
\propositionp_{\bS(\actionalpha)}$ iff  $s\in \|p\|_{\bS} $.
\end{proof}

\begin{proposition}
The Partial Functionality Axiom  $[\sigma_i\vec{\psi}]\nott\chi \iiff
(\psi_i\iif  \nott [\sigma_i\vec{\psi}]\chi)$
 is sound.
\label{proposition-partial-functionality-soundness}
\end{proposition}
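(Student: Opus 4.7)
The plan is to verify validity semantically, in parallel with the proof just given for Atomic Permanence. Fix a state model $\bS$ and a state $s\in S$, and let $\actionalpha$ denote the update $\semanticsnm{\sigma_i\vec{\psi}}$, which is the update induced by the signature-based program model $(\bSigma,\sigma_i,\semanticsnm{\psi_1},\ldots,\semanticsnm{\psi_n})$. The crucial structural fact I would use is that $\Gamma=\{\sigma_i\}$ is a singleton, so by the definition in Section~\ref{section-epistemic-program-models-give-updates}, $s\ \actionalpha_{\bS}\ (t,\sigma)$ holds iff $t=s$, $\sigma=\sigma_i$, and $(s,\sigma_i)\in \bS\otimes\bSigma$, i.e., iff $s\in\semantics{\psi_i}{\bS}$. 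Thus $\actionalpha_{\bS}$ itself is a partial function whose domain is $\semantics{\psi_i}{\bS}$, and whose unique value at $s$ (when defined) is $(s,\sigma_i)$.

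With this in hand I would split into two cases at $s$. In the first case, $s\notin\semantics{\psi_i}{\bS}$: then $\actionalpha_{\bS}$ has no output at $s$, so $s\in\semanticsoff{[\actionalpha]\chi'}{\bS}$ vacuously for any $\chi'$, in particular for $\chi' = \nott\chi$; hence $s\in\semantics{[\sigma_i\vec{\psi}]\nott\chi}{\bS}$. On the right-hand side, $s\in\semantics{\psi_i\iif\nott[\sigma_i\vec{\psi}]\chi}{\bS}$ is also vacuously true since $s\notin\semantics{\psi_i}{\bS}$. So the biconditional holds at $s$.

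In the second case, $s\in\semantics{\psi_i}{\bS}$: the unique $\actionalpha_{\bS}$-successor of $s$ is $(s,\sigma_i)$, so the left-hand side is equivalent to $(s,\sigma_i)\in\semanticsoff{\nott\propositionchi}{\bS(\actionalpha)}$, i.e., to $(s,\sigma_i)\notin\semantics{\chi}{\bS(\actionalpha)}$. Similarly, $s\in\semanticsoff{[\actionalpha]\propositionchi}{\bS}$ is equivalent to $(s,\sigma_i)\in\semantics{\chi}{\bS(\actionalpha)}$, so $s\in\semantics{\nott[\sigma_i\vec{\psi}]\chi}{\bS}$ is equivalent to the same negation. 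Since the antecedent $\psi_i$ holds at $s$, the right-hand side $\semantics{\psi_i\iif\nott[\sigma_i\vec{\psi}]\chi}{\bS}$ collapses to $\semantics{\nott[\sigma_i\vec{\psi}]\chi}{\bS}$, matching the left-hand side.

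I do not expect a genuine obstacle here: the axiom reflects exactly the fact that each simple action, as an update, has at most one output per input state. The only subtlety worth double-checking is the direction of the conditional on the right: one must confirm that the vacuous case (precondition failing) is handled by the conditional rather than by a conjunction, which is precisely why $\psi_i\iif\nott[\sigma_i\vec{\psi}]\chi$ is the correct formulation. The rest is routine unpacking of the definitions of the update product and of $[\actionalpha]\propositionphi$ from equation (\ref{eq-alpha-phi}).
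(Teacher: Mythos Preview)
Your proof is correct and takes essentially the same approach as the paper: both arguments hinge on the observation that, because $\Gamma=\{\sigma_i\}$ is a singleton, the update relation $\actionalpha_{\bS}$ is a partial function sending $s$ (when $s\in\semantics{\psi_i}{\bS}$) to its unique successor $(s,\sigma_i)$. The only difference is presentational---the paper writes out a single chain of equivalences while you split into the two cases $s\notin\semantics{\psi_i}{\bS}$ and $s\in\semantics{\psi_i}{\bS}$---but the content is the same.
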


\begin{proof}
Again, let $\actionalpha  = \semanticsnm{\sigma_i\vec{\psi}}$.
Also, let $\propositionchi = \semanticsnm{\chi}$.
 Fix a state model $\bS$.
The following are equivalent:
\begin{enumerate}
\item  $s\in \semantics{[\sigma_i\vec{\psi}]\nott\chi}{\bS}$.
\item  $s\in \semanticsoff{[\actionalpha]{\nott\propositionchi}}{\bS}$.
\item If $s\ {\actionalpha}_{\bS} \ t$, then $t\in
\nott\propositionchi_{\bS(\actionalpha)}$.
\label{ppfs-crucial-first}
\item If $s\ {\actionalpha}_{\bS} \ t$, then $t\notin
\propositionchi_{\bS(\actionalpha)}$.
\label{ppfs-crucial}
\item  If $s\in \semantics{\psi_i}{\bS}$, then $(s,\sigma_i)\notin 
([\actionalpha]\propositionchi)_{\bS}$. 
\item  If $s\in \semantics{\psi_i}{\bS}$, then $s\notin
\nott\propositionchi_{\bS(\actionalpha)}$.
\item If $s\in \semantics{\psi_i}{\mathbf S}$, then $s\in
(\nott [\actionalpha]\propositionchi)_{\bS}$.
\item $s\in \semantics{\psi_i\iif  \nott [\actionalpha]\propositionchi}{\bS}$.
\end{enumerate}
The crucial equivalence here is  
(6)$\Longrightarrow$(5). 
The reason this holds
is that in the action model for
$\sigma_i\vec{\psi}$, there is just one distinguished world.
So for each fixed $s\in S$, there is at most one $t$ such that
  $s\ {\actionalpha}_{\bS} \ t$; i.e., $(s,\sigma_i)$.
\end{proof}

\begin{proposition}
The Action-Knowledge Axiom 
$$[\sigma_i\vec{\psi}]\necc_A\phi
\iiff   ( \psi_i \iif
\bigwedge\set{
 \necc_A [\sigma_j\vec{\psi}]\phi :
\sigma_i\arrowA\sigma_j \mbox{ in $\bSigma$}})
$$
is sound.
\label{proposition-AB-soundness}
\end{proposition}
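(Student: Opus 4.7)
The plan is to mirror the chain-of-equivalences style used in the previous two soundness proofs. Let $\actionalpha = \semanticsnm{\sigma_i\vec{\psi}}$ and, for each action type $\sigma_j$, let $\actionbeta_j = \semanticsnm{\sigma_j\vec{\psi}}$; also set $\propositionphi = \semanticsnm{\phi}$ and $\propositionpsi_k = \semanticsnm{\psi_k}$. The critical observation that makes the whole argument go through is that every $\actionalpha$ and $\actionbeta_j$ use the same signature $\bSigma$ and the same tuple of preconditions, and so $\bS(\actionalpha) = \bS(\actionbeta_j) = \bS\otimes\bSigma$; the updates differ only in the distinguished action, hence only in their update relation. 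In particular, $\actionalpha_{\bS}$ sends $s$ to the unique world $(s,\sigma_i)$ when $s\in (\propositionpsi_i)_{\bS}$ and to nothing otherwise, while $(\actionbeta_j)_{\bS}$ does the same with $\sigma_j$ in place of $\sigma_i$.

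First I would unfold the left-hand side. Using partial functionality of $\actionalpha_{\bS}$ (just as in the Partial Functionality proof), $s\in \semanticsoff{[\actionalpha]\necc_A\propositionphi}{\bS}$ is equivalent to: \emph{if} $s\in (\propositionpsi_i)_{\bS}$, \emph{then} $(s,\sigma_i)\in (\necc_A\propositionphi)_{\bS\otimes\bSigma}$. Expanding $\necc_A$ by (\ref{eq-semanticsbox}) and using the product arrow clause (\ref{eq-product}), this becomes: if $s\in(\propositionpsi_i)_{\bS}$, then for every $\sigma_j$ with $\sigma_i\arrowA\sigma_j$ and every $t$ with $s\arrowA t$ in $\bS$ such that $(t,\sigma_j)\in\bS\otimes\bSigma$ (i.e.\ $t\in(\propositionpsi_j)_{\bS}$, by (\ref{eq-consistent})), we have $(t,\sigma_j)\in\propositionphi_{\bS\otimes\bSigma}$.

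Next I would unfold the right-hand side symmetrically. By definition of $\iif$ and $\bigwedge$, $s\in \semanticsoff{\propositionpsi_i\iif\bigwedge_{\sigma_i\arrowA\sigma_j}\necc_A[\actionbeta_j]\propositionphi}{\bS}$ is equivalent to: if $s\in(\propositionpsi_i)_{\bS}$, then for every $\sigma_j$ with $\sigma_i\arrowA\sigma_j$ and every $t$ with $s\arrowA t$ in $\bS$, we have $t\in ([\actionbeta_j]\propositionphi)_{\bS}$. Applying the semantics of $[\actionbeta_j]$ and using that $(\actionbeta_j)_{\bS}$ relates $t$ only to $(t,\sigma_j)$ when $t\in(\propositionpsi_j)_{\bS}$, this becomes precisely the condition obtained from the left-hand side.

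The main obstacle, and the step that needs the most care, is the bookkeeping that shows these two unfoldings are literally the same condition. The key combinatorial point is the factorization in (\ref{eq-product}): a single arrow $(s,\sigma_i)\arrowA(t,\sigma_j)$ in $\bS\otimes\bSigma$ decomposes as an $\arrowA$-arrow in $\bS$ together with an $\arrowA$-arrow in $\bSigma$, and it is this decomposition that becomes the universal quantifier over $\sigma_j$ (producing the conjunction over $\{\sigma_j : \sigma_i\arrowA\sigma_j\}$) on the right-hand side. Once this is seen, together with the identification $\bS(\actionalpha)=\bS(\actionbeta_j)=\bS\otimes\bSigma$, the soundness follows by running the two chains of equivalences to their common meeting point.
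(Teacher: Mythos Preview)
Your proposal is correct and is essentially the paper's own argument: both proofs unfold $[\sigma_i\vec{\psi}]\necc_A\phi$ via partial functionality to a condition on $(s,\sigma_i)$ in $\bS\otimes\bSigma$, factor the product arrow $(s,\sigma_i)\arrowA(t,\sigma_j)$ into its $\bS$- and $\bSigma$-components, and then re-fold using the semantics of $[\sigma_j\vec{\psi}]$. Your explicit observation that $\bS(\actionalpha)=\bS(\actionbeta_j)$ (they share the same underlying action model and differ only in the distinguished action) is exactly what makes the paper's step from its item~(6) to its item~(7) go through.
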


\begin{proof} 
Again let $\actionalpha = \semanticsnm{\sigma_i\vec{\psi}}$.
 Fix a state model $\bS$.
The following are equivalent:
\begin{enumerate}
\item   $s\in \semantics{[\sigma_i\vec{\psi}]\necc_A\phi}{\bS}$.
\item  $s\in [\actionalpha]\semantics{\necc_A\phi}{\bS}$.
\item   If $s\ {\actionalpha}_{\bS}\ u$, then $u\in
\semantics{\necc_A\phi}{\bS(\actionalpha)}$.
\item If $s\in \semantics{\psi_i}{\bS}$, then 
$(s,\sigma_i)\in \semantics{\necc_A\phi}{\bS(\actionalpha)}$.

\item 
If $(s,\sigma_i)\in \bS(\actionalpha)$, then for all $(t,
\sigma_j)$  such that $(t,\sigma_j)\in \bS(\actionalpha)$,
 $s\arrowA t$ and $\sigma_i\arrowA \sigma_j$, we have
$(t,\sigma_j)\in \semantics{\phi}{\bS(\actionalpha)}$.
\rem{  \item I think this is point 6 -- deleted per Slawek's suggestion
If $(s,\sigma_i)\in \bS(\actionalpha)$, then for all 
$t$ such that $s\arrowA t$,  if $\sigma_i\arrowA \sigma_j$ 
and
$(t,\sigma_j)\in \bS(\actionalpha)$,
  then
$(t,\sigma_j)\in \semantics{\phi}{\bS(\actionalpha)}$.
}\item If $(s,\sigma_i)\in \bS(\actionalpha)$, then for all 
$t$ and all $\sigma_j$
such that $s\arrowA t$ and $\sigma_i\arrowA \sigma_j$: 
if
$(t,\sigma_j)\in \bS(\actionalpha)$,
  then
$(t,\sigma_j)\in \semantics{\phi}{\bS(\actionalpha)}$.
\item If $(s,\sigma_i)\in \bS(\actionalpha)$, then for all 
$t$ and $\sigma_j$
such that $s\arrowA t$ and  $\sigma_i\arrowA \sigma_j$:
$t\in \semantics{[\sigma_j\vec{\psi}]\phi}{\bS}$.
\item
 If $s\in \semantics{\psi_i}{\bS}$, 
then for all 
$\sigma_j$ such that $\sigma_i\arrowA\sigma_j$,
$s\in \semantics{\necc_A [\sigma_j\vec{\psi}]\phi}{\mathbf S}$.
\item    $s\in  \semantics{\psi_i \iif
\bigwedge\set{\necc_A [\sigma_j\vec{\psi}]\phi :
\sigma_i\arrowA\sigma_j }}{\bS}$.
\end{enumerate}
\end{proof}


\begin{proposition}
The  Action Mix Axiom  $[\pi^*]\phi\iif \phi\andd [\pi][\pi^*]\phi$
is sound.  The Program Induction Rule
is also sound: 
from $\proves \chi\iif \psi\andd [\pi]\chi$,
infer $\proves\chi\iif [\pi^*]\psi$.
\label{proposition-actionmix-soundness}
\end{proposition}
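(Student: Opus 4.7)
The plan is to work directly from the semantics, using bisimulation transfer as the main tool. Write $\actionalpha = \semanticsnm{\pi}$, so that $\semanticsnm{\pi^*} = \actionalpha^* = \Union\{\actionalpha^n : n\in\mathbb{N}\}$ with $\actionalpha^0 = \skippaction$ and $\actionalpha^{n+1} = \actionalpha^n \then \actionalpha$. For any state model $\bS$, the model $\bS(\actionalpha^*)$ is the disjoint union of the models $\bS(\actionalpha^n)$, and for each $n$ the natural injection $\iota_n\colon \bS(\actionalpha^n)\to \bS(\actionalpha^*)$ is a graph isomorphism onto its image, hence a total bisimulation. By Proposition~\ref{prop-bisim-preserve-again}, the interpretation of any $\lang(\bSigma)$-sentence transfers freely along $\iota_n$. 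I will also use associativity of $\then$ to canonically identify $\actionalpha \then \actionalpha^m$ with $\actionalpha^{m+1}$ on both the model and the update-relation sides.

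For the Action Mix Axiom $[\pi^*]\phi \iif \phi \andd [\pi][\pi^*]\phi$, fix $\bS$ and $s\in \semanticsnm{[\pi^*]\phi}_{\bS}$. For the conjunct $\phi$: take $n=0$; then $(\actionalpha^0)_{\bS}$ is the identity on $S$, so $\iota_0(s)$ satisfies $\phi$ in $\bS(\actionalpha^*)$, and pulling back along the bisimulation $\iota_0$ gives $s\in \semanticsnm{\phi}_{\bS}$. For the conjunct $[\pi][\pi^*]\phi$: let $s\ \actionalpha_{\bS}\ u$ and $u\ (\actionalpha^m)_{\bS(\actionalpha)}\ t$; by associativity, the canonical image $t'$ of $t$ in $\bS(\actionalpha^{m+1})$ satisfies $s\ (\actionalpha^{m+1})_{\bS}\ t'$, so $t'\in \semanticsnm{\phi}_{\bS(\actionalpha^{m+1})}$ by the hypothesis on $s$, and bisimulation transfer delivers $t\in \semanticsnm{\phi}_{\bS(\actionalpha)(\actionalpha^*)}$, as required.

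For the Program Induction Rule, assume $\models \chi \iif \psi \andd [\pi]\chi$, fix $\bS$ and $s\in \semanticsnm{\chi}_{\bS}$, and prove by induction on $n$ that every $t$ with $s\ (\actionalpha^n)_{\bS}\ t$ lies in $\semanticsnm{\chi}_{\bS(\actionalpha^n)} \cap \semanticsnm{\psi}_{\bS(\actionalpha^n)}$. The base $n=0$ is immediate from $\actionalpha^0 = \skippaction$ and the valid implication $\chi \iif \psi$. For the step, factor $s\ (\actionalpha^{n+1})_{\bS}\ t$ through some $u\in \bS(\actionalpha^n)$ with $u\ \actionalpha_{\bS(\actionalpha^n)}\ t$; the inductive hypothesis gives $u\in \semanticsnm{\chi}_{\bS(\actionalpha^n)}$, and applying the assumed validity $\chi \iif \psi \andd [\pi]\chi$ inside the state model $\bS(\actionalpha^n)$ yields $t\in \semanticsnm{\chi}_{\bS(\actionalpha^{n+1})} \cap \semanticsnm{\psi}_{\bS(\actionalpha^{n+1})}$. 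Transferring the $\psi$-claim through $\iota_n$ gives $t\in \semanticsnm{\psi}_{\bS(\actionalpha^*)}$ for every such $t$, which is exactly $s\in \semanticsnm{[\pi^*]\psi}_{\bS}$.

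The main obstacle is bookkeeping rather than a genuine mathematical difficulty: one must carefully identify $\bS(\actionalpha^{m+1})$ with the appropriate slice of $\bS(\actionalpha)(\actionalpha^*)$ and with the corresponding slice of $\bS(\actionalpha^*)$, and verify that these canonical identifications are bisimulations so that truth of arbitrary $\lang(\bSigma)$-sentences transfers across them. Once this glue is in place, both arguments are routine unwindings of the recursive definitions of $\actionalpha^n$ and $\Union$, packaged by Proposition~\ref{prop-bisim-preserve-again}.
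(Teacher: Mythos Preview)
Your argument is correct; the paper's own proof is simply the one line ``These are standard.'' Your elaboration is exactly the expected standard argument and uses the same bisimulation-transfer device (natural injections of disjoint-union components together with Proposition~\ref{prop-bisim-preserve-again}) that the paper itself invokes in the soundness proof of the Choice Axiom; one minor quibble is that the injections $\iota_n$ are bisimulations but not \emph{total} bisimulations in the paper's sense, though only preservation under bisimulation is needed, so the use is fine.
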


\begin{proof}
These are standard.
\end{proof}

\begin{proposition}
The Skip Axiom \mbox{$[\mbox{\it skip\/}]\phi\iiff \phi$} is sound.
\label{proposition-skipp-soundness}
\end{proposition}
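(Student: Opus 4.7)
The plan is to unfold both sides of the axiom using the semantic definitions and then appeal to bisimulation invariance of sentences. First, I would fix a state model $\bS$ and a state $s \in S$, and reduce the claim to showing that $s \in \semantics{[\skipp]\phi}{\bS}$ iff $s \in \semanticsnm{\phi}_{\bS}$.

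By the clauses in Figure~\ref{fig-semantics}, $\semanticsnm{\skipp}$ is the program model whose action signature $\bSigma_{\skipp}$ has a single simple action $\sigma$ with $\pre(\sigma) = \trueproposition$ and $\sigma\arrowA\sigma$ for every $A\in \Agents$, and whose distinguished set is $\set{\sigma}$. By Section~\ref{section-epistemic-program-models-give-updates}, the induced update then satisfies $\bS(\skipp) = \bS \otimes \bSigma_{\skipp}$ with states $\set{(s,\sigma): s \in S}$ (all states survive because $\trueproposition_{\bS} = S$), and update relation $\skippaction_{\bS} = \set{(s,(s,\sigma)) : s\in S}$. Unfolding $[\semanticsnm{\skipp}]\semanticsnm{\phi}$ via (\ref{eq-alpha-phi}), then, $s \in \semantics{[\skipp]\phi}{\bS}$ iff $(s,\sigma) \in \semanticsnm{\phi}_{\bS \otimes \bSigma_{\skipp}}$, since $(s,\sigma)$ is the unique $\skippaction_{\bS}$-successor of $s$.

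The key step is to exhibit a bisimulation linking $\bS$ with $\bS \otimes \bSigma_{\skipp}$. I claim $R = \set{(s,(s,\sigma)) : s\in S}$ is a total bisimulation: atomic agreement follows from (\ref{eq-new-valuation}); the forth condition follows because $s\arrowA t$ together with $\sigma\arrowA\sigma$ yields $(s,\sigma)\arrowA (t,\sigma)$ by (\ref{eq-product}), and the back condition follows because every $\arrowA$-successor of $(s,\sigma)$ is forced to be of the form $(t,\sigma)$ with $s\arrowA t$, again by (\ref{eq-product}) (since $\sigma$ has no other $\arrowA$-successor). By Proposition~\ref{prop-bisim-preserve-again}, $\semanticsnm{\phi}$ is preserved by bisimulation, so $(s,\sigma) \in \semanticsnm{\phi}_{\bS \otimes \bSigma_{\skipp}}$ iff $s \in \semanticsnm{\phi}_{\bS}$, completing the chain of equivalences.

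There is no real obstacle here; the axiom is essentially immediate once one notices that the product $\bS \otimes \bSigma_{\skipp}$ is a trivial ``bisimilar copy'' of $\bS$. The only thing one must be careful about is that the paper's convention (see the remark after $\skippaction$ in Section~\ref{section-operations-on-program-models}) identifies the induced update with the literal identity update of Section~\ref{section-updates}; under that identification the argument collapses to a one-line unfolding, but it is cleaner to verify the bisimulation explicitly, as above, so that the proof depends only on the already-established machinery of Propositions~\ref{prop-bisimulation-preservation} and~\ref{prop-bisim-preserve-again}.
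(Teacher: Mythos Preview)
Your proof is correct and follows the same semantic-unfolding approach as the paper, only more carefully. The paper's proof is a three-line equivalence chain that simply takes the identification from Section~\ref{section-operations-on-program-models} at face value---treating $\bS(\skippaction)$ as literally $\bS$ and $\skippaction_{\bS}$ as the identity relation---whereas you spell out the bisimulation that justifies this identification; as you yourself observe in your final paragraph, under that identification the two arguments coincide.
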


\begin{proof}  
 Fix a state model $\bS$.  Recall that the semantics of $\skipp$ is the
update $\skippaction$.  So
the following are equivalent:
\begin{enumerate}
\item   $s\in \semantics{[\skipp]\phi}{\bS}$.
\item If $s\ {\skippaction}_{\bS}\ t$, then
$t\in\semantics{\phi}{\actionmodel{\bS}{\skippaction}}$.
\item $s\in \semantics{\phi}{\bS}$.
\end{enumerate}
\end{proof}

\begin{proposition}
The Crash Axiom \mbox{$[\mbox{\it crash\/}]\false $} is sound.
\label{proposition-null-soundness}
\end{proposition}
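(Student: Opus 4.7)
The plan is to unfold the semantics of $\crash$ and observe that the universal quantifier in the semantic clause for $[\cdot]$ ranges over the empty set, so the conclusion holds vacuously. Concretely, I would proceed as follows.

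First, recall from Section~\ref{section-operations-on-program-models} that the program model $\crash$ is defined as the empty program model, and that its induced update is the update called $\actionzero$ from Section~\ref{section-updates}. By definition, $\bS(\actionzero)$ is the empty state model for each $\bS$, and the update relation ${\actionzero}_{\bS}\subseteq S\times \emptyset$ is the empty relation. In particular, $\semanticsnm{\crash}=\actionzero$ by the clause in Figure~\ref{fig-semantics}.

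Now fix any state model $\bS$ and any $s\in S$. By equation~(\ref{eq-alpha-phi}) (applied via the $[\pi]$ clause in Figure~\ref{fig-semantics}),
\[
\semantics{[\crash]\false}{\bS} \quadeq \set{s\in S: \mbox{for all $t$ with $s\ \actionzero_{\bS}\ t$,  $t\in\semantics{\false}{\bS(\actionzero)}$}}.
\]
Since ${\actionzero}_{\bS}=\emptyset$, there is no $t$ with $s\ \actionzero_{\bS}\ t$, so the universal condition is vacuously satisfied. Hence $s\in \semantics{[\crash]\false}{\bS}$ for every $\bS$ and every $s\in S$, which is the desired validity. There is no real obstacle here; the only thing to be careful about is the overloaded notation (the symbol $\crash$ denotes both a program in the syntax and the empty program model in the semantics), and making sure we invoke the correct clause of the semantics map.
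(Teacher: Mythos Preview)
Your proof is correct and follows essentially the same approach as the paper: both unfold the semantics of $\crash$ to the update $\actionzero$ with empty update relation, and conclude that $[\crash]\false$ holds vacuously at every state. You simply spell out the details more explicitly than the paper does.
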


\begin{proof}
Recall that $\semanticsnm{\crash} = \nullaction$, the update such that
$\bS(\nullaction)  $ is the empty model, and $\nullaction_{\bS}$ is the empty
relation.  Working through the definitions, $[\crash]\false$ is easily seen
to hold vacuously.
\end{proof}

\begin{proposition}
The Composition Axiom 
 $[\pi\then \rho] \phi
\iiff   [\pi][\rho]\phi$
is sound.
\label{proposition-composition-soundness}
\end{proposition}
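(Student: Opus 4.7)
The plan is to unwind both sides of the proposed equivalence through the semantic definitions, writing $\actionalpha = \semanticsnm{\pi}$, $\actionbeta = \semanticsnm{\rho}$, and $\propositionphi = \semanticsnm{\phi}$. By the semantic clause $\semanticsnm{\pi\then\rho} = \semanticsnm{\pi}\then\semanticsnm{\rho}$ in Figure~\ref{fig-semantics}, it suffices to show that for every state model $\bS$ and every $s\in S$,
$$s\in \semanticsoff{[\actionalpha\then\actionbeta]\propositionphi}{\bS} \quadiff s\in \semanticsoff{[\actionalpha][\actionbeta]\propositionphi}{\bS}.$$
Thus the axiom reduces to a purely semantic statement about updates and epistemic propositions, independent of the particular syntactic shape of $\pi$ and $\rho$.

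Next I would fix $\bS$ and $s\in S$ and set up a chain of equivalences patterned on the proofs of the previous propositions. Applying equation~(\ref{eq-alpha-phi}) to $[\actionalpha\then\actionbeta]\propositionphi$, membership in the left-hand side says: for every $u$ such that $s\ (\actionalpha\then\actionbeta)_{\bS}\ u$, we have $u\in \propositionphi_{\bS(\actionalpha\then\actionbeta)}$. Now I would invoke the two defining clauses for sequential composition of updates from Section~\ref{section-updates}, namely $\bS(\actionalpha\then\actionbeta) = \modelaction{\modelaction{\bS}{\actionalpha}}{\actionbeta}$ and $(\actionalpha\then\actionbeta)_{\bS} = \actionalpha_{\bS}\then \actionbeta_{\modelaction{\bS}{\actionalpha}}$, to rewrite this condition as: for every $t$ with $s\ \actionalpha_{\bS}\ t$ and every $u$ with $t\ \actionbeta_{\bS(\actionalpha)}\ u$, $u\in \propositionphi_{\bS(\actionalpha)(\actionbeta)}$.

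For the right-hand side, I would apply (\ref{eq-alpha-phi}) once at the outer modality to get: for every $t$ with $s\ \actionalpha_{\bS}\ t$, $t\in \semanticsoff{[\actionbeta]\propositionphi}{\bS(\actionalpha)}$; then apply (\ref{eq-alpha-phi}) a second time inside $\bS(\actionalpha)$ to expand this to: for every such $t$ and every $u$ with $t\ \actionbeta_{\bS(\actionalpha)}\ u$, $u\in \propositionphi_{\bS(\actionalpha)(\actionbeta)}$. This is syntactically identical to the unfolding of the left-hand side, which closes the chain of equivalences.

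There is no real obstacle here: the axiom is essentially the semantic fact that relational composition of the update relations, combined with iterated application of the update map, exactly matches the two-step quantifier structure of $[\actionalpha][\actionbeta]\propositionphi$. If anything, the only place to be careful is in matching the ``states'' of $\bS(\actionalpha\then\actionbeta)$ with those of $\bS(\actionalpha)(\actionbeta)$; but since the definition of $\bS(\actionalpha\then\actionbeta)$ is given literally as $\modelaction{\modelaction{\bS}{\actionalpha}}{\actionbeta}$, these are the same model on the nose (not merely isomorphic), and the composed relation $\actionalpha_{\bS}\then \actionbeta_{\bS(\actionalpha)}$ produces exactly the pairs $(s,u)$ witnessed by an intermediate $t\in \bS(\actionalpha)$, so no bisimulation-style argument is needed.
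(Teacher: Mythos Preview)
Your proposal is correct and follows essentially the same approach as the paper's proof: both unfold the two sides via equation~(\ref{eq-alpha-phi}) and then match them using the defining clauses for sequential composition of updates, namely $\bS(\actionalpha\then\actionbeta) = \modelaction{\modelaction{\bS}{\actionalpha}}{\actionbeta}$ and $(\actionalpha\then\actionbeta)_{\bS} = \actionalpha_{\bS}\then \actionbeta_{\bS(\actionalpha)}$. The paper presents this as a numbered chain of six equivalent conditions, but the content is identical to your prose argument, including your remark that the two models agree on the nose rather than merely up to isomorphism.
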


\begin{proof}  
Write $\actionalpha$ for $\semanticsnm{\pi}$
and   $\actionb$ for $\semanticsnm{\rho}$.
 Fix a state model $\bS$.
The following are equivalent:
\begin{enumerate}
\item   $s\in \semantics{[\pi\then\rho]\phi}{\bS}$.
\item  If $s\ {\semanticsnm{\pi\then \rho}}_{\bS}\ u$, then 
$u\in \semantics{\phi}{\modelaction{\bS}{\actionalpha \then
\actionb}}$.
\item  If $s\ {(\actionalpha\then \actionb)}_{\bS}\ u$, then 
$u\in \semantics{\phi}{\modelaction{\bS}{\actionalpha \then
\actionb}}$.
\label{ppp}
\item  If $s\ {\actionalpha}_{\bS}\ t$
and $t\ {\actionb}_{\modelaction{\bS}{\actionalpha}}\ u$, then 
  $u\in
\semantics{\phi}{\modelaction{\bS}{\actionalpha\then\actionb}}$.
\label{qqq}
\item  If $s\ {\actionalpha}_{\bS}\ t$, then $t\in 
\semantics{[\rho]\phi}{\modelaction{\bS}{\actionalpha}}$.
\label{rrr}
\item   $s\in \semantics{[\pi][\rho]\phi}{\bS}$.
\end{enumerate}
The equivalence of (\ref{ppp}) and (\ref{qqq}) is by the definition
of relational composition.  The remaining equivalences are from the 
semantic definitions. 
The equivalence of (\ref{qqq}) and (\ref{rrr}) is by the fact that
$\bS(\actionalpha\then\actionb) = 
\modelaction{\modelaction{\bS}{\actionalpha}}{\actionb}$.
\end{proof}

\begin{proposition}
The Choice Axiom  $[\pi \union \rho]\phi\iiff [\pi]\phi\andd [\rho]\phi$
is sound.
\label{proposition-plus-soundness}
\end{proposition}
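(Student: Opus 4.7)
The plan is to unwind the semantics on both sides and reduce to the definition of union of updates from Section~\ref{section-updates}. Let $\actiona = \semanticsnm{\pi}$ and $\actionb = \semanticsnm{\rho}$, so $\semanticsnm{\pi \union \rho} = \actiona \union \actionb$. Fix a state model $\bS$ and a state $s \in S$. Unfolding the interpretation of $[\cdot]$, the assertion $s \in \semantics{[\pi\union\rho]\phi}{\bS}$ says that for every $t \in \bS(\actiona\union\actionb)$ with $s\ (\actiona\union\actionb)_{\bS}\ t$, we have $t \in \semanticsnm{\phi}_{\bS(\actiona\union\actionb)}$. By the construction of $\actiona \union \actionb$, such $t$ is either of the form $(u,\actiona)$ with $s\ \actiona_{\bS}\ u$ or of the form $(u,\actionb)$ with $s\ \actionb_{\bS}\ u$.

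The key auxiliary fact I would use is that the map $u \mapsto (u,\actiona)$ embeds $\bS(\actiona)$ isomorphically onto the ``$\actiona$-component'' of $\bS(\actiona\union\actionb)$, and similarly for $\actionb$. In particular, the graph of this map is a (total) bisimulation between $\bS(\actiona)$ and a substructure of $\bS(\actiona\union\actionb)$ from which no arrow crosses to the $\actionb$-component (by the clause defining $\arrowA$ on a disjoint union, which requires matching tags). Hence, invoking Proposition~\ref{prop-bisim-preserve-again} (semantics of $\lang(\bSigma)$ sentences is preserved by bisimulation), we have for every sentence $\phi$:
\begin{equation*}
u \in \semanticsnm{\phi}_{\bS(\actiona)} \quadiff (u,\actiona) \in \semanticsnm{\phi}_{\bS(\actiona\union\actionb)},
\end{equation*}
and the analogous equivalence for the $\actionb$-component.

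Combining these two observations, $s \in \semantics{[\pi\union\rho]\phi}{\bS}$ is equivalent to the conjunction of (i) for every $u$ with $s\ \actiona_{\bS}\ u$, $u \in \semanticsnm{\phi}_{\bS(\actiona)}$, and (ii) for every $u$ with $s\ \actionb_{\bS}\ u$, $u \in \semanticsnm{\phi}_{\bS(\actionb)}$. But (i) is exactly $s \in \semantics{[\pi]\phi}{\bS}$ and (ii) is exactly $s \in \semantics{[\rho]\phi}{\bS}$, so $s \in \semantics{[\pi]\phi \andd [\rho]\phi}{\bS}$, as required.

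The only nontrivial step is the bisimulation invariance used to move the evaluation of $\phi$ between $\bS(\actiona)$ and the corresponding tagged copy inside $\bS(\actiona\union\actionb)$; this is what lets the disjoint-union semantics split cleanly across the quantifier, and it is available precisely because sentences of $\lang(\bSigma)$ are bisimulation-invariant. Everything else is purely a matter of unfolding definitions of $[\cdot]$, of $\actiona \union \actionb$, and of conjunction.
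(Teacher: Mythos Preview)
Your proof is correct and follows essentially the same approach as the paper's own argument: both unfold the semantics of $[\pi\union\rho]\phi$, split the quantifier over the two components of the disjoint union, and then invoke bisimulation invariance (Proposition~\ref{prop-bisim-preserve-again}) via the natural injections $\bS(\actiona)\hookrightarrow\bS(\actiona\union\actionb)$ and $\bS(\actionb)\hookrightarrow\bS(\actiona\union\actionb)$ to transfer the evaluation of $\phi$ back to the component models. Your explicit remark that no arrows cross between the tagged components is exactly why these injections are bisimulations, which the paper also uses but states more tersely.
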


\begin{proof}
 Fix a state model $\bS$; we drop $\bS$ from the notation in the rest of this proof.
Write $\gamma$ for $\pi \union \rho$, 
 $\actionalpha$ for $\semanticsnm{\pi}$,
   $\actionb$ for $\semanticsnm{\rho}$, and $\actionc$ for $\semanticsnm{\gamma}$.
Then the following are equivalent:
\begin{enumerate}
\item   $s\in \semantics{[\pi \union \rho]\phi}{\bS}$.
\item   If $s\ {\actionc}\ u$, then $u\in \semantics{\phi}{\bS(\actionc)}$.
\label{choice2}
\item    If $s\ {\actionalpha}\ t$, 
then $t\in \semantics{\phi}{\bS(\actionc)}$;
 and if   $s\ {\actionb}\ t'$,
then $t'\in \semantics{\phi}{\bS(\actionc)}$.
\label{choice3}
\item   If $s\ {\actionalpha}\ t$, 
then $t\in \semantics{\phi}{\bS(\actionalpha)}$;
 and if   $s\ {\actionb}\ t'$,
then $t'\in \semantics{\phi}{\bS(\actionb)}$.
\label{choice4}
\item $s\in \semantics{[\pi]\phi}{\bS}$ and $s\in\semantics{[\rho]\phi}{\bS}$.
\item $s\in \semantics{[\pi]\phi \andd [\rho]\phi}{\bS}$.
\end{enumerate}
The equivalence (\ref{choice2})$\Longleftrightarrow$(\ref{choice3}) comes
from the fact that each of the  $u$ such that $s\ {\actionc}\ u$
is either either (a) an element $t$ of $\bS(\actionalpha)$ related to $s$ by $ {\actionalpha}$,
or else (b) an element
$t'$ of $\bS(\actionb)$ related to $s$ by ${\actionb}$.
This is by the definition of $\actionalpha \union \actionb$.

We show the equivalence (\ref{choice3})$\Longleftrightarrow$(\ref{choice4}) 
by considering the cases (a) and (b) noted just above.  We use the
fact that the natural  injections of 
 $\bS(\actionalpha)$ and $\bS(\actionb)$ in $\bS(\actionc)$ are bisimulations,
and also the fact that $\semanticsnm{\phi}$ 
is preserved by bisimulations
(see Proposition~\ref{prop-bisim-preserve-again}).
\end{proof}

\rem{OLD PROOF BELOW
\begin{proof}
We remind the reader that the relevant definitions
and notation are found in Section~\ref{section-semantics}.
Let $\alpha$ be the action $\pair{K,k}$.
Fix a pair $\pair{W,w}$.
If
$\pair{W,w}\models\nott\Pre(\alpha)$,
then both sides
of our biconditional hold.  We therefore assume that
$\pair{W,w}\models\Pre(\alpha)$ in the rest of this proof.
 Assume    that
$\pair{W,w}^{\alpha}\models\necc_A\phi$.
Take some $\beta$ such that $\alpha\arrowA \beta$.
This $\beta$ is of the form $\pair{K,k'}$ for some $k'$
such that $k\arrowA  k'$.
Let $w\arrowA w'$.
We have two cases:  $\pair{W,w'}\models \Pre(k')$,
and $\pair{W,w'}\models \nott\Pre(k')$.
In the latter case, $\pair{W,w'}\models [\beta]\phi$ trivially.
We'll show this in the former case, 
so assume $\pair{W,w'}\models \Pre(k')$.
Then $(w',k')$ is a world of $W^K$, and indeed 
$(w,k)\arrowA (w',k')$. 
Now our assumption that 
$\pair{W,w}^{\alpha}\models\necc_A\phi$ implies
that 
$\pair{W^K, (w',k')}\models \phi$.
This means that $\pair{W,w'}^\beta \models\phi$.
Hence $\pair{W,w'}\models [\beta]\phi$.
Since $\beta$
and $w'$ were arbitrary, $\pair{W,w}\models\bigwedge_{\beta}\necc_A[\beta]\phi$.

The other direction is similar.\end{proof}
}

\rem{
\paragraph{A simplified form of the Action Rule}
Our logical system has one additional rule of inference, the Action 
Rule.  We are not going to discuss this rule in full generality 
here (see Section~\ref{section-biggersystem} for this),
but we do want to mention a simplified form of this that the reader
could understand and use at this point.

Let $\bSigma$ be an action signature, and let $\sigma_1,\ldots, \sigma_n$
be our enumeration of it.
Let $\alpha$ be a  simple action.
Let $\psi$ be a  sentence, and let $\CC$ be a set
of agents. Let there be sentences  $\chi_{j}$ for
all $1\leq j \leq n$,
and such that
\begin{enumerate}
\item For all $j$, $\proves\chi_j\iif [\sigma_j\vec{\psi}]\phi$.
\item
If  $A\in\CC$ and  $\sigma_j\arrowA\sigma_k$,
then $\proves (\chi_{j} \andd\psi_j )
\iif \necc_A \chi_{k}$.
\end{enumerate}
 {} \noindent From these assumptions, infer
 $\proves\chi_{i}\iif [\sigma_i\vec{\psi}]\necc_{\CC}^*\phi$.

One can see that the purpose of the Action Rule is to allow us to 
infer sentences of the form $[\sigma_i\vec{\psi}]\necc_{\CC}^*\phi$.
Without this rule, or something like it, we have no way to do this.
For the same reason, we shall need a result permitting us to
infer sentences of more general forms,
such as
$$[\sigma_{i_1}\vec{\psi^1}\then\sigma_{i_2}\vec{\psi^2}\then\sigma_{i_l}\vec{\psi^l}]\necc_{\CC}^*\phi.
$$
The more general Action Rule is needed for this.
In the case of $l= 0$, our statement statement just above simplifies to a 
familiar rule:

\paragraph{The Induction Rule}
 From  $\proves\chi \iif  \psi \andd  \necc_A \chi$ 
for all $A$,
infer $\proves\chi \iif   \necc^*_{\Agents} \psi$.

\medskip

This will  turn out to be derivable from the Action Rule 
and the Skip Axiom.

}

\rem{
\subsection{Logical systems for the target logics}
\label{section-special-cases-logic}

As we know, we have logics $\langaction$ corresponding
to the logics presented in Section~\ref{section-target-logics}.
Each type of announcement gives us an action signature, and
each action signature a logic.
What we want to do here is to spell out what the axioms
of $\langaction$ come to when we specialize to some
of those logics.  In doing this, we find it convenient
to adopt simpler notations tailored for the fragments.

The
 logic of public announcements is shown in Figure~\ref{figure-lpa}.
We only included the axioms and rule of inference that specifically
used the structure of the signature $\bSigmaPub$.  So 
we did not include the sentential validities, the normality axiom
for $\necc_A$,  the composition axiom, modus ponens, etc.

\begin{figure}[htbp]
\fbox{
\begin{minipage}{6.0in}
\medskip
$$\begin{array}{lll}
\mbox{\bf Basic Axioms} \\
 \mbox{($[\Pub\ \phi]$-normality)} &
\proves [\Pub\ \phi]  (\psi \iif  \chi)
   \iif ([\Pub\ \phi] \psi \iif [\Pub\ \phi] \chi)\\
  \\ 
  \mbox{\bf Action Axioms}\quad \quad \quad\\
  \mbox{(Atomic Permanence)} & \proves [\Pub\ \phi]  p\iiff( \phi\iif p)\\
   \mbox{(Partial  Functionality}) & \proves [\Pub\ \phi]\nott\psi \iiff
(\phi \iif  \nott [\Pub\ \phi]\psi) &\\
  \mbox{(Action-Knowledge)} &\proves [\Pub\ \phi]\necc_A\psi
\iiff   ( \phi \iif
 \necc_A [\Pub\ \phi]\psi) \\ 
\\
  \mbox{\bf Modal Rules} \\
 \mbox{($[\Pub\ \phi]$-necessitation)}
&   \mbox{From }\proves \psi, \mbox{ infer } \proves [\Pub\ \phi] \psi   \\
\\
  \mbox{\bf Announcement Rule}
& \mbox{From } \proves\chi\iif [\Pub\ \phi]\psi \mbox{ and }
 \proves\chi\andd\phi\iif \necc_A\chi \mbox{ for all $A$,} \\
& \mbox{infer } \proves\chi\iif [\Pub\ \phi]\necc^*_{\Agents}\psi
 \\

\end{array}
$$
\medskip
\end{minipage}
}
\caption{The main points of the logic of public announcements.
\label{figure-lpa}}
\end{figure}

Our next logic is the 
 logic of completely private announcements to groups.
We discussed the formalization of this in Section~\ref{section-target-logics-formalized}.
We have actions $\Pri^B \phi$ and  (of course) $\skipp$. 
The axioms and rules are just as in the logic of public announcements,
with a few changes.
We  must of course  consider the relativized
operators $[\Pri^\BB \phi]$ instead of their simpler
counterparts $[\Pub\ \phi]$.)

The  actions  $\Pri^\BB \phi$ all have $\true$ as their precondition, and 
since $(\true \iif \psi)$ is logically equivalent to $\psi$, we 
certainly may omit these  actions  from the notation in the axioms
and rules.
The most substantive change which we need to make  in
Figure~\ref{figure-lpa} concerns the  Action-Knowledge Axiom.
It splits into two axioms, noted below:
$$\begin{array}{lcl}
[\Pri^\BB \phi]\necc_A\psi \iiff
 ( \phi \iif \necc_A [\Pri^\BB \phi]\psi) & \mbox{for $A\in\BB$}
\\ 
{}
[\Pri^\BB \phi]\necc_A\psi \iiff   ( \phi \iif \necc_A \psi) & \mbox{for $A\notin \BB$}\\
\end{array}
$$
The  last equivalence says:  assuming that $\phi$ is true,
then after a  private announcement of $\phi$ to
the members of $\BB$, an outsider  knows
$\psi$ just in case she knew $\psi$ before the announcement.

\rem{
Next, we consider the logic of private announcements to groups,
 with secure suspicion by outsiders.
Here, 
the actions are $\Prss(\vec{\phi},\BB)$
and $Prss'(\vec{\phi},\BB)$.
The Action-Knowledge Axiom can be written 
out as above.  More insightfully,
we can use the fact that $Prss'(\vec{\phi},\BB)$ is equivalent
to $\Pri(\phi_1,\BB)$ to write the axiom as:
$$\begin{array}{lcl}
[\Prss(\vec{\phi},\BB)]\necc_A\psi \iiff
[\Pri(\phi_1,\BB)]\necc_A\psi  & \mbox{for $A\in\BB$}
\\ 
{}[\Prss(\vec{\phi},\BB)]\necc_A\psi \iiff   ( \phi_1 \iif 
\bigwedge_{0\leq i\leq k}
\necc_A [\Prss(\vec{\phi},\BB)]\psi)& 
\mbox{for $A\notin \BB$}\\
\end{array}
$$
Thus, the insiders know the same things after 
$\Prss(\vec{\phi},\BB)$ as they know  after
the completely private announcement  of $\phi_1$ to $\BB$.
For the outsiders, there is no simplification or reduction to 
the previous logics.
}

Finally, we study the logic of common knowledge
of alternatives.
The Action-Knowledge now becomes
$$\begin{array}{lcl}
[\Cka^{\BB}\vec{\phi}]\necc_A\psi \iiff
(\phi_1\iif \necc_A 
[\Cka^{\BB}\vec{\phi}]\psi)
& \mbox{for $A\in\BB$}
\\ 
{}[\Cka^{\BB}\vec{\phi}]\necc_A\psi \iiff   ( \phi_1 \iif 
\bigwedge_{0\leq i\leq k}
\necc_A [\Cka^{\BB}{\vec{\phi}}^i]\psi)& 
\mbox{for $A\notin \BB$}\\
\end{array}
$$
where in the last clause,
$(\phi_1,\ldots, \phi_n)^i$ is the sequence
$\phi_i, \phi_1, \ldots, \phi_{i-1}, \phi_{i+1}, \ldots, \phi_k$.
(That is, we bring $\phi_i$ to the front of the sequence.)

\rem{
Finally, we study the logic
of common knowledge of suspicion by outsiders.
$$\begin{array}{lcl}
[\phi;\vec{\psi}]^{prss}_{\BB}\necc_A\chi \iiff
(\phi \iif \necc_A [\phi;\vec{\psi}]^{prss}_{\BB} \chi
  & \mbox{for $A\in\BB$}
\\ 
{}
[\phi;\vec{\psi}]^{prss}_{\BB}\necc_A\chi \iiff   ( \phi \iif 
\bigwedge_{0\leq i\leq k}
\necc_A [(\phi;\vec{\psi})_i]^{pr}_{\BB}\chi)& 
\mbox{for $A\notin \BB$}\\
\end{array}
$$
where $(\phi;\vec{\psi})_0 = \phi;\vec{\psi}$,
and $(\phi;\vec{\psi})_i = \psi_i;\phi,\psi_1,\ldots, \psi_{i-1},\psi_{i+1},\ldots, \psi_k$.
}

\subsection{Examples in the target logics}
\label{section-action-rule-in-targets}

This section studies some examples of the logic at work.
We begin with an  application of the Action Rule in the logic of public
announcements.  We show
$$
\proves
\necc^*(\Heads \iiff \nott \Tails) \iif [\Pub\ \Heads]\necc^*\nott\Tails.
$$
That is, on the assumption that 
it is common knowledge that
Heads and Tails are mutually exclusive,
then as a result of a public announcement of Heads it will be common
knowledge that the state is not Tails.

We give this application in detail.  Recall that $\bSigmaPub$ has one
simple action which we call $\Pub$.  We take $\chi_{\Pub}$ to be
$\necc^*(\Heads \iiff \nott \Tails)$.  In addition $\Pub\arrowA \Pub$ for all $A$,
and there are no other arrows in $\bSigmaPub$.   We take $\alpha$ to be 
$\Pub\ \Heads$; note that this is the only action accessible from
itself in the canonical action  model.
To use the Action Rule,
we must show that
\begin{enumerate}
\item $\proves \necc^*(\Heads \iiff \nott \Tails)\iif [\Pub\ \Heads]\nott \Tails$.
\item
  $\proves (\necc^*(\Heads \iiff \nott \Tails) \andd \Heads)
\iif \necc_A \necc^*(\Heads \iiff \nott \Tails)$ for all agents $A$.
\end{enumerate}
 {} \noindent From these assumptions, we may infer
 $\proves\necc^*(\Heads \iiff \nott \Tails)\iif [\Pub\ \Heads]\necc^*\nott \Tails$.

For the first statement,
$$
\begin{array}{llr}
{\rm (a)}  & \Tails \iiff [\Pub\  \Heads]\Tails 
& \mbox{Atomic Permanence} \\
{\rm (b)}  & (\Heads \iiff \nott \Tails) \iif (\Heads \iif
\nott [\Pub\  \Heads]\Tails) & \mbox{(a), propositional
reasoning} \\ 
{\rm (c)}  & [\Pub\ \Heads]\nott \Tails \iiff (\Heads
\iif \nott [\Pub\ \Heads]  \Tails) & \mbox{Partial
Functionality} \\
{\rm (d)}  & \necc^*(\Heads \iiff \nott
\Tails)\iif (\Heads \iiff \nott \Tails)  &\mbox{Epistemic
Mix}
\\
{\rm (e)}  & \necc^*(\Heads \iiff \nott \Tails)\iif[\Pub\ \Heads]\nott \Tails
& \mbox{(d), (b), (c),  propositional reasoning}\\
\end{array}
$$
And the second statement is an easy consequence of the Epistemic  Mix Axiom.

\paragraph{What happens when a publically known fact is announced?}

One intuition about public announcements and common knowledge is that
if $\phi$ is common knowledge, then announcing $\phi$ publically does
not change anything.  Formally, we express this by
a scheme rather than a single equation:
\begin{equation}
\necc^*\phi \iif ([\Pub\ \phi]\psi \iiff  \psi)
\label{irrelevance}
\end{equation}
What we would like to say is
$\necc^*\phi \iif   \bigwedge_{\psi}
([\Pub\  \phi]\necc_A \psi \iiff \psi)$, but of course this cannot be
expressed in our language.  So we consider only the sentences of the
form (\ref{irrelevance}), and we show that all of these are provable.
We argue by induction on $\phi$.

For an atomic sentence $p$, (\ref{irrelevance})  follows from
the Epistemic Mix and Atomic Permanence Axioms.   The induction
steps for $\andd$ and $\nott$ are easy.  Next, assume  (\ref{irrelevance})
for $\psi$.   By necessitation and Epistemic Mix, we have
$$\necc^*\phi \iif (\necc_A[\Pub\ \phi]\psi \iiff  \necc_A\psi)$$
Note also that by the Announcement-Knowledge Axiom
$$\necc^*\phi \iif ([\Pub\ \phi]\necc_A\psi \iiff  \necc_A[\Pub\ \phi]\psi)$$
These two imply (\ref{irrelevance})  for $\necc_A\psi$.

Finally, we assume   (\ref{irrelevance})  for
$\psi$ and prove it for $\necc^*_{\BB}\psi$.
We show first that
$\necc^*\phi\andd \necc^*\psi \iif   [\Pub\ \phi]\necc^*\psi$.
For this we use the Action Rule.  We must show that
\begin{description}
\item{(1)}  $\necc^*\phi\andd \necc^*\psi \iif   [\Pub\ \phi]\psi$.
\item{(2)} $\necc^*\phi\andd \necc^*\psi  \andd \necc_A( \necc^*\phi\andd \necc^*\psi)
\iif \necc_A(\necc^*\phi\andd \necc^*\psi)$.
\end{description} 
(1) is easy from our induction hypothesis, and (2) is an easy
consequence of Epistemic Mix.

To conclude, we show
$\necc^*\phi\andd   [\Pub\ \phi]\necc^*\psi \iif \necc^*\psi$.
For this, we use the Induction Rule; that is, we show
\begin{description}
\item{(3)} $\necc^*\phi\andd   [\Pub\ \phi]\necc^*\psi \iif  \psi$.
\item{(4)}  $\necc^*\phi\andd   [\Pub\ \phi]\necc^*\psi \iif \necc_A(\necc^*\phi\andd   [\Pub\
\phi]\necc^*\psi)$ for all $A$.
\end{description}
For (3), we 
use Lemma~\ref{lemma-converse-rule},  part~(\ref{part-ffff})
to see that $[\Pub\ \phi]\necc^*\psi \iif  [\Pub\ \phi]\psi$;
and now (3) follows from our induction hypothesis.
For (4), it will be sufficient to show that
$$\phi\andd [\Pub\ \phi]\necc^*\psi \iif \necc_A[\Pub\
\phi]\necc^*\psi
$$
This follows from
  Lemma~\ref{lemma-converse-rule}, part~(\ref{part-hhhh}).

\paragraph{A commutativity principle for private announcements}
Suppose that $\BB$ and $\CC$ are disjoint sets of agents.
Let $\phi_1$, $\phi_2$, and $\psi$ be sentences.  Then
we claim that 
$$ \proves [\Pri^{\BB} \phi_1][\Pri^{\CC} \phi_2]\psi
\iiff [\Pri^{\CC}\phi_2][\Pri^{\BB} \phi_1]\psi.$$
That is,   order does not matter
with private announcements to disjoint groups.

\paragraph{Actions do not change common knowledge of simple sentences}
For yet another application, let $\psi$ be any boolean combination of
atomic sentences.
Then for all actions $\alpha$ of any of our logics,
$\proves \psi\iiff [\alpha]\psi$.   The proof is an easy induction on $\psi$.
Even more, we have
$\proves \necc^*_{\CC}\psi\iiff [\alpha]\necc^*_\psi$. 
In one direction, we use the Action Rule, and in the other,
the Induction Rule.

\rem{
\subsection{Application: syntactic treatment of an earlier
example}
\label{section-syntactic=treat}

In this section, we return to 
the ``cheating'' Scenario~\ref{scenario-A-cheats}.
We further studied this example in
Section~\ref{section-justifying-the-models};
in that section we justified the earlier model.
What we want to do here is to work out explicitly
some of the points which we argued semantically.
It might be noted that our entire discussion in
this section is not really necessary: everything
follows from the completeness of our logics and
some facts on characterizing sentences.  Still,
it might help the reader to see how our logic
works in this setting.  In particular, it might
be useful to see a concrete application of the 
Action Rule.

We consider the logic of private announcements to groups.
Let $\alpha$ be the announcement of $\Heads$ to $A$ 
privately, and let $\beta$ be the announcement of
$\true$ to both $A$ and $B$.  Then we have
$\alpha\arrowA \alpha$, $\alpha\arrowB\beta$,
and $\beta\arrowAB\beta$.

We use
 the notation from Section~\ref{section-justifying-the-models}.
In particular, $\chi_{\Heads}$ denotes the sentence which says that
the coin lies $\Heads$ up but it is common knowledge that
neither $A$ nor $B$ knows this
(and it is common knowledge that $\Heads\iiff \nott\Tails$).
  And $\psi$ is the sentence 
in equation (\ref{eqpsi}) in Section~\ref{section-justifying-the-models}.
 Our main claim 
in this section  is 
$\proves \chi_{\Heads}\iif \pair{\alpha}\psi$.

We now present a list of sentences which are provable in our logic.
Since a complete list would be too long, we only present the main steps.
We have abbreviated $\nobodyknows$ to
\renewcommand{\nobodyknows}{\mbox{\sc NK}}
 $\nobodyknows$.

\begin{enumerate}
\item $[\alpha]\Heads \iiff (\Heads \iiff \Heads)$.  
This is an Action-Knowledge (AK) axiom. \label{aa}
\item $[\alpha]\Heads$.  By (\ref{aa}).\label{bbb}
\item $[\alpha]\necc^*_{A}\Heads$.  By (\ref{bbb}) and the Action Rule,
taking $\chi_{\alpha} = \true$.
\label{goal1}
\item 
   $\nott\necc_B\nott\Heads\iif (\Heads\iif \nott[\beta]\necc_B\nott\Heads)$.  
      By (*) and propositional reasoning.
\label{12165}
\item 
   $\nott\necc_B\nott\Heads\iif (\Heads\iif \nott\necc_B[\beta]\nott\Heads)$.  
By (\ref{12165}) and the AK Axiom $[\beta]\necc_B\nott\Heads \iiff  \necc_B[\beta]\nott\Heads$.
\label{12166}
\item  $\nott\necc_B\nott\Heads\iif (\Heads\iif \nott[\alpha]\necc_B\nott\Heads)$.
By (\ref{12166}) and the AK Axiom $[\alpha]\necc_B\nott\Heads \iiff (\Heads\iif \necc_B[\beta]\nott\Heads)$.
\label{1216}
\item $\nott\necc_B\nott\Heads\iif[\alpha]\nott\necc_B\nott\Heads$.
By (\ref{1216}) and a Partial Functionality (PF) Axiom.
\label{12168}
\item $\poss_B\Tails\iif[\alpha]\poss_B\Tails$.  Similar to (\ref{12165}-\ref{12168})
\label{simppppp}
\item $\necc^*_{A}(\poss_B\Heads\andd\poss_B\Tails) \iif
[\alpha]\necc^*_{A}(\poss_B\Heads\andd\poss_B\Tails)$. 
By the Action Rule, (\ref{12168}), and (\ref{simppppp}).  We also use Mix and $[\alpha]$-normality.
\item  $\necc^*_{A,B}\nobodyknows \iif [\beta]\necc^*_{A,B}\nobodyknows$.
By (*).  \label{fornec}
\item $\necc_B\necc^*_{A,B}\nobodyknows \iif \necc_B[\beta]\necc^*_{A,B}\nobodyknows$.
By (\ref{fornec}) and necessitation.
\label{forfornec}
\item $\necc^*_{A,B}\nobodyknows  \iif \necc_B[\beta]\necc^*_{A,B}\nobodyknows$.
By (\ref{fornec}) and (\ref{forfornec}).
\label{cc}
\item $[\alpha]  \necc_B\necc^*_{A,B}\nobodyknows \iiff (\Heads \iif \necc_B[\beta]
\necc^*_{A,B}\nobodyknows)$.  This is an AK axiom. \label{dd}

\item  $\necc^*_{A,B}\nobodyknows \iif [\alpha]\necc_B\necc^*_{A,B}\nobodyknows$.
By (\ref{cc}) and (\ref{dd}).
\label{zz}
\item  $\necc^*_{A,B}\nobodyknows \iif\necc_A\nobodyknows$.  By Mix.
\label{yy}
\item $\necc^*_{A,B}\nobodyknows \iif [\alpha]\necc_A^*\necc_B\necc^*_{A,B}\nobodyknows$.
From the Action Rule, (\ref{zz}) and (\ref{yy}).
\label{goal2}
\item $[\alpha]\poss_A\true \iiff \poss_A\pair{\alpha}\true$.
Using a PF Axiom and an  AK Axiom.  \label{344}
\item $\pair{\alpha}\true\iiff \Heads$. 
Using a PF Axiom and the necessitation fact   $[\alpha]\true$.
 \label{7888}
\item $\poss_A\Heads \iif [\alpha]\poss_A\true$.
By (\ref{344}) and (\ref{7888}).
\label{goal3}
\rem{
\item $[\alpha]\nott\Tails \iiff (\Heads\iif \nott\Tails)$.
Using a PF Axiom and the Atomic Permanence (AP) Axiom $[\alpha]\Tails \iiff (\Heads\iif \Tails)$.
\label{prennn}
\item  $\necc^*_{A,B}(\Heads\iiff\nott\Tails)\iif
  [\alpha](\Heads\iiff\nott\Tails)$.
By (\ref{prennn}), $[\alpha]$-necessitation, and
the  fact from AP that $[\alpha]\Heads$. 
\label{nnn}
\item  $\necc^*_{A,B}(\Heads\iiff\nott\Tails)\iif
  [\beta](\Heads\iiff\nott\Tails)$.  By Mix and (*).
\label{mmm}
}
\item  $\necc^*_{A,B}(\Heads\iiff\nott\Tails)\iif
  [\alpha] \necc^*_{A,B}(\Heads\iiff\nott\Tails)$.
\label{goal4} 
By  our work in Section~\ref{section-action-rule-in-targets}.
\item $\chi_{\Heads}\iif  [\alpha]\psi$.  Using 
the definition of $\chi_{\Heads}$, $[\alpha]$-Necessitation,
(\ref{goal1}), (\ref{goal2}), (\ref{goal3}), and (\ref{goal4}).
\label{goalgoal}
\item $\chi_{\Heads}\iif \pair{\alpha}\psi$.
Using (\ref{goalgoal})
and the equivalence of $\pair{\alpha}\psi$ and $\Heads\andd [\alpha]\psi$.
\end{enumerate}

}  

\rem{
\subsection{Qian's Theorem on sentences true after being announced}
\label{section-further-results-on-the-target-logics}

We consider  a natural question about the logic of public
announcements, especially in its one-agent setting as the
logic of modal relativization.  Some sentences $\phi$
have the property that $\models[\Pub\ \phi]\phi$. 
That is, whenever $\phi$ holds and is announced, then $\phi$
holds after the announcement.
 For example,
the atomic sentences $p$ have this property, as do their boolean
combinations.   So do sentences $\necc p$.
The question here is to characterize syntactically
those $\phi$ such that
$[\Pub\ \phi]\phi$ is valid.  The first guess is that 
$[\Pub\ \phi]\phi$ is valid iff $\phi$ is equivalent to 
a sentence in the {\em universal\/} fragment of modal
logic, the fragment built from atomic sentences and their
negations using $\necc$, $\andd$, and $\orr$.  However, this is
not to be.

Suppose that $\phi$ and $\psi$ are non-modal  sentences
(that is, boolean combinations of atomic sentences).  Suppose
that $\models \psi\iif \phi$.  Consider 
$\phi\orr\poss\psi$.  This is clearly not in general equivalent to
a sentence in our fragment.  Yet we claim that 
$$\models [\Pub\ \phi\orr\poss\psi](\phi\orr\poss\psi).
$$
To see this, fix a state model $\bS$ and some $s\in S$.  
If $s\in\semantics{\phi}{\bS}$, then since $\phi$ is   non-modal,
$s\in\semantics{\phi}{\bS(\Pub\ \phi\orr\poss\psi)}$.
On the other hand, suppose that
$s\in\semantics{\poss\psi}{\bS}$.  Let $s\rightarrow t$ with
$t\in\semantics{\psi}{\bS}\subseteq\semantics{\phi}{\bS}$.
Hence $t$ belongs to the model after the announcement,
$\bS(\Pub\ \phi\orr\poss\psi)$.
And $s\in \semantics{\poss\phi}{\bS(\Pub\ \phi\orr\poss\psi)}$
via $t$.

This example is due to Lei Qian.  He also found a hypothesis under
which the ``first guess'' above indeed holds.  Here is his result.
Let $T_0$ be the set of  non-modal sentences.
Let 
$$T_1 \quadeq T_0 \cup \set{\necc\phi : \phi \in T_0} \cup
 \set{\poss\phi : \phi \in T_0}.
$$
Finally, let $T_2$ be the closure of $T_1$ under $\andd$ and $\orr$.

\begin{theorem}[Qian~\cite{qian}]
Let $\phi\in T_2$ have the property that $\models [\Pub\ \phi]\phi$.
Then there is some $\psi$ in the universal fragment of modal logic
such that $\models\phi\iiff\psi$.
\end{theorem}

We also should discuss some natural conjectures.
For example, here is a very strong way to say that
it is impossible to attain common knowledge by
private announcements.

\begin{conjecture}  Let there be just two agents,
$A$ and $B$.
Let $\phi$ 
be  a sentence  in $\lang_1(\bSigmaPri)$,
and let $\alpha$ be an action in $\lang_1(\bSigmaPri)$
in which all of the announcements 
are private.  (That is, $\alpha$ has no occurrences
of $\Pri^{A,B}$, only $\Pri^A$ and $\Pri^B$.)
Suppose that $\models \phi \iif [\alpha]\necc^*_{A,B} p$.
Then  $\models \phi \iif \necc^*_{A,B} p$.

In other words, if $\phi$ implies that running $\alpha$
always leads to common knowledge, then $\phi$ itself
already implies that there is common knowledge of $\phi$
(and presumably running $\alpha$ does not change this.)
\end{conjecture}
} 

\paragraph{Endnotes}  
Although the general logical systems
in this paper are new,  there are important precursors for 
the target logics.   Plaza~\cite{plaza} constructs what we would call
$\lang_0(\bSigma_{\Pub})$, that is, the logic of public announcements
without common knowledge operators or program iteration.  (He worked
only on   models where each accessibility relation is an equivalence
relation, so his system includes the S5 axioms.)
Gerbrandy~\cite{gerbrandy98,gerbrandyphd},
and also Gerbrandy and Groeneveld~\cite{gerbrandygroeneveld} went a bit further.
They studied the logic of completely private announcements
(generalizing public announcements) and presented
a logical system which  included
the common knowledge operators.   That is, their system included Mix.
They argued that all of the reasoning in the original
Muddy Children scenario can be carried out in their system.
This is important because it 
 shows that in order to get a formal treatment of that problem and 
related ones,
one need not posit models which maintain histories.   Their system
was not complete since it 
did not have anything like the Action Rule; this first appears
in a slightly different form in~\cite{bms}.

\subsection{Conclusion}

We have been concerned with actions in the social world that
affect  the intuitive concepts of knowledge (actually
justified true belief) and common knowledge.
This paper has shown how to define and study logical languages
that contain constructs corresponding to such actions.  
The many examples in this
paper show that the logics ``work''.  Much more can be said
about specific tricky examples, but we hope that the examples
connected to our scenarios make the point that we are developing
valuable tools.

The key steps in the development are the recognition that 
we can associate to a social  action $\alpha$
a mathematical model $\bSigma$.
$\bSigma$ is an action model.  In particular, 
it is a multi-agent Kripke model, so it has features
in common with the state models that underlie  
formal work in the  entire area.  There is a natural operation
of update product at the heart of our work.  This operation is
surely of independent interest because it enables
one to build complex and interesting state models.
  The logical languages 
that we introduce use the update product in their semantics,
but the syntax is a small variation on propositional dynamic
logic.  We feel that presenting the update product first (before
the languages) will make  this paper easier to read, and having
a relatively standard syntax should also help.

Once we have our languages, the next natural step is to study them.
This paper presented logical systems for validities, omitting
many proofs due to the lack of space.  

}

\Section{The Action Rule}
\label{section-langactionstar}
\label{section-biggersystem}

Recall that 
our logical system  is listed in
Figure~\ref{figure-logical-system} above.
We   have shown the soundness of all of the axioms
and rules, except for the Action Rule.
  But before we can even define the operation $\Pre$
that figures into  the Action Rule,
we need to discuss a structure that comes up in our work,
the {\em canonical action  model $\bOmega$}.
And even after we define $\bOmega$, there is a
fair amount of work to do before we can prove
the soundness of the Action Rule.

\subsection{The canonical action  model $\bOmega$}
\label{section-bOmega}

Recall that we defined   action models 
and  program models in
Sections~\ref{section-simple-action-structures}.
 At this point, we define 
the {\em canonical action  model\/}
$\syntacticactionmodel$.

\begin{definition}
A {\em basic action\/} is a program
of the form  $\sigma\vec{\psi}$.
A  {\em simple action\/} of $\lang(\bSigma)$ is
 a program 
 in which neither
the program sum operation $\union$ nor 
the iteration operation $\pi^*$  occur.
We use letters like $\alpha$ and $\beta$ to denote simple actions
{\em only}.
\end{definition}

\paragraph{The canonical  action model $\syntacticactionmodel$}
We define a program model $\syntacticactionmodel$ in several steps.
The  actions of the model  $\syntacticactionmodel$   
(that is, the elements of its carrier set) 
are the 
simple actions of the language $\lang(\bSigma)$
(as defined just above).
  For all $A$,
the accessibility relation  $\arrowA$ is the smallest relation
such that 
\begin{enumerate}
\item $\skipp\arrowA\skipp$.
\item $\sigma_i\vec{\phi} \arrowA \sigma_j\vec{\psi}$ iff
$\sigma_i\arrowA \sigma_j$ in $\bSigma$ and
$\vec{\phi} = \vec{\psi}$.
\item If $\alpha\arrowA\alpha'$ and $\beta\arrowA \beta'$,
then $\alpha\then \beta \arrowA \alpha'\then\beta'$.
\end{enumerate}

\begin{proposition}
As a frame, $\syntacticactionmodel$   is {\em locally
finite}: for each simple $\alpha$, there are only finitely
many $\beta$ such that $\alpha\arrowAgentsstar \beta$.
\label{proposition-locallyfinite}
\end{proposition}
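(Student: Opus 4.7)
My plan is to prove the proposition by structural induction on simple actions. Since simple actions are built from $\skipp$, $\crash$, basic actions $\sigma_i\vec{\psi}$, and sequential composition $\then$ (the definition excludes $\union$ and ${}^*$), and since the accessibility relations $\arrowA$ on $\syntacticactionmodel$ are defined as the \emph{smallest} relations satisfying the three listed clauses, any $\arrowA$-edge out of a simple action must arise from exactly one of those clauses, with the form of the target determined by the form of the source. This rigidity is what makes the induction work.

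For the base cases: from $\skipp$ the only $\arrowA$-successor is $\skipp$ itself, so the $\arrowAgentsstar$-reachable set is $\{\skipp\}$; from $\crash$ there are no outgoing edges (no clause applies), so the reachable set is just $\{\crash\}$; and from a basic action $\sigma_i\vec{\psi}$, every $A$-successor is of the form $\sigma_j\vec{\psi}$ with $\sigma_i \arrowA \sigma_j$ in the action signature $\bSigma$, so by iterating one sees that the reachable set is contained in $\{\sigma_k\vec{\psi} : \sigma_i \arrowAgentsstar \sigma_k \text{ in } \bSigma\}$. This last set is finite because $\bSigma$ itself is finite as part of the definition of an action signature.

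For the inductive step, suppose $\alpha = \beta \then \gamma$ and that the reachable sets from $\beta$ and from $\gamma$ are both finite. The key observation is that clause~(3) is the only clause producing arrows out of a composition, so every $\arrowA$-successor of $\beta \then \gamma$ is of the form $\beta' \then \gamma'$ with $\beta \arrowA \beta'$ and $\gamma \arrowA \gamma'$. An easy induction on path length then shows that every $\arrowAgentsstar$-successor of $\beta \then \gamma$ is of the form $\beta' \then \gamma'$ with $\beta \arrowAgentsstar \beta'$ and $\gamma \arrowAgentsstar \gamma'$ (using the \emph{same} sequence of agent labels on both sides). Hence the reachable set from $\beta \then \gamma$ injects into the cartesian product of the reachable sets from $\beta$ and from $\gamma$, and is therefore finite by the induction hypothesis.

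The only point that requires even mild care is justifying that every outgoing edge really must come from exactly one clause with the indicated shape; this follows because the relations are defined as the smallest ones closed under the three clauses, so the shapes on the left-hand sides of the clauses exhaustively enumerate the possibilities. I do not expect any serious obstacle beyond this bookkeeping; the finiteness of $\bSigma$ and the absence of $\union$ and ${}^*$ in simple actions together prevent any branching that could blow up the reachable set.
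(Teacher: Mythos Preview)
Your proposal is correct and takes essentially the same approach as the paper: structural induction on simple actions, leaning on the minimality of the accessibility relations to pin down the shape of successors in each case, and invoking the finiteness of the action signature $\bSigma$ for the basic actions $\sigma_i\vec{\psi}$. The paper's proof is a terse two-sentence sketch of exactly this argument, so your version is simply a more fleshed-out rendering of the same idea.
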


\begin{proof}
By induction on  $\alpha$; we use
heavily the fact that the accessibility relations
on $\syntacticactionmodel$ are the smallest family
with their defining property.   
For the simple action expressions 
 $\sigma\vec{\psi}$, we use the
assumption that the action model $\bSigma$ underlying
all our definitions is finite.  (Taking it to be
locally finite would also be sufficient.)
\end{proof}

Next, we define $\map{\Pre}{\Omega}{\lang_1(\bSigma)}$
by recursion so that
$$\begin{array}{lcl}
\Pre(\skipp) & \quadeq & \true\\
\Pre(\crash) & \quadeq & \false\\
\Pre(\sigma_i\vec{\psi}) & \quadeq & \psi_i \\
\Pre(\alpha\then\beta) & \quadeq & 
\pair{\alpha}\Pre(\beta)\\
\end{array}
$$
 This function $\Pre$ is not 
 the function
 $\Presemantic$  which is part of
the structure of an epistemic action model.
However,  there is a
connection:
We are in the midst of
 defining the epistemic  action model $\bOmega$,  
 and its $\presemantic$ function is defined 
 in terms of $\Pre$.

Another point: the clause in the definition
of $\Pre$ could read
$\Pre(\alpha\then\beta) =
 \Pre(\alpha)
  \andd [{\alpha}]\Pre(\beta)$.  This is equivalent to 
what we have above.  It is sometimes tempting to think that the
definition should be
$\Pre(\alpha\then\beta) =
 \Pre(\alpha)
  \andd \pair{{\alpha}}\Pre(\beta)$.  But in this, the
first conjunct is redundant:  
$\pair{{\alpha}}\Pre(\beta)$ implies 
$ \Pre(\alpha)$.

We set 
$$\presemantic(\sigma) \quadeq \semanticsnm{\Pre(\sigma)}.
$$
This simple action model $\bOmega$ is the {\em canonical 
epistemic action model}; it plays the same role in our
work as the canonical model in modal logic.

\begin{remark}
The structure
$(\Omega,\arrowAA,\Pre)$ is entirely syntactic. 
It is \emph{not} an action model.
 This contrasts
with the canonical action model
$\bOmega = (\Omega,\arrowAA,\Presemantic)$; the last component of this 
structure is semantic.  The syntactic structure 
$(\Omega,\arrowAA,\Pre)$ is the one which is actually used in 
the statement of the Action Rule of the logical system.
We emphasize this point to allay any suspicion that our
logical system is formulated in terms of semantic concepts.

This is also perhaps a good place to remind the
reader that neither $\Pre$ 
nor $\Presemantic$
is a first-class symbol in $\lang(\bSigma)$;
it is only a defined symbol.   In Section~\ref{section-completeness-theorems}
below we shall introduce another language called
$\lang_1^+(\bSigma)$.  In that language,
$\Pre$ will be a first-class function symbol.
\end{remark}

\subsection{The main result about $\bOmega$}

\begin{definition}
For any
$\alpha\in\Omega$, let $\hat{\alpha}$ be the 
program model $(\syntacticactionmodel,\set{\alpha})$.
As in
Section~\ref{section-epistemic-program-models-give-updates},
we use the same notation $\hat{\alpha}$ to denote induced
{\em update}.  (However, it will be important  to distinguish
the two uses, and so we speak of the program model $\hat{\alpha}$
and also the update $\hat{\alpha}$.)
\end{definition}

\begin{theorem}
Let $\alpha\in\Omega$.
Then there  is a
 signature-based program model
$(\bDelta,\delta_i,\vec{\propositionpsi})$
such that 
\begin{enumerate}
\item As  updates,  
$\semanticsnm{\alpha}\sim(\bDelta,\delta_i,\vec{\propositionpsi})$.
\item As propositions, $ \Presemantic(\alpha) = \propositionpsi_i$.
\item As program models,  $(\bDelta,\delta_i,\vec{\propositionpsi})
\sim \hat{\alpha}$.
\end{enumerate}
In particular, as updates, $\semanticsnm{\alpha}\sim  \hat{\alpha}$.
\label{theorem-representation}
\end{theorem}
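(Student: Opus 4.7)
The plan is to prove parts (1), (2), and (3) simultaneously by structural induction on the simple action $\alpha \in \Omega$. Simple actions are built from $\skipp$, $\crash$, basic actions $\sigma_i \vec{\psi}$, and sequential composition $\then$ (neither $\union$ nor $^*$ occurs), giving three base cases and one inductive case. Once (1) and (3) are in hand for $\alpha$, the closing assertion $\semanticsnm{\alpha}\sim\hat{\alpha}$ follows from Proposition~\ref{proposition-equivalence-action-updates}, which lifts the program-model bisimilarity of (3) to update-equivalence of the induced updates, together with symmetry and transitivity of $\sim$.

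For the substantive base case $\alpha = \sigma_i \vec{\psi}$, take $(\bDelta, \delta_i, \vec{\propositionpsi}) = (\bSigma, \sigma_i, \semanticsnm{\psi_1}, \ldots, \semanticsnm{\psi_n})$. Part (1) is then the definition of $\semanticsnm{\sigma_i\vec{\psi}}$, and part (2) uses $\Pre(\sigma_i\vec{\psi}) = \psi_i$ so that $\Presemantic(\sigma_i\vec{\psi}) = \semanticsnm{\psi_i} = \propositionpsi_i$. For (3), the relation $R = \{(\sigma_j, \sigma_j\vec{\psi}) : 1 \le j \le n\}$ is an action-model bisimulation between $\bSigma$ and $\syntacticactionmodel$: the arrow conditions are exactly clause (2) of the definition of $\arrowA$ on $\syntacticactionmodel$, and the preconditions match by $\Pre(\sigma_j\vec{\psi}) = \psi_j$. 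The distinguished-set conditions are immediate since both distinguished sets are singletons and $\sigma_i \mathrel{R} \sigma_i\vec{\psi}$. The cases $\alpha = \skipp$ and $\alpha = \crash$ are handled by one-action signatures with self-loops and with precondition $\trueproposition$ respectively $\falseproposition$; the obvious singleton relations witness (3) in each case.

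For the inductive case $\alpha = \beta \then \gamma$, the IH supplies $(\bDelta_1, \delta, \vec{\propositionpsi})$ for $\beta$ and $(\bDelta_2, \epsilon, \vec{\propositionchi})$ for $\gamma$. Take the composed signature-based program model $\bDelta_1 \then \bDelta_2$ with distinguished action $(\delta, \epsilon)$. Part (1) follows by combining IH-(1) for $\beta$ and $\gamma$ via Proposition~\ref{prop-update-composition-equivalence} (equivalence is preserved under composition) and Proposition~\ref{proposition-composition-verified} (the induced update of a composition equals the composition of induced updates). Part (2) uses the defining identity $\Pre(\beta\then\gamma) = \pair{\beta}\Pre(\gamma)$: its interpretation $\pair{\semanticsnm{\beta}}\Presemantic(\gamma)$ equals $\pair{(\bDelta_1,\delta,\vec{\propositionpsi})}\Presemantic(\gamma)$ by IH-(1) and Proposition~\ref{proposition-preserve-sim} (applicable because $\Presemantic(\gamma)$ is bisimulation-preserved), and this in turn equals $\pair{(\bDelta_1,\delta,\vec{\propositionpsi})}\propositionchi_\epsilon$ by IH-(2), which is by construction the precondition assigned to $(\delta,\epsilon)$ in $\bDelta_1\then\bDelta_2$. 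For (3), glue the bisimulations $R_1, R_2$ from the IH into $R$ defined by $(\delta',\epsilon')\,R\,(\beta'\then\gamma')$ iff $\delta' R_1 \beta'$ and $\epsilon' R_2 \gamma'$, and verify the bisimulation clauses using clause (3) of the definition of $\arrowA$ on $\syntacticactionmodel$.

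The main technical obstacle lies in part (3) of the inductive case: for \emph{every} related pair $(\delta',\epsilon')\,R\,(\beta'\then\gamma')$ — not just the distinguished one — the preconditions must agree, i.e.\ $\pair{(\bDelta_1,\delta',\vec{\propositionpsi})}\propositionchi_{\epsilon'}$ must equal $\Presemantic(\beta'\then\gamma') = \semanticsnm{\pair{\beta'}\Pre(\gamma')}$. This does not follow from the IH applied only at $\delta,\epsilon$; the IH-style identification has to propagate to every pair reachable under the agent arrows in $\bDelta_1\times\bDelta_2$. The cleanest remedy is to strengthen the inductive statement so that (1)--(3) hold uniformly at every simple action bisimulation-related to $\alpha$ in $\syntacticactionmodel$, with the bisimulations of (3) chosen coherently so that $R_1,R_2$ witness IH-(1) and IH-(2) \emph{pointwise} along related pairs. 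With that strengthening the precondition comparison at any $R$-related pair reduces pointwise to the IH, and the remaining bisimulation and distinguished-set checks are routine bookkeeping.
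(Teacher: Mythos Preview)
Your proposal is correct and follows essentially the same inductive construction as the paper, with the same choices in each base case and the same product bisimulation $T$ in the composition step. You have in fact been more careful than the paper: the paper simply asserts that $T$ is a bisimulation without verifying the precondition clause at every related pair $(\delta',\epsilon')\,T\,(\beta'\then\gamma')$, whereas you correctly isolate this as the one nontrivial check.

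Your proposed remedy (strengthening the inductive statement to hold pointwise along the bisimulation) works. A slightly lighter fix is to read ``induction on $\alpha$'' as induction on the length of $\alpha$: every $\beta'$ with $\beta\arrowA^{*}\beta'$ in $\bOmega$ has the same length as $\beta$, hence is strictly shorter than $\beta\then\gamma$, so the inductive hypothesis already gives $\semanticsnm{\beta'}\sim\hat{\beta'}$. Combined with $(\bDelta_1,\delta',\vec{\propositionpsi})\sim\hat{\beta'}$ (from $R_1$ and Proposition~\ref{proposition-equivalence-action-updates}) and the bisimulation condition $\propositionchi_{\epsilon'}=\Presemantic(\gamma')$ from $R_2$, the precondition equality at every $T$-pair then follows via Proposition~\ref{proposition-preserve-sim}. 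Either route closes the gap.
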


\label{section-theorem-sam}

\begin{proof}
By induction on $\alpha$.
For $\skipp$, we take
$(\bDelta,\delta_i,\vec{\propositionpsi})$
to be $\skippaction$  from
Section~\ref{section-operations-on-program-models}. Our semantics of
$\skipp$ is exactly the first point above, and the second follows from
the definition  of  $\Pre$ earlier in this section.
The last point comes from the fact that the only 
arrows from $\skipp$ in $\bOmega$ are $\skipp\arrowA\skipp$
for all $A$.

For $\crash$, we define
$(\bDelta,\delta_i,\vec{\propositionpsi})$
to be $\actionzero$, 
the empty program model.  This is not exactly what
we want here.  Instead, consider 
  a one-action set $\set{\sigma}$ with no $A$-arrows 
for any agent $A$, $\pre(\sigma) = \falseproposition$.
This gives a signature-based action model whose
induced update is equivalent to $\crash$.
Our definition of $\Pre$ is that $\Pre(\crash) = \false$.
And $\crash$ is not the source of any arrows in $\Omega$.

We next consider an action $\sigma_i\vec{\psi}$.
We take $\bDelta$ to be $\bSigma$, the action signature
with which we are working in this lemma.
We also take $\delta_i$ to be $\sigma_i$.
And for each $j$ in the fixed enumeration of 
$\Sigma$, we take $\propositionpsi_j = \semanticsnm{\psi_j}$.
Then the result follows from the definitions of
the semantics of $\lang(\bSigma)$ and of $\Pre$.
The bisimulation relating 
$(\bSigma,\sigma_i,\vec{\propositionpsi})$ and
 $(\bOmega,\sigma_i\vec{\psi})$ 
is the relation 
$\set{(\sigma_j,\sigma_j\vec{\psi}): 1\leq j\leq n}$.
Here $n$ is the number of elements in $\Sigma$.

Finally, assume our result for $\alpha$ and $\beta$.
Let $(\Delta,\delta_i,\vec{\propositionpsi})$
and $(\Delta',\delta'_i,\vec{\propositionpsi'})$
satisfy our lemma for $\alpha$ and $\beta$.
Then by Proposition~\ref{prop-update-composition-equivalence},
\begin{equation}
\semanticsnm{\alpha;\beta}
\quad  \sim\quad
(\Delta,\delta_i,\vec{\propositionpsi});
 (\Delta',\delta'_i,\vec{\propositionpsi'}).
\label{eq-updatecomp}
\end{equation}
This is not quite what we want, since on the right we do not  
have a signature-based program model.
Let the enumeration of $\Delta$ be $\delta_1,\ldots,  \delta_n$,
and let  the enumeration of $\Delta'$ be $\delta'_1,\ldots,  \delta'_m$.
Let $$\map{f}{\set{1,\ldots,  nm}}{\set{1,\ldots,n}\times\set{1,\ldots,m}}$$
be a bijection.  
Let $g$ and $h$ be such that $f= g\times  h$.
 Let $\Delta''=\Delta\times\Delta'$,
and enumerate this set by $\delta''_{k} = (\delta_{g(k)},\delta'_{h(k)})$.
For $1\leq k\leq nm$, let
$$\propositionchi_k \quadeq
\pair{(\bDelta,\delta_{g(k)},\vec{\propositionpsi})}
   \Presemantic(\delta'_{h(k)}).$$ 
Let $k^*$ be such
that
$f(k^*) = (i,i')$. For our lemma, we consider
 $(\bDelta'',\delta''_{k^*}, \vec{\propositionchi})$.
There is an obvious bisimulation between this model  
and  the update on the right of
equation (\ref{eq-updatecomp}) above.
For the second assertion of  our lemma,
$$\begin{array}{lcll}
\Presemantic(\alpha;\beta) & \quadeq & 
\semanticsnm{\pair{\alpha}\Pre(\beta)}  \\
& \quadeq &  \pair{\semanticsnm{\alpha}} \semanticsnm{\Pre(\beta)} \\
& \quadeq &   \pair{(\bDelta,\delta_i,\propositionpsi_i)} \propositionpsi'_j 
& \mbox{(*)} \\
& \quadeq &   \propositionchi_{k^*}\\
\end{array}$$
In the equivalence marked $(*)$, we used Proposition~\ref{proposition-preserve-sim}.

We turn to the last part.
Let $R$ and $S$ be  bisimulations showing 
 $(\bDelta,\delta_i,\vec{\propositionpsi})
\sim (\bOmega,\alpha)$
and
 $(\bDelta',\delta_j,\vec{\propositionpsi'})
\sim (\bOmega,\beta)$.
Then the following bisimulation, called $T$,   shows that 
$(\bDelta'',\delta''_{k^*}, \vec{\propositionchi})\sim
(\bOmega,\alpha;\beta)$:
$$(\delta_c,\delta'_d) \ T \ \rho;\eta
\quadiff
\delta_c\ R\ \rho \mbox{ and }
\delta'_d\ S \ \eta\ .
$$

This completes our induction.
In  the last point of this result,
the assertion that
 $\semanticsnm{\alpha}\sim  \hat{\alpha}$, we use
Proposition~\ref{proposition-equivalence-action-updates} and parts (1) and (3)
of the present result.
\end{proof}

\subsection{Soundness of the Action Rule}

As we have mentioned, the Action Rule is the main rule in the
system for $\lang_1(\bSigma)$ that goes beyond the rules of
$\lang_0(\bSigma)$.  It is an induction rule which in the
setting of epistemic  logic  allows us
to prove that common knowledge obtains after various actions.
Here  again is our statement of the Action Rule:

Let $\alpha$ be a  simple action.
Let $\psi$ be a  sentence, and let $\CC$ be a set
of agents. Let there be sentences  $\chi_{\beta}$ for
all $\beta$ such that
 $\alpha\rightarrow_{\CC}^* \beta$
(including $\alpha$ itself),
and such that
\begin{enumerate}
\item $\proves\chi_{\beta}\iif [\beta]\psi$.
\item
If  $A\in\CC$ and  $\beta\arrowA\justplain{\gamma}$,
then $\proves (\chi_{\beta} \andd\Pre(\beta) )
\iif \necc_A \chi_{\gamma}$.
\end{enumerate}
 {} \noindent From these assumptions, infer
 $\proves\chi_{\alpha}\iif [\alpha]\necc_{\CC}^*\psi$.

\begin{remark}
We use $\arrow_{\CC}^*$ as an abbreviation for the 
reflexive and transitive closure
of the relation $\bigcup_{A\in C} \arrowA$.
Recall 
from Proposition~\ref{proposition-locallyfinite} 
that there are
only finitely many
$\beta$ such that
 $\alpha\rightarrow_{\CC}^* \beta$.
So even though the Action Rule might look like it
takes infinitely many premises, it really only takes finitely many.
\end{remark}

\begin{remark} It is possible  to drop the Composition Axiom
in favor of a more involved version of the Action Rule.
The point is we shall later introduce normal forms for
sentences, and using 
 the Composition Axiom will greatly simplify these normal
forms. 
Moreover, adding the
Composition Axiom leads to shorter proofs.
 The Action Rule here is geared towards those normal forms.
So if we were to drop  the Composition Axiom, we would
need a stronger, more complicated,
 formulation of the Action Rule, one  which involved
{\em sequences} of actions.  It is not terribly difficult
to formulate such a rule, and completeness can be obtained
by an elaboration of the work  which we shall do.
\end{remark}

\begin{lemma}
  $s\in\semantics{\pair{\alpha}\poss_{\CC}^*\phi}{\bS}$
iff
there is a  sequence of states from $S$
$$s \ = \ s_0\quad\rightarrow_{A_1}\quad s_1\quad\rightarrow_{A_2}\quad
    \cdots\quad \rightarrow_{A_{k-1}} \quad
s_{k-1} \quad\rightarrow_{A_k} \quad s_k$$
where $k\geq 0$,
and also a sequence of actions of the same  length $k$,
$$\alpha \ = \ \alpha_0\quad\rightarrow_{A_1}\quad
   \alpha_1\quad\rightarrow_{A_2}\quad
    \cdots\quad \rightarrow_{A_{k-1}} \quad
\alpha_{k-1} \quad\rightarrow_{A_k} \quad
\alpha_k$$
such that
 $A_i\in \CC$ and $s_i\in \presemantic(\alpha_i)_{\bS}$
 for all $0\leq i < k$, and
$s_k\in\semantics{\pair{\alpha_k}\phi}{\bS}$.
\label{proposition-reduction-diamond-star}
\end{lemma}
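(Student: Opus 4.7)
The plan is to reduce the computation of $\semantics{\pair{\alpha}\poss_{\CC}^*\phi}{\bS}$ to a combinatorial statement about paths in the update product $\bS\otimes\bOmega$, using Theorem~\ref{theorem-representation} to pass between $\semanticsnm{\alpha}$ and the induced update $\hat{\alpha}=(\bOmega,\set{\alpha})$.  The point is that $\hat{\alpha}$ is a signature-based update, so its update product with $\bS$ is transparent: its carrier is $\bS\otimes\bOmega$, and $s\ \hat{\alpha}_{\bS}\ (s',\beta)$ exactly when $s'=s$, $\beta=\alpha$, and $s\in\presemantic(\alpha)_{\bS}$.

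First, since $\poss_{\CC}^*\phi$ is built from $\lang(\bSigma)$ and hence preserved by bisimulation (Proposition~\ref{prop-bisim-preserve-again}), and since $\semanticsnm{\alpha}$ and $\semanticsnm{\alpha}$ are standard and $\semanticsnm{\alpha}\sim\hat{\alpha}$ by Theorem~\ref{theorem-representation}, Proposition~\ref{proposition-preserve-sim} gives
\[ \semanticsnm{\pair{\alpha}\poss_{\CC}^*\phi} \quadeq \pair{\hat{\alpha}}\semanticsnm{\poss_{\CC}^*\phi}. \]
Hence $s\in\semantics{\pair{\alpha}\poss_{\CC}^*\phi}{\bS}$ iff $s\in\presemantic(\alpha)_{\bS}$ and $(s,\alpha)\in\semanticsnm{\poss_{\CC}^*\phi}_{\bS\otimes\bOmega}$.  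Unfolding the semantics of $\poss_{\CC}^*$, the latter says there is a path in $\bS\otimes\bOmega$
\[ (s,\alpha) \ = \ (s_0,\alpha_0) \arrow_{A_1} (s_1,\alpha_1) \arrow_{A_2} \cdots \arrow_{A_k} (s_k,\alpha_k) \]
with each $A_i\in\CC$, and $(s_k,\alpha_k)\in\semantics{\phi}{\bS\otimes\bOmega}$.

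Second, the defining clause (\ref{eq-product}) of the update product states that $(s_i,\alpha_i)\arrow_{A_{i+1}}(s_{i+1},\alpha_{i+1})$ iff $s_i\arrow_{A_{i+1}}s_{i+1}$ in $\bS$ and $\alpha_i\arrow_{A_{i+1}}\alpha_{i+1}$ in $\bOmega$; and by (\ref{eq-consistent}), $(s_i,\alpha_i)\in\bS\otimes\bOmega$ iff $s_i\in\presemantic(\alpha_i)_{\bS}$.  This is exactly the decomposition of the path into the two parallel sequences asked for in the statement, with the precondition conditions at the intermediate indices $0\leq i<k$ coming for free.  (The initial condition $s_0\in\presemantic(\alpha)_\bS$ is absorbed into the step $k=0$ below, where $s\in\semantics{\pair{\alpha}\phi}{\bS}$ already forces $s\in\presemantic(\alpha)_\bS$.)

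Third, to convert the endpoint condition, apply Theorem~\ref{theorem-representation} again at $\alpha_k$: $\semanticsnm{\alpha_k}\sim\hat{\alpha_k}$, so by Proposition~\ref{proposition-preserve-sim},
\[ \semanticsnm{\pair{\alpha_k}\phi} \quadeq \pair{\hat{\alpha_k}}\semanticsnm{\phi}. \]
Since $\hat{\alpha_k}_{\bS}$ takes $s_k$ only to $(s_k,\alpha_k)$ (provided $s_k\in\presemantic(\alpha_k)_\bS$, which holds because $(s_k,\alpha_k)\in\bS\otimes\bOmega$), we have $s_k\in\semantics{\pair{\alpha_k}\phi}{\bS}$ iff $(s_k,\alpha_k)\in\semantics{\phi}{\bS\otimes\bOmega}$.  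Chaining the three equivalences yields both directions of the lemma.  The only step that requires care is the bookkeeping in applying Proposition~\ref{proposition-preserve-sim}: one must verify that both $\poss_{\CC}^*\phi$ and $\phi$ denote bisimulation-preserved propositions, but this is exactly Proposition~\ref{prop-bisim-preserve-again}, so there is no real obstacle and the argument is a careful unfolding rather than anything deep.
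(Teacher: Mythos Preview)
Your proof is correct and follows essentially the same route as the paper's: pass from $\semanticsnm{\alpha}$ to $\hat{\alpha}$ via Theorem~\ref{theorem-representation} and Proposition~\ref{proposition-preserve-sim}, unfold $\poss_{\CC}^*$ in $\bS\otimes\bOmega$ using the product structure, and then apply the same reduction once more at the endpoint $\alpha_k$ to convert $(s_k,\alpha_k)\in\semantics{\phi}{\bS\otimes\bOmega}$ into $s_k\in\semantics{\pair{\alpha_k}\phi}{\bS}$. Your write-up is actually a bit more explicit than the paper's about this last endpoint step; note the small typo where you write ``$\semanticsnm{\alpha}$ and $\semanticsnm{\alpha}$ are standard'' --- you mean $\semanticsnm{\alpha}$ and $\hat{\alpha}$.
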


\begin{remark}   The case $k=0$ just says that
$s\in\semantics{\pair{\alpha}\poss_{\CC}^*\phi}{\bS}$
is implied by
$s\in\semantics{\pair{\alpha}\phi}{\bS}$.
\end{remark}

\begin{proof}
The following are equivalent:
\begin{enumerate}
\item $s\in \semantics{\pair{\alpha}\poss_{\CC}^*\phi}{\bS}$.
\label{uuo}
\item $s\in (\pair{\semanticsnm{\alpha}}\semanticsnm{\poss_{\CC}^*\phi})_{\bS}$.
\label{uuv}
\item $s\in (\pair{\hat{\alpha}}\semanticsnm{\poss_{\CC}^*\phi})_{\bS}$.
\label{uuhat}
\item $s\in \presemantic(\alpha)_{\bS}$, and
$(s,\alpha)\in \semantics{\poss_{\CC}^*\phi}{\bS\otimes\bOmega}$.
\label{uuoo}
\item
 $s\in  \presemantic(\alpha)_{\bS}$, and for some $k\geq 0$
there is a sequence in $\bS\otimes\bOmega$,
$$(s,\alpha) \ = \ t_0\quad\rightarrow_{A_1}\quad
t_1\quad\rightarrow_{A_2}\quad
    \cdots\quad \rightarrow_{A_{k-1}} \quad
t_{k-1} \quad\rightarrow_{A_k} \quad t_k$$
such that
$A_i\in \CC$ and $t_k\in \semantics{\phi}{\bS\otimes\bOmega}$.
\label{prds-3}
\item There are sequences of $s=s_0,\ldots, s_k$ and
$\alpha_0,\ldots,\alpha_k$ as in the statement of this lemma.
\label{prds-4}
\end{enumerate}
The first equivalence is by the semantics of $\lang(\bSigma)$.
The equivalence (\ref{uuv})$\Longleftrightarrow$(\ref{uuhat}) uses
the equality $\semanticsnm{\alpha} =\hat{\alpha}$  
which we saw in the concluding
statement of Theorem~\ref{theorem-representation}
in
Section~\ref{section-theorem-sam}, and also 
Proposition~\ref{proposition-preserve-sim}.
Equivalence  (\ref{uuhat})$\Longleftrightarrow$(\ref{uuoo})
uses our overall definitions and the structure  of $\bOmega$
as an action model.
 (\ref{uuoo})$\Longleftrightarrow$(\ref{prds-3}) is just the semantics of $\poss_{\CC}^*$.
Finally, the equivalence
 (\ref{prds-3})$\Longleftrightarrow$ (\ref{prds-4}) again uses 
the conclusion of Theorem~\ref{theorem-representation}.
That is, for all $i$, $(s_i,\alpha)\in \bS\otimes\bOmega$ iff
$s_i\in \presemantic(\alpha_i)_{\bS}$.
 \end{proof}

\begin{proposition} The Action Rule is sound.
\label{lemma-soundness-action}
\end{proposition}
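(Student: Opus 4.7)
My plan is to prove soundness by contraposition, using Lemma~\ref{proposition-reduction-diamond-star} to unpack the ``bad'' case into a concrete witnessing path, and then propagating the hypotheses $\chi_\beta$ along that path by induction. Assume each premise of the Action Rule is valid (which is what ``sound'' reduces to, granting soundness of everything above). Fix a state model $\bS$ and a state $s\in\semantics{\chi_\alpha}{\bS}$; since $[\alpha]\necc_\CC^*\psi$ is $\neg\pair{\alpha}\poss_\CC^*\neg\psi$, it suffices to show $s\notin\semantics{\pair{\alpha}\poss_\CC^*\neg\psi}{\bS}$.

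Suppose for contradiction that $s\in\semantics{\pair{\alpha}\poss_\CC^*\neg\psi}{\bS}$. Apply Lemma~\ref{proposition-reduction-diamond-star} with $\neg\psi$ in place of $\phi$ to obtain $k\geq 0$, agents $A_1,\ldots,A_k\in\CC$, states
$$s=s_0\rightarrow_{A_1}s_1\rightarrow_{A_2}\cdots\rightarrow_{A_k}s_k$$
in $\bS$, and simple actions
$$\alpha=\alpha_0\rightarrow_{A_1}\alpha_1\rightarrow_{A_2}\cdots\rightarrow_{A_k}\alpha_k$$
in $\bOmega$, such that $s_i\in\presemantic(\alpha_i)_{\bS}=\semantics{\Pre(\alpha_i)}{\bS}$ for all $i<k$ and $s_k\in\semantics{\pair{\alpha_k}\neg\psi}{\bS}$. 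Each $\alpha_i$ satisfies $\alpha\rightarrow_\CC^*\alpha_i$, so the Action Rule's hypotheses supply sentences $\chi_{\alpha_i}$ with properties (1) and (2), and by local finiteness of $\bOmega$ (Proposition~\ref{proposition-locallyfinite}) only finitely many such premises are at play.

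The heart of the argument is a straightforward induction on $i$ showing that $s_i\in\semantics{\chi_{\alpha_i}}{\bS}$ for every $0\leq i\leq k$. The base case $i=0$ is our hypothesis $s\in\semantics{\chi_\alpha}{\bS}$. For the inductive step, assume $s_i\in\semantics{\chi_{\alpha_i}}{\bS}$; since $i<k$ we also have $s_i\in\semantics{\Pre(\alpha_i)}{\bS}$, so by validity of hypothesis~(2) applied to $\alpha_i\rightarrow_{A_{i+1}}\alpha_{i+1}$, we obtain $s_i\in\semantics{\necc_{A_{i+1}}\chi_{\alpha_{i+1}}}{\bS}$. Since $s_i\rightarrow_{A_{i+1}}s_{i+1}$ in $\bS$, this forces $s_{i+1}\in\semantics{\chi_{\alpha_{i+1}}}{\bS}$, closing the induction.

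Applying the conclusion at $i=k$ together with validity of hypothesis~(1) gives $s_k\in\semantics{[\alpha_k]\psi}{\bS}$, which contradicts the chosen $s_k\in\semantics{\pair{\alpha_k}\neg\psi}{\bS}$: the latter supplies some $t$ with $s_k\,(\semanticsnm{\alpha_k})_{\bS}\,t$ and $t\notin\semantics{\psi}{\bS(\alpha_k)}$, while the former forbids any such $t$. This contradiction completes the proof. I do not expect any genuine obstacle; the only subtle point is lining up the notation so that Lemma~\ref{proposition-reduction-diamond-star} delivers exactly the ``zigzag'' in $\bS$ paired with a path in $\bOmega$ that the hypotheses of the Action Rule are designed to consume, and confirming that the intermediate states $s_i$ for $i<k$ indeed lie in the domain of $\alpha_i$ so that clause~(2) may be invoked.
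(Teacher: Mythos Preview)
Your proposal is correct and follows essentially the same approach as the paper's proof: assume the conclusion fails, invoke Lemma~\ref{proposition-reduction-diamond-star} to extract the witnessing zigzag of states and actions, propagate $\chi_{\alpha_i}$ along the path by induction using hypothesis~(2), and derive a contradiction at the endpoint via hypothesis~(1). The only cosmetic difference is that the paper separates out the $k=0$ case explicitly, whereas you absorb it into the uniform argument (the induction is vacuous and hypothesis~(1) applies directly at $s_0=s$); both are fine.
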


\begin{proof}
Assume that
$s\in\semantics{\chi_{\alpha}}{\bS}$ but also
$s\in\semantics{\pair{\alpha}\poss_{\CC}^*\nott\psi}{\bS}$.
According to Lemma~\ref{proposition-reduction-diamond-star},
there is a labeled  sequence of states from $S$
$$s \ = \ s_0\quad\rightarrow_{A_1}\quad s_1\quad\rightarrow_{A_2}\quad
    \cdots\quad \rightarrow_{A_{k-1}} \quad
s_{k-1} \quad\rightarrow_{A_k} \quad s_k$$
where $k\geq 0$ and each $A_i\in \CC$,
and also a sequence of  actions of length $k$, with the same labels,
$$\alpha \ = \ \alpha_0\quad\rightarrow_{A_1}\quad
   \alpha_1\quad\rightarrow_{A_2}\quad
    \cdots\quad \rightarrow_{A_{k-1}} \quad
\alpha_{k-1} \quad\rightarrow_{A_k} \quad
\alpha_k$$
such that $s_i\in \Presemantic(\alpha_i)_{\bS}$
 for all $0\leq  i < k$, and
 $s_k\in\semantics{\pair{\alpha_k}\nott\psi}{\bS}$.
 If $k=0$,
we would have
 $s\in\semantics{\pair{\alpha}\nott\psi}{\bS}$.
But  one of the assumptions in the statement of the Action Rule is that
$\proves \chi_{\alpha}\iif [\alpha]\psi$.
So we would have
 $s\in\semantics{[\alpha]\psi}{\bS}$.
This would be a contradiction.

Now we argue the case $k>0$.
We show by induction on $1\leq i\leq k$ that
$s_i\in\semantics{\chi_{\alpha_i}}{\bS}$.  The case $i=0$ is the opening 
 assumptions of this proof.
Assume that $s_i\in\semantics{\chi_{\alpha_i}}{\bS}$.
By hypothesis, $s_i\in \presemantic(\alpha_i)_{\bS}$.  
 In view of
the second assumption in the Action Rule, $s_i\in
\semantics{\necc_{A_{i+1}}\chi_{\alpha_{i+1}}}{\bS}$. Hence 
$s_{i+1}\in \semantics{\chi_{\alpha_{i+1}}}{\bS}$.
This completes our induction.

In particular, $s_k\in\semantics{\chi_{\alpha_k}}{\bS}$.
Using again the first assumption  in  the Action Rule, we have
$s_k\in\semantics{[\alpha_k]\psi}{\bS}$.
This is a contradiction.
\end{proof}

\subsection{Syntactic Facts}

At this point, we have presented the semantic facts which we need
concerning $\lang_1(\bSigma)$. 
These include the soundness of the logical system
for validity.  To prove completeness,
we also need some syntactic facts.  
We could have presented this section earlier, but since
it leans on the Action Rule, we have delayed
it  until establishing the soundness of that
rule had been established.  

In this section,
$\alpha$, $\beta$, etc.~denote simple actions in $\lang_1(\bSigma)$.

\paragraph{A stronger form  of the Action-Knowledge Axiom}
At this point, we
 establish a stronger form  of  the Action-Knowledge
Axiom.
In Figure~\ref{figure-logical-system} in Section~\ref{section-langzero},
this axiom was stated only for basic actions.
The strengthenings here is to {\em simple\/} actions,
that is, to compositions of basic actions, $\skipp$,
and $\crash$.  
\rem{We only
prove the stronger forms which we actually will use in
later sections of this paper.
}

\rem{not needed! uphere
\begin{lemma} Let $\alpha$ be a 
simple action.
 Then there is a simple action $\althat{\alpha}$ of
the form $\skipp$, $\crash$, or 
$$  \sigma_1\vec{\psi^1}  \then  (\sigma_2\vec{\psi^2} \then
( 
\cdots \then   \sigma_r\vec{\psi^r}  )) 
$$
such that 
\begin{enumerate}
\item For all $\phi$, $\proves [\alpha]\phi \iiff [\althat{\alpha}]\phi$.
\item
$\proves\Pre(\alpha) \iiff \Pre(\althat{\alpha})$.  
\end{enumerate}
Moreover,
$$\set{\althat{\beta}
 :\alpha\arrowA \beta} \quadeq \set{\gamma: \althat{\alpha}\arrowA\gamma}.$$
\label{lemma-seebelow}
\end{lemma}

{\bf I'm in the middle of thinking about this proof and the issues
Slawek raises about it.}

\begin{proof}
By induction on the length of $\alpha$.  
If $\alpha$ is basic, $\skipp$, or $\crash$, we are done.
Otherwise, write $\alpha$ as $\beta\then\gamma$.
Our first case is when $\beta$ is  basic, $\skipp$, or $\crash$.
By induction hypothesis, let $\althat{\gamma}$ be such that
the assertions in this lemma hold  for $\gamma$ and  $\althat{\gamma}$.
Let $\althat{\alpha}$ be $\beta\then\althat{\gamma}$.
For each $\phi$, $[\gamma]\phi\iiff [\althat{\gamma}]\phi$.
So  $[\beta][\gamma]\phi\iiff [\beta][\althat{\gamma}]\phi$.
Then using the Composition Axiom, we see that 
$[\alpha]\phi \iiff [\althat{\alpha}]\phi$.
Also,
$\Pre(\alpha) \equiv
\Pre(\beta\then\gamma) \equiv
\pair{\beta}\Pre(\beta\then\gamma)
\equiv
\pair{\beta}\Pre(\beta\then\althat{\gamma})
\equiv
\Pre(\althat{\alpha})$.

Our second case is when $\beta$ itself is a composition.
Let $\beta =\beta_1\then\beta_2$.
Then 

\end{proof}

{\bf old proof below}
\begin{proof}
$\althat{\alpha}$ is a re-association of the simple actions in $\alpha$.
We can check  by induction on $\alpha$ that  $\proves [\alpha]\phi \iiff
[\althat{\alpha}]\phi$ for all $\phi$.
The fact that $\proves\Pre(\alpha) \iiff \Pre(\althat{\alpha})$ comes from 
Lemma~\ref{lemma-towards-composition-admissibility}.
\end{proof}
} 

\rem{
\begin{lemma}
The Atomic Permanence
Axiom is provable for all simple actions
$\alpha$:
$$
 \proves [\alpha]p \iiff (\Pre(\alpha)\iif p).
$$
\label{lemma-AP-general}
\end{lemma}

\begin{lemma}
The Partial Functionality
Axiom is provable for all simple actions
$\alpha$:
$$
 \proves [\alpha]\nott\chi \iiff (\Pre(\alpha)\iif \nott[\alpha]\chi)
$$\label{lemma-pf-general}
\end{lemma}

The proof of Lemmas~\ref{lemma-AP-general}
and~\ref{lemma-pf-general} are similar to what we
see in Lemma~\ref{lemma-aka-general}
below.  Since we do not need
 Lemmas~\ref{lemma-AP-general}
and~\ref{lemma-pf-general}, we omit their proofs.
}

\rem{
\begin{proof}
By induction on the length of $\alpha$.
If the length of $\alpha$ is $0$, $\alpha$ is $\skipp$
or $\crash$.  In the first case, $\Pre(\alpha) = \true$, 
and with
 the Skip Axiom,
equation (\ref{eq-pfn}) reduces to the propositional tautology
$\nott\chi \iiff (\true \iif \nott\chi)$.
In the second case,   $\Pre(\alpha) = \false$, 
and using the Crash Axiom, 
 (\ref{eq-pfn}) is the propositional tautology
$\true \iiff (\false \iif \nott\true)$.

If the length is $1$, then we  have the Partial Functionality Axiom
of Figure~\ref{figure-logical-system}.
 So assume our lemma for  actions $\alpha$ of
length $n$.   We may replace $\alpha$ by $\althat{\alpha}$ in Lemma~\ref{lemma-seebelow}
if necessary, to assume that composition in $\alpha$ associates to the right.
(That is,   (\ref{eq-aka-general}) for $\althat{\alpha}$ implies
(\ref{eq-aka-general}) for  $\alpha$, as implied
by the various assertions in Lemma~\ref{lemma-seebelow}.)
 We therefore 
show that for all actions $\sigma_i \vec{\psi}$,
\begin{equation} 
 \proves [\sigma_i \vec{\psi}\then\alpha]\nott\chi
\iiff   ( \Pre(\sigma_i \vec{\psi}\then\alpha) \iif
\nott [\sigma_i\vec{\psi}\then\beta]\chi)
\label{eq-pfn-desired}
\end{equation}
For this, we start with
equation (\ref{eq-pfn}) and apply necessitation
and some modal reasoning to get
$$ \proves  [\sigma_i\vec{\psi}][\alpha]\nott\chi \iiff 
( [\sigma_i\vec{\psi}]\Pre(\alpha)\iif  
[\sigma_i\vec{\psi}]\nott[\alpha]\chi).
$$
Now by the Partial Functionality Axiom for
basic actions, we
have 
$$\proves  [\sigma_i\vec{\psi}]\nott[\alpha]\chi 
\iiff (\pre(\sigma_i\vec{\psi}) \iif 
\nott[\sigma_i\vec{\psi}] [\alpha]\chi).
$$
And so we have
$$ \proves  [\sigma_i\vec{\psi}][\alpha]\nott\chi \iiff 
(( [\sigma_i\vec{\psi}]\Pre(\alpha)\andd
 \pre(\sigma_i\vec{\psi}))\iif\nott [\sigma_i\vec{\psi}][\alpha] 
\chi)
$$
Then we use the Composition Axiom and the definition of 
$\Pre(\sigma \vec{\psi}\then\alpha)$,
followed by the Partial Functionality Axiom for basic
actions, to get
the desired assertion in equation (\ref{eq-pfn-desired}).
\end{proof}
}

\label{section-stronger-forms-axioms}

\begin{lemma}
The Action-Knowledge Axiom is provable for all simple actions
$\alpha$:
\begin{equation}  
\proves [\alpha]\necc_A\phi
\iiff   ( \Pre(\alpha) \iif
\bigwedge\set{
 \necc_A [\beta]\phi :
\alpha\arrowA\beta \mbox{ in $\bOmega$}})
\label{eq-aka-general}
\end{equation}
\label{lemma-aka-general}
\end{lemma}

\begin{proof}
By induction on   $\alpha$.
 If $\alpha$ is $\skipp$, $\Pre(\alpha) = \true$, 
the only $\beta$ with $\alpha\arrowA \beta$ is $\skipp$,
and equivalence (\ref{eq-aka-general}) reads:
$$\proves [\skipp]\necc_A\phi
\iiff   ( \true \iif
 \necc_A [\skipp]\phi)
$$
And this is an easy consequence of propositional
reasoning, the Skip Axiom,
and $\necc_A$-necessitation.
If $\alpha$ is $\crash$, $\Pre(\alpha) = \false$, and there are no $\beta$
such that $\alpha\arrowA \beta$.
Equivalence (\ref{eq-aka-general}) then reads:
$$\proves [\crash]\necc_A\phi
\iiff   ( \false \iif \true)
.
$$  By the 
Crash Axiom and modal reasoning, $\proves  [\crash]\necc_A\phi$;
and so the assertion just above holds.

If $\alpha$ is of the form $\sigma_i\vec{\psi}$,
 then we simply have the Action-Knowledge Axiom in
the form we know it.  

So assume our lemma for   $\alpha'$ and $\alpha$;
we prove it for $\alpha';\alpha$.
 We 
show that  
\begin{equation} 
\proves [\alpha'\then\alpha]\necc_A\phi
\iiff   ( \Pre(\alpha'\then\alpha) \iif
\bigwedge\set{
 \necc_A [\beta'\then\beta]\phi :
\alpha'\then\alpha\arrowA\beta'\then\beta
\mbox{ in $\bOmega$  } })
\label{eq-aka-desired}
\end{equation}
We start with the equivalence in
(\ref{eq-aka-general}), use $[\alpha']$-necessitation
and normality, and get
\begin{equation} 
\proves [\alpha'][\alpha]\necc_A\phi
\iiff   ( [\alpha']\Pre(\alpha) \iif
\bigwedge\set{
[\alpha'] \necc_A [\beta]\phi :
\alpha\arrowA\beta \mbox{ in $\bOmega$}})
\label{aka-andget}
\end{equation}
By the induction hypothesis on $\alpha'$ and several   uses of the
Composition Axiom, we have that for all $\beta$,
$$\proves
[\alpha'] \necc_A [\beta]\phi  \iiff (\Pre(\alpha')\iif 
\bigwedge\set{ \necc_A  [\beta'\then\beta]\phi :
\alpha'\arrowA\beta' \mbox{ in $\bOmega$}}).
$$
The Composition Axiom and (\ref{aka-andget}) lead  to the provable equivalence of 
$[\alpha'\then\alpha]\necc_A\phi$ and
$$  
[\alpha']\Pre(\alpha) \iif
( \Pre(\alpha')\iif
\bigwedge\set{
\necc_A  [\beta'\then\beta]\phi :
\alpha\arrowA\beta \mbox{ and } \alpha'\arrowA \beta'
\mbox{ in $\bOmega$ }}).
$$
But $\proves\Pre(\alpha'\then \alpha) \iiff 
\Pre(\alpha')\andd [\alpha']\Pre(\alpha)$.
In addition, we have a general fact 
$\alpha'\then\alpha\arrowA\beta'\then\beta$ iff  
$\alpha'\arrowA\beta'$
and $\alpha\arrowA\beta$.
Using  these observations and some propositional reasoning, we get
(\ref{eq-aka-desired}), as desired. 
\end{proof}

\rem{\begin{proof}
By induction on the length of $\alpha$.
If the length of $\alpha$ is $0$, $\alpha$ is $\skipp$
or $\crash$.  In the first case, $\Pre(\alpha) = \true$, 
the only $\beta$ with $\alpha\arrowA \beta$ is $\skipp$,
and equivalence (\ref{eq-aka-general}) reads:
$$\proves [\skipp]\necc_A\phi
\iiff   ( \true \iif
 \necc_A [\skipp]\phi)
$$
And this is an easy consequence of propositional
reasoning, the Skip Axiom,
and $\necc_A$-necessitation.
In the second case, $\Pre(\alpha) = \false$, and there are no $\beta$
such that $\alpha\arrowA \beta$.
Equivalence (\ref{eq-aka-general}) then reads:
$$\proves [\crash]\necc_A\phi
\iiff   ( \false \iif \true)
.
$$  By the 
Crash Axiom and modal reasoning, $\proves  [\crash]\necc_A\phi$;
and so the assertion just above holds.
This takes care of the case when the length of $\alpha$ is $0$.

If the length is $1$, then we simply have the Action-Knowledge Axiom in
the form we know it.  So assume our lemma for  actions $\alpha$ of
length $n$.   We may replace $\alpha$ by $\althat{\alpha}$ in Lemma~\ref{lemma-seebelow}
if necessary, to assume that composition in $\alpha$ associates to the right.
(That is,   (\ref{eq-aka-general}) for $\althat{\alpha}$ implies
(\ref{eq-aka-general}) for  $\alpha$, as implied
by the various assertions in Lemma~\ref{lemma-seebelow}.)
 We therefore 
show that for all actions $\sigma \vec{\psi}$,
\begin{equation} 
\proves [\sigma \vec{\psi}\then\alpha]\necc_A\phi
\iiff   ( \Pre(\sigma \vec{\psi}\then\alpha) \iif
\bigwedge\set{
 \necc_A [\tau\vec{\psi}\then\beta]\phi :
\alpha\arrowA\beta \mbox{ in $\bOmega$ and } \sigma\arrowA \tau \mbox{ in $\bSigma$}})
\label{eq-aka-desired}
\end{equation}
We start with the equivalence in
(\ref{eq-aka-general}), use $[\sigma\vec{\psi}]$-necessitation
and normality, and get
\begin{equation} 
\proves [\sigma \vec{\psi}][\alpha]\necc_A\phi
\iiff   ( [\sigma \vec{\psi}]\Pre(\alpha) \iif
\bigwedge\set{
[\sigma \vec{\psi}] \necc_A [\beta]\phi :
\alpha\arrowA\beta \mbox{ in $\bOmega$}})
\label{aka-andget}
\end{equation}
The Composition Axiom tells us that $\proves  [\sigma \vec{\psi}][\alpha]\necc_A\phi \iiff 
 [\sigma \vec{\psi}\then\alpha]\necc_A\phi$.
And the Action-Knowledge Axiom that we already know tells us that for each $\beta$
$$\proves
[\sigma \vec{\psi}] \necc_A [\beta]\phi  \iiff (\Pre(\sigma\vec{\psi})\iif 
\bigwedge\set{ \necc_A  [\tau \vec{\psi}][\beta]\phi :\sigma\arrowA 
\tau \mbox{ in $\bSigma$}}).
$$
So (\ref{aka-andget}) leads to the provable equivalence of 
$[\sigma \vec{\psi}\then\alpha]\necc_A\phi$ and
$$  
[\sigma \vec{\psi}]\Pre(\alpha) \iif
( \Pre(\sigma\vec{\psi})\iif
\bigwedge\set{
\necc_A  [\tau \vec{\psi}\then\beta]\phi :
\alpha\arrowA\beta \mbox{ in $\bOmega$ and } \sigma\arrowA \tau \mbox{ in $\bSigma$}})
$$
But $\Pre(\sigma\vec{\psi}\then \alpha) = \Pre(\sigma\vec{\psi})\andd [\sigma\vec{\psi}]\Pre(\alpha)$.
Using  this and some propositional reasoning, we get
(\ref{eq-aka-desired}), as desired.
\end{proof}
}

\begin{remark}
There are no sound versions of the 
 Action-Knowledge Axiom for  actions containing $\union$.
For example  
let $\Agents$ be a singleton, hence omitted from the notation.
 Consider $\bSigma = \set{\sigma,\tau}$ this
as an action signature with the enumeration $\sigma$, $\tau$,
and with  (say) the arrows $\sigma\rightarrow \tau$
and $\tau\rightarrow\sigma$.
Note that by Atomic Permanence
$[\sigma\ p  q]r$ is equivalent to $p\iif r$,
$[\tau\ p  q]r$ is equivalent to $q\iif r$,
by Action-Knowledge and Atomic Permanence
$[\sigma\ p  q]\necc r$ is equivalent to $p\iif \necc(q\iif r)$,
and
$[\tau\ p  q]\necc r$ is equivalent to $q\iif \necc(p\iif r)$,
Consider also
the sentence
 $[(\sigma\ p q)\union(\tau\ p q)]\necc r$.
By the Choice Axiom, it is equivalent to 
\begin{equation}
(p\iif \necc(q\iif r))\andd
(q\iif \necc(p\iif r)).
\label{notequivalent}
\end{equation}
Now we did not define $\Pre( (\sigma\ p q)\union(\tau\ p q))$,
but the most reasonable choice is $p\orr q$.
And we did not define the accessibility structure of
actions containing $\union$, but in this case the most
likely choice is to have 
$(\sigma\ p q)\union(\tau\ p q)$ relate to itself and nothing else.
But then when we write out the right-hand side
of the equivalence (\ref{eq-aka-general}), we get
$$(p\orr q) \iif (\necc(p\iif r) \andd \necc(q\iif r)).$$
Clearly this is not equivalent to (\ref{notequivalent}).
 Even if one
were to change $\Pre( (\sigma\ p q)\union(\tau\ p q))$ to,
say $\true$, or to $p\andd q$,  we still would not have
an equivalence: (\ref{notequivalent}) is stronger.

\paragraph{A stronger form of the Partial Functionality Axiom}

We shall also need the following result, a version of the 
Partial Functionality Axiom:

\begin{lemma}
$\proves \pair{\alpha}\phi \iiff 
\Pre(\alpha) \andd [\alpha]\phi$.
\label{lemma-pfa-general}
\end{lemma}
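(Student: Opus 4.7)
I would prove this by induction on the structure of the simple action $\alpha$, using the fact that simple actions are built from $\skipp$, $\crash$, and basic actions $\sigma_i\vec{\psi}$ using sequential composition $\then$ only.

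For the base cases: when $\alpha = \skipp$, we have $\Pre(\skipp) = \true$, and the Skip Axiom (together with its dual, obtained via propositional reasoning) gives $\pair{\skipp}\phi \iiff \phi$ and $[\skipp]\phi \iiff \phi$, so both sides are provably equivalent to $\phi$. When $\alpha = \crash$, we have $\Pre(\crash) = \false$; the Crash Axiom $[\crash]\false$ together with $[\crash]$-normality gives $\proves [\crash]\phi$ for every $\phi$, so $\proves \nott\pair{\crash}\phi$, which means $\pair{\crash}\phi$ is provably equivalent to both $\false$ and to $\false \andd [\crash]\phi$. When $\alpha = \sigma_i\vec{\psi}$ is a basic action, we have $\Pre(\alpha) = \psi_i$, and the Partial Functionality Axiom in the form $[\sigma_i\vec{\psi}]\nott\chi \iiff (\psi_i \iif \nott[\sigma_i\vec{\psi}]\chi)$ yields the desired equivalence after substituting $\chi = \nott\phi$ and dualizing with propositional reasoning.

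For the inductive step, suppose $\alpha = \beta \then \gamma$, and the lemma holds for $\beta$ and $\gamma$ (for all choices of $\phi'$ on the right). By the Composition Axiom we have $\proves \pair{\beta\then\gamma}\phi \iiff \pair{\beta}\pair{\gamma}\phi$ and $\proves [\beta\then\gamma]\phi \iiff [\beta][\gamma]\phi$. Applying the induction hypothesis to $\gamma$ gives $\proves \pair{\gamma}\phi \iiff (\Pre(\gamma) \andd [\gamma]\phi)$; then $[\beta]$-necessitation together with $[\beta]$-normality yields
\[
\proves [\beta]\pair{\gamma}\phi \iiff \bigl([\beta]\Pre(\gamma) \andd [\beta][\gamma]\phi\bigr).
\]
Now apply the induction hypothesis to $\beta$, first with $\phi' := \pair{\gamma}\phi$, obtaining
\[
\proves \pair{\beta}\pair{\gamma}\phi \iiff \bigl(\Pre(\beta) \andd [\beta]\pair{\gamma}\phi\bigr),
\]
and second with $\phi' := \Pre(\gamma)$, obtaining $\proves \pair{\beta}\Pre(\gamma) \iiff (\Pre(\beta) \andd [\beta]\Pre(\gamma))$. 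Combining these three equivalences by propositional reasoning gives
\[
\proves \pair{\beta}\pair{\gamma}\phi \iiff \bigl(\pair{\beta}\Pre(\gamma) \andd [\beta][\gamma]\phi\bigr).
\]
Since $\Pre(\beta\then\gamma)$ is \emph{defined} to be $\pair{\beta}\Pre(\gamma)$, and using the Composition Axiom once more on each side, this is exactly $\proves \pair{\beta\then\gamma}\phi \iiff (\Pre(\beta\then\gamma) \andd [\beta\then\gamma]\phi)$, completing the induction.

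The proof is essentially mechanical once one notices that the definition $\Pre(\beta\then\gamma) = \pair{\beta}\Pre(\gamma)$ is precisely tuned to make the induction go through—indeed the only non-routine bookkeeping is remembering to invoke the induction hypothesis on $\beta$ twice, once with $\phi' = \pair{\gamma}\phi$ to eliminate the outer $\pair{\beta}$, and once with $\phi' = \Pre(\gamma)$ to convert $\Pre(\beta) \andd [\beta]\Pre(\gamma)$ back to $\pair{\beta}\Pre(\gamma)$. There is no genuine obstacle; the only minor point is that the lemma must be stated (and proved) schematically in $\phi$ so that the induction hypothesis can be instantiated at these two distinct sentences in the same step.
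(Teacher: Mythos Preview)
Your proof is correct and follows essentially the same approach the paper indicates: the paper's proof simply reads ``This is an induction, much like the proof of Lemma~\ref{lemma-aka-general} above,'' and your structural induction on $\alpha$ with base cases $\skipp$, $\crash$, $\sigma_i\vec{\psi}$ and composition for the step is exactly that. One trivial slip: in the basic-action case you want to substitute $\chi = \phi$ (not $\chi = \nott\phi$) in Partial Functionality and then negate both sides, but your phrase ``dualizing with propositional reasoning'' covers this anyway.
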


\begin{proof}
This is an induction, much like the proof of Lemma~\ref{lemma-aka-general}
above.
\end{proof}

\paragraph{Syntactic bisimulation}

Our next set of results 
 pertains to a syntactic notion of 
 action equivalence, one with a  bisimulation-like flavor.

\begin{definition}
A {\em syntactic bisimulation\/} 
is a relation $R$ on the set $\Omega$ of simple actions
of $\lang(\bSigma)$ with the following property:
if $\alpha\ R\ \beta$, then
\begin{enumerate}
\item $\proves \Pre(\alpha) \iiff \Pre(\beta)$.
\item For all $\alpha'$ and $A$ such that 
$\alpha\arrowA \alpha'$, there is some $\beta'$
such that $\beta \arrowA \beta'$ and $\alpha'\ R\ \beta'$.
\item For all  $\beta'$  and $A$
 such that $\beta \arrowA \beta'$,
 there is some $\alpha'$ 
such that $\alpha\arrowA \alpha'$ and $\alpha'\ R\ \beta'$.
\end{enumerate}
We say that $\alpha$ and $\beta$ are {\em provably equivalent\/} if
there is some syntactic bisimulation relating them.
We write $\alpha\equiv\beta$ in this case.

For sentences, we
write $\phi\equiv \psi$ and
say that   $\phi$ and $\psi$ are {\em equivalent\/}
if $\proves\phi\iiff \psi$.
\end{definition}

\rem{
\begin{proposition}
Let $\alpha$ be a simple action, let $\bS$ be a state model.
For $s\in S$ and for $t'$ the  following are equivalent:
\begin{enumerate}
\item There is some $t$ such that $s\ \semanticsnm{\alpha}_{\bS}\ t$
and $t \arrowA t'$.
\item  There are $s'$ and $\alpha'$ such that
$s\arrowA s'$, $\alpha\arrowA \alpha'$, and $s'\ \semanticsnm{\alpha'}_{\bS}\
t' $.
\end{enumerate} 
$$
\xymatrix{ s \ar[r]^{\semanticsnm{\alpha}_{\bS}}  \ar[d]_{A} & t \ar[d]^{A} \\
s' \ar[r]_{\semanticsnm{\alpha'}_{\bS}} & t'   \\
}
$$
\label{proposition-square-biggersystem}
\end{proposition}

\begin{proof}
By Theorem~\ref{theorem-representation}, Proposition~\ref{proposition-preserve-sim-dom},
and Proposition~\ref{proposition-square-semantic}.
\end{proof}
}

\rem{old proof below
\begin{proof}  
By induction on $\alpha$.  When $\alpha = \skipp$, the equivalence is immediate.
We consider the case when $\alpha = \sigma_i\vec{\psi}$.
Write $\actionalpha$ for $\semanticsnm{\alpha}$.
In one direction, when $s\ {\actionalpha}_{\bS}\ t$, $t$ must be $(s,\sigma_i)$;
and $t'$ is of the form $(s',\sigma_j)$ for some $s'$ and $\sigma_j$ 
such that $s\arrowA s'$ and
$\sigma_i\arrowA \sigma_j$.  We take $\alpha'$ to be $\sigma_j\vec{\psi}$.
The other direction of the equivalence is similar.

Finally, assume the result for $\alpha$ and $\beta$, and consider $\alpha\then\beta$.
First, assume that $s\ ({\actionalpha\then\actionbeta})_{\bS}\ t$
and $t \arrowA t'$.  Then there is some $u$ such that 
$s\ {\actionalpha}_{\bS}\ u \ {\actionbeta}_{\bS(\actionalpha)}\ t$.
By induction hypothesis, there are $u'$ and $\beta'$ such that
$u\arrowA u'$, $\beta\arrowA \beta'$, and $u'\ {\actionbeta'}_{\bS(\actionalpha)}\ t'
$, where  $\actionbeta'= \semanticsnm{\beta}$.
By induction hypothesis again, 
there are $s'$ and $\alpha'$ such that
$s\arrowA s'$, $\alpha\arrowA \alpha'$, and $s'\ {\actionalpha'}_{\bS}\ u' $,
where $\actionalpha' = \semanticsnm{\alpha'}$.
And now, $\alpha\then\beta \arrowA \alpha'\then\beta'$, and
$s'\ {(\actionalpha'\then\actionbeta')}_{\bS}\ t' $.
The other direction is similar.
\end{proof}
}

\rem{
\begin{proposition}
For all  simple  actions $\alpha$ and all state models $\bS$,
$\semantics{\Pre(\alpha)}{\bS} = \dom\ \semanticsnm{\alpha}_{\bS}$.
That is, the domain of the relation $ \semanticsnm{\alpha}_{\bS}$ is the 
interpretation in $\bS$ of $\Pre(\alpha)$.
\label{proposition-domains}
\end{proposition}

\begin{proof}
By 
Theorem~\ref{theorem-representation}, Proposition~\ref{proposition-preserve-sim-dom},
and Proposition~\ref{prop-new46}.
\end{proof}
}

\rem{Old proof below
\begin{proof}
By induction on $\alpha$.   The cases of $\skipp$,
$\crash$, and the simple action expressions
are easy.  We check the composition case, so assume the result for $\alpha $ and $\beta$.
Then 
$$\begin{array}{lcl}
\semantics{\Pre(\alpha\then\beta)}{\bS}  & \quadeq &   
 \semantics{\pair{\alpha}\Pre(\beta)}{\bS}    \\
 & \quadeq &    \set{s\in S: 
\mbox{there is $t$ so
that  $s\ \semanticsnm{\alpha}_{\bS}\ t$ and $t\in \semantics{\Pre(\beta)}{\bS(\semanticsnm{\alpha})}$}} 
\\
 & \quadeq &     \set{s\in S: 
\mbox{there is $t$ so
that  $s\ \semanticsnm{\alpha}_{\bS}\ t$ and $t\in \dom\ {\semanticsnm{\beta}}_{\bS(\semanticsnm{\alpha})}$}}  \\
& \quadeq &    \dom\ ({\semanticsnm{\alpha}}_{\bS} \then \
{\semanticsnm{\beta}}_{\bS(\semanticsnm{\alpha})} )
\\ & \quadeq &    \dom\ \semanticsnm{\alpha\then\beta}_{\bS}   \\
\end{array}
$$
\end{proof}
}

\begin{lemma} $\proves\pair{\alpha}\true \iiff \Pre(\alpha)$.
\label{lemma-alpha-true}
\end{lemma}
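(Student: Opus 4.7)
The plan is to derive this as an immediate corollary of Lemma~\ref{lemma-pfa-general}, the stronger form of the Partial Functionality Axiom which was just established. That lemma asserts $\proves \pair{\alpha}\phi \iiff \Pre(\alpha) \andd [\alpha]\phi$ for every simple action $\alpha$ and every sentence $\phi$.

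First I would instantiate $\phi$ as $\true$, obtaining
$$\proves \pair{\alpha}\true \iiff \Pre(\alpha) \andd [\alpha]\true.$$
Next I would observe that $\proves \true$ is a sentential validity, so by the $[\pi]$-necessitation rule applied to the program $\alpha$, we get $\proves [\alpha]\true$. Finally, propositional reasoning collapses the right-hand conjunct: since $\proves [\alpha]\true$, the conjunction $\Pre(\alpha)\andd [\alpha]\true$ is provably equivalent to $\Pre(\alpha)$, and so $\proves \pair{\alpha}\true \iiff \Pre(\alpha)$ as required.

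There is no real obstacle here; the entire content of the lemma is absorbed in the proof of Lemma~\ref{lemma-pfa-general}, whose own induction on $\alpha$ mirrors that of Lemma~\ref{lemma-aka-general}. The point of stating this particular corollary separately is presumably that $\pair{\alpha}\true$ (the syntactic expression of ``the domain of $\alpha$'') will appear repeatedly in the forthcoming completeness argument, and the provable identification with the precondition $\Pre(\alpha)$ makes the canonical action model $\bOmega$ interact cleanly with the proof system.
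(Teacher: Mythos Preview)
Your derivation is correct: instantiating Lemma~\ref{lemma-pfa-general} at $\phi=\true$, noting $\proves[\alpha]\true$ by necessitation, and simplifying propositionally does yield $\proves\pair{\alpha}\true\iiff\Pre(\alpha)$. There is no circularity, since the induction establishing Lemma~\ref{lemma-pfa-general} does not rely on the present lemma.

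The paper, however, takes a different route: it proves Lemma~\ref{lemma-alpha-true} by a direct induction on the simple action $\alpha$, handling $\skipp$, $\crash$, and $\sigma_i\vec{\psi}$ from the axioms, and in the composition step using only the induction hypothesis for the second factor together with the definition $\Pre(\alpha\then\beta)=\pair{\alpha}\Pre(\beta)$ and the Composition Axiom (to get $\pair{\alpha}\pair{\beta}\phi\iiff\pair{\alpha\then\beta}\phi$). Your approach is shorter precisely because all that inductive work has already been packaged into Lemma~\ref{lemma-pfa-general}; the paper's direct argument, on the other hand, is self-contained and would survive even if Lemma~\ref{lemma-pfa-general} were stated later or omitted. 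Either way the content is the same induction on $\alpha$, just factored differently.
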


\begin{proof}
By induction on $\alpha$.
For $\alpha = \skipp$ and
$\alpha=\crash$,  we use the Skip Axiom and the Crash Axiom,
respectively.
Here is the argument for $\alpha$ of the form $\sigma_i\vec{\psi}$.
First,  by necessitation we have $\proves[\sigma_i\vec{\psi}]\true$.
And by this and  Partial Functionality,
$\proves [\sigma_i\vec{\psi}]\nott\true\iiff  \nott\psi_i$.
So $\proves \nott[\sigma_i\vec{\psi}]\nott\true\iiff   \psi_i$.  That is,
 $\proves \pair{\sigma_i\vec{\psi}}\true \iiff
\psi_i$.

Finally, assume the result for $\beta$.
Then by normality and necessitation,
 $\proves\pair{\alpha}\pair{\beta}\true \iiff \pair{\alpha} \Pre(\beta)$.
So 
 $\proves\pair{\alpha}\pair{\beta}\true \iiff \Pre(\alpha\then\beta)$.
We conclude by showing as a general fact that
 $\proves\pair{\alpha}\pair{\beta}\phi \iiff \pair{\alpha\then\beta}\phi$.
For this, the  Composition  Axiom tells us that
$\proves [\alpha][\beta]\nott\phi  \iiff [\alpha\then\beta]\nott\phi$.
So   
$\proves \nott[\alpha]\nott \nott [\beta]\nott\phi  \iiff \nott[\alpha\then\beta]\nott\phi$.
Thus 
$\proves\pair{\alpha}\pair{\beta}\phi \iiff \pair{\alpha\then\beta}\phi$,
as desired.
\end{proof}

\begin{lemma}  The following monoid-type laws hold:
\begin{enumerate}
\item $\proves \Pre(\alpha\then\skipp) \iiff \Pre(\alpha)  \iiff \Pre(\skipp\then\alpha)$.
\item $\proves \Pre(\alpha\then(\beta\then\gamma))\iiff \Pre((\alpha\then\beta)\then\gamma)$.
\end{enumerate}
\label{lemma-towards-composition-admissibility}
\end{lemma}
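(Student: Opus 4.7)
The plan is to unfold the definition of $\Pre$ on compositions (namely $\Pre(\alpha\then\beta)=\pair{\alpha}\Pre(\beta)$) and then reduce each side to the other using axioms and lemmas already established, so that both equivalences become essentially bookkeeping.

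For part (1), I first compute $\Pre(\alpha\then\skipp)\equiv \pair{\alpha}\Pre(\skipp)\equiv \pair{\alpha}\true$. Then Lemma~\ref{lemma-alpha-true} immediately gives $\proves \pair{\alpha}\true\iiff \Pre(\alpha)$, which yields the first half. For the second half, $\Pre(\skipp\then\alpha)\equiv \pair{\skipp}\Pre(\alpha)$. The Skip Axiom $[\skipp]\phi\iiff\phi$ together with a negation argument gives $\proves \pair{\skipp}\phi\iiff\phi$ (apply Skip to $\nott\phi$, then negate), so taking $\phi = \Pre(\alpha)$ finishes.

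For part (2), I unfold the left side as
\[
\Pre(\alpha\then(\beta\then\gamma)) \;\equiv\; \pair{\alpha}\Pre(\beta\then\gamma) \;\equiv\; \pair{\alpha}\pair{\beta}\Pre(\gamma),
\]
and the right side as
\[
\Pre((\alpha\then\beta)\then\gamma) \;\equiv\; \pair{\alpha\then\beta}\Pre(\gamma).
\]
So the whole equivalence reduces to showing $\proves \pair{\alpha\then\beta}\phi \iiff \pair{\alpha}\pair{\beta}\phi$ for arbitrary $\phi$, applied to $\phi = \Pre(\gamma)$. But this diamond-form of the Composition Axiom was already verified at the end of the proof of Lemma~\ref{lemma-alpha-true}: starting from $[\alpha][\beta]\nott\phi\iiff[\alpha\then\beta]\nott\phi$ (the Composition Axiom), pushing negations through and applying propositional reasoning gives the dual form.

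There is no real obstacle here; both assertions are immediate consequences of previously listed machinery. The only mild subtlety is that the equivalences $\pair{\alpha\then\beta}\phi\iiff\pair{\alpha}\pair{\beta}\phi$ and $\pair{\skipp}\phi\iiff\phi$ are not listed as axioms in Figure~\ref{figure-logical-system} in exactly this diamond form, but each follows from its box-form counterpart by a one-line propositional argument (already carried out inside the proof of Lemma~\ref{lemma-alpha-true}), and it is worth stating this reduction explicitly once so that the rest of the proof is a chain of substitutions.
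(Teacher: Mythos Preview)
Your proof is correct and follows essentially the same route as the paper: unfold the recursive definition of $\Pre$ on compositions, then appeal to Lemma~\ref{lemma-alpha-true} and the Skip Axiom for part~(1), and to the diamond form of the Composition Axiom (derived inside the proof of Lemma~\ref{lemma-alpha-true}) for part~(2). The paper's presentation is slightly terser but the argument is identical.
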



\begin{proof}  We use the definitions to calculate
$$\begin{array}{lcl}
\Pre(\alpha\then \skipp) & \quadeq & \pair{\alpha}\true \\
\Pre(\skipp\then \alpha) & \quadeq & \pair{\skipp}\Pre(\alpha).
\end{array}
$$
We obtain a formal proof using Lemma~\ref{lemma-alpha-true}
and
the Skip Axiom.
Turning to  the second law,
$$\begin{array}{lcl}
\Pre(\alpha\then(\beta\then\gamma)) & \quadeq & 
\pair{\alpha}\Pre(\beta\then\gamma)  \\
& \quadeq & \pair{\alpha}\pair{ \beta}\Pre(\gamma)  
\\
& \quadequiv & \pair{\alpha \then\beta}\Pre(\gamma)   
\\
& \quadeq & 
 \Pre((\alpha\then\beta)\then\gamma).\\
\end{array}
$$
The equivalence above which mentions  $\equiv$ uses
 the   Composition Axiom.
\end{proof}

\rem{
This last result leads to another concept.

\begin{definition} $\approx$ is the smallest relation  on $\bOmega$
containing the identity relation on $\bOmega$ and all instances of associativity,
$$ \alpha\then(\beta\then\gamma) \quad \approx\quad (\alpha\then\beta)\then\gamma,$$
 and  closed under substitution: $\alpha\approx \beta$ 
and $\gamma\approx\delta$ imply
$\alpha\then \gamma \approx\beta\then\delta$.
We say that simple actions $\alpha$ and $\alpha'$ 
{\em differ by associativity\/} if $\alpha\approx\alpha'$.\footnote{We could 
modify $\approx$ to include the instances of
laws such as $\alpha\then\skipp\approx\alpha$
$\skipp\then\alpha\approx\alpha$,
and $\crash\then\alpha\approx\crash$.   All our results below would hold for
these modifications.  However, we have no use for  larger relations
of this type.}
\end{definition}

\begin{proposition}
The relation $\approx$ of ``differing by associativity'' 
is a syntactic bisimulation equivalence on $\bOmega$.
That is, $\approx$ is an equivalence relation.  And
if $\alpha\approx \alpha'$,
then:
\begin{enumerate}
\item $\proves\Pre(\alpha)\iiff\Pre(\alpha')$.
\item
Whenever $\alpha\arrowA \beta$, there is some $\beta'$ 
such that $\alpha'\arrowA \beta'$ and 
$\alpha'\approx\beta'$.
\end{enumerate}
\label{prop-pre-prebisim}
\end{proposition}

\begin{proof}
(1) and (2) are shown by induction on the relation $\approx$.
Lemma~\ref{lemma-towards-composition-admissibility}
is used in (1), and the argument for (2) is direct.
The transitivity of $\approx$ is similar.
\end{proof}
}

\label{section-syntactic-results}

\rem{
We check here that there is no sound version of 
the Action-Knowledge Axiom is  the actions contain $\union$.
The natural way to define $\Pre(\sigma\union\tau)$ is $\Pre(\sigma)\orr \Pre(\tau)$,
but even if we make other choices we'll see that we don't get a sound statement.

  Let $\Agents$ be a singleton, hence omitted from the notation.
 $\bSigma = \set{\sigma,\tau}$ with all  four arrows.  We consider this
as an action signature with the enumeration $\sigma$, $\tau$.
We have sentences like $[\sigma\ p,\nott p]p$, $[\sigma\ p,\nott p]\nott p$,
$[\tau\ p,\nott p]p$,  and $[\tau\ p,\nott p]\nott p$. 
It is not hard to work out the semantics of these:
 $[\sigma\ p,\nott p]p$ and  $[\tau\ p,\nott p]\nott p$
are equivalent to $\true$;  
 $[\sigma\ p,\nott p]\nott p$ and
$[\tau\ p,\nott p]p$ are equivalent to $\false$.

Consider $[\sigma\ p,\nott p +\tau\ p \nott p]\necc p$.

$$[\sigma\ p,\nott p +\tau\ p \nott p]\necc p
$$

with $\alpha\arrow\beta$, $\gamma\arrow\delta$, and no other arrows.
 }
\end{remark}

\rem{deleted!
\subsection{The main result on the canonical action model}

The main fact about the canonical action model is 
stated in Theorem~\ref{theorem-syntactic-action-model-preliminary} below.
At this point, we need only one more technical tool
before turning to
this result.

\begin{definition}
We define a map $\alpha\mapsto\sharp{\alpha}$,
$\phi\mapsto\sharp{\phi}$ from simple actions and simple sentences
to
the set  $N$ 
of natural numbers by the following recursion:
$$\begin{array}{lcl}
\sharp{\true} & \quadeq & 0 \\
\sharp{p} & \quadeq & 0 \\
\sharp{(\nott\phi)} & \quadeq & 1+ \sharp{\phi} \\
\sharp{(\phi\andd\psi)} & \quadeq & 1+ \max(\sharp{\phi},\sharp{\psi})\\
\sharp{(\necc_A\phi)} & \quadeq & 1+ \sharp{\phi} \\
\sharp{(\necc^*_{\BB}\phi)} & \quadeq & 1 + \sharp{\phi} \\  \\
\end{array}
\qquad
\begin{array}{lcl}
\sharp{([\alpha]\phi)} & \quadeq & 1 + \max(\sharp{\alpha},\sharp{\phi})\\
\sharp{\skipp} & \quadeq & 1 \\
\sharp{\crash} & \quadeq & 2 \\
\sharp{(\sigma_i\vec{\psi})} & \quadeq & 1 + \max_j\sharp{\psi_j} \\
\sharp{(\alpha\then\beta)} & \quadeq & 3 + \max(\sharp{\alpha},\sharp{\beta})\\
\end{array}
$$
\end{definition}

\begin{proposition}  Concerning the operation $\#$:
\begin{enumerate}
\item
For all simple actions $\alpha$, $\sharp{\Pre(\alpha)} < \sharp{\alpha}$.
\item  If $\phi$ and $\psi$ are simple sentences and $\phi$ is a
proper subsentence 
of $\psi$, then $\sharp{\phi} < \sharp{\psi}$.
\end{enumerate}
\label{proposition-sharps}
\end{proposition}

\begin{proof}
We prove the first part by induction on $\alpha$.
The cases of $\alpha = \skipp$, $\alpha = \crash$, or $\alpha$ of
the form $\sigma_i\vec{\psi}$ are easy.  
Assuming  that 
 $\sharp{\Pre(\alpha)} < \sharp{\alpha}$
and 
 $\sharp{\Pre(\beta)} < \sharp{\beta}$, we see that
$$\begin{array}{lcl}
\sharp{(\Pre(\alpha\then\beta))}  & \quadeq & 
\sharp{(\pair{\alpha}\Pre(\beta))}  \\
& \quadeq & 1 + \max(\sharp{\alpha}, 1 + \sharp{\Pre(\beta)}) \\
& \quad < \quad &  3+ \max(\sharp{\alpha},\sharp{\beta})\\
& \quadeq & \sharp{(\alpha\then\beta)} \\
\end{array}
$$
The second part is a similar induction.
\end{proof}

At this point, we have all the general results we need.
We take up the main issue of this section.

\begin{theorem}
 For every simple action $\alpha$, the programs $\semanticsnm{\alpha}$
 and $\hat{\alpha}$
are bisimilar. 
\end{theorem}

\noindent
Theorems~\ref{theorem-syntactic-action-model-preliminary} 
and~\ref{theorem-syntactic-action-model} below
are the main results of this section.
They figure in to 
Proposition~\ref{proposition-reduction-diamond-star} and
then into the soundness of the Action Rule, so they are 
crucial for our
work.
In reading the theorems, recall that
$$ \begin{array}{lcl}
([\hat{\alpha}]\semanticsnm{\phi})_{\bS} & \quadeq & 
\set{s\in S : \mbox{ if $(s,\alpha)\in S\otimes\Omega$,
then $(s,\alpha)\in \semantics{\phi}{\bS\otimes\bOmega}$}} \\
& \quadeq & 
\set{s\in S : \mbox{ if $s\in \semantics{\Pre(\alpha)}{\bS}$,
then $(s,\alpha)\in \semantics{\phi}{\bS\otimes\bOmega}$}} \\
\end{array}
$$
and also that
$$ \semantics{[\alpha]\phi}{\bS} \quadeq
 ([\semanticsnm{\alpha}]\semanticsnm{\phi})_{\bS}
\quadeq
 \set{s\in S: 
\mbox{if $s\ {\semanticsnm{\alpha}}_{\bS}\ t$, then $t\in
\semanticsnm{\phi}_{\actionmodel{\bS}{\semanticsnm{\alpha}}}$}}. 
$$

\begin{theorem}  For all simple  actions $\alpha$
and simple sentences $\phi$, 
$$\semanticsnm{[\alpha]\phi}
\quadeq [\hat{\alpha}]\semanticsnm{\phi}.$$
That is, for  all $\phi$ and all state models
$\bS$,
\begin{equation}
\begin{array}{cl}
 & \set{s\in S: 
\mbox{if $s\ {\semanticsnm{\alpha}}_{\bS}\ t$, then $t\in
\semanticsnm{\phi}_{\actionmodel{\bS}{\semanticsnm{\alpha}}}$}}
\\
\quadeq  & \set{s\in S : \mbox{if $s\in \semantics{\Pre(\alpha)}{\bS}$,
then $(s,\alpha)\in \semantics{\phi}{\bS\otimes\bOmega}$}}.
\end{array}
\label{eq-sam}
\end{equation}
\label{theorem-syntactic-action-model-preliminary}
\end{theorem}

\label{section-theorem-sam}

\begin{proof} Write  $\propositionphi$ for $\semanticsnm{\phi}$.
We need to show that the following conditions are equivalent:
\begin{description}
\item{(a)}  If $s\ {\semanticsnm{\alpha}}_{\bS}\ t$, then
    $t\in\propositionphi_{\bS(\semanticsnm{\alpha})}$.
\item{(b)}  If $s\in \dom\ {\semanticsnm{\alpha}}_{\bS}$, then 
$(s,\alpha)\in \propositionphi_{\bS\otimes\bOmega}$.
\end{description}
We argue
by induction on $\sharp{\phi}$.
To strengthen the induction hypothesis, we also show that another
  equivalence 
(c)$\Longleftrightarrow$(d)
  holds for all
simple actions
 $\beta$, and all $(s,\alpha)\in \bS\otimes \bOmega$.
\begin{description}
\item{(c)} $(s,\alpha)\in \semantics{[\beta]\phi}{\bS\otimes \bOmega}$.
\item{(d)}  $(s,\alpha)\in\semantics{\Pre(\beta)}{\bS\otimes \bOmega}$
implies $(s,\alpha\then\beta)\in \semantics{\phi}{\bS\otimes\bOmega}$.
\end{description}
Finally, we also show at the same time the following assertion:
\begin{description}
\item{(e)} If $\alpha\approx\alpha'$, then 
$(s, \alpha) \in \semantics{\phi}{\bS\otimes\bOmega}$
iff $(s, \alpha') \in \semantics{\phi}{\bS\otimes\bOmega}$.
\end{description}
We argue all of this by induction on $\sharp{\phi}$.

\paragraph{$\sharp{\phi} = 0$}
 When $\phi$ is $\true$,
the verifications of all the equivalences are easy:
$\semanticsnm{\true}_{\bU} = U$ for all $U$, and
$\semanticsnm{[\beta]\true}$ has the same property.
 So we consider the
case of  $\phi$   an atomic sentence
(say) $p$.  Fix $\alpha$, $\bS$ and $s\in S$.
To see (a)$\Longleftrightarrow$(b), the following are equivalent
for all $t$:
\begin{enumerate}
\item If $s\ {\semanticsnm{\alpha}}_{\bS}\ t$, then
    $t\in\propositionp_{\bS(\semanticsnm{\alpha})}$.
\item If $s\in \dom\ {\semanticsnm{\alpha}}_{\bS}$, then
    $s\in\propositionp_{\bS}$.
\item   If $s\in \dom\ {\semanticsnm{\alpha}}_{\bS}$, then 
$(s,\alpha)\in \propositionp_{\bS\otimes\bOmega}$.
\end{enumerate}
The equivalence (1)$\Longleftrightarrow$(2) follows from 
Proposition~\ref{prop-actions-preserve-atomics}.

The equivalence (c)$\Longleftrightarrow$(d) in the atomic case
is checked as follows.
We assume $(s,\alpha)\in \bS\otimes\bOmega$.
We note the following equivalences:
\begin{enumerate}
\item  $(s,\alpha)\in \semantics{[\beta]p}{\bS\otimes \bOmega}$.
\item  If $(s,\alpha)\in\semantics{\Pre(\beta)}{\bS\otimes \bOmega}$,
then $(s,\alpha)\in \semantics{p}{\bS\otimes\bOmega}$.
\item  If $(s,\alpha)\in \semantics{\Pre(\beta)}{\bS\otimes \bOmega}$,
then $s\in \semantics{p}{\bS}$.
\item  If $(s,\alpha)\in\semantics{\Pre(\beta)}{\bS\otimes \bOmega}$,
then $(s,\alpha\then\beta)\in \semantics{p}{\bS\otimes\bOmega}$.
\end{enumerate}
 (1)$\Longleftrightarrow$(2) here is by 
Lemma~\ref{lemma-AP-general}. The other equivalences
are by the definition of 
$\bS\otimes\bOmega$ as an update product; see
equation (\ref{eq-new-valuation})
in Section~\ref{section-update-product-1}.

As for (e) in the atomic case, it follows from 
the definition of $\bS\otimes\bOmega$
and Proposition~\ref{prop-pre-prebisim} (1).
  This  result implies that $(s,\alpha)$ is an action
in $\bS\otimes\bOmega$ iff $(s,\alpha')$ also is one; hence 
our result.

\paragraph{The induction step}
For the rest of the proof, we argue by cases on the shape
of $\phi$.  That is, we consider sentences
$\nott\phi$, $\phi\andd\psi$, $\necc_A\phi$, $\necc^*_{\BB}\phi$,
and $[\gamma]\phi$.   In each case, we assume that for all
sentences $\psi$ whose value under $\#$ is strictly smaller than
the sentence in question, we have 
 (a)$\Longleftrightarrow$(b),  (b)$\Longleftrightarrow$(c),
and (e).

\paragraph{$\nott\phi$}  Recall that $\sharp{\phi} < \sharp{(\nott\phi)}$.  
So we assume our result for $\phi$ and prove it for $\nott\phi$.
 Fix $\alpha$, $\bS$ and $s\in S$.
For  (a)$\Longleftrightarrow$(b), we check that
 the following are equivalent for all $t$:
\begin{enumerate}
\item If $s\ {\semanticsnm{\alpha}}_{\bS}\ t$, then
    $t\in (\nott\propositionphi)_{\bS(\semanticsnm{\alpha})}$.
\item If $s\in \dom\ {\semanticsnm{\alpha}}_{\bS}$,
then it is false that $s\ {\semanticsnm{\alpha}}_{\bS}\ t$
implies 
$t\in\propositionphi_{\bS(\semanticsnm{\alpha})}$.
\item If  $s\in \dom\ {\semanticsnm{\alpha}}_{\bS}$, 
then $(s,\alpha)\notin \propositionphi_{\bS\otimes\bOmega}$.
\item If  $s\in \dom\ {\semanticsnm{\alpha}}_{\bS}$, 
then $(s,\alpha)\in (\nott\propositionphi)_{\bS\otimes\bOmega}$.
\end{enumerate}
Equivalence (2)$\Longleftrightarrow$(3) above is by induction.
To see (c)$\Longleftrightarrow$(d), we use the following
equivalences:
\begin{enumerate}
\item $(s,\alpha)\in \semantics{[\beta]\nott\phi}{\bS\otimes \bOmega}$. 
\item If $(s,\alpha)\in\semantics{\Pre(\beta)}{\bS\otimes \bOmega}$,
then $(s,\alpha)\notin \semantics{[\beta]\phi}{\bS\otimes \bOmega}$. 
\item If $(s,\alpha)\in\semantics{\Pre(\beta)}{\bS\otimes \bOmega}$,
then $(s,\alpha\then\beta)\notin \semantics{\phi}{\bS\otimes\bOmega}$.
\item If $(s,\alpha)\in\semantics{\Pre(\beta)}{\bS\otimes \bOmega}$,
then $(s,\alpha\then\beta)\in \semantics{\nott\phi}{\bS\otimes\bOmega}$.
\end{enumerate}
Equivalence
(1)$\Longleftrightarrow$(2) uses the soundness of the
Partial Functionality Axiom in the strong form of
Lemma~\ref{lemma-pf-general}.
The induction hypothesis
(c)$\Longleftrightarrow$(d)
 is used in (2)$\Longleftrightarrow$(3).

(e) in this case is immediate.

\paragraph{$\phi\andd\psi$}
The induction step for $\andd$ is easy, and so we omit the details.

\paragraph{$\necc_A$}
Next, assume  (a)$\Longleftrightarrow$(b),
 (c)$\Longleftrightarrow$(d), and (e)
 for $\phi$; we check it for $\necc_A\phi$.
 Fix $\alpha$, $\bS$ and $s\in S$.
First, the following establish 
 (a)$\Longleftrightarrow$(b) for $\necc_A\phi$.
\begin{enumerate}
\item If $s\ {\semanticsnm{\alpha}}_{\bS}\ t$, then
    $t\in (\necc_A\propositionphi)_{\bS(\semanticsnm{\alpha})}$.
\item If $s\ {\semanticsnm{\alpha}}_{\bS}\ t$
and $t \arrowA t'$, then $t'\in\propositionphi_{\bS(\semanticsnm{\alpha})}$.
\label{uii}
\item If $s\in \dom\ {\semanticsnm{\alpha}}_{\bS}$,
$\alpha\arrowA\alpha'$, $s\arrowA s'$, and
$s'\ {\semanticsnm{\alpha}}_{\bS}\ t'$, then
    $t'\in  \propositionphi_{\bS(\semanticsnm{\alpha'})}$.
\label{uuiii}
 \item If $s\in \dom\ {\semanticsnm{\alpha}}_{\bS}$,
$\alpha\arrowA\alpha'$ and $s\arrowA s'$, then:
  if $s'\ {\semanticsnm{\alpha}}_{\bS}\ t'$, then
    $t'\in  \propositionphi_{\bS(\semanticsnm{\alpha'})}$.
\label{iuiuu}

\item If $s\in \dom\ {\semanticsnm{\alpha}}_{\bS}$,
$\alpha\arrowA\alpha'$ and $s\arrowA s'$, then:
if $s'\in  \dom\ {\semanticsnm{\alpha'}}_{\bS}$, then
$(s',\alpha')\in\propositionphi_{\bS\otimes\bOmega}$.
\label{iuiuuu}

\item  If $s\in \dom\ {\semanticsnm{\alpha}}_{\bS}$,
 $s'\in \dom\ {\semanticsnm{\alpha'}}_{\bS}$,
 and
$(s,\alpha)\arrowA (s',\alpha')$, then
$(s',\alpha')\in\propositionphi_{\bS\otimes\bOmega}$.

\item  If $s\in \dom\ {\semanticsnm{\alpha}}_{\bS}$, then 
$(s,\alpha)\in (\necc_A\propositionphi)_{\bS\otimes\bOmega}$.
\end{enumerate}
The equivalence (\ref{uii})$\Longleftrightarrow$(\ref{uuiii})
comes from Proposition~\ref{proposition-square-biggersystem};
(4)$\Longleftrightarrow$(\ref{iuiuuu}) is by the induction
hypothesis.
And we check (c)$\Longleftrightarrow$(d) for $\necc_A\phi$
as follows:
\begin{enumerate}
\item $(s,\alpha)\in \semantics{[\beta]\necc_A\phi}{\bS\otimes \bOmega}$.
\item
 $(s,\alpha)\in \semantics{\Pre(\beta)\iif
\displaystyle{\bigwedge_{\beta\arrowA \beta'}}
\necc_A[\beta']\phi}{\bS\otimes\bOmega}$.
 
\item If $(s,\alpha)\in \semantics{\Pre(\beta)}{\bS\otimes\bOmega}$,
then whenever $\beta\arrowA\beta'$,
  $(s,\alpha)\arrowA (s',\alpha')$, 
and  $ s' \in\semantics{\Pre(\alpha')}{\bS}$,
we have
$ (s',\alpha')\in \semantics{[\beta']\phi}{\bS\otimes\bOmega}$.
   
\item If $(s,\alpha)\in \semantics{\Pre(\beta)}{\bS\otimes\bOmega}$,
then whenever $\beta\arrowA\beta'$,
  $(s,\alpha)\arrowA (s',\alpha')$, 
  $ s' \in\semantics{\Pre(\alpha')}{\bS}$,
and  $(s',\alpha')\in\semantics{\Pre(\beta')}{\bS\otimes \bOmega}$,
we have
$(s',\alpha'\then\beta')\in \semantics{\phi}{\bS\otimes\bOmega}$.
   
\item If $(s,\alpha)\in \semantics{\Pre(\beta)}{\bS\otimes\bOmega}$,
then whenever $(s,\alpha;\beta)\arrowA (s',\alpha';\beta')$
and  $(s',\alpha';\beta')$ belongs to $\bS\otimes \bOmega$,
we have
$(s',\alpha'\then\beta')\in \semantics{\phi}{\bS\otimes\bOmega}$.

\item $(s,\alpha)\in\semantics{\Pre(\beta)}{\bS\otimes \bOmega}$
implies $(s,\alpha\then\beta)\in \semantics{\necc_A\phi}{\bS\otimes\bOmega}$.
\end{enumerate}
(1)$\Longleftrightarrow$(2) is by the
Action-Knowledge
axiom as strengthened in 
Lemma~\ref{lemma-aka-general}.
 (2)$\Longleftrightarrow$(3) is by the semantics of
$\rightarrow$ and $\necc_A$, and also by the definition of
$\bS\otimes \bOmega$.
 (3)$\Longleftrightarrow$(4) is by the induction hypothesis
(c)$\Longleftrightarrow$(d)  for $\phi$.
(4)$\Longleftrightarrow$(5) is again by
 the definition of
$\bS\otimes \bOmega$.
(5)$\Longleftrightarrow$(6) is by the semantics of $\necc_A$.

To complete our discussion of $\necc_A\phi$, we check
(e) for it.   We use Proposition~\ref{prop-pre-prebisim},
and the argument is straightforward.

\paragraph{$\necc^*_{\BB}\phi$}
We have  (a)$\Longleftrightarrow$(b),
 (c)$\Longleftrightarrow$(d), and (e) for $\phi$.
By what we showed above, we have these equivalences for
$\necc_A\phi$ for all $A$.    We can use this to show
by induction that 
  (a)$\Longleftrightarrow$(b),
 (c)$\Longleftrightarrow$(d), and (e)  hold
for every sentence of the form 
 $\necc_{A_1}\necc_{A_2}\cdots \necc_{A_k}\phi$,
with all $A$'s from the set $\BB$. 
Let $W$ be the set of such sentences; so $\necc^*_{\BB}\phi$ is 
semantically
equivalent to $\bigwedge W$.
This  gives    (a)$\Longleftrightarrow$(b),
 (c)$\Longleftrightarrow$(d),
and 
(e) for $\necc^*_{\BB}\phi$ (for all $\alpha$, $\bS$ and $s$).
The details in all these proofs
are similar,
  and so we'll just prove (c)$\Longleftrightarrow$(d).
Fix $\alpha$, $\bS$, $s$, and $\beta$.  Then the following are
equivalent:
\begin{enumerate}
\item  $(s,\alpha)\in \semantics{[\beta]\necc^*_{\BB}\phi}{\bS\otimes \bOmega}$.
\item For all $\psi\in W$, $(s,\alpha)\in \semantics{[\beta]\psi}{\bS\otimes \bOmega}$.
\item   $(s,\alpha)\in\semantics{\Pre(\beta)}{\bS\otimes \bOmega}$
implies that for $\psi\in W$,
 $(s,\alpha\then\beta)\in \semantics{\psi}{\bS\otimes\bOmega}$.
\item   $(s,\alpha)\in\semantics{\Pre(\beta)}{\bS\otimes \bOmega}$
implies $(s,\alpha\then\beta)\in \semantics{\necc^*_{\BB}\phi}{\bS\otimes\bOmega}$.
\end{enumerate}
 (1)$\Longleftrightarrow$(2) here is by the semantics of $[\beta]\necc^*\phi$.
The other equivalences are easy from (c)$\Longleftrightarrow$(d)
for $\psi\in W$.

\paragraph{$[\gamma]\phi$}
Assume our result for sentences with smaller value than
$[\gamma]\phi$.   (In particular, we have the result
for $\Pre(\gamma)$; this is the main point.)
 We first  check 
(a)$\Longleftrightarrow$(b) for $[\gamma]\phi$.  Fix $\alpha$, $\bS$ and $s\in S$.
Then the following are equivalent:
\begin{enumerate}
\item  If $s\ {\semanticsnm{\alpha}}_{\bS}\ t$, then
    $t\in ([\gamma]\propositionphi)_{\bS(\semanticsnm{\alpha})}$.
\item  If $s\ {\semanticsnm{\alpha}}_{\bS}\ t$ and 
 $t\ {\semanticsnm{\gamma}}_{\bS(\semanticsnm{\alpha})}\ u$, then
    $u\in \propositionphi_{\bS(\semanticsnm{\alpha})(\semanticsnm{\gamma})}$.
\item  If $s\ {(\semanticsnm{\alpha}\then\semanticsnm{\gamma})}_{\bS}\ u$, then
    $u\in \propositionphi_{\bS(\semanticsnm{\alpha}\then\semanticsnm{\gamma})}$.
\item  If $s\in \dom\ {(\semanticsnm{\alpha\then\gamma})}_{\bS}$, then 
$(s,\alpha\then\gamma)\in \propositionphi_{\bS\otimes\bOmega}$.
\item 
If $s\in \dom\ {\semanticsnm{\alpha}}_{\bS}$
and $(s,\alpha)\in \semantics{\Pre(\gamma)}{\bS\otimes\bOmega}$, then
 $(s,\alpha\then\gamma)\in
\propositionphi_{\bS\otimes\bOmega}$.
\item 
If $s\in \dom\ {\semanticsnm{\alpha}}_{\bS}$, then $(s,\alpha)\in
([\gamma]\propositionphi)_{\bS\otimes\bOmega}$.
\end{enumerate}
The equivalence
 (3)$\Longleftrightarrow$(4) uses the  equivalence
of  (a)$\Longleftrightarrow$(b)  for $\phi$,
and the fact that 
 $\semanticsnm{\alpha}\then\semanticsnm{\gamma} = 
\semanticsnm{\alpha \then\gamma}$.
Equivalence (4)$\Longleftrightarrow$(5)
 uses (a)$\Longleftrightarrow$(b)
for $\PRE(\gamma)$ which is allowed by our inductive assumption.
The equivalence 
 (5)$\Longleftrightarrow$(6) uses the  equivalence
 (c)$\Longleftrightarrow$(d) for $\phi$.
Finally, we consider 
(c)$\Longleftrightarrow$(d) for $[\gamma]\phi$.   Fix $\beta$, $\bS$, and $s$.
The following are then equivalent:
\begin{enumerate}
\item $(s,\alpha)\in \semantics{[\beta][\gamma]\phi}{\bS\otimes \bOmega}$.
\item   $(s,\alpha)\in \semantics{[\beta\then\gamma]\phi}{\bS\otimes \bOmega}$.
\item   $(s,\alpha)\in\semantics{\Pre(\beta\then\gamma)}{\bS\otimes \bOmega}$
implies $(s,\alpha\then (\beta\then\gamma))\in \semantics{\phi}{\bS\otimes\bOmega}$.
\item   $(s,\alpha)\in\semantics{\Pre(\beta\then\gamma)}{\bS\otimes \bOmega}$
implies $(s,(\alpha\then  \beta)\then\gamma)\in \semantics{\phi}{\bS\otimes\bOmega}$.

\item   $(s,\alpha)\in\semantics{\pair{\beta}\Pre(\gamma)}{\bS\otimes \bOmega}$
implies $(s,(\alpha\then  \beta)\then\gamma)\in \semantics{\phi}{\bS\otimes\bOmega}$.

\item   $(s,\alpha)\in\semantics{\Pre(\beta)}{\bS\otimes \bOmega}$
and $(s,\alpha\then\beta)\in \semantics{\Pre(\gamma)}{\bS\otimes \bOmega}$
imply $(s,(\alpha\then\beta)\then\gamma )\in \semantics{\phi}{\bS\otimes\bOmega}$.

\item $(s,\alpha)\in\semantics{\Pre(\beta)}{\bS\otimes \bOmega}$
implies $(s,\alpha\then\beta)\in \semantics{[\gamma]\phi}{\bS\otimes\bOmega}$.
\end{enumerate}
We have used the Composition Axiom
in (1)$\Longleftrightarrow$(2). 
(2)$\Longleftrightarrow$(3) and 
(6)$\Longleftrightarrow$(7) are by the induction hypothesis
(c)$\Longleftrightarrow$(d).
(3)$\Longleftrightarrow$(4)  is by (e) for $\phi$.  (Indeed, this very step
is the main place that (e) is used.)
(4)$\Longleftrightarrow$(5) is by the definition of $\Pre$.
As for (5)$\Longleftrightarrow$(6), it holds by the induction hypothesis
(a)$\Longleftrightarrow$(b).
That is, 
$\sharp{(\Pre(\gamma))}< \sharp{[\gamma]\phi}$.   So we may use
the equivalence of (c) and (d) for $\Pre(\gamma)$.  We also used some
propositional reasoning at this point.   This concludes the verification
of (c)$\Longleftrightarrow$(d) for $[\gamma]\phi$.

Finally, (e)  for $[\gamma]\phi$ follows from 
(c)$\Longleftrightarrow$(d) for $\phi$.
As in showing (e) for  $\phi$ atomic, we use 
Lemma~\ref{lemma-towards-composition-admissibility}.
\end{proof}

\rem{
\begin{proof} Write $\actionalpha$ for  $\semanticsnm{\alpha}$
and $\propositionphi$ for $\semanticsnm{\phi}$.
We need to show that the following conditions are equivalent:
\begin{enumerate}
\item  If $s\ {\actionalpha}_{\bS}\ t$, then
    $t\in\propositionphi_{\bS(\actionalpha)}$.
\item   If $s\in \dom\ {\actionalpha}_{\bS}$, then 
$(s,\alpha)\in \propositionphi_{\bS\otimes\bOmega}$.
\end{enumerate}
We argue
by induction on $\phi$.

First, we check the equivalence when $\phi$ is an atomic sentence
$p$.
In this case, the following are equivalent:
\begin{enumerate}
\item If $s\ {\actionalpha}_{\bS}\ t$, then
    $t\in\propositionp_{\bS(\actionalpha)}$.
\item If $s\in \dom\ {\actionalpha}_{\bS}$, then
    $s\in\propositionp_{\bS}$.
\item   If $s\in \dom\ {\actionalpha}_{\bS}$, then 
$(s,\alpha)\in \propositionp_{\bS\otimes\bOmega}$.
\end{enumerate}
The equivalence (1)$\Longleftrightarrow$(2) follows from 
Proposition~\ref{prop-actions-preserve-atomics}.
Next, assume the result for $\phi$, and consider $\nott\phi$.
Then the following are equivalent:
\begin{enumerate}
\item If $s\ {\actionalpha}_{\bS}\ t$, then
    $t\in (\nott\propositionphi)_{\bS(\actionalpha)}$.
\item If $s\in \dom\ {\actionalpha}_{\bS}$,
then it is false that $s\ {\actionalpha}_{\bS}\ t$
implies 
$t\in\propositionphi_{\bS(\actionalpha)}$.
\item If  $s\in \dom\ {\actionalpha}_{\bS}$, 
then $(s,\alpha)\notin \propositionphi_{\bS\otimes\bOmega}$.
\item If  $s\in \dom\ {\actionalpha}_{\bS}$, 
then $(s,\alpha)\in (\nott\propositionphi)_{\bS\otimes\bOmega}$.
\end{enumerate}
The induction step for $\andd$ is easy, and as before the ideas
extend to infinitary conjunction.

Next, assume the equivalence for $\phi$; we check it for $\necc_A\phi$.
\begin{enumerate}
\item If $s\ {\actionalpha}_{\bS}\ t$, then
    $t\in (\necc_A\propositionphi)_{\bS(\actionalpha)}$.
\item If $s\ {\actionalpha}_{\bS}\ t$
and $t \arrowA t'$, then $t'\in\propositionphi_{\bS(\actionalpha)}$.
\label{uii}
\item If $s\in \dom\ {\actionalpha}_{\bS}$,
$\alpha\arrowA\alpha'$, $s\arrowA s'$, and
$s'\ {\actionalpha}_{\bS}\ t'$, then
    $t'\in  \propositionphi_{\bS(\actionalpha)}$.
\label{uuiii}
 \item If $s\in \dom\ {\actionalpha}_{\bS}$,
$\alpha\arrowA\alpha'$ and $s\arrowA s'$, then:
  if $s'\ {\actionalpha}_{\bS}\ t'$, then
    $t'\in  \propositionphi_{\bS(\actionalpha)}$.
\label{iuiuu}

\item If $s\in \dom\ {\actionalpha}_{\bS}$,
$\alpha\arrowA\alpha'$ and $s\arrowA s'$, then:
if $s'\in  \dom\ {\actionalpha'}_{\bS}$, then
$(s',\alpha')\in\propositionphi_{\bS\otimes\bOmega}$.
\label{iuiuuu}

\item  If $s\in \dom\ {\actionalpha}_{\bS}$,
 $s'\in \dom\ {\actionalpha'}_{\bS}$,
 and
$(s,\alpha)\arrowA (s',\alpha')$, then
$(s',\alpha')\in\propositionphi_{\bS\otimes\bOmega}$.

\item  If $s\in \dom\ {\actionalpha}_{\bS}$, then 
$(s,\alpha)\in (\necc_A\propositionphi)_{\bS\otimes\bOmega}$.
\end{enumerate}
The equivalence (\ref{uii})$\Longleftrightarrow$(\ref{uuiii})
comes from Proposition~\ref{proposition-square-biggersystem}.
(\ref{uuiii})$\Longleftrightarrow$(\ref{iuiuuu}) is by the induction
hypothesis.

Finally, assume the equivalence for $\phi$; we check it for $[\beta]\phi$.
Write $\actionbeta$ for $\semanticsnm{\beta}$.
\begin{enumerate}
\item  If $s\ {\actionalpha}_{\bS}\ t$, then
    $t\in ([\beta]\propositionphi)_{\bS(\actionalpha)}$.
\item  If $s\ {\actionalpha}_{\bS}\ t$ and 
 $t\ {\actionbeta}_{\bS(\actionalpha)}\ u$, then
    $u\in \propositionphi_{\bS(\actionalpha)(\actionbeta)}$.
\item  If $s\ {(\actionalpha\then\actionbeta)}_{\bS}\ u$, then
    $u\in \propositionphi_{\bS(\actionalpha\then\actionbeta)}$.
\item  If $s\in \dom\ {(\actionalpha\then\actionbeta)}_{\bS}$, then 
$(s,\alpha\then\beta)\in \propositionphi_{\bS\otimes\bOmega}$.
\item 
If $s\in \dom\ {\actionalpha}_{\bS}$
and $(s,\alpha)\in \semantics{\Pre(\beta)}{\bS\otimes\bOmega}$, then
 $(s,\alpha\then\beta)\in
\propositionphi_{\bS\otimes\bOmega}$.
\item 
If $s\in \dom\ {\actionalpha}_{\bS}$, then $(s,\alpha)\in
([\beta]\propositionphi)_{\bS\otimes\bOmega}$.
\end{enumerate}
We used Proposition~\ref{proposition-domains} at several points.
\end{proof}
}

\begin{theorem}  For all simple  actions $\alpha$
and  {\em  all\/} sentences $\phi$ of $\lang_1(\bSigma)$, 
 $\semanticsnm{[\alpha]\phi}
= [\hat{\alpha}]\semanticsnm{\phi}.$ 
That is, equation (\ref{eq-sam}) 
in Theorem~\ref{theorem-syntactic-action-model-preliminary}
holds even when 
the program union operation $\union$ occurs in $\phi$.
\label{theorem-syntactic-action-model}
\end{theorem}

\begin{proof}
By induction on the number of occurrences of $\union$ in $\phi$.
The base case is when this number is $0$, and this is
Theorem~\ref{theorem-syntactic-action-model-preliminary}.
The induction step is then again by cases.
The most interesting step is for sentences of the form
$[\alpha\union \beta]\phi$, and for this we use the
soundness of the  Choice Axiom.
\end{proof}

} 

\label{section-equivalence-of-actions}

\rem{
\paragraph{Syntactic bisimulation}
There are two natural notions of equivalence for action models:
bisimilarity, and
{\em  update equivalence}. 
Recall that action models are Kripke frames with an extra
function $\Presemantic$.  
So 
 bisimilarity here is just the
standard definition.
Concretely,   we have the following definition:
}

\begin{lemma}
Concerning the syntactic equivalence $\equiv$ on $\Omega$:  
\begin{enumerate}
\item  The relation $\equiv$  is an
equivalence relation.
\label{part-equivreln}
\item  If $ \phi_1\equiv \psi_1$, $\ldots$, 
$\phi_n\equiv \psi_n$, then 
$\sigma_i \vec{\phi} \equiv \sigma_i \vec{\psi}$.
\label{lemma-equivalence-actions-part-sigma}
\item $\alpha \then  skip  \equiv   \alpha\equiv skip\then\alpha$.
\label{used-lemma-towards-composition-admissibility-1}
\item  $\alpha\then (\beta\then \gamma) \equiv
 (\alpha\then  \beta)\then \gamma$.
\label{used-lemma-towards-composition-admissibility-2}
\item  If $\alpha \equiv \alpha'$ and $\beta\equiv\beta'$,
then   $\alpha \then \beta\equiv \alpha'\then \beta'$.
\label{part-equivreln3}
\end{enumerate}
\label{lemma-equivalence-actions}
\end{lemma}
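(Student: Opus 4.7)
The plan is to prove parts (1)–(4) by exhibiting specific syntactic bisimulations, and then to treat (5) as the substantive case. For (1), the identity relation on $\Omega$ is a syntactic bisimulation (via $\proves \Pre(\alpha) \iiff \Pre(\alpha)$), giving reflexivity; the inverse of any syntactic bisimulation is one because the three defining clauses are symmetric in the two arguments; and the relational composition of two syntactic bisimulations is one, chaining $\iiff$ on preconditions. For (2), I would take $R = \set{(\sigma_j\vec{\phi},\, \sigma_j\vec{\psi}) : 1\leq j \leq n}$; the precondition clause uses the hypothesis $\proves \phi_j \iiff \psi_j$ (since $\Pre(\sigma_j\vec{\phi}) = \phi_j$), and the transitions on both sides are determined only by the $\bSigma$-arrows, so they coincide. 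For (3), the relations $\set{(\gamma \then \skipp,\, \gamma) : \gamma\in\Omega}$ and $\set{(\skipp \then \gamma,\, \gamma) : \gamma\in\Omega}$ work; the precondition equivalences come respectively from Lemma~\ref{lemma-alpha-true} ($\proves \pair{\gamma}\true \iiff \Pre(\gamma)$) and from the Skip Axiom, while the transitions match because $\skipp \arrowA \skipp$ is the only $\skipp$-arrow. For (4), take $\set{(\alpha_0 \then(\beta_0\then\gamma_0),\, (\alpha_0\then\beta_0)\then\gamma_0)}$; the precondition equivalence is Lemma~\ref{lemma-towards-composition-admissibility}(2), and transitions align because in $\bOmega$ the composition-arrows are the product of the component arrows.

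For (5), given $R_1$ witnessing $\alpha \equiv \alpha'$ and $R_2$ witnessing $\beta \equiv \beta'$, the natural candidate is
$$R \quadeq \set{(\gamma \then \delta,\ \gamma' \then \delta') : \gamma\ R_1\ \gamma' \mbox{ and } \delta\ R_2\ \delta'}.$$
The transition conditions follow immediately from the definition of $\arrowA$ on compositions in $\bOmega$ together with the bisim conditions on $R_1$ and $R_2$. The substantive obligation is the precondition clause: for each pair in $R$ we must show $\proves \pair{\gamma}\Pre(\delta) \iiff \pair{\gamma'}\Pre(\delta')$. Using $\proves \Pre(\delta) \iiff \Pre(\delta')$ (from $\delta\ R_2\ \delta'$) together with $[\gamma]$-normality and necessitation (applied to the contrapositive) gives $\proves \pair{\gamma}\Pre(\delta) \iiff \pair{\gamma}\Pre(\delta')$, so it suffices to prove the auxiliary claim that $\gamma \equiv \gamma'$ implies $\proves \pair{\gamma}\chi \iiff \pair{\gamma'}\chi$ for every sentence $\chi$.

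This auxiliary claim is the main obstacle. My plan is to first establish generalized versions of the Atomic Permanence and Partial Functionality axioms for arbitrary simple actions (proved analogously to Lemma~\ref{lemma-aka-general}), and then to show $\proves [\gamma]\chi \iiff [\gamma']\chi$ by induction on $\chi$. The base $\chi = p$ combines generalized Atomic Permanence with $\proves \Pre(\gamma) \iiff \Pre(\gamma')$; the cases $\chi = \nott\psi$, $\chi = \psi_1 \andd \psi_2$ use generalized Partial Functionality and $[\gamma]$-normality with the induction hypothesis; the case $\chi = \necc_A\psi$ invokes Lemma~\ref{lemma-aka-general}, noting that the successor sets $\set{\gamma_1 : \gamma \arrowA \gamma_1}$ and $\set{\gamma'_1 : \gamma' \arrowA \gamma'_1}$ are matched pairwise by the bisimulation and are finite by Proposition~\ref{proposition-locallyfinite}; and the case $\chi = \necc^*_{\CC}\psi$ reduces to iterated $\necc_A$ chains.

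The truly delicate case is $\chi = [\rho]\psi$: the Composition Axiom rewrites $[\gamma][\rho]\psi$ as $[\gamma\then\rho]\psi$, and one then needs $\gamma\then\rho \equiv \gamma'\then\rho$, which is itself an instance of part (5). To break the circularity, I would interleave the two proofs via a simultaneous induction using a well-founded measure on pairs (action, sentence) that strictly decreases when one moves $\rho$ from the sentence into the action. An alternative cleaner route is to normalize $\chi$ first — using Composition and the $\necc_A$--$[\cdot]$ reduction implicit in Lemma~\ref{lemma-aka-general} — so that dynamic modalities occur only around atomic sentences, reducing the $[\rho]\psi$ case to the already-handled atomic and Boolean cases. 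The interplay between the syntactic bisimulation on $\bOmega$ and the recursively defined $\Pre$ operator is what makes this step technically demanding.
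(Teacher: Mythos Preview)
Your treatment of parts (1)--(4) coincides with the paper's: the same explicit bisimulations, the same appeals to Lemma~\ref{lemma-towards-composition-admissibility} for the precondition clauses in (3) and (4). The paper's own proof is terser but identical in content.

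For part (5) you go well beyond the paper, which simply says ``Part~(5) is similar'' and gives no argument for the precondition clause. You have correctly identified that this clause---$\proves \pair{\gamma}\Pre(\delta) \iiff \pair{\gamma'}\Pre(\delta')$---reduces to the auxiliary claim that $\gamma\equiv\gamma'$ implies $\proves [\gamma]\chi\iiff[\gamma']\chi$, and that this is exactly Lemma~\ref{lemma-equivalence-actions-two}, whose proof in turn invokes part~(5). So there is a genuine mutual dependence that the paper does not untangle explicitly. Your proposed resolution---proving the auxiliary claim by induction on $\chi$ in a well-founded order under which both $\psi < [\rho]\psi$ and $\Pre(\rho_0) < [\rho]\psi$ for all $\rho_0$ with $\rho\arrowAgentsstar\rho_0$---is sound; the paper constructs exactly such an order later (Section~\ref{section-proofs}, where $\interpretation{\Pre(\rho_0)}=\interpretation{\rho_0}=\interpretation{\rho}<\interpretation{[\rho]\psi}$ by Proposition~\ref{proposition-same-function} and the interpretation of $\app$). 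In effect you are spelling out what the paper leaves implicit.

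One point needs correction: your handling of $\chi=\necc^*_{\CC}\psi$ as ``reducing to iterated $\necc_A$ chains'' is not a valid move in the proof system, since that reduction is infinitary. The paper's proof of Lemma~\ref{lemma-equivalence-actions-two} handles this case via the Action Rule (using Lemma~\ref{lemma-converse-rule} and the induction hypothesis on $\psi$), and your argument must do the same.
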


\begin{proof}
The first part is routine.  
Part (\ref{lemma-equivalence-actions-part-sigma}) uses the bisimulation
consisting of all pairs $(\sigma_j\vec{\phi},\sigma_j\vec{\psi})$
such that $1\leq j\leq n$ and for all $i$, $\phi_i\equiv\psi_i$.
Lemma~\ref{lemma-towards-composition-admissibility} is used in 
parts~(\ref{used-lemma-towards-composition-admissibility-1})
and~(\ref{used-lemma-towards-composition-admissibility-2}).
The rest of  argument for  
part~(\ref{used-lemma-towards-composition-admissibility-1}) is easy and we omit it.
For part~(\ref{used-lemma-towards-composition-admissibility-2}), we
take $R$ to be the set of pairs
$$(\alpha'\then(\beta'\then \gamma'), (\alpha'\then \beta')\then \gamma')
$$
such that for some
sequence $w$ of
$\arrow_A$ with $A\in \Agents^*$, $\alpha'$ is reachable from
$\alpha$ via $w$, and similarly for $\beta'$ and $\gamma'$ (via the same $w$).
Part~(\ref{part-equivreln3}) is similar. 
\end{proof}

\rem{
\begin{lemma} If $\alpha\equiv\beta$, then $\alpha\sim\beta$.
\label{lemma-syntactic-semantic}
\end{lemma}

\begin{proof}
{\bf  I don't have this one yet.  We don't need this result,
but it would be nice to have.  My hunch is that every bisimulation
on $\bOmega$ extends to a total bisimulation, and  this would
imply this result.}

  Fix a syntactic bisimulation
 $R_0$ relating $\alpha$ and $\beta$.
We check that $\hat{\alpha}\sim\hat{\beta}$ as updates.
Fix $\bS$,  $\bT$, and a total bisimulation $Q$ between them.

Consider the following relation $R$ between
$\bS(\hat{\alpha})$ and $\bT(\hat{\beta})$: 
\begin{equation}
\mbox{
$(t,\gamma)\ R \ (u,\delta)$ iff $t \ Q\ u$ and $\gamma\ R_0\ \delta$
}
\label{eq-notyet}
\end{equation}
Then
\end{proof}
}

\begin{lemma}
For all $A\in \CC$ and all
$\beta$
such that
 $\alpha\rightarrow_{A}  \beta$,
\begin{enumerate}
\item $\proves [\alpha]\necc_{\CC}^*\psi
\iif [\alpha]\psi$.
\label{part-ffff}
\item $\proves [\alpha]\necc_{\CC}^*\psi\andd\Pre(\alpha)
\iif \necc_A
[\beta]\necc_{\CC}^*\psi$.
\label{part-hhhh}
\end{enumerate}
\label{lemma-converse-rule}
\end{lemma}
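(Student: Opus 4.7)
The plan is to derive both parts from the Epistemic Mix Axiom combined with modal reasoning under $[\alpha]$, invoking the strengthened Action-Knowledge Axiom (Lemma~\ref{lemma-aka-general}) for part~(\ref{part-hhhh}).

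For part~(\ref{part-ffff}), I would start from one direction of the Epistemic Mix Axiom, namely $\proves \necc^*_{\CC}\psi \iif \psi$. Applying $[\alpha]$-necessitation and $[\alpha]$-normality (both available in the system of Figure~\ref{figure-logical-system}), I get $\proves [\alpha]\necc^*_{\CC}\psi \iif [\alpha]\psi$, which is exactly the claim.

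For part~(\ref{part-hhhh}), I would again begin with the Epistemic Mix Axiom, this time in the form $\proves \necc^*_{\CC}\psi \iif \necc_A\necc^*_{\CC}\psi$ (valid since $A \in \CC$). Applying $[\alpha]$-necessitation and normality gives $\proves [\alpha]\necc^*_{\CC}\psi \iif [\alpha]\necc_A\necc^*_{\CC}\psi$. Next, I would instantiate the general Action-Knowledge Axiom (Lemma~\ref{lemma-aka-general}) with $\phi \quadeq \necc^*_{\CC}\psi$ to obtain
$$\proves [\alpha]\necc_A\necc^*_{\CC}\psi \iiff \Bigl(\Pre(\alpha) \iif \bigwedge\bigl\{\necc_A[\gamma]\necc^*_{\CC}\psi : \alpha\arrowA\gamma \text{ in } \bOmega\bigr\}\Bigr).$$
Since $\beta$ is among the $\gamma$ with $\alpha\arrowA\gamma$, the right-hand side (together with $\Pre(\alpha)$) implies $\necc_A[\beta]\necc^*_{\CC}\psi$. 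Chaining this with the previous implication and doing a small amount of propositional reasoning to move $\Pre(\alpha)$ across yields $\proves [\alpha]\necc^*_{\CC}\psi \andd \Pre(\alpha) \iif \necc_A[\beta]\necc^*_{\CC}\psi$, as required.

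There is no real obstacle here: both parts are short syntactic manipulations once the strengthened Action-Knowledge Axiom (Lemma~\ref{lemma-aka-general}) is available. The one subtlety worth flagging is that Lemma~\ref{lemma-aka-general} applies to arbitrary simple actions $\alpha$, not just basic ones of the form $\sigma_i\vec{\psi}$, which is exactly what part~(\ref{part-hhhh}) needs since $\alpha$ in the statement is any simple action. All other steps reduce to propositional reasoning, $[\alpha]$-normality, and $[\alpha]$-necessitation.
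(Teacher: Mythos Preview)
Your proposal is correct and follows essentially the same approach as the paper: both parts are derived from the Epistemic Mix Axiom via $[\alpha]$-necessitation and normality, with part~(\ref{part-hhhh}) then invoking the generalized Action-Knowledge Axiom (Lemma~\ref{lemma-aka-general}) to extract the conjunct $\necc_A[\beta]\necc^*_{\CC}\psi$. Your write-up is slightly more explicit than the paper's (which compresses the last step into a single sentence), but the argument is identical.
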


\begin{proof}
 Part (\ref{part-ffff})
follows  from the Epistemic Mix Axiom and
modal reasoning.
For part (\ref{part-hhhh}), we start with a consequence of
  the Epistemic Mix Axiom:
$\proves  \necc_{\CC}^*\psi\iif
 \necc_A\necc_{\CC}^*\psi$.
Then by modal reasoning,
$\proves  [\alpha] \necc_{\CC}^*\psi\iif
 [\alpha] \necc_A\necc_{\CC}^*\psi$.
By the Action-Knowledge Axiom in the generalized form of 
Lemma~\ref{lemma-aka-general},  we have
$\proves [\alpha]\necc_{\CC}^*\psi\andd\Pre(\alpha)\iif
  \necc_A
[\beta]\necc_{\CC}^*\psi$.
\end{proof}

\rem{
\begin{proposition}   Let $\tau$ be the action from
Example~\ref{example-tau}.  Then for  all $\phi$,
$\proves[\tau]\phi\iiff \phi$.
\label{proposition-tau}
\end{proposition}

\begin{proof} By induction on $\phi$.
\end{proof}
}

We present next our main result  on   syntactic 
 equivalence of  actions.  It is a syntactic version
of Proposition~\ref{proposition-preserve-sim}.

\begin{lemma}   Let $\alpha$ and $\beta$ be simple actions.
If $\alpha\equiv\beta$, then 
for all $\phi$, $\proves [\alpha]\phi\iiff [\beta]\phi$.
\label{lemma-equivalence-actions-two}
\end{lemma}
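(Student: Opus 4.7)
The plan is induction on the sentence $\phi$, parametrized uniformly over all pairs $(\alpha',\beta')$ lying in a fixed syntactic bisimulation $R$ with $\alpha \mathrel{R} \beta$. I would prove the slightly stronger claim: for every $\phi$ and every $(\alpha',\beta')\in R$, $\proves [\alpha']\phi \iiff [\beta']\phi$. The clause $\proves \Pre(\alpha') \iiff \Pre(\beta')$ from the bisimulation is what drives each base case.

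The routine cases would run as follows. For $\phi = \true$, apply $[\alpha']$-necessitation on both sides. For $\phi = p$ atomic, first derive by induction on $\alpha'$ (mimicking the proof of Lemma~\ref{lemma-aka-general}) that $\proves [\alpha']p \iiff (\Pre(\alpha') \iif p)$; the conclusion is then immediate. For $\phi = \nott\psi$, rewrite $[\alpha']\nott\psi$ using the consequence of Lemma~\ref{lemma-pfa-general} that $[\alpha']\nott\psi \iiff (\Pre(\alpha') \iif \nott [\alpha']\psi)$, and apply the induction hypothesis to $\psi$. The case of $\andd$ reduces via $[\alpha']$-normality and necessitation; the case of $[\gamma]\psi$ uses the Composition Axiom to pass to $[\alpha'\then\gamma]\psi$, combined with Lemma~\ref{lemma-equivalence-actions}(\ref{part-equivreln3}) which supplies $\alpha'\then\gamma \equiv \beta'\then\gamma$, followed by the induction hypothesis on $\psi$. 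Programs containing $\union$ are handled via the Choice Axiom and normality. For $\necc_A\psi$, the strengthened Action-Knowledge axiom (Lemma~\ref{lemma-aka-general}) reduces $[\alpha']\necc_A\psi$ and $[\beta']\necc_A\psi$ to finite conjunctions indexed by the $\arrowA$-successors of $\alpha'$ and $\beta'$; the forth and back conditions on $R$ pair these successors into $R$-related pairs, to which the induction hypothesis on $\psi$ applies.

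The main obstacle is the case $\phi = \necc^*_\CC\psi$. The natural idea of choosing, for each $\gamma$ with $\alpha' \rightarrow_\CC^* \gamma$, a single $R$-partner $\delta(\gamma)$ with $\beta' \rightarrow_\CC^* \delta(\gamma)$ and setting $\chi_\gamma = [\delta(\gamma)]\necc^*_\CC\psi$ runs into apparent circularity: premise~2 of the Action Rule at a successor $\gamma'$ would force me to identify $[\delta'']\necc^*_\CC\psi$ (where $\delta''$ is the particular successor handed back by the forth condition) with $[\delta(\gamma')]\necc^*_\CC\psi$, i.e., to invoke the very lemma being proved. I would sidestep this by taking a disjunction over all partners,
$$\chi_\gamma \quadeq \bigvee\set{[\delta]\necc^*_\CC\psi : \gamma \mathrel{R} \delta \mbox{ and } \beta' \rightarrow_\CC^* \delta},$$
a finite sentence by local finiteness of $\bOmega$ (Proposition~\ref{proposition-locallyfinite}). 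Premise~1 of the Action Rule follows because each disjunct $[\delta]\necc^*_\CC\psi$ implies $[\delta]\psi$ by Lemma~\ref{lemma-converse-rule}(\ref{part-ffff}), which is provably equivalent to $[\gamma]\psi$ by the induction hypothesis applied to the smaller formula $\psi$ and the pair $\gamma \mathrel{R} \delta$. For premise~2, given $\gamma \arrowA \gamma'$ with $A \in \CC$ and a disjunct $\delta$ of $\chi_\gamma$, the forth condition gives $\delta \arrowA \delta'$ with $\gamma' \mathrel{R} \delta'$; Lemma~\ref{lemma-converse-rule}(\ref{part-hhhh}) supplies $\proves [\delta]\necc^*_\CC\psi \andd \Pre(\delta) \iif \necc_A [\delta']\necc^*_\CC\psi$, the precondition $\Pre(\delta)$ is replaced by $\Pre(\gamma)$ via the bisimulation, and $[\delta']\necc^*_\CC\psi$ appears automatically as a disjunct of $\chi_{\gamma'}$, so no further identification is required. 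The Action Rule then delivers $\proves \chi_{\alpha'} \iif [\alpha']\necc^*_\CC\psi$, and since $[\beta']\necc^*_\CC\psi$ is itself a disjunct of $\chi_{\alpha'}$ (take $\delta = \beta'$ using reflexivity of $\rightarrow_\CC^*$), one direction of the desired equivalence is obtained. The reverse direction is symmetric, using that $R^{-1}$ is also a syntactic bisimulation.
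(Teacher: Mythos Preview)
Your proposal is correct and follows the same inductive scheme as the paper. One small quibble: the induction hypothesis should quantify over \emph{all} pairs $(\alpha',\beta')$ with $\alpha' \equiv \beta'$, not just those lying in the single fixed bisimulation $R$; otherwise the $[\gamma]\psi$ case breaks, since $\alpha'\then\gamma \equiv \beta'\then\gamma$ is witnessed by a (possibly) different bisimulation than $R$. You clearly intend the stronger hypothesis anyway, since you invoke Lemma~\ref{lemma-equivalence-actions}(\ref{part-equivreln3}) at that step, and in the $\necc^*_\CC$ step you remain free to re-fix a specific $R$ for the pair at hand. The paper's proof makes the same silent move.

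The substantive point of difference is the $\necc^*_\CC$ case. The paper selects a single functional witness $\beta' \mapsto \alpha'$ with $\alpha' \mathrel{R} \beta'$ and sets $\chi_{\beta'} = [\alpha']\necc^*_\CC\phi$. Your disjunction over all $R$-partners is not merely a stylistic variation: the paper's verification of premise~2 of the Action Rule invokes Lemma~\ref{lemma-converse-rule}(\ref{part-hhhh}) as if $f(\beta') \arrowA f(\beta'')$ held whenever $\beta' \arrowA \beta''$, but a generic choice of witness need not satisfy this---the back condition of $R$ produces \emph{some} $A$-successor of $f(\beta')$ related to $\beta''$, not necessarily the pre-chosen $f(\beta'')$. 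Your disjunction sidesteps the issue precisely because the successor $\delta'$ handed back by the forth condition automatically appears as a disjunct of $\chi_{\gamma'}$, with no need to identify it with anything chosen in advance. So your argument is in fact the more careful of the two.
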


\begin{proof}
By induction on $\phi$.    Fix a syntactic bisimulation
 $R$ relating $\alpha$ and $\beta$.
For $\phi=\true$ or an atomic sentence $p_i$, our result is easy.
The induction steps for $\nott$ and $\andd$ are trivial.
The step for $\necc_A$ is not hard, and so we omit it.

We next check the result for sentences $[\gamma] \phi$.
We need to see that
$$ \proves [\alpha][\gamma]\phi \iiff  [\alpha'][\gamma]\phi.
$$
For this, it is sufficient by the Composition Axiom to show that 
$ \proves [\alpha\then\gamma]\phi \iiff  [\alpha'\then\gamma]\phi$.
By Lemma~\ref{lemma-equivalence-actions}, parts (\ref{part-equivreln})
and~(\ref{part-equivreln3}),
$\alpha\then\gamma \equiv \alpha'\then\gamma$.  So we are done
by the induction hypothesis.

This leaves the step for
sentences  of the form
$\necc_{\BB}^*\phi$, assuming the result for $\phi$.
We  use the Action Rule to show that
$\proves[\alpha]\necc_{\CC}^*\phi\iif
[\beta]\necc_{\CC}^*\phi$.
We need a functional witness to $R$; that is, a
map $\beta'\mapsto \alpha'$ for all  $\beta'$ such that $\beta\arrowAgentsstar\beta'$, 
such that $\alpha' R \beta'$ for all these actions.
Further,
let $\chi_{\beta'}$ be
 $[\alpha']\necc_{\CC}^*\phi$.
We need to show that
for all $A\in C$ and all 
  $\beta'\arrowA\beta''$,
\begin{description}
\item[{\rm a.}]$\proves [\alpha']\necc_{\CC}^*\phi
\iif [\beta']\phi$; and
\item[{\rm b.}]
If $\beta'\arrowA\beta''$, then
 $\proves [\alpha']\necc_{\CC}^*\phi\andd\Pre(\beta')
\iif \necc_A
[\alpha'']\necc_{\CC}^*\phi$.
\end{description}
For (a), we know from
Lemma~\ref{lemma-converse-rule}, 
that
$\proves [\alpha']\necc_{\CC}^*\phi\iif [\alpha']\phi$.
By induction hypothesis on $\phi$,
$\proves  [\alpha']\phi\iiff [\beta']\phi$.
And this implies (a).
For (b), Lemma~\ref{lemma-converse-rule}
tells us that under the assumptions,
$$\proves [\alpha']\necc_{\CC}^*\phi\andd\Pre(\alpha')
\iif \necc_A
[\alpha'']\necc_{\CC}^*\phi.$$
The fact that $R$ is a syntactic bisimulation tells us that
$\proves \Pre(\alpha')\iiff \Pre(\beta')$.
This implies (b).

This completes the
induction on $\phi$.
\end{proof}

\rem{
\begin{conjecture}
For basic actions $\alpha$ and $\beta$, the following are equivalent:
\begin{enumerate}
\item  $\alpha$ and $\beta$ are equivalent actions.
\item  $\semanticsnm{\alpha}$ and $\semanticsnm{\beta}$ are equivalent
updates.
\item For all $\phi$ in $\lang_1(\bSigma)$,
 $\proves [\alpha]\phi\iiff \beta[\phi]$.
\end{enumerate}
\end{conjecture}
}

Lemma~\ref{lemma-equivalence-actions-two} 
will be used in several places to come.

\rem{
In order to formulate substitution, it is convenient to  
formulate an extension of  $\lang_1(\bSigma)$ which 
has additional {\em sentence variables\/} and
{\em action variables}.    We also need the notion of
substitution, of course.  
}

\rem{
\begin{lemma} [Substitution Lemma for $\lang_1(\bSigma)$]
Provable equivalence is preserved under substitutions: 
Let $v$ and $w$ be maps from sentence variables to sentences (without
variables)
and action variables to actions
(without variables).  Assume that $v(x)\equiv w(x)$
 for all $x$.  Let $v^*$ and $w^*$ be the extensions
of $v$ and $w$ to  $\lang_1(\bSigma)$.  Then
$v^*(t)\equiv w^*(t)$ for all sentences or actions $t$
of $\lang_1(\bSigma)$.
\label{lemma-substitution-biggersystem}
\end{lemma}
}

\rem{
\begin{lemma} [A Substitution Lemma for $\lang_1(\bSigma)$]
Provable equivalence is preserved under substitutions: 
Let $v$ and $w$ be maps from sentence variables to sentences (without
variables)
and action variables to simple actions of $\lang_1(\bSigma)$.
(without variables)  Assume that $v(x)\equiv w(x)$
 for all $x$.  Let $v^*$ and $w^*$ be the evident homomorphic
 extensions
of $v$ and $w$ to  $\lang_1(\bSigma)$.  Then
\begin{enumerate}
\item  For all sentences $\phi$, $v^* \phi \equiv w^* \phi$;
i.e.,   $\proves v^* \phi \iiff w^* \phi$.
\item For all programs $\pi$ (simple or not) and all sentences $\phi$,
$\proves [v^*\pi]\phi \iiff [w^*\pi]\phi$.
\end{enumerate}
\label{lemma-substitution-biggersystem}
\end{lemma}

\begin{proof}
By induction  on  $\lang_1(\bSigma)$.
The cases of atomic sentences and actions
are trivial, and the cases for  the 
sentence variables and action variables are
immediate.   The induction steps
for the boolean connectives
are easy calculations of propositional
equivalence.  The induction steps for sentences
$\necc_A\phi$ and $\necc^*_{\BB}$ are also easy using the
necessitation rules of $\lang_1(\bSigma)$.

The key step is for sentences $[\pi]\phi$.
By induction hypothesis, $v^*\pi \equiv w^*(\alpha)$
and also $v^* \phi \equiv w^*(\phi)$.
So 
$$\begin{array}{lcll}
v^*([\pi]\phi) & \quadeq & [v^* \pi)]v^* \phi \\
& \quadequiv &  [v^* \pi)]v^* \phi 
 & \mbox{by Lemma~\ref{lemma-equivalence-actions-two}}\\
& \quadeq & [w^*(\alpha)]w^*(\phi)  \\
& \quadeq & w^*([\alpha]\phi) 
\end{array}
$$

We also must consider actions.
For actions $\sigma\vec{\psi}$, see Lemma~\ref{lemma-equivalence-actions},
part~(\ref{lemma-equivalence-actions-part-sigma}).
The induction step for actions $\alpha\then \beta$ follows
from part~(\ref{part-equivreln3}) of  Lemma~\ref{lemma-equivalence-actions}.
Here is the  step for $\pi \union \rho$.   Fix $\phi$.
The following are equivalent:
\begin{enumerate}
\item  $\proves  v^*([\pi\union\rho]\phi)$
\item $\proves  [v^* \pi\unionv^* \rho) ]v^* \phi$
\item $\proves  [v^* \pi]v^* \phi \andd  [v^*\rho] v^*\phi$
\item
\end{enumerate}
The first equivalence is the way that $v^*$ works, and the second is the Choice Axiom.
By the induction hypothesis,
Point (3) is equivalent to  $\proves  [w^* \pi]w^* (\phi) \andd  [w^*\rho] w^*(\phi)$.
And then the same equivalences as above but with $w$ instead of $v$
show the desired equivalence.
\end{proof}
}

\rem{
\subsection{Trash below?}

\footnote{We probably should drop all of this.}

\paragraph{Equivalence of updates} \label{section-equivalence-actions}

We also study 
a more semantic equivalence.
We remind the reader that
bisimulation preserving updates were defined in
 Section~\ref{section-bisim-preserve-updates}.
Also, we sometimes use the word ``state'' to refer to the entities in a state
model, and sometimes we also use it to refer to pairs of the form $(\bS,s)$.

A bisimulation preserving update may be regarded as a relation on the quotient
of states (in the second sense)  by the relation $\equiv$ of bisimulation.    
It makes sense to say that bisimulation preserving $\actionalpha$
and $\actionalpha'$ are {\em equivalent\/} if they induce the same relation on
the quotient.  More concretely, for all states $(\bS, s)$,
\begin{enumerate}
\item If $x\in \bS(\actionalpha)$ and
 $s\ \actionalpha_{\bS}\ x$, then there is some $y\in \bS(\actionalpha')$
such that  $s\ \actionalpha'_{\bS}\ y$ and
 $(\bS(\actionalpha),x)\equiv (\bS(\actionalpha'),y)$.
\item If $y\in \bS(\actionalpha')$ and $s\ \actionalpha'_{\bS}\ y$,
 then there is some $x\in\bS(\actionalpha)$  such that $s\ \actionalpha_{\bS}\ x$ and
 $(\bS(\actionalpha),x)\equiv (\bS(\actionalpha'),y)$.
\end{enumerate}

\begin{lemma}  Let $\alpha$ and $\alpha'$ be simple actions.
If $\alpha\equiv\alpha'$, then   $\alpha\approx_u\alpha'$.
\label{lemma-questionable}
\end{lemma}

\begin{proof}   Fix a syntactic bisimulation
 $R_0$ relating $\alpha$ and $\alpha'$.

Write $\actionalpha$ for $\semanticsnm{\alpha}$ and 
$\actionalpha'$ for $\semanticsnm{\alpha'}$.
Fix $\bS$. 
Consider the following relation $R$ between
$\bS(\actionalpha)$ and $\bS(\actionalpha')$: 
\begin{equation}
\mbox{
$x\ R \ y$ iff there is some $t\in S$ and some 
 $\beta \ R_0 \ \beta'$ such that $t\ \semanticsnm{\beta}_{\bS} \ x$
and $t\ \semanticsnm{\beta'}_{\bS} \ y$.
}
\label{eq-tocheck}
\end{equation}
 We check that $R$ is a bisimulation.
Note first that since we are dealing with simple actions,
the relations $\semanticsnm{\beta}$ are all partial functions.
And since $R_0$ is syntactic, if 
 $\beta \ R_0 \ \beta'$, then $\semanticsnm{\beta}$ and
 $\semanticsnm{\beta'}$ have the same domain.
Getting back to $R$, the ``atomic harmony'' condition
comes from the fact that the functions $\semanticsnm{\beta}$ never
change atomic sentences.  The ``zig-zag'' conditions
come from  Proposition~\ref{proposition-square-biggersystem}.
For example, suppose that $x \ R \ y$ via $t$ as in 
equation (\ref{eq-tocheck}), 
and also
 $x \arrowA x'$.  Then by the Proposition,
there are some $t'$ and $\gamma$ 
such that $t\arrowA t'$, $\beta\arrowA \gamma$ and 
$t'\ \semanticsnm{\gamma}_{\bS} \ x'$.
Let $\gamma'$ be such that $\gamma \ R_0\ \gamma'$ and $\beta'\arrowA \gamma'$.
Since $\semanticsnm{\gamma}$ and $\semanticsnm{\gamma'}$ have the same
domain, let $y'$ be such that 
$t'\ \semanticsnm{\gamma'}_{\bS} \ y'$.
Then $x' \ R\ y'$.  By the direction (2)$\Longleftrightarrow$(1) of  
  Proposition~\ref{proposition-square-biggersystem}, $y \arrowA y'$.
\end{proof}

\rem{
 if  $(\bS(\actionalpha),s') \equiv (\bT(\actionbeta),t')$,
then the following hold:
\begin{enumerate}
\item  If    $s\ \actionalpha_{\bS}\ s'$,
then there is some $t'$ such that
 $t\ r^{\actionbeta}_{\bT}\ t'$ and $(\bS(\actionalpha),s') \equiv (\bT(\actionbeta),t')$.
\item  If   $t\ \actionalpha_{\bT}\ t'$,
then there is some $s'$
such that  $s\ \actionalpha_{\bS}\ s'$
 and $(\bS(\actionalpha),s') \equiv (\bT(\actionbeta),t')$.
\end{enumerate}
}
So it makes sense to say that two programs $\pi$ and $\rho$
 are equivalent if their semantics 
 $\semanticsnm{\pi}$ and   $\semanticsnm{\rho}$ are equivalent
updates.  
We write $\pi\approx_u \rho$ in this case.
We shall study this notion primarily when 
$\pi$ and $\rho$ are simple actions.

\paragraph{Digression: equivalence of simple actions in specific logics}
We  noted a few general facts
in  Lemma~\ref{lemma-equivalence-actions}
 concerning the  equivalence of simple actions.
It is natural to ask whether in special cases the notion
of equivalence is axiomatizable by a finite system.
For example, in the logic of public announcements, we have the iterated
announcement law
\begin{equation} \Pub\ \phi \then \Pub\ \psi \quad\approx_u\quad
 \Pub\ (\pair{\Pub\ \phi}
\psi)
\label{pan}
\end{equation}
This leads to consequences such as the
following: for atomic $p$ and $q$,
$$\Pub\ p \then \Pub\ q \quad 
\approx_u \quad \Pub\ q\then\Pub\ p \quad \approx_u\quad \Pub 
(p\andd q),$$
and also that, e.g., $\Pub\ \necc_A p \then \Pub\ \necc_A p
\approx_u\Pub\ \necc_A p$.
But we do not know if we obtain a complete proof system for $\approx_u$ 
by taking
our logical system for sentential validities and adding the laws of
 Lemma~\ref{lemma-equivalence-actions} and  also (\ref{pan}).

\paragraph{Equivalence of programs}
We can also study equivalence of actions which are not simple actions.
We get facts such as the following:
 $\alpha \union \crash \approx_u \alpha$;
  $\alpha \union \beta \approx_u \beta \union \alpha$;
  $\alpha \union (\beta + \gamma) \approx_u  (\alpha \union  \beta) \union \gamma$;
  $\alpha \union \alpha \approx_u  \alpha$;
and if $\pi\approx_u \rho$, then $\pi^*\approx_u \rho^*$.
 }


\Section{Completeness theorems}
\label{section-completeness-theorems}
\label{section-completeness}

In this section, we prove the completeness of our logical systems 
for $\lang_0(\bSigma)$ and $\lang_1(\bSigma)$. (See Figure~\ref{fig-lang} for these.)
  Recall that
the difference between the two languages is that the second
has the common-knowledge propositional operators $\necc^*_{\BB}$
while the first does not.  This difference makes the bigger
system much more expressive, and more difficult to study.
As it happens, the extension of 
$\lang_1(\bSigma)$ to the full logic $\lang(\bSigma)$,
the extension via the  action iteration operation $\pi^*$, 
leads to a logical language whose validity problem is 
complete $\Pi^1_1$.  So there cannot be a recursively
axiomatized logical system for the validities of  $\lang(\bSigma)$.
Returning to the smaller  $\lang_1(\bSigma)$, even here the  
   common knowledge operators $\necc^*_{\BB}$
give a logical system which is not compact.  So we cannot have a  
{\em strongly} complete logic for it; that is, 
we cannot axiomatize the notion 
of validity under hypotheses $T\models \phi$. The best one can hope for is 
weak completeness: $\proves\phi$ if and only if $\models \phi$.
We prove this in Theorem~\ref{theorem-completeness-Kstar}.
As a result of some preliminary results aimed toward that
result, we establish the 
strong completeness of our axiomatization
of the weaker logic $\lang_0(\bSigma)$.   The work there is
easier because it relies on a translation into modal logic.

In a later section,
we show that in contrast to our translation results
for $\langaction$, the larger  language
 $\langactionstar$ cannot
be translated into $\lang$ or even to $\langstar$
(modal  logic with extra modalities $\necc_{\BB}^*$).
So  completeness results for $\langactionstar$
cannot simply be based on translation.

\paragraph{The ideas}  Our completeness proofs are somewhat involved,
and it might help the reader to have  a preview of some of the
ideas before we get started. 
 For, $\lang_0(\bSigma)$, the leading idea is that we can
translate the logic back to ordinary modal logic.  Then we 
get completeness by taking any logical system for modal
logic and adding whatever principles are needed in order
to make the translation.  This idea is simple enough, and
after looking at a few examples one can see how
the  translation of $\lang_0(\bSigma)$ to modal logic should
go.  But the formal definition of the translation is
complicated.  Our definition goes via a {\em term rewriting
system\/} related to the axioms of the logic, and in particular
we'll need to do prove the {\em termination\/} of our system.
It would  have been nice to use some off-the-shelf results
of term rewriting theory to get the termination of our system,
but this does not seem to be possible.  In any case, we prove
termination by establishing a {\em decreasing interpretation\/}
of
the system; that is, rewriting a sentence 
leads to a decrease in an order $<$ that we study at length.
 Our interpretation is  exponential rather than
polynomial, and it was found by hand.   

When we turn to $\lang_1(\bSigma)$, our model is the filtration
proof of the completeness of PDL due to Kozen and Parikh~\cite{KP}.
We need to use the Action Rule rather than an Induction Rule,
but modulo this difference, the work is similar.  We also need
to use some of the details on the ordering $<$ mentioned above. 
    That is, even if
one were interested in the completeness theorem of 
 $\lang_0(\bSigma)$
alone, our proof would still involve the general
rewriting apparatus.

\begin{definition}
Let $\NF$ be the smallest set of ground sentences 
 and actions in $\lang_1(\bSigma)$ 
with the
following properties:
\begin{enumerate}
\item Each atomic $p$ belongs to $\NF$.
\item If $\phi, \psi\in \NF$, then also 
$\nott\phi$, $\phi\andd\psi$, $\necc_A\phi$,
and $\necc_{\CC}^*\phi$
belong to $\NF$.
\item If $\alpha\in \NF$, $\phi\in \NF$, and
 $\CC\subseteq \Agents$,
then $[\alpha]\necc^*_{\CC}\phi$ belongs to $\NF$.
\item If $k\geq 0$, if $\vec{\psi^1}$, $\ldots$,  $\vec{\psi^k}$
is a sequence of sequences of length $n(\bSigma)$ of elements of $\NF$,
and  if $\sigma_1,\ldots, \sigma_k\in \Sigma$,
\begin{equation}
(\cdots 
((\sigma_1\vec{\psi^1} \then \sigma_2\vec{\psi^2})\cdots  \sigma_k\vec{\psi^k})
\label{eq-nf}
\end{equation}
is an action term in $\NF$.
\end{enumerate}
\end{definition}

\begin{lemma}
There is a function $\map{\nf}{\lang_1(\bSigma)}{\NF}$ such that
for all $\phi$,  $\proves\phi\iiff \nf(\phi)$.  Moreover,
if $\phi\in\lang_0(\bSigma)$,  then $\nf(\phi)$ is a purely
modal sentence (it contains no actions).
\label{lemma-nfs-statement}
\end{lemma}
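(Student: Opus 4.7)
The plan is to construct $\nf$ via a term rewriting system on $\lang_1(\bSigma)$ whose rules are the axioms of Figure~\ref{figure-logical-system} oriented left-to-right. The main rules push an action modality $[\alpha]$ inward across sentential connectives and eliminate the ``non-basic'' action constructors: $[\skipp]\phi \leadsto \phi$ and $[\crash]\phi \leadsto \true$; $[\pi \union \rho]\phi \leadsto [\pi]\phi \andd [\rho]\phi$ from the Choice Axiom; $[\pi \then \rho]\phi \leadsto [\pi][\rho]\phi$ from the Composition Axiom; a distribution rule $[\alpha](\phi\andd\psi)\leadsto [\alpha]\phi\andd [\alpha]\psi$; and, for basic $\sigma_i\vec{\psi}$, the Atomic Permanence, Partial Functionality, and Action-Knowledge rules (in the strengthened forms of Section~\ref{section-stronger-forms-axioms}) used to push $[\sigma_i\vec{\psi}]$ across $p$, $\nott$, and $\necc_A$ respectively. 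Compositions are reassociated, via the equivalence $\alpha\then(\beta\then\gamma)\equiv(\alpha\then\beta)\then\gamma$ from Lemma~\ref{lemma-equivalence-actions}, so that compositions end up in the left-associated shape required by (\ref{eq-nf}). Crucially, no rule with $[\alpha]\necc^*_{\CC}\phi$ on the left is included, so these subterms persist as ``leaves'' of the rewriting---which is exactly what the definition of $\NF$ allows.

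Given the rewriting system, I would define $\nf(\phi)$ to be the result of rewriting $\phi$ to normal form under any systematic strategy (say innermost-leftmost). Soundness, $\proves \phi \iiff \nf(\phi)$, follows by an easy induction on the length of a rewrite sequence: each individual step is a provable equivalence of the system, and provable equivalence is preserved under substitution into sentence contexts immediately and into action contexts $[\cdot]\psi$ by Lemma~\ref{lemma-equivalence-actions-two}. A straightforward syntactic analysis shows that the set of irreducible terms coincides with $\NF$: an irreducible action contains no $\skipp$, $\crash$, or $\union$ and its $\then$-structure is exactly the left-associated shape in (\ref{eq-nf}); an irreducible sentence admits an action modality only immediately around an $\necc^*_{\CC}$ subformula. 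The claim about $\lang_0(\bSigma)$ is then immediate: if $\phi$ has no occurrence of $\necc^*_{\CC}$ then no $[\alpha]\necc^*_{\CC}\psi$ redex can arise during any rewrite, hence \emph{every} action modality is eventually eliminated and $\nf(\phi)$ is purely modal.

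The main obstacle is proving that the rewriting system terminates. Subformula size, action depth, and similar ``obvious'' complexity measures fail, because for instance the Action-Knowledge rule replaces $[\sigma_i\vec{\psi}]\necc_A\phi$ by a conjunction over the $\bSigma$-successors of $\sigma_i$ whose conjuncts $\necc_A[\sigma_j\vec{\psi}]\phi$ are textually larger and still carry an action modality, so that further Action-Knowledge rewrites take place inside them. I expect to need a handcrafted interpretation $\map{\mu}{\lang_1(\bSigma)}{\omega}$ that assigns exponentially growing weight to nested action modalities: roughly, $\mu([\alpha]\phi)$ should grow multiplicatively in $\mu(\alpha)$ and $\mu(\phi)$, $\mu(\alpha\then\beta)$ should dominate $\mu(\alpha)\cdot\mu(\beta)$, and $\mu(\sigma_i\vec{\psi})$ should incorporate enough slack to absorb the $\necc_A$-duplication permitted by the finite $\bSigma$. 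The constants will have to be tuned rule-by-rule so that every single rewrite strictly decreases $\mu$. Once such a $\mu$ is in hand, termination follows from well-foundedness of $\omega$, and together with the preceding paragraph the lemma is complete.
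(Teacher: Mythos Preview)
Your overall strategy---term rewriting plus a handcrafted termination interpretation---matches the paper's, but you have oriented the Composition Axiom in the wrong direction. With the rule $[\pi\then\rho]\phi \leadsto [\pi][\rho]\phi$, a term such as $[\sigma_1\vec{\psi^1}\then\sigma_2\vec{\psi^2}]\necc^*_{\CC}\phi$ rewrites to $[\sigma_1\vec{\psi^1}][\sigma_2\vec{\psi^2}]\necc^*_{\CC}\phi$, and this is then irreducible under your system (no rule touches $[\alpha]\necc^*_{\CC}\psi$, and nothing collapses $[\alpha][\beta]\psi$). But clause~3 of the definition of $\NF$ admits only a \emph{single} action modality immediately around $\necc^*_{\CC}$; the form $[\alpha][\beta]\necc^*_{\CC}\phi$ is not in $\NF$. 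Your remark about ``reassociating compositions into the left-associated shape required by~(\ref{eq-nf})'' is therefore inconsistent with your own Composition rule: once you have split $[\pi\then\rho]$ into $[\pi][\rho]$ there are no top-level compositions left to reassociate, and you cannot recover the shape~(\ref{eq-nf}) without a rule going the other way (which would destroy termination).

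The paper discusses exactly this pitfall (see the Remark following rule~(r7$_\alpha$)) and resolves it by orienting Composition the \emph{other} way, $[x][y]z \leadsto [x\then y]z$, together with an associativity rule $x\then(y\then z)\leadsto (x\then y)\then z$. The price is that the Action-Knowledge step must now fire on an arbitrary simple action $\alpha$, not just on a basic $\sigma_i\vec{\psi}$; hence (r7$_\alpha$) is an infinite scheme, with Lemma~\ref{lemma-aka-general} supplying the provable equivalence. The paper also strips $\union$, $\skipp$, $\crash$ in a separate preliminary pass (Lemma~\ref{lemma-regard}) and then works in an auxiliary language $\lang_1^+(\bSigma)$ with $\Pre$ as a first-class symbol, since the right-hand sides of the rewrite rules introduce $\Pre$-terms that must themselves be reduced. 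Your termination idea (an exponential interpretation) is correct in spirit; the paper's concrete choice is $\interpretation{[\alpha]\phi} = \interpretation{\alpha}^{\interpretation{\phi}}$ and $\interpretation{\alpha\then\beta} = \interpretation{\alpha}^{\interpretation{\beta}+1}$ over $N_{\geq 3}$.
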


\begin{lemma}  There is a well-order $<$
on   $\lang_1(\bSigma)$ such 
that for all $\phi$ and $\alpha$:
\begin{enumerate}
\item $\nf(\phi)\leq \phi$.
\item $\Pre(\alpha) < [\alpha]\phi$.
 
\item If  $\alpha\arrowA \beta$, then $[\beta]\phi <
\nott [\beta]\phi <  [\alpha]\necc^*_{\CC}\phi$.
\end{enumerate}
\label{lemma-property-nfs-statement}
\end{lemma}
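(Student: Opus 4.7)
The plan is to exhibit $<$ as the lexicographic combination of $<$ on $\mathbb{N}$ via a measure $m\colon \lang_1(\bSigma)\to \mathbb{N}$, broken by any fixed well-order on $\lang_1(\bSigma)$ (say, a G\"odel numbering). I would define $m$ by simultaneous recursion with an auxiliary weight $w$ on simple actions, following the template
\[ m([\alpha]\phi) \;=\; w(\alpha)\cdot(m(\phi)+1), \]
together with standard structural clauses elsewhere, e.g.\ $m(\nott\phi) = m(\phi)+1$, $m(\phi\andd\psi) = m(\phi)+m(\psi)+1$, $m(\necc_A\phi) = m(\necc^*_{\CC}\phi) = m(\phi)+1$, and $m(p) = m(\true) = 1$. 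For the weights, the natural first attempt is $w(\skipp) = w(\crash) = K$, $w(\sigma_i\vec{\psi}) = K + \sum_j m(\psi_j)$, and $w(\alpha\then\beta) = w(\alpha)(w(\beta)+1) + C$, where $K$ and $C$ are positive constants to be calibrated below.

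For part~(2), I would establish the stronger statement $m(\Pre(\alpha)) < w(\alpha)$ by induction on $\alpha$. The base cases are direct from the choices of $w$. The composition step expands $\Pre(\alpha\then\beta) = \pair{\alpha}\Pre(\beta) = \nott[\alpha]\nott\Pre(\beta)$ and uses the template above to compute $m(\Pre(\alpha\then\beta)) = w(\alpha)(m(\Pre(\beta))+2) + 1$. Feeding in $m(\Pre(\beta)) < w(\beta)$ from the inductive hypothesis reduces the goal to $w(\alpha)(w(\beta)+1) + 1 < w(\alpha\then\beta)$, which pins down $C \geq 2$.

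For part~(3), I first verify that $w$ is monotone along $\bOmega$-arrows: if $\alpha\arrowA\beta$ then $w(\alpha) \geq w(\beta)$. This is a routine induction on $\alpha$ using the inductive definition of $\arrowA$ on $\bOmega$: for basic actions $\sigma_i\vec{\psi}\arrowA\sigma_j\vec{\psi}$ the weights are in fact equal since $\vec{\psi}$ is unchanged, and the composition case follows by monotonicity of the arithmetic on $w$. Then
\[ m([\alpha]\necc^*_{\CC}\phi) - m(\nott[\beta]\phi) \;=\; w(\alpha)(m(\phi)+2) - w(\beta)(m(\phi)+1) - 1 \;\geq\; w(\alpha) - 1 \;>\; 0 \]
provided $K \geq 2$, and the inner inequality $[\beta]\phi < \nott[\beta]\phi$ is immediate from $m(\nott\psi) = m(\psi)+1$.

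For part~(1), the function $\nf$ of Lemma~\ref{lemma-nfs-statement} is built in tandem with $<$ by exhaustively applying the axioms (Atomic Permanence, Partial Functionality, Action-Knowledge, Composition, Skip, Crash, Choice) as left-to-right rewrites, always picking a non-$\NF$ redex if any; then $\nf(\phi) \leq \phi$ amounts to checking that each rewrite step strictly decreases $m$. The Composition Axiom $[\alpha\then\beta]\phi \leadsto [\alpha][\beta]\phi$ is handled by the $+C$ term. The tightest of the remaining rules is Partial Functionality, $[\sigma_i\vec{\psi}]\nott\chi \leadsto \psi_i\iif \nott[\sigma_i\vec{\psi}]\chi$, which demands $w(\sigma_i\vec{\psi}) > m(\psi_i)+4$, forcing $K \geq 5$. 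The main obstacle is this calibration of $K$ and $C$: the polynomial template works for the straightforward cases but, as the paper's own remark that the interpretation is exponential and was ``found by hand'' hints, long compositions or deeply-nested preconditions will eventually outstrip any such polynomial. The fallback is to let $w(\alpha\then\beta)$ grow like $2^{w(\alpha)+w(\beta)}$ and $m([\alpha]\phi)$ grow like $w(\alpha)^{m(\phi)+1}$, producing ample headroom; the structure of the verification is unchanged but the arithmetic is more forgiving.
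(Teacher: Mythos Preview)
Your overall strategy—define a numerical measure, show each rewrite step strictly decreases it, then tie-break to a genuine well-order—is exactly the paper's. Parts~(2) and~(3) go through essentially as you describe. The gap is entirely in part~(1), and it stems from the orientation of the Composition Axiom.

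You rewrite $[\alpha\then\beta]\phi \leadsto [\alpha][\beta]\phi$. But the target set $\NF$ (fixed prior to this lemma) puts normal forms in the shape $[(\cdots(\sigma_1\vec{\psi^1}\then\sigma_2\vec{\psi^2})\cdots\sigma_k\vec{\psi^k})]\necc^*_{\CC}\phi$: a \emph{composed} action under a single bracket. Your orientation moves away from $\NF$, so your rewriting does not terminate in $\NF$ and hence does not define the map $\nf$ of Lemma~\ref{lemma-nfs-statement}. The paper orients composition the other way, as rule~(r8): $[x][y]z \leadsto [x\then y]z$. With that orientation your polynomial $m$ fails outright, since $m([x][y]z) - m([x\then y]z) = -w(x)\,m(z) - C(m(z)+1) < 0$. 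Worse, once composition is oriented correctly the Action-Knowledge rewrite must fire on arbitrary simple $\alpha$ (you can no longer break $\alpha$ apart first), and its right-hand side is a conjunction over $\{\beta:\alpha\arrowA\beta\}$ of size up to $n^{\ell(\alpha)}$. No polynomial absorbs that: the right contributes roughly $n^{\ell(\alpha)}\cdot w(\alpha)(m(\phi)+1)$ against $w(\alpha)(m(\phi)+2)$ on the left. This is precisely why the paper's interpretation is exponential, taking $\interpretation{[\alpha]\phi}=\interpretation{\alpha}^{\interpretation{\phi}}$ and $\interpretation{\alpha\then\beta}=\interpretation{\alpha}^{\interpretation{\beta}+1}$ over $N_{\geq 3}$; the key auxiliary fact is $\interpretation{\alpha} > n^{\ell(\alpha)}$, which is exactly what dominates the conjunction. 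Your last paragraph gestures at this fallback but does not carry it out, and since the entire content of the lemma lives in these inequalities, they cannot be left as a sketch.
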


\begin{lemma}
For every $\phi$ there is a finite set $f(\phi)\subseteq \NF$
such that
\begin{enumerate}
\item $f(\phi)$ is closed under subsentences.
\label{part-s}
\item
If $[\justplain{\gamma}]\necc_{\CC}^*\chi\in f(\phi)$,
$\justplain{\gamma}\rightarrow_{\CC}^{*}  \justplain{\delta}$, 
and $A\in \CC$,   then
$f(\phi)$ also contains
$\necc_A [\justplain{\delta}]\necc_{\CC}^*\chi$,
$[\justplain{\delta}]\necc_{\CC}^*\chi$,
$\nf(\Pre(\delta))$,
and $\nf([\justplain{\delta}]\chi)$.
\label{part-crucial}
\end{enumerate}
\label{lemma-FL}
\end{lemma}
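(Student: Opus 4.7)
The plan is to define $f(\phi)$ by iterating a monotone ``one-step expansion'' operator $E$ on finite subsets of $\NF$, and then to verify that the iteration terminates after finitely many steps. Concretely, for finite $F\subseteq\NF$ let $E(F)$ consist of $F$ together with (i) all $\NF$-subsentences of members of $F$ and (ii) for each $[\gamma]\necc^*_{\CC}\chi\in F$, each $\delta$ with $\gamma\rightarrow^*_{\CC}\delta$, and each $A\in\CC$, the four sentences $\necc_A[\delta]\necc^*_{\CC}\chi$, $[\delta]\necc^*_{\CC}\chi$, $\nf(\Pre(\delta))$, and $\nf([\delta]\chi)$. Set $f(\phi):=\bigcup_{n\geq 0} E^n(\set{\nf(\phi)})$. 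Properties~(1) and~(2) in the statement will be immediate from the definition of $E$, provided we show that this union is in fact finite, i.e.\ that only finitely many $n$ contribute new members.

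The main obstacle is thus proving finiteness. I would argue in two parts. First, by Proposition~\ref{proposition-locallyfinite} the set $\set{\delta : \gamma\rightarrow^*_{\CC}\delta}$ is finite, so clause~(ii) produces only finitely many sentences from each $[\gamma]\necc^*_{\CC}\chi$; moreover, iterating~(ii) on any of the newly produced $[\delta]\necc^*_{\CC}\chi$ generates no \emph{new} sentences of that same form, because by transitivity of $\rightarrow^*_{\CC}$ its $\rightarrow^*_{\CC}$-successors lie among those of $\gamma$. Second, call a sentence of the form $[\gamma']\necc^*_{\CC'}\chi'$ \emph{critical} and track how many distinct critical sentences can ever appear in $f(\phi)$. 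Every non-critical sentence added can only spawn smaller sentences (via the subsentence clause), while a new critical sentence can enter the closure only as a subsentence of one of the terms $\nf(\Pre(\delta))$ or $\nf([\delta]\chi)$ added by clause~(ii). Using items~1--3 of Lemma~\ref{lemma-property-nfs-statement}, these terms and their subsentences are strictly $<$-below the parent $[\gamma]\necc^*_{\CC}\chi$, so each critical sentence produced during the iteration is strictly $<$-below one already present. By well-foundedness of $<$, this cascade of critical sentences terminates after finitely many rounds, and combined with the previous paragraph this bounds $f(\phi)$ itself as finite.

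The step I expect to be most delicate is making precise the assertion that every critical subsentence of $\nf(\Pre(\delta))$ or $\nf([\delta]\chi)$ is strictly $<$-below $[\gamma]\necc^*_{\CC}\chi$ whenever $\gamma\rightarrow^*_{\CC}\delta$. Item~1 gives $\nf(\Pre(\delta))\leq\Pre(\delta)$ and $\nf([\delta]\chi)\leq[\delta]\chi$; item~2 gives $\Pre(\delta)<[\delta]\necc^*_{\CC}\chi$; and item~3 handles a single $\rightarrow_A$-step. Chaining item~3 along a path witnessing $\gamma\rightarrow^*_{\CC}\delta$, and then transferring the bound to arbitrary subsentences, should follow from the concrete construction of $<$ supplied in the proof of Lemma~\ref{lemma-property-nfs-statement}, which can be expected to strictly dominate the subsentence relation. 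Once this is checked, the cascade terminates, $f(\phi)$ is finite, property~(2) holds by construction at the stabilization stage, and property~(1) holds throughout.
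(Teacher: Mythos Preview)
Your closure-operator approach is correct and reaches the same conclusion, but it is organized differently from the paper. The paper defines $f$ directly by well-founded recursion on $<$: for each normal-form shape it writes down an explicit finite set, where the ``companion'' sentences $\necc_A[\beta]\necc^*_{\CC}\phi$ and $[\beta]\necc^*_{\CC}\phi$ (which are \emph{not} $<$-below the parent---indeed the $\necc_A$-versions lie strictly above it) are listed outright via $s(\cdot)$, and all other ingredients are obtained as $f$-values of strictly $<$-smaller arguments. Finiteness is then automatic from the recursion, while the closure properties~(1) and~(2) require a separate induction (the paper's Lemma~\ref{lemma-technical}). Your route inverts the workload: since $f(\phi)$ is an $E$-fixed point, (1) and~(2) are free, and finiteness becomes the substantive step. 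Your termination argument, once made precise, unfolds into essentially the same induction on $<$ that drives the paper's recursion.

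Two points in your write-up want tightening. First, your sentence ``a new critical sentence can enter the closure only as a subsentence of one of the terms $\nf(\Pre(\delta))$ or $\nf([\delta]\chi)$'' is not literally true: critical sentences can also appear as subsentences buried inside $\chi$ or inside the action term $\gamma$ itself, and the companions $[\delta]\necc^*_{\CC}\chi$ enter directly from clause~(ii). You handle the companions via transitivity, and the buried subsentences are harmless since proper subterms have strictly smaller interpretation; but the case split should be stated accurately. Second, extending item~3 of Lemma~\ref{lemma-property-nfs-statement} from a single $\arrowA$-step to a path $\gamma\rightarrow^*_{\CC}\delta$ does not work by chaining item~3; rather one uses Proposition~\ref{proposition-same-function} to get $\interpretation{\delta}=\interpretation{\gamma}$, whence $\interpretation{[\delta]\chi}=\interpretation{\gamma}^{\interpretation{\chi}}<\interpretation{\gamma}^{\interpretation{\chi}+1}=\interpretation{[\gamma]\necc^*_{\CC}\chi}$ directly.
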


The proofs of these will appear in Section~\ref{section-proofs}
just below.  The proofs are technical,
and so the reader not interested in those details might wish
to omit the next section on a first reading of this paper.
We shall use Lemmas~\ref{lemma-nfs-statement}-~\ref{lemma-FL}
in the work on completeness below, but neither
the details of the proofs nor the other results of 
Section~\ref{section-proofs} will be used in the rest of this paper.

\subsection{Proofs of the main facts on normal forms and the well-order $<$}
\label{section-proofs}

We turn to the proofs of  Lemmas~\ref{lemma-nfs-statement}-~\ref{lemma-FL}
just above.
Our proofs are complicated and circuitous, so there
might be shorter arguments.  For example, we don't know of any proofs
that avoid term rewriting theory. 
For the record, here are some of the reasons why we
feel that the study of our system is complicated:
\begin{enumerate}
\item The original statement of   axioms such as the Action-Knowledge
Axiom is in terms of actions of the form $[\sigma_i\vec{\psi}]$.    
\item  On the other hand, the Action Rule is best stated in terms
of actions which are compositions of the actions $[\sigma_i\vec{\psi}]$.    
In a term-rewriting setting, this point and the previous one
work against each other.  Our work will be to work with the 
versions of the axioms that are generalized to the case of all 
simple actions.

\item Again mentioning term rewriting, we'll need to pick an orientation
for the Composition Axiom.  This will either be
$[\alpha][\beta]\phi \leadsto [\alpha\circ\beta]\phi$, or
$[\alpha\circ\beta]\phi \leadsto [\alpha][\beta]\phi$.\footnote{Another 
possibility which we did not explore is to consider
rewriting modulo the identity 
$[\alpha\circ\beta]\phi = [\alpha][\beta]\phi$.}
Both alternatives lead to difficulties at various points.
We chose the first alternative, and for this reason, we'll
need a formulation of the Action-Knowledge Axiom as an infinite scheme.

\item  Our language has the program union operator
$\union$, but because the axioms are not in general
sound for sums, we need to reformulate things to avoid $\union$.  This 
is not difficult, but it would have been nice to avoid it.
\end{enumerate}

It is convenient to replace $\lang_1(\bSigma)$ by a slightly different
language which we call $\lang_1^+(\bSigma)$. 
This new language is shown in Figure~\ref{fig-langplus}.
For the purposes of this section, we stress the formulation
of this as an algebra for a two-sorted signature.
Starting with some fixed action signature $\bSigma$, we construct a
 two-sorted signature
$\Delta = \Delta(\bSigma)$ of terms, obtained in
the following way.
 Let $n$ be the 
number of simple actions in $\Sigma$.

\begin{enumerate}
\item $\Delta$ has two sorts: $s$ (for sentences) and $a$ (for actions).
\item  Each $p\in\AtSen$ is
  a constant symbol of sort $s$. 
\item
$\nott$, $\necc_A$, and $\necc^*_{\BB}$ are  
function  symbols of type $s\rightarrow s$.
\item $\andd$ and $\rightarrow$  are binary function symbols of
type $s\times s\rightarrow s$.
\item Each
  $\sigma\in \Sigma$ is a function symbol of sort $s^n\rightarrow a$.
\item $\then$ is  function symbols of sort $a\times a 
\rightarrow a$.
\item $app$ is a  binary symbol of type 
$a\times s \rightarrow s$.
\item $\Pre$ is a function symbol of sort $a\rightarrow s$.
\end{enumerate}

\begin{figure}[tbp]
\fbox{
\begin{minipage}{6.0in}
$$
\begin{array}{lcc|c|c|c|c|c|c|c}
\mbox{sentences $\phi$}  &  & p_i  & \neg\varphi &  
\varphi\wedge\psi & \varphi\rightarrow \psi
 &\necc_A\varphi  & \necc_{\BB}^*\varphi
 & [\alpha]\varphi  & \Pre(\alpha) \\  
\mbox{actions $\alpha$}  &      &        \sigma
\psi_1,\ldots,
\psi_{n}  &
\alpha \then \beta  &  &  &  & &  \\ \\
\end{array}
$$
\end{minipage}
}
\caption{The language $\lang_1^+(\bSigma)$\label{fig-langplus}}
\end{figure}

 The most important 
addition here is that we have $\Pre$ as a first-class part of the
syntax; previously it had been an abbreviation.    We also add
the implication symbol $\rightarrow$, but this is only for
convenience.  Obviously, $\rightarrow$ may
be dropped from the system.  On the other
hand, we dropped
$\union$ (as we mentioned,  some axioms are not sound as
equations in general if we have $\union$).  We might as 
well drop $\skipp$ as well, since it, too,
 can be translated away.  

Incidentally,
the fact that $\Pre$ is not a symbol of 
$\lang_1(\bSigma)$ makes the issue of translating between
$\lang_1(\bSigma)$ and $\lang_1^+(\bSigma)$ delicate.
The reader might wish to formulate a careful translation
in order to appreciate some of the features of our ordering
$<$ which we shall introduce in due course.

We adopt the usual notational conventions that $\andd$, $\iiff$,
and $\then$, are 
used as infix symbols.  We continue our practice of writing
$\sigma_i\vec{\psi}$ for what technically would be 
$\sigma_1(\psi_1,\ldots,\psi_n)$.
Also, we write 
 $[\alpha]\psi$ instead of 
 $app(\alpha,\psi)$.

When dealing with $\lang_1^+(\bSigma)$, we let $\alpha$, $\beta$, etc., range over
terms of sort $a$, and $\phi$, $\psi$, $\chi$, etc., range over
terms of sort $s$. 
Finally, we use letters like $t$ and $u$ for terms of either sort (so as
to shorten many of our statements).
 We also will need to adjoin new variables to
our signature in order to formulate 
the notion of {\em substitution\/} that
leads to a {\em term rewriting system}.  For this,
we let  
$X_{a}$ and $X_{s}$
be sets of new symbols.
In order to simplify our notation, we'll use letters like $x$, $y$, and $z$
to range over {\em both\/} of these sets.  That is, we will not notationally
distinguish the two sorts of variables.    The context will always make it
clear what the sort of any given variable is.

Let $\lang_1^+(\bSigma,X)$ be the terms built from our signature
which now may contain the variables from $X$.   Examples of
such terms may be found in Figure~\ref{fig-RR}.

\rem{
\begin{lemma}
Let $\phi$ be  a sentence  or action of 
$\lang_1(\bSigma)$.
Then there is some $\phi^*\in \lang_1(\bSigma)$
 which does not contain $\union$, $\crash$
 or $\skipp$ such that $\proves \phi\iiff \phi^*$.
Moreover, $\phi^*$ may
be regarded as a ground term of $\lang_1^+(\bSigma)$.
\label{lemma-regard}
\end{lemma}

\begin{proof}  We show by induction $k$ that
every sentence $\phi$ with at most $k$
$\then$ and $\union$ symbols is provably equivalent to
a sentence $\phi^*$ without $\then$, $\union$, $\crash$
 or $\skipp$.
For $k= 0$, the result is an easy induction using the 
Skip and Crash Axioms, Lemma~\ref{lemma-equivalence-actions}
part (\ref{lemma-equivalence-actions-part-sigma}) and the Substitution
Lemma~\ref{lemma-substitution-biggersystem},
and also necessitation.  
The induction step on $k$ is similar, using also the
Composition and Choice Axioms.
\end{proof}
}

\begin{lemma}
Let $\phi$ be  a sentence   of
$\lang_1(\bSigma)$.
Then there is some $\phi^\dag\in \lang_1(\bSigma)$
 which does not contain $\union$, $\crash$
 or $\skipp$ such that $\proves \phi\iiff \phi^\dag$.
 \label{lemma-regard}
\end{lemma}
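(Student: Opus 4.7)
The plan is a routine structural normalization: eliminate the three unwanted symbols via the Skip, Crash, Choice, and Composition axioms, preserving provable equivalence under boxes through $[\pi]$-necessitation and $[\pi]$-normality. I would set it up as a nested recursion: a top-level structural recursion on sentences that defines $\phi^\dag$, together with an auxiliary operation $\mathrm{Clean}(\pi,\phi)$ which, given a program $\pi$ whose sentence sub-terms are already clean and an already-clean sentence $\phi$, returns a clean sentence provably equivalent to $[\pi]\phi$.

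First I would define $\phi^\dag$ on sentences by structural recursion. Atoms pass through unchanged, and the boolean and epistemic operators simply commute with $\dag$, using that $\proves \phi \iiff \phi^\dag$ lifts through $\nott$, $\andd$, $\necc_A$, and $\necc^*_{\BB}$. The only interesting clause is $[\pi]\phi$: first recursively clean all sentence sub-terms inside $\pi$ (via mutual recursion on the program structure, applying $\dag$ at each basic action $\sigma_i\vec{\psi}$) to obtain an intermediate program $\tilde{\pi}$, then set $([\pi]\phi)^\dag := \mathrm{Clean}(\tilde{\pi},\phi^\dag)$.

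Next I would define $\mathrm{Clean}(\pi,\phi)$ by structural recursion on $\pi$. The base cases are $\mathrm{Clean}(\skipp,\phi) := \phi$, justified by the Skip Axiom; $\mathrm{Clean}(\crash,\phi) := \true$, justified because the Crash Axiom together with $[\crash]$-necessitation and $[\crash]$-normality forces $[\crash]\phi$ to be provable for every $\phi$; and $\mathrm{Clean}(\sigma_i\vec{\psi},\phi) := [\sigma_i\vec{\psi}]\phi$, which is already clean by hypothesis on $\tilde\pi$. The compound cases are $\mathrm{Clean}(\pi_1 \union \pi_2,\phi) := \mathrm{Clean}(\pi_1,\phi) \andd \mathrm{Clean}(\pi_2,\phi)$, via the Choice Axiom, and $\mathrm{Clean}(\pi_1 \then \pi_2,\phi) := \mathrm{Clean}(\pi_1,\mathrm{Clean}(\pi_2,\phi))$, via the Composition Axiom unfolding $[\pi_1 \then \pi_2]\phi \iiff [\pi_1][\pi_2]\phi$. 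Termination is immediate by induction on the size of the first argument, which strictly decreases in every recursive call.

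What remains is pure verification rather than any real obstacle. I would check by a straightforward induction on $\pi$ that $\mathrm{Clean}$'s output contains no occurrences of $\union$, $\crash$, or $\skipp$, since each clause constructs the output from $\phi$ and the already-clean sentence sub-terms of $\pi$ using only $\andd$, $\true$, and basic actions $\sigma_i\vec{\psi}$. I would also verify provable equivalence to $[\pi]\phi$ clause by clause, invoking the corresponding axiom together with the standard lemma that $\proves \chi \iiff \chi'$ implies $\proves [\pi]\chi \iiff [\pi]\chi'$ (which follows from necessitation and $[\pi]$-normality, and is needed especially for the $\then$ clause where the inner $\mathrm{Clean}(\pi_2,\phi)$ is substituted beneath $[\pi_1]$). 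The only mild subtlety is the double recursion in the composition case; otherwise the result just confirms that these four axioms are strong enough to push $\union$, $\crash$, and $\skipp$ out of any sentence.
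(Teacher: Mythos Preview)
Your approach is correct and is essentially the same as the paper's: the paper also defines $\dag$ by cases on the outermost constructor of $\pi$ in $[\pi]\phi$, with $([\pi\then\rho]\phi)^\dag = ([\pi][\rho]\phi)^\dag$ and $([\pi\union\rho]\phi)^\dag = ([\pi]\phi)^\dag\wedge([\rho]\phi)^\dag$, but it justifies termination via a lexicographic order on (number of $\then$ symbols, then total number of symbols) rather than your nested structural recursion through an auxiliary $\mathrm{Clean}$. One detail you leave implicit is the passage from $\pi$ to $\tilde{\pi}$: substituting $\psi_j^\dag$ for $\psi_j$ inside basic actions requires $\proves [\sigma_i\vec{\psi}]\chi \iiff [\sigma_i\vec{\psi^\dag}]\chi$, which is not covered by the ``standard lemma'' you cite but follows from Lemmas~\ref{lemma-equivalence-actions}(\ref{lemma-equivalence-actions-part-sigma}) and~\ref{lemma-equivalence-actions-two}.
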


\begin{proof} (J.~Sack)
We define $\phi\mapsto\phi^\dag$ as follows:
$$\begin{array}{lcl}
\true^\dag & \quadeq & \true\\
p_i^\dag &  \quadeq &  p_i \\
(\nott \phi)^\dag  & \quadeq & \nott \phi^\dag\\
(\phi\andd\psi)^\dag & \quadeq & \phi^\dag\andd\psi^\dag\\
(\necc_A \phi)^\dag& \quadeq & \necc_A\phi^\dag\\
(\necc^*_\BB \phi)^\dag& \quadeq & \necc^*_\BB\phi^\dag\\
\end{array}
\qquad 
\begin{array}{lcl}
([\skipp]\phi)^\dag  & \quadeq & \phi^\dag  \\
([\crash]\phi)^\dag  & \quadeq & \true
  \\
  ([\sigma_i \psi_1\cdots \psi_n]\phi)^\dag & \quadeq & 
  [\sigma_i \psi^\dag _1\cdots \psi^\dag _n]\phi^\dag   \\
([\pi\union\rho]\phi)^\dag & \quadeq & ([\pi]\phi)^\dag \andd ([\rho]\phi)^\dag 
\\
([\pi\then\rho]\phi)^\dag & \quadeq & ([\pi][\rho]\phi)^\dag \\ 
\\
\end{array}
$$
The definition is by recursion on the number  $k$ of 
composition 
 symbols ($\then$)  in $\phi$.
For a fixed $k$, we then use recursion on the total number of symbols.
The overall recursion allows us to define 
This allows us to define 
$([\pi\then\rho]\phi)^\dag$ to be $([\pi][\rho]\phi)^\dag $.  And 
the 'inside' recursion on the number of symbols allows us to define
$([\pi\union\rho]\phi)^\dag$ in terms of $([\pi]\phi)^\dag$ 
and $([\rho]\phi)^\dag$.

We indicate two of the verifications that our definition works.
Here are the details for the line
involving 
 $([\sigma_i \psi_1\cdots \psi_n]\phi)^\dag$ .
 Assuming the relevant induction hypotheses, we see that
 $\sigma_i \psi_1\cdots \psi_n \equiv \sigma_i \psi^\dag _1\cdots \psi^\dag _n$
 (see  Lemma~\ref{lemma-equivalence-actions}, part~\ref{lemma-equivalence-actions-part-sigma}).
 And then by Lemma~\ref{lemma-equivalence-actions-two}, 
 $$[\sigma_i \psi_1\cdots \psi_n]\phi
 \quadequiv  [\sigma_i \psi^\dag _1\cdots \psi^\dag _n]\phi .$$
 Using necessitation, we also have equivalence to
 $ [\sigma_i \psi^\dag _1\cdots \psi^\dag _n]\phi^\dag $.

Some of the other cases use necessitation in this way also.
In all cases, the verifications are similar.
\end{proof}

\begin{remark}
In the remainder of this paper, we shall assume that sentences
and actions of
$\lang_1(\bSigma)$ do not contain $\union$, $\crash$
 or $\skipp$.  
 We also regard them as \emph{ground terms} of $\lang_1^+(\bSigma)$:
 these  are terms without
variables.

Incidentally, the proof of Lemma~\ref{lemma-regard} shows that
$\then$ is also eliminable.  However, we shall {\em not\/} assume
that $\then$ is not found in our sentences.  In fact, the rewriting
system $\RR$ that we present shortly will introduce compositions.   
\end{remark}

\paragraph{The rewriting system $\RR$ and its interpretation}

We recall here the general notion of term rewriting as it applies
to  $\lang_1^+(\bSigma,X)$.
Consider a {\em rewrite rule\/} (r) of the form $l\leadsto r$, where
$l$ and $r$ are elements of  $\lang_1^+(\bSigma,X)$.
This (r) generates a relation of {\em immediate rewriting\/}
on  $\lang_1^+(\bSigma,X)$: we say that 
{\em $t_1$ rewrites to $t_2$ via (r)} if there is  a term $u$
with exactly one occurrence of a variable $x$ (of either sort)
and a substitution $\sigma$ 
such that $u(x\leftarrow l^\sigma) = t_1$ and $v(x\leftarrow r^\sigma) = t_2$.
That is, $t_1$ and $t_2$ result from $u$ by substituting 
$l$ and $r$ in for $x$; however, we need not use $l$ and $r$ literally, but
we could as well take a substitution instance of them.
We would write $t_1\arrowr t_2$ for this.

Given a set
$\RR$ of rewrite rules, 
we write $t_1\arrowRR t_2$ if for some $r\in \RR$,  $t_1\arrowr t_2$.
We  naturally consider the transitive closure $\arrowRRstar$ of $\arrowRR$.
We say that {\em $t_1$ rewrites to $t_2$ via $\RR$\/} if they stand in
this relation $\arrowRRstar$.

We are mostly interested in the rewriting of ground terms.
But we need the notion of variables to formulate this, and this
is the only reason for introducing variables.

\paragraph{The rewriting system $\RR$.}
Now that we have made this preliminary discussion,
we consider the term rewriting
system  $\RR$  shown in Figure~\ref{fig-RR}.
We have numbered the rules, and we use this numbering in the sequel.

\begin{figure}[tbp]
\fbox{
\begin{minipage}{6.0in}$$
\begin{array}{llcll}
\mbox{(r1)} &
x\iif y & \leadsto & \nott(x\andd\nott y)\\
\mbox{(r2)}  &
\Pre(\sigma_i y_1 \cdots y_n) & \leadsto & y_i\\
\mbox{(r3)}  &
\Pre(x\circ y) & \leadsto & \Pre(x)\andd [x]\Pre(y)\\
\mbox{(r4)}  &
{}[x]  p  &  \leadsto &  \Pre(x) \iif p  \\
\mbox{(r5)}  &
{}[x] \nott y   &  \leadsto  &  
 \Pre(x) \iif \nott[x]y\\
\mbox{(r6)}  &
{}[x] (y\andd z)  &  \leadsto &
 [x]  y \andd  [x]  z \\
\mbox{(r7$_{\alpha}$)}  &
 {} [\alpha]\necc_A x  &  \leadsto &
      \Pre(\alpha) \iif\bigwedge\set{
        \necc_A [\beta]x :
\alpha\arrowA \beta} \\
\mbox{(r8)} &
{} [x][y]z &  \leadsto &[x\circ y]z \\
\mbox{(r9)} & x\then(y\then z) & \leadsto & 
 (x\then y)\then z \\
\end{array}
$$
\end{minipage}
}
\caption{The rewrite system $\RR$\label{fig-RR}}
\end{figure}

\rem{
\begin{figure}[tbp]
\fbox{
\begin{minipage}{6.0in}$$
\begin{array}{llcll}
\mbox{(r1)} &
\phi\iif \psi & \leadsto & \nott(\phi\andd\nott\psi)\\
\mbox{(r2)}  &
\Pre(\sigma_i \vec{\psi}) & \leadsto & \psi_i\\
\mbox{(r3)}  &
\Pre(\alpha\circ\beta) & \leadsto & \Pre(\alpha)\andd [\alpha]\Pre(\beta)\\
\mbox{(r4)}  &
{}[\alpha]  p  &  \leadsto &  \Pre(\alpha) \iif p  \\
\mbox{(r5)}  &
{}[\alpha] \nott\phi   &  \leadsto  &  
 \Pre(\alpha) \iif \nott[\alpha] \phi\\
\mbox{(r6)}  &
{}[\alpha] (\phi\andd\chi)  &  \leadsto &
 [\alpha]  \phi  \andd  [\alpha]  \chi \\
\mbox{(r7$_{t}$)}  &
 {} [t]\necc_A\phi  &  \leadsto &
      \Pre(t) \iif\bigwedge\set{
        \necc_A [u]\phi :
t\arrowA u} \\
\mbox{(r8)} &
{} [\alpha][\beta]\phi &  \leadsto &[\alpha\circ\beta]\phi \\
\mbox{(r9)} & \alpha\then(\beta\then \gamma) & \leadsto & 
 (\alpha\then \beta)\then \gamma \\
\end{array}
$$
\end{minipage}
}
\caption{The rewrite system $\RR$.\label{fig-RR}}
\end{figure}
}

\begin{remark}
(r7$_{\alpha}$) 
is an infinite scheme, and   in it, $\alpha$ is a term of action sort.
But $\alpha$ need not be an action variable.
(Indeed, the scheme is only interesting when $\alpha$ is a composition
of actions of the form $\sigma_i\vec{\psi}$.)
The right side of (r7$_\alpha$)  depends on $\alpha$, and this is why
we use a scheme rather than a single rule.   
We need to explain what it means.  
First, the action terms of $\lang_1^+(\bSigma,X)$ carry the structure
of a frame by taking for each $A$ the smallest
relation $\arrowA$ such that the following two conditions
hold:
\begin{enumerate}
\item If $i\arrowA j$ in $\bSigma$, then
$\sigma_i\vec{\psi}\arrowA \sigma_j\vec{\psi}$.
\item if $\alpha\arrowA \alpha'$ and 
 $\beta\arrowA \beta'$, then $\alpha\then\alpha'\arrowA \beta \then\beta'$.
\end{enumerate}
Moreover,
we use the notation  
$\bigwedge\set{
        \necc_A [\beta]\phi :
\alpha\arrowA \beta}$ to stand for some fixed, but arbitrary, rendering of the
conjunction as a nested list of binary conjuncts.  The order will
not matter, but we shall need  an estimate of how many
conjuncts there are.   

We have a length function $\ell$ on
action terms: $\ell(x) = 0$ for variables $x$,
 $\ell(\sigma_i\vec{\psi}) = 1$, and $\ell(\alpha\then\beta) = 
\ell(\alpha) + \ell(\beta)$.  For each $\alpha$, the number of $\beta$ such that
$\alpha\arrowA \beta$ is at most $n^{\ell(\alpha)}$,
where $n$ is the size of $\Sigma$.
  This is easy to check by induction 
on $\ell(\alpha)$.

Also on (r7$_{\alpha}$): one is tempted to replace this infinite scheme by
the finite one
$$\begin{array}{llcl}  
\mbox{(r10)}  & [\sigma_i\vec{\psi}]\necc_A\phi  &  \leadsto &
    \psi_i \iif\bigwedge\set{
        \necc_A [\sigma_j\vec{\psi}]\phi :
i\arrowA j} \\
\end{array}
$$
If one did this, 
one would have to reverse (r8)
to read
$$
\begin{array}{llcl}
\mbox{(r11)} & [\alpha\circ\beta]\phi &  \leadsto &[\alpha][\beta]\phi
\end{array}
$$
The reason is that (r10) above only allows us to reduce
expressions $[\alpha]\necc_A\phi$ when $\alpha$ is of the form
$\sigma_i\vec{\psi}$.  The problem is that the
normal forms of  (r1)--(r6) + (r9) + (r10) + (r11) would include terms like
$$ 
[\sigma_1\vec{\psi^1}][\sigma_2\vec{\psi^2}]\cdots [\sigma_k\vec{\psi^k}]\necc^*_{\CC}\phi,
$$
and these are not suitable for use in connection with the Action Rule.
We will show about  our system (r1) -- (r9) that its normal forms includes terms like
$$ 
[(\cdots (\sigma_1\vec{\psi^1}\then \sigma_2\vec{\psi^2}) \cdots
 \sigma_k\vec{\psi^k})]\necc^*_{\CC}\phi.
$$
Again, this is mainly due to our choice in the direction of (r8). 
\end{remark}

\paragraph{Our interpretation}
An {\em interpretation} of a signature $\Delta$ is a {\em $\Delta$-algebra}.
This is a {\em carrier set\/} for the sentences, a carrier set for the 
actions, and  for each
$n$-ary function symbol $f$ of $\Delta$
a function of the appropriate sort.
   In our setting,
both carrier sets will be $N_{\geq 3}$, the set of natural numbers
which are at least 3.   We shall use $a$, $b$, etc., to range over
$N_{\geq 3}$ in this section.
Our interpretation 
shown in Figure~\ref{fig-interpretation}.  
\rem{In the figure, we have abused notation somewhat because
it matches our future use.  We should write statements such as
$$\begin{array}{lcl}
\interpretation{\nott} & \quadeq & \lambda x. x+1 \\
\interpretation{\andd} &  \quadeq & \lambda xy. x+y   \\
\interpretation{\iif} &  \quadeq & \lambda xy. x+y +3    \\
\end{array}
$$
}
By recursion on terms $t(x_1,\ldots, x_k)$, we build an 
interpretation\footnote{We use the same symbol
$\interpretation{\ }$ for our interpretation
as we used for the 
valuation in a state model  in Section~\ref{section-state-models}.
Since our use of interpretations is confined to this
section of the paper, we feel that the confusion due to
overloading the symbol $\interpretation{\ }$ 
should be minimal.}
$$\map{\interpretation{t}(a_1,\ldots, a_k)}{(N_{\geq 3})^k}{N_{\geq 3}}.$$
The interpretation of each function symbol is strictly monotone
in each argument.  
(For example, we check this for $\interpretation{\then}$.
For fixed $n$ and $m$, we consider the functions
$\lambda a. n^{a+2}$ and $\lambda a. a^{m+2}$.  
Then if $a < b$, $n^{a+2} < n^{b+2}$  and  $a^{m+2} < b^{m+2}$.)  
So by induction, each $\interpretation{t}(a_1,\ldots, a_k)$ is
strictly monotone as a function in each argument. 

\begin{figure}[tbp]
\fbox{
\begin{minipage}{6.0in}
$$\begin{array}{lcl}
\interpretation{p_i} & \quadeq & 3 \\
\interpretation{\nott}(a) & \quadeq &   a  + 1 \\
\interpretation{\andd}(a, b) &  \quadeq &  a  + b   \\
\interpretation{\iif}(a, b) &  \quadeq &  a  + b   + 3\\
\interpretation{\necc_A}(a) & \quadeq &  a  + 2 \\
\end{array}
\qquad
\begin{array}{lcl}
\interpretation{\necc_\BB^*}(a) & \quadeq &  a  + 1 \\
\interpretation{\app}(a, b) & \quadeq &   a ^b  \\
\interpretation{\Pre}(a) & \quadeq &  a  \\
\interpretation{\sigma_i}(a _1,\ldots, a _n) 
 & \quadeq &   a _1 + \cdots +  a _n +1\\
\interpretation{\then}(a, b) & \quadeq &  
 a ^{b  + 1} \\
\end{array}
$$
\end{minipage}
}
\caption{The interpretation\label{fig-interpretation}}
\end{figure}

\rem{
\begin{figure}[tbp]
\fbox{
\begin{minipage}{6.0in}
$$\begin{array}{lcl}
\interpretation{p_i} & \quadeq & 3 \\
\interpretation{\nott \phi} & \quadeq & \interpretation{\phi} + 1 \\
\interpretation{\phi\andd\psi} &  \quadeq & \interpretation{\phi}  + \interpretation{\psi} \\
\interpretation{\phi\iif\psi} &  \quadeq & \interpretation{\phi}  + \interpretation{\psi} + 3 \\
\interpretation{\necc_A\phi} & \quadeq & \interpretation{\phi} + 2 \\
\interpretation{\necc_\BB^*\phi} & \quadeq & \interpretation{\phi} + 1 \\
\interpretation{[\alpha]\phi} & \quadeq &  {\interpretation{\alpha}}^{\interpretation{\phi}} \\
\interpretation{\Pre(\alpha)} & \quadeq & \interpretation{\alpha} \\
\interpretation{\sigma_i\psi_1\cdots\psi_n} 
 & \quadeq &  \interpretation{\psi_1} + \cdots + 
\interpretation{\psi_n} +1\\
\interpretation{\alpha\then\beta} & \quadeq &  
{\interpretation{\alpha}}^{\interpretation{\beta} + 1} \\
\end{array}
$$
\end{minipage}
}
\caption{The interpretation\label{fig-interpretation}}
\end{figure}
}

\begin{lemma}   Let $n  = |\Sigma|$.
For all  action terms $\alpha$, 
and all maps $\iota$ of  variables  to $N_{\geq 3}$.
 $\interpretation{\alpha}(\iota(x_1),\ldots,\iota(x_n))   >  n^{l(\alpha)}$.
\label{lemma-first-interpretation}
\end{lemma}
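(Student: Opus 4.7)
The plan is to prove the inequality by structural induction on the action term $\alpha$, using two auxiliary facts that follow from easy subsidiary inductions on the definition of the interpretation: first, that every sentence term interprets to a value $\geq 3$ (since atomic sentences have value $3$, sentence variables take values in $N_{\geq 3}$, and each sentence-forming operation is monotone with a non-negative constant); and second, by the same kind of argument applied to actions, every action term interprets to a value $\geq 3$.

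For the base cases, if $\alpha$ is an action variable then $\ell(\alpha)=0$, so $n^{\ell(\alpha)}=1<3\leq\interpretation{\alpha}$; and if $\alpha=\sigma_i(\psi_1,\ldots,\psi_n)$ then $\ell(\alpha)=1$ and, using the auxiliary fact, $\interpretation{\alpha}=\sum_j\interpretation{\psi_j}+1\geq 3n+1>n$. The case $n\leq 1$ of the whole lemma is also immediate: $n^{\ell(\alpha)}\leq 1<3\leq\interpretation{\alpha}$. So for the inductive step I may assume $n\geq 2$, which also gives the trivial side-fact $n^{\ell(\gamma)}\geq\ell(\gamma)$ for any $\gamma$.

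The inductive step is $\alpha=\beta\then\gamma$, where $\interpretation{\alpha}=\interpretation{\beta}^{\interpretation{\gamma}+1}$ and $\ell(\alpha)=\ell(\beta)+\ell(\gamma)$. By the induction hypothesis $\interpretation{\beta}\geq n^{\ell(\beta)}+1$ and $\interpretation{\gamma}\geq n^{\ell(\gamma)}+1\geq\ell(\gamma)+1$. I split on whether $\ell(\beta)\geq 1$. If $\ell(\beta)\geq 1$, then
$$\interpretation{\beta}^{\interpretation{\gamma}+1}\geq(n^{\ell(\beta)}+1)^{\interpretation{\gamma}+1}>n^{\ell(\beta)(\interpretation{\gamma}+1)},$$
and the exponent satisfies $\ell(\beta)(\interpretation{\gamma}+1)=\ell(\beta)\interpretation{\gamma}+\ell(\beta)\geq\interpretation{\gamma}+\ell(\beta)\geq\ell(\gamma)+1+\ell(\beta)>\ell(\alpha)$, so $\interpretation{\alpha}>n^{\ell(\alpha)}$ as required.

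The one place where some care is needed is the remaining subcase $\ell(\beta)=0$: here $\beta$ need not be a variable (it can be any composition of variables), and the only information I really have about $\interpretation{\beta}$ is the bound $\interpretation{\beta}\geq 3$. Using this, $\interpretation{\alpha}\geq 3^{\interpretation{\gamma}+1}\geq 3^{n^{\ell(\gamma)}+1}>3^{n^{\ell(\gamma)}}$, and I then invoke the elementary inequality $3^k>k$ for every $k\geq 0$ (with $k=n^{\ell(\gamma)}$) to conclude $\interpretation{\alpha}>n^{\ell(\gamma)}=n^{\ell(\alpha)}$. This is the main (minor) obstacle in the proof: the base of the exponentiation can be as small as $3$, so one cannot simply raise the inductive bound on $\interpretation{\beta}$ to the power $\interpretation{\gamma}+1$; instead one uses that the exponent is itself already of super-polynomial size in $n^{\ell(\gamma)}$, so a crude constant base suffices.
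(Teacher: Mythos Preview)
Your proof is correct and follows the same structural induction as the paper's. In fact your case split on whether $\ell(\beta)\geq 1$ in the composition step is more careful than the paper's version: the paper's chain tacitly assumes the first factor has positive length (its bound degenerates when $\ell(\alpha)=0$), whereas your use of $\interpretation{\beta}\geq 3$ together with $3^k>k$ cleanly handles that edge case.
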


\begin{proof}
By induction on $\alpha$. 
If $\alpha$ is a variable $x$, then  $\ell(\alpha) =0$.  
So
 $\interpretation{\alpha}(\iota(x)) = \iota(x) \geq 3 > 1 =  
  n^{\ell(\alpha)}$.
If $\alpha$ is a term of the form $\sigma_i\vec{\psi}$, then
$\ell(\alpha) = 1$, and 
$$\begin{array}{lcl}
\interpretation{\sigma_i\psi_1\cdots\psi_n}(\iota(x_1),\ldots,\iota(x_n)) 
& \quadeq &
\interpretation{\psi_1}(\iota(x_1),\ldots,\iota(x_n))  + \cdots + 
\interpretation{\psi_n}(\iota(x_1),\ldots,\iota(x_n))  
\\ & \quad > \quad & 3n \\
&  \quad > \quad  & n^{\ell(\alpha)} 
\end{array}
$$
Now assume our lemma for $\alpha$ and $\beta$.
$$\begin{array}{lcll}
\interpretation{\alpha\then\beta}(\iota(x_1),\ldots,\iota(x_n))
& \quadeq & (\interpretation{\alpha}(\iota(x_1),\ldots,\iota(x_n)))^{
 \interpretation{\beta}(\iota(x_1),\ldots,\iota(x_n)) +1}\\
& \quad > \quad  & n^{\ell(\alpha)\then (n^{\ell(\beta) +1})}\\
& \quad > \quad  & n^{\ell(\alpha) + \ell(\beta)}   \qquad \mbox{see below}  \\
& \quadeq  & n^{\ell(\alpha\then\beta)}  \\
\end{array}
$$
In asserting that  $\ell(\alpha)\then n^{\ell(\beta)} > \ell(\alpha) + \ell(\beta)$,
we assume that $n > 1$.  If $n = 1$, then our lemma is trivial.
\end{proof}

\begin{proposition}
If $\alpha$ is an action term of $\lang_1^+(\bSigma)$ and $\alpha\arrowA\beta$,
then $\interpretation{\alpha}$ and $\interpretation{\beta}$ are the same function.
In particular, if $\alpha$ and $\beta$ are ground action terms, then
 $\interpretation{\alpha} = \interpretation{\beta}$.
\label{proposition-same-function}
\end{proposition}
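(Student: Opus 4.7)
The plan is to prove this by structural induction on the derivation of $\alpha\arrowA\beta$, using the two-clause inductive definition of $\arrowA$ on action terms stated in the Remark after Figure~\ref{fig-RR}. Since $\arrowA$ is defined as the smallest relation closed under the two generating clauses (the base clause from the action signature $\bSigma$, and the congruence clause for $\then$), it suffices to verify the claim for each clause.

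For the base case, suppose $\alpha = \sigma_i\vec{\psi}$ and $\beta = \sigma_j\vec{\psi}$, where $i\arrowA j$ in $\bSigma$. Reading off the interpretation from Figure~\ref{fig-interpretation}, we have
\[
\interpretation{\sigma_i}(a_1,\ldots,a_n) \quadeq a_1 + \cdots + a_n + 1 \quadeq \interpretation{\sigma_j}(a_1,\ldots,a_n),
\]
so the interpretations depend only on the arguments $\vec{\psi}$, not on the choice of simple action from $\Sigma$; hence $\interpretation{\alpha}$ and $\interpretation{\beta}$ are literally identical as functions of the free variables. The induction step handles $\alpha = \alpha_1\then\alpha_2$ and $\beta = \beta_1\then\beta_2$ with $\alpha_1\arrowA\beta_1$ and $\alpha_2\arrowA\beta_2$. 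By the inductive hypothesis, $\interpretation{\alpha_1} = \interpretation{\beta_1}$ and $\interpretation{\alpha_2} = \interpretation{\beta_2}$ as functions of the shared free variables, so
\[
\interpretation{\alpha_1\then\alpha_2} \quadeq \interpretation{\alpha_1}^{\,\interpretation{\alpha_2}+1} \quadeq \interpretation{\beta_1}^{\,\interpretation{\beta_2}+1} \quadeq \interpretation{\beta_1\then\beta_2},
\]
which is what we need. The ``in particular'' clause for ground terms is immediate, since then both sides are constants in $N_{\geq 3}$.

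There is no real obstacle here: the key design choice was already made in Figure~\ref{fig-interpretation}, namely that $\interpretation{\sigma_i}$ is symmetric in its arguments and does not depend on $i$. That is precisely what makes this proposition work, and it is what lets us later use the same value $\interpretation{\alpha}$ as an upper bound for every conjunct appearing in the right-hand side of the infinite scheme (r7$_\alpha$) — which will be needed when we check that each rewrite rule strictly decreases the interpretation. So although the present proposition is a very short induction, it is the lemma that licenses treating the finite conjunction in (r7$_\alpha$) uniformly across all $\beta$ with $\alpha\arrowA\beta$, in combination with Lemma~\ref{lemma-first-interpretation}'s count of $n^{\ell(\alpha)}$ successors.
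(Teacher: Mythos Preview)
Your proof is correct and takes essentially the same approach as the paper: an easy induction whose only substantive observation is that $\interpretation{\sigma_i}$ does not depend on $i$. The paper phrases it as ``induction on $\alpha$'' rather than on the derivation of $\alpha\arrowA\beta$, but since the relation is defined by structural recursion on action terms these amount to the same thing; your added remarks about the role of this fact in handling (r7$_\alpha$) are accurate.
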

 
\begin{proof}
By induction on $\alpha$.  The point is that the interpretation of
the actions $\sigma_i\vec{\psi}$ does not use the number $i$ in any way.
The rest follows by an easy induction.
\end{proof}

The following proposition is a technical  but elementary
 result on exponentiation.
 It will be used in Lemma~\ref{lemma-adequate-interpretation}
below.  One might notice  that if we replace $3$ by $2$ in 
the statement, then
some of the parts are no longer true.  This is the reason why
our overall interpretation uses $N_{\geq 3}$ rather than
$N$ or $N_{\geq 2}$.

\begin{proposition}
Let $a, b,c\geq 3$.
\begin{enumerate}
\item $ 2 a^{ b }
> a + 4$.
\item $a^{b+c} > a^b + a^c$.
\item ${a}^{2}
 > 
a + 5$.
\item $a^{ b +1} > 3a + 3$.
\item $a^b > (a+1)(b+1)$.
\end{enumerate}
\label{prop-arithmetic}
\end{proposition}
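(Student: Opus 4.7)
The plan is to verify each of the five inequalities directly, exploiting the uniform lower bound $a,b,c \geq 3$ together with the rapid growth of exponentiation. None of these requires more than elementary manipulation, but part (2) is the only one that uses the product structure of $a^{b+c}$ in an essential way, and it is the one to watch.

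For parts (1), (3), (4) and (5), the plan is to bound each left-hand side below using nothing fancier than $a^k \geq 3 \cdot a^{k-1}$ (valid since $a \geq 3$) and then compare with the linear right-hand side. Concretely: for (3), write $a^2 - a - 5 = a(a-1) - 5 \geq 3 \cdot 2 - 5 = 1$. For (1), since $b \geq 3$, $2a^b \geq 2a^3 = 2a \cdot a^2 \geq 18 a$, and $18a > a + 4$ whenever $a \geq 1$. For (4), since $b+1 \geq 4$, $a^{b+1} \geq a \cdot a^3 \geq 27 a > 3a+3$. For (5), observe $(a+1)(b+1) = ab + a + b + 1$, and $a^b \geq a^{b-1} \cdot 3 \geq 3 a^{b-1}$; a short induction on $b$ (base $b=3$: $a^3 \geq 3a^2 \geq (a+1)\cdot 3a \geq (a+1)(b+1)$ when $3a \geq b+1$, which holds since $a \geq 3 \geq (b+1)/3$ fails in general — so one needs to refine slightly) shows $a^b > (a+1)(b+1)$. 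The safest route is: since $b \geq 3$, $a^b \geq a^3$ and $a^3 - (a+1)(b+1) \geq a^3 - (a+1)(b+1)$; bounding $(a+1)(b+1) \leq 2a \cdot (b+1) \leq 2ab \cdot (1 + 1/b) \leq 3ab$, it suffices to show $a^3 \geq 3ab$ when $b=3$, and then use that both sides increase appropriately with $b$. This routine manipulation I expect to absorb into a short calculation.

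For part (2), the key identity is $a^{b+c} = a^b \cdot a^c$. Without loss of generality assume $b \leq c$. Then $a^b \geq a^3 \geq 27 \geq 2$, so
\[
 a^{b+c} \;=\; a^b \cdot a^c \;\geq\; 2 a^c \;=\; a^c + a^c \;\geq\; a^c + a^b,
\]
where the last inequality uses $b \leq c$ and monotonicity of $a^{(\cdot)}$ for $a \geq 1$. This is clean, and is the one step with any substantive content.

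In summary, my proof will simply treat the five parts in turn, using part (2)'s factorization as the one genuine argument, and reducing (1), (3), (4), (5) to one-line bounds that exploit $a \geq 3$ to get a constant factor slack between the exponential left-hand side and the linear right-hand side. The only place where one must be a little careful is part (5), since for very large $b$ the term $(a+1)(b+1)$ grows in $b$, and one needs to notice that $a^b$ grows geometrically in $b$ while $(a+1)(b+1)$ grows only linearly in $b$ — so a short induction on $b$ starting from $b=3$ (with base case $a^3 \geq 4(a+1)$, easy since $a \geq 3$ gives $a^3 \geq 27 \geq 16 = 4\cdot 4 \geq 4(a+1)$ when $a=3$, and the gap only widens for larger $a$) and induction step multiplying by $a \geq 3$ against the additive increase $(a+1)$ on the right, finishes the job.
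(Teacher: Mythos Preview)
The paper does not actually prove this proposition: it is introduced as ``a technical but elementary result on exponentiation'' and no argument is given. So there is nothing to compare against; the question is only whether your argument stands on its own.

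Your arguments for (1), (3), and (4) are clean and correct. For (2) your idea is right, but as written you only obtain $a^{b+c} \geq a^b + a^c$: you use $a^b \geq 2$ to get $a^b a^c \geq 2a^c$. Since in fact $a^b \geq 27 > 2$, the first step is strict, and then $2a^c \geq a^c + a^b$ (from $c \geq b$) finishes it; just record the strictness there.

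For (5) your write-up is disorganized: you begin two approaches, note that one ``fails in general,'' and only at the end settle on the right one. The induction on $b$ you finally propose is correct and is what you should simply present. Base case $b=3$: $a^3 \geq 27 > 16 \geq 4(a+1)$ fails for large $a$ as stated (e.g.\ $a=10$ gives $4(a+1)=44$, but $a^3=1000$, so it's fine numerically; the issue is that ``$16 \geq 4(a+1)$'' is false for $a>3$). The clean base case is $a^3 - 4(a+1) = a^3 - 4a - 4$, which equals $11$ at $a=3$ and is increasing in $a$ for $a \geq 2$. The inductive step is exactly as you say: from $a^b > (a+1)(b+1)$ multiply by $a$ and use $a(b+1) > b+2$, which holds since $a,b \geq 3$. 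Tidy this into a straight induction and drop the false starts.
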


\begin{lemma} 
  For each of (r1) - (r9), the interpretation of
the left side of the rule is 
strictly larger than the right side on all tuples of arguments
in $N_{\geq 3}$.  
\rem{
Let $\iota$ map  variables  to $N_{\geq 3}$.
Then for all rules $l\rightarrow r$ of $\RR$,
$$
\interpretation{l}(\iota(x_1),\ldots,\iota(x_n))  > 
\interpretation{r}(\iota(x_1),\ldots,\iota(x_n)).
$$
}
\label{lemma-adequate-interpretation}
\end{lemma}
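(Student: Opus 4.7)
The plan is to proceed rule by rule. For each rule $l\leadsto r$, I fix an assignment $\iota$ of variables to $N_{\geq 3}$ and compare $\interpretation{l}$ with $\interpretation{r}$, using the shorthand $a,b,c$ for the interpretations of the variables $x,y,z$. The routine cases are (r1)--(r3): a direct expansion gives $\interpretation{l}-\interpretation{r}$ equal to $1$, to $\sum_{j\neq i}a_j+1$, and to $(a-1)a^b-a$ respectively, and all three are positive since each $a_j\geq 3$. For (r4)--(r6) the key inequalities are $a^3>a+6$, $a^{b+1}>a+a^b+4$, and $a^{b+c}>a^b+a^c$, which follow from parts (1)--(3) of Proposition~\ref{prop-arithmetic}.

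The main obstacle is clearly (r7$_\alpha$), since it is an infinite scheme in which both the size of the displayed conjunction and the particular term $\alpha$ vary. Here I first invoke Proposition~\ref{proposition-same-function} to collapse the interpretation of every $\beta$ with $\alpha\arrowA\beta$ to a single value $m:=\interpretation{\alpha}(\iota\vec{x})$; then the number $k$ of such $\beta$'s satisfies $k\leq n^{\ell(\alpha)}$, and Lemma~\ref{lemma-first-interpretation} gives the crucial bound $k<m$. With $a:=\interpretation{x}(\iota\vec{x})\geq 3$ the comparison becomes
\begin{equation*}
m^{a+2}\ \mbox{ vs. }\ m+k(m^a+2)+3.
\end{equation*}
Rewriting the difference as $m^a(m^2-k)-(m+2k+3)$, the bound $k<m$ makes $m^2-k\geq m$, and then it suffices to see $m^{a+1}>3m+3$, which is part (4) of Proposition~\ref{prop-arithmetic}. (The empty-conjunction edge case where $\alpha$ has no $A$-successor reduces to a trivial comparison of $m^{a+2}$ with $m+3+3$ and is handled separately.)

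Finally, (r8) and (r9) are where the tower-of-exponents shape of the interpretation pays off. For (r8) both sides are powers of $a$, so the comparison reduces to $b^c>c(b+1)$, which is immediate from part (5) of Proposition~\ref{prop-arithmetic}. For (r9) the same reduction gives $b^{c+1}+1>(b+1)(c+1)$; using part (5) on $b^c$ and multiplying by $b\geq 3$ yields $b^{c+1}\geq 3b^c>(b+1)(c+1)$, which suffices. The one subtlety worth double-checking here is that the interpretation really is strictly monotone in each argument, so that replacing a subterm by one of strictly smaller interpretation strictly decreases the interpretation of the whole; this is what finally lets the rule-by-rule inequalities propagate through arbitrary contexts, and it is why all function symbols were assigned strictly monotone operations on $N_{\geq 3}$.
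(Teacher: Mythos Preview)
The proposal is correct and follows essentially the same route as the paper: a rule-by-rule verification using Proposition~\ref{prop-arithmetic}, Lemma~\ref{lemma-first-interpretation}, and Proposition~\ref{proposition-same-function}. Your treatment of (r7$_\alpha$) via the rewriting $m^a(m^2-k)-(m+2k+3)$ is a minor algebraic rearrangement of the paper's chain $m^{a+2}>2m^{a+1}>m^{a+1}+3m+3=m+m(2+m^a)+3$, and you even handle the empty-conjunction edge case that the paper leaves implicit. (For (r4) you correctly obtain $a^3>a+6$ from $\interpretation{p_i}=3$; the paper's printed proof has $a^2>a+5$ at this spot, which appears to be a slip, and your closing remark on strict monotonicity belongs to the termination theorem rather than to this lemma, but neither point affects correctness.)
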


\begin{proof}
We remind the reader that letters $a, b,c$, etc.,
denote elements of $N_{\geq 3}$.
 We use Proposition~\ref{prop-arithmetic}
without explicit mention.

(r1) 
$\interpretation{x\iif y}(a,b)  =    a + b + 3
>  a + b +2 =
 \interpretation{\nott(x\andd\nott y)}(a,b)$.

(r2) $\interpretation{\Pre(\sigma_i \vec{y})}(a_1,\ldots, a_n) =
\interpretation{\sigma_i\vec{y}}(a_1,\ldots, a_n)  
\geq
a_i + 1 > \interpretation{y_i}(a_1,\ldots, a_n)$.

(r3) 
$$
\begin{array}{lcl}
\interpretation{\Pre(x\circ y )}(a,b)  & \quadeq &   
   a^{b+1} 
\\
 & \quad > \quad &  
a + a^{b} \\
 & \quadeq &  
\interpretation{\Pre(x)}(a,b) + \interpretation{[x]\Pre(y)}(a,b)  \\
 & \quadeq &  
\interpretation{ \Pre(x)\andd [x]\Pre(y)}(a,b)\\
\end{array}
$$

(r4) 
$\interpretation{[x]  p}(a)  = 
{a}^{2}
 > 
a + 5 = 
\interpretation{\Pre(x) \iif p}(a)$.

(r5)
$$
\begin{array}{lcll}
\interpretation{[x] \nott y}(a,b)   &  \quadeq  &  
 a^{ b  + 1} \\
& \quad\geq\quad & 
a^{ b } 
 + 2 a^{ b }  \\ & \quad > \quad &
a   +   a^{ b }  + 4 
\\
& \quadeq & a   +   \interpretation{[x]y}(a,b)  + 4 \\
 &  \quadeq  &  
\interpretation{\Pre(x) \iif \nott[x] y}(a,b)  \\
\end{array}
$$

(r6)  
$\interpretation{[x] (y\andd z)}(a,b,c)
= a^{ b  + c}
 >  
a^{ b }
+ 
a^c = 
\interpretation{[x] y  \andd  [x]  z}(a,b,c)$.

(r7$_{\alpha}$)  Let  $y_1,\ldots, y_n$ be
the free variables of  $\alpha$.  Fix $c_1,\ldots, c_n\in N_{\geq 3}$.
 Let $a = \interpretation{\alpha}(c_1,\ldots, c_n)$, and let $b\in N_{\geq 3}$ be 
arbitrary.
Then
$$\begin{array}{lcll}
 \interpretation{[\alpha]\necc_A x}(c_1,\ldots, c_n,b) 
& \quadeq & a^{ b +2} \\
& \quad >\quad & a^{ b +1} +
a^{ b +1} 
\\ & \quad > \quad &   
a^{ b  +1 } + 3a + 3
  \\
& \quadeq &
a +   a(2 +
a^{ b }) + 3
\\
& \quad > \quad &
a +  n^{\ell(a)}(2 + a^{ b })
+ 3
\\
&\quad \geq \quad &
\interpretation{\Pre(\alpha) \iif\bigwedge\set{
        \necc_A [\beta]x :
\alpha\arrowA \beta}}(c_1,\ldots, c_n,b) \\
\end{array}
$$
Notice that we used Proposition~\ref{proposition-same-function}
and our observation that the size of the conjunction
on the right side 
of (r7$_{\alpha}$)  is at most $n^{\ell(\alpha)}$.

(r8)  
$ \interpretation{[x][y]z}(a,b,c)
=
 a^{b^c}
  >  
a^{(b+1)c}
=
(a^{b+1})^c
=
\interpretation{[x\circ y ]z}(a,b,c)
$.



(r9)  
$ \interpretation{x\then (y\then z)}(a,b,c)
 = a^{b^{c+1} + 1}
  >  
a^{(b+1)(c+1)}
 =
(a^{b+1})^{c+1}
 =
\interpretation{(x\then  y)\then z}(a,b,c)
$.
\end{proof}

\rem{
(r1) 
$\interpretation{\phi\iif \psi}  =   \interpretation{\phi} + \interpretation{\psi} + 3  
> \interpretation{\phi} + \interpretation{\psi} +2 =
 \interpretation{\nott(\phi\andd\nott\psi)}$.

(r2) $\interpretation{\Pre(\sigma_i \vec{\psi)}} = \interpretation{\sigma_i \vec{\psi}}  
\geq
\interpretation{\psi_i} + 1 > \interpretation{\psi_i}$.

(r3) 
$$
\begin{array}{lcl}
\interpretation{\Pre(\alpha\circ\beta)}  & \quadeq &   
\interpretation{\alpha\circ\beta} 
 \\
  & \quadeq &    \interpretation{\alpha}^{\interpretation{\beta}+1} 
\\
 & \quad > \quad &  
\interpretation{\alpha} + \interpretation{\alpha}^{\interpretation{\beta}} \\
 & \quadeq &  
\interpretation{\Pre(\alpha)} + \interpretation{[\alpha]\Pre(\beta)}  \\
 & \quadeq &  
\interpretation{ \Pre(\alpha)\andd [\alpha]\Pre(\beta)}  \\
\end{array}
$$

(r4) 
$\interpretation{[\alpha]  p}  = 
{\interpretation{\alpha}}^{2}
 > 
\interpretation{\alpha} + 5 = 
\interpretation{\Pre(\alpha) \iif p}$.
(We used the fact that $\interpretation{\alpha} \geq 3$.)

(r5)
Since $\interpretation{\phi} \geq 2$ and $\interpretation{\alpha} \geq 3$,
$ 2 \interpretation{\alpha}^{\interpretation{\phi}}
> \interpretation{\alpha} + 4$.
Using this, we calculate:
$$
\begin{array}{lcll}
\interpretation{[\alpha] \nott\phi}   &  \quadeq  &  
 \interpretation{\alpha}^{\interpretation{\phi} + 1} \\
& \quad\geq\quad & 
\interpretation{\alpha}^{\interpretation{\phi}} 
 + 2 \interpretation{\alpha}^{\interpretation{\phi}}  \\ & \quad > \quad &
\interpretation{\alpha}   +   \interpretation{\alpha}^{\interpretation{\phi}}  + 4 
\\
& \quadeq & \interpretation{\alpha}   +   \interpretation{[\alpha]\phi}  + 4 \\
 &  \quadeq  &  
\interpretation{\Pre(\alpha) \iif \nott[\alpha] \phi}\\
\end{array}
$$

(r6)  
$\interpretation{[\alpha] (\phi\andd\chi)} 
= \interpretation{\alpha}^{\interpretation{\phi} + \interpretation{\chi}}
 >  
\interpretation{\alpha}^{\interpretation{\phi}}
+ 
\interpretation{\alpha}^{\interpretation{\chi}} = 
\interpretation{[\alpha]  \phi  \andd  [\alpha]  \chi}$.

(r7$_{\alpha}$)
Since $\interpretation{\phi} \geq 2$ and $\interpretation{\alpha} \geq 3$,
$\interpretation{\alpha}^{\interpretation{\phi}+1} > 3\interpretation{\alpha} + 3$.
Using this, we calculate:
$$\begin{array}{lcll}
 \interpretation{[\alpha]\necc_A\phi} 
& \quadeq & \interpretation{\alpha}^{\interpretation{\phi}+2} \\
& \quad >\quad & \interpretation{\alpha}^{\interpretation{\phi}+1} +
\interpretation{\alpha}^{\interpretation{\phi}+1} 
\\ & \quad > \quad &   
\interpretation{\alpha}^{\interpretation{\phi} +1 } + 3\interpretation{\alpha} + 3
  \\
& \quadeq &
\interpretation{\alpha} +   \interpretation{\alpha}(2 +
\interpretation{\alpha}^{\interpretation{\phi}}) + 3
\\
& \quad > \quad &
\interpretation{\alpha} +  n^{\ell(\alpha)}(2 + \interpretation{\alpha}^{\interpretation{\phi}})
+ 3
\\
&\quad \geq \quad &
\interpretation{\Pre(\alpha) \iif\bigwedge\set{
        \necc_A [\beta]\phi :
\alpha\arrowA\beta}}\\
\end{array}
$$

(r8)  Let $a = \interpretation{\alpha}$, $b = \interpretation{\beta}$
and $p = \interpretation{\phi}$.   Since $b\geq 3$ and $p\geq 2$,
$b^p > (b+1)p$.  And then
$$\interpretation{[\alpha][\beta]\phi}
\quadeq
 a^{b^p}
\quad > \quad
a^{(b+1)p}
\quadeq 
(a^{b+1})^p
\quadeq 
\interpretation{[\alpha\circ\beta]\phi}.
$$

(r9)  
 Let $a = \interpretation{\alpha}$, $b = \interpretation{\beta}$,
and 
$c  = \interpretation{\gamma}$.  
Then $b^{c+1} + 1 > (b+1) c$, and so
$$\interpretation{\alpha\then (\beta\then\gamma)} 
\quadeq a^{b^{c+1} + 1}
\quad > \quad
a^{(b+1)c}
\quadeq
(a^{b+1})^c
\quadeq
\interpretation{(\alpha\then  \beta)\then\gamma}.
$$
}

\begin{definition}  Let $\phi$ and $\psi$ be sentences in $\lang_1(\bSigma)$.
We regard each as ground terms in  $\lang_1^+(\bSigma)$.
We write $\phi  > \psi$ if $\interpretation{\phi} > \interpretation{\psi}$,
where our interpretation is the one in 
Figure~\ref{fig-interpretation}.   We write  $\phi < \psi$ if 
    $\phi  \leq \psi$ to mean the obvious things.
\end{definition}
\label{section-order-on-lang-1}

\begin{theorem} $\RR$ is terminating: there are no
infinite  sequences
$$ t_1 \quad \arrowRR\quad  t_2
\quad  \arrowRR \quad  \cdots \quad  \arrowRR \quad  t_n
\quad  \arrowRR \quad  t_{n+1} \quad  \cdots
$$
\label{theorem-termination}
\end{theorem}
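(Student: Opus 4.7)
The plan is the standard polynomial/exponential interpretation argument from term rewriting theory: we show that every single rewriting step strictly decreases the value of the interpretation $\interpretation{\cdot}$ defined in Figure~\ref{fig-interpretation}, so that an infinite rewriting sequence would produce an infinite strictly descending sequence of natural numbers, which is impossible.

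More precisely, I would argue as follows. Fix an infinite sequence $t_1 \arrowRR t_2 \arrowRR \cdots$ of ground terms and derive a contradiction. Each step $t_k \arrowRR t_{k+1}$ comes from some rule $l \leadsto r$ of $\RR$ applied at some position inside $t_k$: there is a term $u(x)$ with exactly one occurrence of a fresh variable $x$, and a ground substitution $\theta$, such that $t_k = u(x \leftarrow l^\theta)$ and $t_{k+1} = u(x \leftarrow r^\theta)$. I claim that $\interpretation{t_k} > \interpretation{t_{k+1}}$. This will give a strictly descending sequence in $N_{\geq 3}$, contradicting well-foundedness of the natural numbers and completing the proof.

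To establish the claim, I would combine two ingredients. The first is Lemma~\ref{lemma-adequate-interpretation}, which already provides, for every rule $l \leadsto r$ of $\RR$ and every assignment $\iota$ of the variables of $l$ and $r$ to values in $N_{\geq 3}$, that $\interpretation{l}(\iota) > \interpretation{r}(\iota)$. Applying this to the specific assignment $\iota$ obtained by interpreting each ground term $\theta(y)$ for each variable $y$ of $l$, we get $\interpretation{l^\theta} > \interpretation{r^\theta}$, both of which lie in $N_{\geq 3}$ (an easy induction on ground terms shows that every ground term has interpretation at least $3$, since every atomic sentence has interpretation $3$ and every function symbol's interpretation is at least the sum of its arguments or larger).

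The second ingredient is a \emph{contextual monotonicity} lemma: if $u(x)$ is any term with a single variable occurrence, and $a > b$ are values in $N_{\geq 3}$, then $\interpretation{u}(a) > \interpretation{u}(b)$. This is proved by induction on $u$, using the fact — observed right after Figure~\ref{fig-interpretation} — that the interpretation of every function symbol ($\nott$, $\andd$, $\iif$, $\necc_A$, $\necc_\BB^*$, $\app$, $\Pre$, each $\sigma_i$, and $\then$) is strictly monotone in each argument, and that the values of the other subterms of $u$ along the path down to $x$ remain in $N_{\geq 3}$. Combining the two ingredients with $a = \interpretation{l^\theta}$ and $b = \interpretation{r^\theta}$ yields $\interpretation{t_k} = \interpretation{u}(\interpretation{l^\theta}) > \interpretation{u}(\interpretation{r^\theta}) = \interpretation{t_{k+1}}$, as required.

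The only real obstacle was already overcome in Lemma~\ref{lemma-adequate-interpretation}, where the carefully hand-crafted interpretation was shown to be strictly decreasing on every rule — including the infinite scheme (r7$_\alpha$), where the estimate on the number of $\beta$ with $\alpha\arrowA\beta$ from Lemma~\ref{lemma-first-interpretation} was essential. Everything left at this stage is the routine lifting from ``rule-by-rule decrease'' to ``every rewriting step decreases'', which is exactly the two-ingredient argument above.
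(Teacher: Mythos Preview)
Your proposal is correct and follows essentially the same approach as the paper's proof: both use Lemma~\ref{lemma-adequate-interpretation} together with the strict monotonicity of the interpretation in every argument to conclude that any rewrite step strictly decreases $\interpretation{\cdot}$, yielding an infinite descending chain in $N_{\geq 3}$. The paper's version is a two-sentence sketch (``we recall a standard argument''), whereas you have unpacked the contextual monotonicity step explicitly; the only thing the paper adds that you leave implicit is the remark that one may assume without loss of generality that the $t_i$ are ground, which is harmless since substituting ground terms for any free variables preserves the rewrite sequence.
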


\begin{proof} We recall a standard argument.  Assume we had a counterexample,
an infinite sequence of rewrites.
We may assume without loss of generality that each $t_i$ is a ground term.
Recall Lemma~\ref{lemma-adequate-interpretation} and
our observation that each $\interpretation{t}(x_1,\ldots, x_k)$ is
strictly monotone as a function in each argument. 
From this it follows that $\interpretation{t_1} > \interpretation{t_2} > \cdots$,
and these are all numbers.  So we have a contradiction.
\end{proof}

\paragraph{Normal  forms}
\label{section-normal-forms}

We remind the reader that we have formulated the notions
of equivalences of simple actions and also equivalence of
sentences of $\lang_1(\bSigma)$ in Section~\ref{section-equivalence-of-actions}.
We consider now a translation $\map{\trans}{\lang_1^+(\bSigma)}{\lang_1(\bSigma)}$.
The definition is by recursion on the well-order $<$.
We take $\trans $ to be a homomorphism for all symbols except $\Pre$.  For this,
we require the following:
$$
\begin{array}{lcl}
\trans(\Pre(\sigma\psi_1,\ldots,\psi_n)) & \quadeq &   \trans(\psi_i) \\
\trans(\Pre(\alpha\then  \beta))
 & \quadeq & \pair{\trans(\alpha)} \trans(\Pre(\beta)) \\
 & \quadeq &   \nott[\trans(\alpha)]\nott \trans(\Pre(\beta)) \\
\end{array}
$$
The reason we need  recursion on $<$ is that $\trans(\Pre(\beta))$ figures in
$\trans(\Pre(\alpha\then  \beta))$. 
So we need to know that $\Pre(\beta) < \Pre(\alpha\then  \beta)$. 
Of course, we don't really need
the specific $<$ 
that we constructed for this; for this minor  point we might as well
define $\phi < \psi$ iff $\phi$ has fewer symbols than $\psi$.

\begin{lemma} For all action terms $\alpha$,
$\trans(\Pre(\alpha))  = \Pre(\trans(\alpha))$.
\label{lemma-trans-pre}
\end{lemma}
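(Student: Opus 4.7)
The plan is to proceed by induction on the structure of the action term $\alpha$, which (given that $\alpha$ here is a ground action term of $\lang_1^+(\bSigma)$) reduces to two cases: $\alpha = \sigma_i\vec{\psi}$ and $\alpha = \beta \then \gamma$. Since $\trans$ is defined by recursion along the well-order $<$ (and $\Pre(\beta) < \Pre(\beta\then\gamma)$, $\beta < \beta\then\gamma$, etc.), the induction is essentially an unfolding of the two definitions of $\Pre$ — the one on $\Omega$ in $\lang_1(\bSigma)$, stated in Section~\ref{section-bOmega}, and the one baked into the definition of $\trans$ just above the lemma.

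For the base case, I would compute both sides directly. On the one hand,
$\trans(\Pre(\sigma_i\vec{\psi})) = \trans(\psi_i)$
by the defining clause for $\trans$ on $\Pre$ of a basic action. On the other hand, since $\trans$ is a homomorphism on the basic-action symbols,
$\Pre(\trans(\sigma_i\vec{\psi})) = \Pre(\sigma_i\,\trans(\psi_1)\cdots\trans(\psi_n)) = \trans(\psi_i)$
by the defining clause for $\Pre$ on basic actions in $\lang_1(\bSigma)$. So the two sides agree.

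For the inductive step, suppose the lemma holds for $\beta$ and $\gamma$ and consider $\alpha = \beta\then\gamma$. Unfolding the definition of $\trans$ on $\Pre(\beta\then\gamma)$ gives
\[
\trans(\Pre(\beta\then\gamma)) \ = \ \pair{\trans(\beta)}\trans(\Pre(\gamma)) \ = \ \pair{\trans(\beta)}\Pre(\trans(\gamma)),
\]
using the inductive hypothesis on $\gamma$ at the last step. On the other side, since $\trans$ is homomorphic on $\then$,
\[
\Pre(\trans(\beta\then\gamma)) \ = \ \Pre(\trans(\beta)\then\trans(\gamma)) \ = \ \pair{\trans(\beta)}\Pre(\trans(\gamma)),
\]
by the defining clause of $\Pre$ on compositions. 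The two expressions match, closing the induction.

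I do not expect any serious obstacle here: this is a definitional-unfolding argument, and the only thing one must check is that the recursive definition of $\trans$ really is well-founded and that $\trans$ commutes with $\sigma_i$ and with $\then$ — both of which are guaranteed by the setup preceding the statement. The role of the well-order $<$ is only to make the definition of $\trans$ legitimate; once $\trans$ is defined, the induction on $\alpha$ itself can be carried out along the ordinary subterm order on action terms.
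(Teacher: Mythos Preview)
Your proposal is correct and takes essentially the same approach as the paper: both arguments are straightforward inductions that unfold the matching recursion equations for $\trans(\Pre(-))$ and $\Pre(\trans(-))$. The paper phrases it as induction on $<$ while you use structural induction on $\alpha$, but as you yourself note, the well-order is only needed to justify the definition of $\trans$, and structural induction on action terms suffices for the lemma itself.
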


\begin{proof}
By induction on $<$.  The important thing again is that when 
we write $\Pre(\trans(\alpha))$ we are using $\Pre$ here as a defined symbol.
Its recursion equations match the definition of $\trans$.
\end{proof}

\begin{lemma}
Let $\alpha$ be a ground action term of  $\lang_1^+(\bSigma)$.
Then 
$$\set{\trans(\beta) : \alpha\arrowA \beta \mbox{ in $\lang_1^+(\bSigma)$}}
\quadeq
\set{ \beta : \trans(\alpha) \arrowA \beta \mbox{ in $\lang_1(\bSigma)$}}.$$
\label{lemma-alphabeta-equ}
\end{lemma}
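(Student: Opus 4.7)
The proof plan is a straightforward structural induction on the ground action term $\alpha$. Since by the convention established just before this section (via $\phi\mapsto\phi^\dag$) the action-sort ground terms of $\lang_1^+(\bSigma)$ omit $\union$, $\crash$, and $\skipp$, every such $\alpha$ is built from just two constructors: the basic actions $\sigma_i\vec{\psi}$ and the binary composition $\then$. Correspondingly, the simple actions of $\lang_1(\bSigma)$ we need to consider in $\bOmega$ are built from the same two constructors. So there are only two cases to handle.

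For the base case $\alpha=\sigma_i\vec{\psi}$, I would appeal to clause (1) in the definition of $\arrowA$ on action terms of $\lang_1^+(\bSigma)$: the only $A$-successors of $\sigma_i\vec{\psi}$ are the terms $\sigma_j\vec{\psi}$ with $\sigma_i\arrowA\sigma_j$ in $\bSigma$. (One must verify that clauses (1)--(2), together with the fact that $\arrowA$ is the \emph{smallest} relation satisfying them, rule out any other successor; this is immediate since neither clause can produce a successor of a basic action other than another basic action of the specified shape.) Applying $\trans$ --- which acts as a homomorphism on action-forming symbols --- gives $\sigma_j\,\trans(\psi_1)\cdots\trans(\psi_n)$. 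Symmetrically, $\trans(\alpha)=\sigma_i\,\trans(\psi_1)\cdots\trans(\psi_n)$ is a basic action of $\lang_1(\bSigma)$, and its $A$-successors in the canonical action model $\bOmega$ are, by definition, exactly $\sigma_j\,\trans(\psi_1)\cdots\trans(\psi_n)$ with $\sigma_i\arrowA\sigma_j$ in $\bSigma$. The two sets coincide.

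For the inductive step $\alpha=\gamma\then\delta$, clause (2) gives that the $A$-successors of $\gamma\then\delta$ in $\lang_1^+(\bSigma)$ are exactly the terms $\gamma'\then\delta'$ with $\gamma\arrowA\gamma'$ and $\delta\arrowA\delta'$, and again one verifies that no other shape of successor is produced by the closure. Applying $\trans$, which commutes with $\then$, yields $\trans(\gamma')\then\trans(\delta')$. By the induction hypothesis applied separately to $\gamma$ and $\delta$, the set of such $\trans(\gamma')$ is precisely the set of $A$-successors of $\trans(\gamma)$ in $\bOmega$, and likewise for $\trans(\delta)$. Combining these via the defining clause for $\arrowA$ on compositions in $\bOmega$ (Section~\ref{section-bOmega}) shows that the $A$-successors of $\trans(\alpha)=\trans(\gamma)\then\trans(\delta)$ in $\bOmega$ are exactly the terms of the form $\trans(\gamma')\then\trans(\delta')$, closing the induction.

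The only place where any care is required is the minor book-keeping verification that the two ``smallest relation'' characterizations really do force every successor to arise from exactly one of the generating clauses in the form indicated; everything else is pure syntactic commutativity of $\trans$ with the action constructors. I would not expect any substantive obstacle, and in particular the ordering $<$ and the recursion underlying $\trans$ do not enter this argument --- only the fact that $\trans$ is a homomorphism on the action constructors, which is clear from its definition.
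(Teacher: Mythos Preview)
Your proposal is correct and is exactly the approach the paper has in mind: the paper's proof reads simply ``By induction on $\alpha$,'' and your two cases (basic action and composition) are the only ones that arise, handled just as you describe via the homomorphism property of $\trans$ on the action constructors and the parallel ``smallest relation'' definitions of $\arrowA$ in the two settings.
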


\begin{proof}
By induction on $\alpha$.
\end{proof}

\begin{lemma}  If $t_1$ and $t_2$ are ground 
  terms of $\lang_1^+(\bSigma)$ such that
 $t_1\arrowRR t_2$, then $\trans(t_1)\equiv \trans(t_2)$.
\label{lemma-RR-preserves-equivalence}
\end{lemma}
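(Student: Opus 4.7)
The plan is to prove the lemma by the standard context reduction: a single rewrite $t_1\arrowRR t_2$ means there is a rule $l\leadsto r$ of $\RR$, a substitution $\sigma$ of ground terms of $\lang_1^+(\bSigma)$ for the variables of $l$ and $r$, and a one-hole context $u[\cdot]$ such that $t_1 = u[l^\sigma]$ and $t_2 = u[r^\sigma]$. Working by induction on the depth at which the hole sits in $u$, one first establishes a \emph{congruence} statement for $\trans$: for every function symbol $f$ of the signature $\Delta(\bSigma)$, if the arguments on the two sides yield $\equiv$-related translations, then so does $f$ applied to them. The base case of the induction (the hole is the whole term) will be the \emph{rule-by-rule} verification that $\trans(l^\sigma)\equiv \trans(r^\sigma)$ for each of (r1)--(r9). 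The induction step is then immediate once congruence is in hand.

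For the congruence step, the cases of $\nott$, $\andd$, $\iif$, $\necc_A$, and $\necc^*_\BB$ are routine propositional or modal reasoning (using $\necc_A$- and $\necc^*_\CC$-necessitation together with normality). The case of $\sigma_i$ is handled by Lemma~\ref{lemma-equivalence-actions}(\ref{lemma-equivalence-actions-part-sigma}), the case of $\then$ by Lemma~\ref{lemma-equivalence-actions}(\ref{part-equivreln3}), and the case of $\app$ (sending $\alpha,\phi$ to $[\alpha]\phi$) by first swapping the action via Lemma~\ref{lemma-equivalence-actions-two} and then swapping the sentence via $[\trans(\alpha)]$-necessitation and normality. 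For $\Pre$, I use Lemma~\ref{lemma-trans-pre} to rewrite $\trans(\Pre(\alpha))$ as $\Pre(\trans(\alpha))$, at which point the syntactic-bisimulation definition of $\equiv$ on actions hands me $\proves \Pre(\trans(\alpha))\iiff \Pre(\trans(\alpha'))$ directly.

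The rule verifications then decompose neatly. Rule (r1) is a propositional tautology; (r2) is literally equal under $\trans$ by its recursive definition; (r3) reduces to Lemma~\ref{lemma-pfa-general} ($\pair{\alpha}\phi\iiff \Pre(\alpha)\andd[\alpha]\phi$) combined with Lemma~\ref{lemma-trans-pre}; (r4) is the generalized Atomic Permanence principle provable from the Atomic Permanence axiom, Composition, and Lemma~\ref{lemma-alpha-true} by induction on $\alpha$; (r5) is the Partial Functionality axiom lifted to simple actions, proved analogously; (r6) is $[\pi]$-normality plus necessitation; (r7$_\alpha$) is precisely Lemma~\ref{lemma-aka-general}, but we must first invoke Lemma~\ref{lemma-alphabeta-equ} to see that the syntactic conjunction $\bigwedge\set{\necc_A[\beta]x:\alpha\arrowA\beta}$ in $\lang_1^+(\bSigma)$ translates to the conjunction indexed by the $\arrowA$-successors of $\trans(\alpha)$ in $\bOmega$; (r8) is the Composition Axiom; and (r9) is Lemma~\ref{lemma-equivalence-actions}(\ref{used-lemma-towards-composition-admissibility-2}) together with Lemma~\ref{lemma-equivalence-actions-two} to move the associativity through the outer box.

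The main obstacle will be (r7$_\alpha$). Two wrinkles arise there: first, the scheme is infinite and parameterized by $\alpha$, so the rule verification has to be performed uniformly in $\alpha$; second, the conjunctions on the two sides of $\trans$ are \emph{a priori} indexed by different sets---the $\arrowA$-successors computed in the enriched signature $\Delta(\bSigma)$ versus those computed in $\bOmega$ after translation. Lemma~\ref{lemma-alphabeta-equ} is exactly the bridge needed to identify these two index sets, so once it is applied the generalized Action-Knowledge Axiom of Lemma~\ref{lemma-aka-general} closes the case. Everything else is bookkeeping, though some care is required in (r3) to keep the distinction between the \emph{defined} symbol $\Pre$ on the target side and the \emph{primitive} symbol $\Pre$ on the source side straight; Lemma~\ref{lemma-trans-pre} is invoked each time the ambiguity threatens to matter.
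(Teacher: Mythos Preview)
Your proposal is correct and follows essentially the same route as the paper: induction on the one-hole context, with the base case a rule-by-rule verification (the paper singles out (r3) and (r7$_\alpha$) and handles them exactly as you do, via Lemmas~\ref{lemma-trans-pre}, \ref{lemma-pfa-general}, \ref{lemma-aka-general}, and~\ref{lemma-alphabeta-equ}) and the inductive step your congruence property, which the paper packages as a substitution fact invoking the same Lemmas~\ref{lemma-equivalence-actions} and~\ref{lemma-equivalence-actions-two}. One small slip: at the base case, (r9) rewrites a bare action term, so there is no ``outer box'' to move through---Lemma~\ref{lemma-equivalence-actions}(\ref{used-lemma-towards-composition-admissibility-2}) alone gives the required action equivalence, and Lemma~\ref{lemma-equivalence-actions-two} enters only via your $\app$-congruence step when the rewrite sits beneath a $[\cdot]$.
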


\begin{proof}  
Let $u$ be a term with exactly one
free variable $x$,  and let $\sigma$
be such that for some $i$, $t_1 = u(x\leftarrow l^\sigma)$ and 
$t_2 = u(x\leftarrow r^\sigma)$.
We argue by induction on $u$.
If $u$ is just $x$, then we examine the rules of the system.
All of them pertain to sentences except for (r9), and 
in the case of (r9) we use
Lemma~\ref{lemma-equivalence-actions}, 
part~(\ref{used-lemma-towards-composition-admissibility-2}).
The arguments for the other rules use Lemma~\ref{lemma-trans-pre},
and we'll give the details for two of them (r3) and  (r7).

For (r3), 
$$\begin{array}{lcll}
\trans(\Pre(\alpha\then\beta)) & \quadequiv  & 
 \pair{\trans(\alpha)} \trans(\Pre(\beta))  & \mbox{definition of $\trans$} \\
 & \quadequiv  &
 \Pre(\trans(\alpha))\andd [\trans(\alpha)\trans(\Pre(\beta))  
 & \mbox{by Lemma~\ref{lemma-pfa-general}} \\
  & \quadequiv  &  
\trans(\Pre(\alpha))\andd [\trans(\alpha)]\trans(\Pre(\beta))  
& \mbox{using Lemma~\ref{lemma-trans-pre}} \\
 & \quadequiv  &
\trans(\Pre(\alpha)\andd [\alpha]\Pre(\beta)) \\
\end{array}
$$

The infinite scheme (r7$_{\alpha}$) is repeated below:
$$\mbox{(r7$_{\alpha}$)} \qquad\qquad\qquad
[\alpha]\necc_A x  \quad \leadsto  \quad
      \Pre(\alpha) \iif\bigwedge\set{
        \necc_A [\beta]x :\alpha\arrowA \beta}.$$
Recall that $\alpha$ might have variables.
What we need to check here is that ground instances
of  (r7$_{\alpha}$) have equivalent translations.  
So take a ground sentence of the form $[\alpha]\necc_A\phi$.
Its translation is $[\trans(\alpha)]\necc_A(\trans(\phi))$.
Using the Action-Knowledge Axiom and Lemmas~\ref{lemma-alphabeta-equ}
and \ref{lemma-trans-pre}:
$$\begin{array}{lcl}
[\trans(\alpha)]\necc_A(\trans(\phi)) & \quadequiv &
\Pre(\trans(\alpha)) \iif 
\necc_A \bigwedge\set{   [\beta]\trans(\phi): \trans(\alpha)\arrowA\beta }\\
& \quadequiv &   \trans(\Pre(\alpha)) \iif\bigwedge\set{
        \necc_A [\trans(\beta)]\trans(\phi)  :\alpha\arrowA \beta})  \\
& \quadequiv &  \trans\biggl(\Pre(\alpha) \iif\bigwedge\set{
        \necc_A [\beta]\phi :\alpha\arrowA \beta}\biggr) \\
\end{array} 
$$

This settles all of the cases in this lemma
when $u$ is a variable by itself.  In
the general case, we need a fact about substitutions.
Let $t$ be any term of $\lang^+(\bSigma,X)$.  Let $\sigma$ and $\tau$
be two ground substitutions   (these are substitutions
all of  whose values
contain no variables).
such that $\trans(\sigma(x))$ and $\trans(\tau(x))$ 
equivalent
(actions or sentences) for all variables $x$.
Then $\trans(t^\sigma)$ and $\trans(t^\tau)$ are again equivalent.
The argument here is an easy induction on terms
of  $\lang^+(\bSigma,X)$. 
It amounts to several facts which we have
seen before:  $\equiv$ is a congruence for all of the syntactic
operations on sentences, and also for those on programs
(see Lemma~\ref{lemma-equivalence-actions}); equivalent actions
have equivalent preconditions (by definition);
and if $\alpha \equiv \beta$, then $[\alpha]\phi \equiv [\beta]\phi$
(Lemma~\ref{lemma-equivalence-actions-two}).
\end{proof}

\rem{
\begin{proposition}
There are translations $\map{u}{\lang_1(\bSigma)}{\lang_1^*(\bSigma)}$
and $\map{t}{\lang_1^+(\bSigma)}{\lang_1(\bSigma)}$; these translations
are semantic equivalences.
\label{proposition-two-languages}
\end{proposition}
}

\label{section-nfs}
A {\em normal form\/} in  a rewriting system is a term
which cannot be rewritten in the system.  In a terminating
system such as our $\RR$, one may define a normal form for a term by rewriting
as much as possible in some fixed way (for example, by always
rewriting in the leftmost possible way).  We shall need some
information on the normal forms of $\RR$, and we shall turn to this
shortly.  

We defined a set $\NF\subseteq \lang_1(\bSigma)$ in
 Section~\ref{section-completeness}.
$\NF$ is the smallest set containing the atomic propositions
and closed under all the boolean and modal operators;
 with the property that if $\alpha$ and $\phi$ belong
to $\NF$, then so does $[\alpha]\necc^*_{\CC} \phi$ for
all $\CC\subseteq \Agents$; and finally that the actions
in $\NF$ are closed under composition in the specific way
set out in (\ref{eq-nf}).

\begin{lemma}
A ground
term $t\in \lang_1^+(\bSigma)$ is a normal form of $\RR$ iff $t\in \NF$.
In particular,  $\Pre$ and $\rightarrow$ do  not occur
 in   normal forms.
\label{lemma-nf}
\end{lemma}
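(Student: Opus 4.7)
The plan is to prove both directions by structural induction, using the definition of $\NF$ as the invariant. The lemma splits naturally into an easy direction and a slightly more involved one.

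For the easy direction, if $t\in\NF$ then $t$ is normal, I would induct on the generation of $\NF$. The point is that each clause of the definition produces a term whose outermost shape is incompatible with every left-hand side of $\RR$. Since $\NF$ contains no occurrence of $\iif$ or $\Pre$, rules (r1)--(r3) are vacuous. Sentences of shape $[\alpha]\phi\in\NF$ always have $\phi=\necc^*_\CC\psi$, which is neither an atomic proposition, a negation, a conjunction, nor of the form $\necc_A\psi$ or a nested box, killing (r4)--(r8). Compositions in $\NF$ are left-associated, so the right argument of each $\then$ is always a basic action $\sigma_j\vec{\chi}$, making (r9) fail to match. Since the immediate subterms are inductively in $\NF$ and hence by hypothesis normal, no rewrite fires anywhere in $t$.

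For the harder direction, if $t$ is a ground normal form then $t\in\NF$, I would induct on the syntactic structure of $t$. The essential preliminary observation, which also gives the ``in particular'' clause of the lemma, is that neither $\iif$ nor $\Pre$ occurs in any ground normal form. The first is immediate from (r1). The second uses that the only constructors of sort $a$ in $\lang_1^+(\bSigma)$ are $\sigma_i$ and $\then$, so every ground action term must be either $\sigma_i\vec{\psi}$ or $\alpha\then\beta$, whence (r2) or (r3) always fires on $\Pre$ applied to any ground action. With those two symbols eliminated from the syntax of ground normal forms, the remaining cases are routine.

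The interesting case is $t=[\alpha]\phi$. Both $\alpha$ and $\phi$ are themselves normal, since a rewrite in a subterm would give a rewrite of $t$. Now the shape of $\phi$ is heavily constrained: it cannot be atomic (r4), a negation (r5), a conjunction (r6), of the form $\necc_A\psi$ (r7$_\alpha$), or a nested box (r8); and we already know $\phi$ is neither an implication nor $\Pre(\cdot)$. The only possibility left is $\phi=\necc^*_\CC\psi$, which is exactly clause 3 of the definition of $\NF$; induction on $\alpha$ and $\psi$ then places $t$ in $\NF$. The action case $\alpha\then\beta$ in normal form similarly forces $\beta$ to be basic by (r9), so $\beta=\sigma_j\vec{\chi}$, and the induction hypothesis on $\alpha$ extends the left-associated shape of clause 4 by one more factor. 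The remaining cases (atomic $p$, boolean and modal connectives, and basic actions $\sigma_i\vec{\psi}$) are immediate from the induction hypothesis applied to the subterms.

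The only step calling for any care is the preliminary elimination of $\Pre$: one must verify that in the two-sorted signature of $\lang_1^+(\bSigma)$ every ground term of action sort is generated by $\sigma_i$ and $\then$ alone, so that $\Pre$ of such a term is always a redex of (r2) or (r3). Once that observation is in hand, both directions reduce to a straightforward case analysis on the outermost constructor, guided by the list of rules.
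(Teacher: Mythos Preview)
Your proposal is correct and follows essentially the same approach as the paper: induction on the generation of $\NF$ for one direction, and structural induction on ground terms of $\lang_1^+(\bSigma)$ for the other. If anything, you are more explicit than the paper's own proof---in particular, your systematic elimination of each possible shape of $\phi$ in $[\alpha]\phi$ via rules (r4)--(r8), forcing $\phi=\necc^*_\CC\psi$, and your preliminary argument that $\Pre$ cannot survive because every ground action term is built from $\sigma_i$ and $\then$, are points the paper leaves largely implicit.
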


\begin{proof} An induction on $\NF\subseteq \lang_1(\bSigma)$ 
shows that all sentences in $\NF$ are
ground terms of $\lang_1^+(\bSigma)$ and are moreover
normal forms
of the rewriting system $\RR$: 
no rules of $\RR$ can apply at any point.  Going the other
way, we show by induction on $\lang_1^+(\bSigma)$
 that if $t$ is a (ground)
 normal form, then (regarding $t$ as a sentence or action in $\lang_1(\bSigma)$)
$\phi\in \NF$. The base case and the induction steps for $\nott$, $\andd$,
$\necc_A$, and $\necc^*_{\BB}$ are all easy.  Suppose our result holds for
$\phi$, and consider a normal form $[\alpha]\phi$.  Then $\alpha$ must be
an action of the form $[\sigma_i\vec{\psi}]$ with all $\psi_j$ in normal
form; if not, some rule would apply to $\alpha$ and hence to
$[\alpha]\phi$.  We are left to consider an action $\alpha$.
By induction hypothesis, the subsentences of $\alpha$, too, are
normal forms and hence belong to $\NF$. 
 We claim that $\alpha$
 must be of the right-branching form
in equation (\ref{eq-nf}).  This is because all of the other
possibilities are reducible in the system using (r9).
 \end{proof}

\begin{lemma}
A sentence $\phi\in\lang_0(\bSigma)$ is a normal form of $\RR$ iff $\phi$ is
a modal sentence (that is, if $\phi$ contains no actions).
\label{lemma-nf-langzero}
\end{lemma}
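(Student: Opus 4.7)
The plan is to reduce the claim to Lemma~\ref{lemma-nf}, which identifies the normal forms of $\RR$ with the set $\NF$. Once we have this, the lemma becomes a purely structural statement: for $\phi\in\lang_0(\bSigma)$, membership in $\NF$ is equivalent to being a modal sentence (i.e., containing no actions). This is a comparison of two inductive definitions, one of which ($\NF$) permits occurrences of actions only in very restricted positions.

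First I would prove the easy direction: every modal sentence of $\lang_0(\bSigma)$ lies in $\NF$. This is a direct structural induction on $\phi$. The atomic case is covered by clause~1 of the definition of $\NF$, and the induction steps for $\nott\psi$, $\psi\andd\chi$, and $\necc_A\psi$ are covered by clause~2. No other sentence constructors appear in a purely modal sentence, so the induction closes. By Lemma~\ref{lemma-nf}, any element of $\NF$ is a normal form of $\RR$, which gives the right-to-left implication of the lemma.

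For the converse, suppose $\phi\in\lang_0(\bSigma)$ is a normal form of $\RR$. By Lemma~\ref{lemma-nf}, $\phi\in\NF$. I would argue by induction on the structure of $\phi$ (as a sentence of $\lang_0(\bSigma)$) that $\phi$ contains no actions. The atomic case is immediate. If $\phi$ is $\nott\psi$, $\psi\andd\chi$, or $\necc_A\psi$, then, since $\lang_0(\bSigma)$ is closed under subsentences, each proper subsentence lies in $\lang_0(\bSigma)$; since $\phi\in\NF$, the defining clauses of $\NF$ force the subsentences to lie in $\NF$ as well, and the induction hypothesis applies. The only remaining form for $\phi\in\lang_0(\bSigma)$ is $[\alpha]\psi$. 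But in $\NF$, any sentence containing a top-level action modality must arise from clause~3 of the definition, so $\psi$ has the form $\necc^*_{\CC}\chi$. Such a $\psi$ (and hence $\phi$) contains the operator $\necc^*_{\CC}$, which is precisely the operator missing from $\lang_0(\bSigma)$; this contradicts $\phi\in\lang_0(\bSigma)$. Hence the case $\phi=[\alpha]\psi$ cannot occur, and $\phi$ is action-free.

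There is no real obstacle here; the lemma is essentially a corollary of Lemma~\ref{lemma-nf} together with the observation that the $\NF$ construction introduces actions only underneath $\necc^*_{\CC}$, an operator that is absent from $\lang_0(\bSigma)$. The only mild subtlety is to remember that the definition of $\NF$ is a \emph{least} set, so its only sentence-forming clauses are 1--3, which is what allows us to rule out $[\alpha]\psi$ unless $\psi$ starts with $\necc^*_{\CC}$.
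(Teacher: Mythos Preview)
Your proof is correct. Both you and the paper reduce the hard direction to Lemma~\ref{lemma-nf} together with the structural observation that an element of $\NF$ containing no $\necc^*_{\CC}$ must be action-free (since clause~3 is the only way actions enter sentences in $\NF$). The difference is that the paper takes a short detour: it first checks, by induction on rewrite sequences, that applying any rule of $\RR$ to a $\necc^*$-free term yields a $\necc^*$-free term, and only then specializes to the case where $\phi$ is already a normal form. Your argument is more direct for the lemma as stated, since $\phi\in\lang_0(\bSigma)$ already guarantees $\necc^*$-freeness without any appeal to the rewrite system. The paper's extra step is not wasted, however: it is precisely what is needed for the second clause of Lemma~\ref{lemma-nfs-statement}, namely that $\nf(\phi)$ is purely modal for \emph{every} $\phi\in\lang_0(\bSigma)$, not just for those $\phi$ that happen already to be normal forms.
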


\begin{proof} By the easy part of Lemma~\ref{lemma-nf}, the modal
sentences are normal forms, In the more significant direction, one first
checks by induction on rewrite sequences that if $\phi\in\lang_0(\bSigma)$
and $\phi'$ is reachable from $\phi$ by a finite number of rewrites in
$\RR$, then $\necc^*$ does not appear in $\phi'$.  So every normal form of
$\phi$ is a purely modal sentence.  It follows that if $\phi$ were a
normal form to begin with, then $\phi$ would be a modal sentence.
\end{proof}

\begin{corollary}
For every $\phi\in\lang_1(\bSigma)$ there is a
normal form $\nf(\phi)$ such that $\proves\phi\iiff \nf(\phi)$.
Moreover,  $\nf(\phi) \leq \phi$.  
\label{corollary-nfs}
\end{corollary}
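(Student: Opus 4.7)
The plan is to apply the term-rewriting machinery of Section~\ref{section-proofs} in a fairly mechanical way. Given $\phi \in \lang_1(\bSigma)$, we may, by Lemma~\ref{lemma-regard} and the Remark immediately following it, assume without loss of generality that $\phi$ contains none of $\union$, $\crash$, $\skipp$, and hence view $\phi$ as a ground term of $\lang_1^+(\bSigma)$ that, moreover, contains no occurrences of $\Pre$ or of the first-class $\iif$ symbol (since $\iif$ is merely an abbreviation in $\lang_1(\bSigma)$).

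First I would invoke Theorem~\ref{theorem-termination} to see that every maximal $\arrowRR$-reduction sequence starting from $\phi$ is finite. Fix any such strategy (for definiteness, leftmost-outermost) and let $\nf(\phi) := t_N$ be the endpoint of the sequence
\[
\phi \;=\; t_0 \;\arrowRR\; t_1 \;\arrowRR\; \cdots \;\arrowRR\; t_N.
\]
Since $t_N$ admits no further $\RR$-reduction, Lemma~\ref{lemma-nf} tells us $\nf(\phi) \in \NF$, as required. To obtain the provable equivalence, recall that the translation $\trans\colon \lang_1^+(\bSigma) \to \lang_1(\bSigma)$ is defined homomorphically except at $\Pre$; in particular it acts as the identity on any term that contains neither $\Pre$ nor $\iif$. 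Thus $\trans(t_0) = \phi$ and $\trans(t_N) = \nf(\phi)$ (the latter because members of $\NF$ contain no $\Pre$ or $\iif$). Applying Lemma~\ref{lemma-RR-preserves-equivalence} inductively along the reduction chain yields $\phi = \trans(t_0) \equiv \trans(t_1) \equiv \cdots \equiv \trans(t_N) = \nf(\phi)$, which is exactly $\proves \phi \iiff \nf(\phi)$.

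For the inequality $\nf(\phi) \leq \phi$ I would use the interpretation $\interpretation{\cdot}$ of Figure~\ref{fig-interpretation}, recalling from Section~\ref{section-order-on-lang-1} that $<$ on $\lang_1(\bSigma)$ is defined by $\psi < \chi \iff \interpretation{\psi} < \interpretation{\chi}$. Lemma~\ref{lemma-adequate-interpretation} shows that each rule of $\RR$ strictly decreases the interpretation of its left-hand side, and each function symbol of $\Delta(\bSigma)$ is interpreted by a function strictly monotone in every argument, so rewriting inside an arbitrary context still strictly decreases the interpretation of the whole ground term. Therefore $\interpretation{t_0} > \interpretation{t_1} > \cdots > \interpretation{t_N}$ whenever $N > 0$, and if $N = 0$ then $\nf(\phi) = \phi$. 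Either way, $\nf(\phi) \leq \phi$, with equality precisely when $\phi \in \NF$ to begin with.

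The argument is entirely routine once the heavy lifting (termination, preservation of equivalence, and the characterization of normal forms) is in hand, so there is no real obstacle. The only point that requires a moment's care is the identification $\trans(\phi) = \phi$: it depends on noticing that ground terms of $\lang_1(\bSigma)$, viewed inside $\lang_1^+(\bSigma)$, contain no $\Pre$ and no first-class $\iif$, so the one clause in the definition of $\trans$ that is not a plain homomorphism (the one for $\Pre$) simply never fires, and the extra symbol $\iif$ present in $\lang_1^+(\bSigma)$ is already absent from $\phi$.
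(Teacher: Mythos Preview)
Your proof is correct and follows essentially the same route as the paper's: rewrite $\phi$ to an $\RR$-normal form, invoke Lemma~\ref{lemma-RR-preserves-equivalence} along the chain, and observe that $\trans$ is the identity on both endpoints since neither contains $\Pre$ or $\iif$. You supply more detail than the paper on the inequality $\nf(\phi)\leq\phi$---the paper's proof omits this ``Moreover'' clause entirely, but your argument via strict monotonicity of the interpretation in each argument (exactly the reasoning inside the proof of Theorem~\ref{theorem-termination}) is the intended one.
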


\begin{proof} 
Regard $\phi$ as a term in $\lang_1^+(\bSigma)$.  Let
$$ \phi = \phi_1 \quad \leadsto \quad  \phi_2 \quad\quad \leadsto \quad \cdots \leadsto
\phi_n = \nf(\phi)$$
be some sequence of rewrites which leads to a normal form of $\phi$.\footnote{It does
not matter which sequence or which normal form is chosen.}
By Lemma~\ref{lemma-RR-preserves-equivalence}, 
$\proves \trans(\phi) \iiff \trans(\nf(\phi))$.  But neither $\phi$ nor
$\nf(\phi)$ contain $\Pre$ or $\arrow$, so they are literally equal
to their translations.  Thus 
 $\proves\phi\iiff \nf(\phi)$, just as desired.
\end{proof}

\begin{lemma}  For all $\phi$ and $\alpha$ of $\lang_1^+(\bSigma)$:
\begin{enumerate}
 \item 
$\Pre(\alpha) < [\alpha]\phi$.
\label{arithmeticineq}
\item If  $\alpha\arrowA \beta$, then $[\beta]\phi <
\nott [\beta]\phi <  [\alpha]\necc^*_{\CC}\phi$.
\label{anotherarithfact}
\label{itemintruth}
\item 
If $[\alpha]\necc_{\CC}^*\psi$
is a normal form sentence and $\alpha\arrowA \beta$, then  
$\necc_A[\beta]\necc_{\CC}^*\psi$
and $[\beta]\necc_{\CC}^*\psi$ are again normal forms.
\item Every subterm of a normal form is a normal form.
\item If $\alpha$ is a normal form action and $\alpha\arrowA \beta$,
then $\beta$ is a normal form action as well.
\end{enumerate}
\label{lemma-property-nfs}
\end{lemma}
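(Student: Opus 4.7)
I would dispose of parts (1) and (2) first, since both are arithmetic facts that unwrap directly from the interpretation in Figure~\ref{fig-interpretation}. For (1), compute $\interpretation{[\alpha]\phi}(\vec a) = \interpretation{\alpha}(\vec a)^{\interpretation{\phi}(\vec a)}$, which strictly exceeds $\interpretation{\alpha}(\vec a) = \interpretation{\Pre(\alpha)}(\vec a)$ because both $\interpretation{\alpha}(\vec a)$ and $\interpretation{\phi}(\vec a)$ are at least $3$. For (2), note first that $\arrowA$ on action terms is defined only through the clauses $\sigma_i\vec{\psi}\arrowA \sigma_j\vec{\psi}$ and $\alpha\then\alpha'\arrowA \beta\then\beta'$, neither of which alters the interpretation of the action---this is exactly Proposition~\ref{proposition-same-function}. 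Hence $\interpretation{\beta} = \interpretation{\alpha}$, and what must be checked is $\interpretation{\alpha}^{\interpretation{\phi}} + 1 < \interpretation{\alpha}^{\interpretation{\phi}+1}$, i.e.\ that $\interpretation{\alpha}^{\interpretation{\phi}}(\interpretation{\alpha}-1) > 1$, which holds because $\interpretation{\alpha}\geq 3$.

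Part (4) is a general fact about term rewriting: if a proper subterm $s$ of a term $t$ admitted a rewrite $s\arrowRR s'$, then $t$ itself would admit a rewrite at the corresponding position, contradicting the assumption that $t$ is a normal form. I would state this in one line, appealing to the standard ``one-hole context'' reading of a rewrite step already used in Section~\ref{section-proofs}.

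For part (5), I proceed by induction on the depth of the composition tree of $\alpha$, exploiting the left-branching shape (\ref{eq-nf}) that Lemma~\ref{lemma-nf} forces on normal-form actions. In the base case $\alpha = \sigma_i\vec{\psi}$ with each $\psi_r \in \NF$: the only $\beta$ with $\alpha\arrowA\beta$ is of the form $\sigma_j\vec{\psi}$ with $\sigma_i\arrowA\sigma_j$ in $\bSigma$, and since the precondition tuple is unchanged and was already in $\NF$, $\beta$ lies in $\NF$. For the inductive step, a normal-form action of depth $k\geq 2$ is of the form $\alpha'\then\sigma_{i_k}\vec{\psi^k}$ where $\alpha'$ is a normal-form action of depth $k-1$. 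By the second clause in the definition of $\arrowA$ on action terms, $\beta$ must have the shape $\beta'\then\sigma'$ with $\alpha'\arrowA\beta'$ and $\sigma_{i_k}\vec{\psi^k}\arrowA\sigma'$; the induction hypothesis gives $\beta' \in \NF$ and the base case gives $\sigma' = \sigma_{j_k}\vec{\psi^k}\in\NF$. So $\beta$ again has the form (\ref{eq-nf}) and lies in $\NF$.

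Part (3) is then almost free. By Lemma~\ref{lemma-nf} and part (4), the hypothesis that $[\alpha]\necc^*_{\CC}\psi$ is a normal form forces $\alpha$ to be a normal-form action and $\psi$ to be a normal-form sentence. Part (5) yields $\beta \in \NF$, so $[\beta]\necc^*_{\CC}\psi$ satisfies the clause of the definition of $\NF$ permitting $[\alpha]\necc^*_{\CC}\phi$, and a further application of the $\necc_A$-clause gives $\necc_A[\beta]\necc^*_{\CC}\psi \in \NF$; by Lemma~\ref{lemma-nf} both are normal forms. The only mildly delicate point in the whole lemma is the bookkeeping in part (5)---making sure the inductive decomposition of a left-branched normal form matches the recursive clause generating $\arrowA$ on action terms---but this is straightforward once one writes out the two clauses side by side.
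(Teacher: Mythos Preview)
Your proposal is correct and follows essentially the same approach as the paper. The paper's own proof is a terse sketch---calling part~(1) ``an easy calculation,'' invoking Proposition~\ref{proposition-same-function} and the choice $\interpretation{\necc^*}(a)=a+1$ for part~(2), and declaring the remaining parts ``easy using the characterization of normal forms of Lemma~\ref{lemma-nf}''---and you have simply filled in those details faithfully, including the explicit left-branching induction for part~(5) that the paper leaves implicit.
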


\begin{proof}
Part~(\ref{arithmeticineq})
 is an easy calculation.  
In part~(\ref{anotherarithfact}) we use the fact that
$\interpretation{\alpha}$ and $\interpretation{\beta}$ are the same number,
by Proposition~\ref{proposition-same-function}.
  We use the fact here (and here only) that
$\interpretation{\necc^*}(a) = a + 1$ rather than $\interpretation{\necc^*}(a) = a$.
The remaining parts are also easy using 
the characterization of normal forms of Lemma~\ref{lemma-nf}.
\end{proof}

\paragraph{The function $f(\phi)$}
The last piece of business in this section is to 
define the function $f(\phi)$ from Lemma~\ref{lemma-FL}
and then to prove the required properties.

\begin{definition}
For $t$ an action or a sentence, 
let $s(t)$ be  the set of subsentences of $t$,
including $t$ itself (if $t$ is a sentence). 
  This includes all sentences occurring in
actions which occur in $\phi$ and their subsentences.
We define a function $\map{f}{\NF}{\pow(\NF)}$
by recursion on the wellfounded relation $<$ as follows:
$$\begin{array}{lcll}
f(p) & \quadeq & & \set{p} \\
f(\nott \phi) & \quadeq & &f(\phi)\cup  \set{\nott\phi}\\
f(\phi\andd\psi) & \quadeq &  &  f(\phi)\cup f(\psi)
\cup\set{\phi\andd\psi}\\
f(\necc_A \phi) & \quadeq &  &  f(\phi)\cup\set{\necc_A\phi}
 \\
f(\necc_{\BB}^* \phi) & \quadeq &  & f(\phi)  \cup \set{\necc_{\BB}^*\phi}
 \cup \set{\necc_A \necc_{\BB}^*\phi : A\in \BB}
\\
f([\alpha]\necc_{\CC}^*\phi) & \quadeq & \bigcup &
\set{ s(\necc_A [\beta]\necc_{\CC}^* \phi)
 \ : \ 
\alpha\rightarrow_{\CC}^{*}  \beta \ \&\ A\in \CC}\\
 & & \cup & \bigcup \set{f(\psi)\ :\ 
(\exists \beta) \alpha\rightarrow_{\CC}^* \beta \ \&\ \psi\in s(\beta)} \\
 & & \cup & \bigcup \set{f(\nf(\Pre(\beta)))\ :\ 
\alpha\rightarrow_{\CC}^* \beta } \\
 & & \cup  &
f(\necc_{\CC}^* \phi)\\
& & \cup & \bigcup \set{f(\nf([ \beta ]\phi)) \ : \
\alpha\rightarrow_{\CC}^{*}  \beta
}\\
\end{array}
$$
The definition makes sense because
the calls to $f$ on the right-hand sides are all $<$ the
arguments on the left-hand sides; see Lemma~\ref{lemma-property-nfs}.)
\end{definition}

\begin{lemma} For all $\phi\in\NF$:
\begin{enumerate}
\item $\phi\in f(\phi)$.
\label{basicform-cases}
\item $f(\phi)$ is a finite set of normal form sentences.
\label{part-nf}
\item If $\psi\in f(\phi)$, then $s(\psi)\subseteq f(\phi)$.
\label{part-s-technical}
\item
If $\psi\in f(\phi)$, then $f(\psi)\subseteq f(\phi)$.
\label{part-necc-lemma}
\item If $[\justplain{\gamma}]\necc_{\CC}^*\chi\in f(\phi)$,
$\justplain{\gamma}\rightarrow_{\CC}^{*}  \justplain{\delta}$, 
and $A\in \CC$,   then
$f(\phi)$ also contains
$\necc_A [\justplain{\delta}]\necc_{\CC}^*\chi$,
$[\justplain{\delta}]\necc_{\CC}^*\chi$,
$\nf(\Pre(\delta))$,
and $\nf([\justplain{\delta}]\chi)$.
\label{part-crucial-technical}
\end{enumerate}
\label{lemma-technical}
\end{lemma}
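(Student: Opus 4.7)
The plan is a simultaneous induction on the well-order $<$, treating all five parts together and following the case split in the definition of $f$. Each recursive call $f(\psi)$ on the right-hand sides is at some $\psi<\phi$ by Lemma~\ref{lemma-property-nfs}, so the induction hypothesis applies. Part~(\ref{basicform-cases}) is immediate from inspection: every clause explicitly adjoins $\phi$ itself. Part~(\ref{part-nf}) splits into two checks: finiteness follows from the induction hypothesis (each $f(\psi)$ with $\psi<\phi$ is finite), the local finiteness of $\bOmega$ (Proposition~\ref{proposition-locallyfinite}) bounding $\{\beta:\alpha\rightarrow_\CC^*\beta\}$, and the finiteness of $s(\cdot)$; while the normal-form property holds because every sentence adjoined on the right is either a subsentence of a normal form (hence a normal form by Lemma~\ref{lemma-property-nfs}) or an $\nf$-output (hence a normal form by definition).

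Parts~(\ref{part-s-technical}) and~(\ref{part-necc-lemma}) are handled clause by clause. In the boolean and modal cases the defining clauses explicitly include both $s$-closures and $f$-closures of the constituent sentences, so the inductive hypothesis closes these up immediately. In the case $\phi=[\alpha]\necc_\CC^*\psi$, one checks each of the five components of the union: clauses (a) and (b) together contain the $s$-closure of every sentence explicitly adjoined, giving (\ref{part-s-technical}); and every $f$-call appearing on the right is itself closed under $f$ by induction hypothesis, giving (\ref{part-necc-lemma}).

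Part~(\ref{part-crucial-technical}) is the substantive content. Fix $[\justplain{\gamma}]\necc_\CC^*\chi\in f(\phi)$, a simple action $\delta$ with $\justplain{\gamma}\rightarrow_\CC^*\delta$, and $A\in\CC$. In the boolean and modal cases, $[\justplain{\gamma}]\necc_\CC^*\chi$ cannot be the outer adjoined sentence (wrong outermost symbol), so it lies in $f(\psi)$ for some subsentence $\psi<\phi$, and the induction hypothesis on $\psi$ yields all four required elements inside $f(\psi)\subseteq f(\phi)$. The decisive case is $\phi=[\alpha]\necc_{\CC'}^*\psi$. Here $[\justplain{\gamma}]\necc_\CC^*\chi$ enters $f(\phi)$ through one of the five defining clauses. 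If it enters through clause~(a) as the distinguished sentence $[\beta]\necc_{\CC'}^*\psi$ itself (so $\CC=\CC'$, $\justplain{\gamma}=\beta$, $\chi=\psi$, and $\alpha\rightarrow_{\CC'}^*\beta$), then by transitivity $\alpha\rightarrow_\CC^*\delta$, and the four required elements $\necc_A[\delta]\necc_\CC^*\chi$, $[\delta]\necc_\CC^*\chi$, $\nf(\Pre(\delta))$, and $\nf([\delta]\chi)$ are supplied directly by clauses~(a), (c), and~(e) with $\delta$ substituted for $\beta$. Otherwise $[\justplain{\gamma}]\necc_\CC^*\chi$ is a strict subsentence of one of the sentences explicitly named in clause~(a) or (b), or lies in $f(\necc_{\CC'}^*\psi)$, $f(\nf(\Pre(\beta)))$, or $f(\nf([\beta]\psi))$, and the induction hypothesis on that strictly smaller sentence yields the conclusion.

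The main obstacle will be the bookkeeping in the decisive sub-case above, where the four ``successor'' sentences for $\delta$ must be produced without a further recursive call—they have to be visible already in the very clause of $f([\alpha]\necc_{\CC'}^*\psi)$ that first introduced $[\beta]\necc_{\CC'}^*\psi$. This is exactly why the definition of $f$ on the $[\alpha]\necc_{\CC}^*\phi$ case ranges over \emph{all} $\beta$ with $\alpha\rightarrow_\CC^*\beta$ rather than only $\beta=\alpha$, and why clauses (c) and (e) appear as uniform families indexed by such $\beta$; verifying part~(\ref{part-crucial-technical}) is the payoff that motivates this design.
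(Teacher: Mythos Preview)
Your overall plan (induction on $<$, case analysis following the definition of $f$) matches the paper's. There is, however, a gap in your treatment of part~(\ref{part-necc-lemma}), and your route to part~(\ref{part-crucial-technical}) differs from the paper's in a way worth noting.

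For part~(\ref{part-necc-lemma}) in the case $\phi=[\alpha]\necc_{\CC}^*\psi$, you write that ``every $f$-call appearing on the right is itself closed under $f$ by induction hypothesis.'' But clause~(a) is not an $f$-call: it contributes the sentences in $s(\necc_A[\beta]\necc_{\CC}^*\psi)$ directly. When the element under consideration is $[\beta]\necc_{\CC}^*\psi$ itself (or $\necc_A[\beta]\necc_{\CC}^*\psi$) for some $\beta$ with $\alpha\rightarrow_{\CC}^*\beta$, you must show $f([\beta]\necc_{\CC}^*\psi)\subseteq f([\alpha]\necc_{\CC}^*\psi)$, and there is no smaller $f$-call already containing this element to which the induction hypothesis applies. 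This is precisely where transitivity of $\rightarrow_{\CC}^*$ enters: every $\beta'$ with $\beta\rightarrow_{\CC}^*\beta'$ also satisfies $\alpha\rightarrow_{\CC}^*\beta'$, so each of the five clauses defining $f([\beta]\necc_{\CC}^*\psi)$ is contained in the matching clause of $f([\alpha]\necc_{\CC}^*\psi)$. You do invoke transitivity later, in your part~(\ref{part-crucial-technical}) argument, so the idea is present; it simply needs to appear here in (\ref{part-necc-lemma}) as well. The paper makes exactly this move.

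Once part~(\ref{part-necc-lemma}) is fully established, the paper dispatches part~(\ref{part-crucial-technical}) without any further induction: if $[\gamma]\necc_{\CC}^*\chi\in f(\phi)$, then the four required sentences lie in $f([\gamma]\necc_{\CC}^*\chi)$ directly from the definition of $f$ (take $\beta=\delta$ in clauses (a), (c), (e), using $\gamma\rightarrow_{\CC}^*\delta$), and part~(\ref{part-necc-lemma}) gives $f([\gamma]\necc_{\CC}^*\chi)\subseteq f(\phi)$. Your direct case analysis for (\ref{part-crucial-technical}) is correct but duplicates the transitivity bookkeeping that (\ref{part-necc-lemma}) already requires; the paper's route factors that work out once.
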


\begin{proof}
Part (\ref{basicform-cases}) is by cases on $\phi$. The only
interesting case is for $[\alpha]\necc_{\CC}^* \phi$,
and this is a subsentence of $\necc_A [\alpha]\necc_{\CC}^* \phi$
for any $A\in \CC$.

All of other the parts are by induction on $\phi$ in the well-order $<$.
For part (\ref{part-nf}), we use 
Lemma~\ref{lemma-property-nfs}.

In  part  (\ref{part-s-technical}),
we  argue by induction on normal forms.
 
The result is immediate when $\phi$ is an atomic sentence  $p$,
and
the induction steps for $\nott$, $\andd$, and $\necc_A$ are easy. 
For $\necc^*_{\BB}\phi$, 
note that since $\phi < \necc^*_{\BB}\phi$, our induction
hypothesis implies the result for $\phi$; we verify it
for $\necc^*_{\BB}\phi$.
When $\psi$ is $\necc^*_{\BB}\phi$, then 
$$s(\psi) \quadeq  s(\phi) \cup  \set{\necc^*_{\BB}\phi}  \quad \subseteq \quad 
 f(\phi) \cup  \set{\necc^*_{\BB}\phi} 
\quad\subseteq\quad f(\necc^*_{\BB}\phi).
$$
And then, when
$\psi$ is $\necc_A \necc^*_{\BB}\phi$ for some $A\in\BB$,
we have
$$
s(\psi)   \quadeq 
 s(\necc^*_{\BB}\phi) \cup\set{\necc_A \necc^*_{\BB}\phi}
\quad\subseteq\quad
f(\necc^*_{\BB}\phi).
$$
To complete part (\ref{part-s-technical}), we consider
$[\alpha]\necc^*_{\CC}\phi$.  
If there is some $\chi < [\alpha]\necc^*_{\CC}\phi$
such that $\psi\in f(\chi)$
and $f(\chi)\subseteq f([\alpha]\necc^*_{\CC}\phi)$, 
then we argue as follows:  by induction hypothesis,
$s(\psi)\subseteq f(\chi)$; and by
hypothesis $f(\chi)\subseteq f([\alpha]\necc^*_{\CC}\phi)$.
This covers all of the cases except for $\psi$ a subsentence
of $ \necc_A[\beta]\necc_{\CC}^*\phi$.
And here $s(\psi)\subseteq s(\necc_A[\beta]\necc_{\CC}^*\phi)\subseteq
 f([\alpha]\necc^*_{\CC}\phi)$.

In  part  (\ref{part-necc-lemma}),
we again  argue by induction on normal forms.
The result is immediate when $\phi$ is an atomic sentence  $p$.
The induction steps for $\nott$, $\andd$, and $\necc_A$ are easy. 
For $\necc^*_{\BB}\phi$, 
note that since $\phi < \necc^*_{\BB}\phi$, our induction
hypothesis implies the result for $\phi$; we verify it
for $\necc^*_{\BB}\phi$.
 The only interesting case is when
$\psi$ is $\necc_A \necc^*_{\BB}\phi$ for some $A\in\BB$.
And in this case
$$
f(\psi)   \quadeq   f(\necc^*_{\BB}\phi) \cup\set{\necc_A \necc^*_{\BB}\phi}
\quad\subseteq\quad
f(\necc^*_{\BB}\phi).
$$
To complete part (\ref{part-necc-lemma}), we consider
$[\alpha]\necc^*_{\CC}\phi$.  
If there is some $\chi < [\alpha]\necc^*_{\CC}\phi$
such that $\psi\in f(\chi)$
and $f(\chi)\subseteq f([\alpha]\necc^*_{\CC}\phi)$, 
then we are easily done by the induction hypothesis.
This covers all of the cases except for $\psi$ of the form $[\beta]\necc_{\CC}^*\phi$
or of the form $\necc_A[\beta]\necc_{\CC}^*\phi$.
(If $\psi$ is a subsentence of some $\beta$, 
  then $f(\psi)\subseteq f([\alpha]\necc^*_{\CC}\phi)$ directly.
If $\psi$ is a subsentence of $\necc^*_{\CC}\phi$, then
by induction hypothesis, $f(\psi)\subseteq f(\necc^*_{\CC}\phi)$; 
and directly, $f(\necc^*_{\CC}\phi)\subseteq f([\alpha]\necc^*_{\CC}\phi)$.)
For  $[\beta]\necc_{\CC}^*\phi$, we
use the transitivity of $\arrow^*_{\CC}$ to check that
$f([\beta]\necc_{\CC}^*\phi) \subseteq f([\alpha]\necc_{\CC}^*\phi)$.
And now the   case of $\necc_A[\beta]\necc_{\CC}^*\phi$ follows:
$$f(\necc_A[\beta]\necc_{\CC}^*\phi) \quadeq
 f([\beta]\necc_{\CC}^*\phi)\cup\set{\necc_A[\beta]\necc_{\CC}^*\phi}
\quad\subseteq \quad
 f([\alpha]\necc_{\CC}^*\phi).$$


For part (\ref{part-crucial-technical}), 
assume that 
$[\justplain{\gamma}]\necc_{\CC}^*\chi\in f(\phi)$.
By part (\ref{part-nf}), 
$[\justplain{\gamma}]\necc_{\CC}^*\chi$ is a normal form.
The definition of $f$ implies that
$\necc_A [\justplain{\delta}]\necc_{\CC}^*\chi$,
$[\justplain{\delta}]\necc_{\CC}^*\chi$,
$\nf(\Pre(\delta))$,
and $\nf([\justplain{\delta}]\chi)$
all belong to $f([\justplain{\gamma}]\necc_{\CC}^*\chi)$,
and then part
(\ref{part-necc-lemma}) tells us that 
$f([\justplain{\gamma}]\necc_{\CC}^*\chi) \subseteq f(\phi)$.
\end{proof}

\paragraph{Summary} The purpose of this section was to prove
  Lemmas~\ref{lemma-nfs-statement}-\ref{lemma-FL} in 
Section~\ref{section-completeness}.
Lemma~\ref{lemma-nfs-statement} is Corollary~\ref{corollary-nfs}.
Lemma~\ref{lemma-property-nfs-statement} comes from
Corollary~\ref{corollary-nfs} and Lemma~\ref{lemma-property-nfs}.
Lemma~\ref{lemma-FL} is contained in Lemma~\ref{lemma-technical}.

\subsection{Strong completeness for $\lang_0(\bSigma)$}
\label{subsection-completeness-small-system}

Recall that in the languages $\lang_0(\bSigma)$, we 
do not have the common knowledge operators $\necc^*_{\BB}$ or 
the action iterations $\pi^*$.  
At this point, we can put together several results from our
previous work to obtain a completeness theorem for languages
of the form $\lang_0(\bSigma)$.  The overall ideas are:  (1)
we need a logical system which is strong enough to translate
each sentence of $\lang_0(\bSigma)$ to a normal form;
(2) this normal form will be a purely modal sentence; and (3)
the system should be at least as strong as multimodal $K$.

\begin{proposition}
Every sentence $\phi$ of $\lang_0(\bSigma)$ is provably
equivalent to a sentence $\phi^*$ in which there
are no occurrences of $\union$, $\crash$, $\circ$, or $\skipp$.
\label{prop-nocirc}
\end{proposition}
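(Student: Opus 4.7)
The plan is to combine Lemma~\ref{lemma-regard} with the normal-form machinery already developed in Section~\ref{section-proofs}. Given $\phi\in\lang_0(\bSigma)$, Lemma~\ref{lemma-regard} already produces a provably equivalent sentence $\phi^\dag$ that contains no occurrences of $\union$, $\crash$, or $\skipp$; this disposes of three of the four forbidden constructors. The remaining difficulty is eliminating the composition symbol $\circ$, which the translation $\dag$ actually leaves in place (and which the rewriting system $\RR$ introduces via rule (r8)). So the proof cannot stop at $\phi^\dag$.

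To eliminate $\circ$ as well, I would regard $\phi^\dag$ as a ground term of $\lang_1^+(\bSigma)$ and apply the rewriting system $\RR$, yielding a normal form $\nf(\phi^\dag)$. By Corollary~\ref{corollary-nfs}, $\proves \phi^\dag \iiff \nf(\phi^\dag)$, and hence $\proves \phi \iiff \nf(\phi^\dag)$. The key step is then Lemma~\ref{lemma-nf-langzero}: since the $\dag$-translation manifestly introduces no $\necc^*_{\BB}$ operators, $\phi^\dag$ still lies in the $\necc^*$-free fragment, so its normal form is a purely modal sentence. Inspecting the inductive definition of $\NF$, any element that contains no $\necc^*$ operator must in fact contain no action subterm at all, since the only clause of $\NF$ that admits actions is the one forming $[\alpha]\necc^*_{\CC}\phi$. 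In particular $\nf(\phi^\dag)$ contains no occurrence of $\union$, $\crash$, $\circ$, or $\skipp$, so setting $\phi^* = \nf(\phi^\dag)$ suffices.

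There is no substantive obstacle: the only thing to verify carefully is the passage from $\phi\in\lang_0(\bSigma)$ to $\phi^\dag$ lying in the $\necc^*$-free fragment, which is a routine induction on the definition of $\dag$ given in the proof of Lemma~\ref{lemma-regard} (each clause either commutes with a constructor already present in $\lang_0(\bSigma)$ or translates away one of $\union$, $\crash$, $\skipp$ without creating any new modality). The remaining work is entirely done by Corollary~\ref{corollary-nfs} and Lemma~\ref{lemma-nf-langzero}, so the proposition follows by stringing these results together.
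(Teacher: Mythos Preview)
Your overall argument is correct, but you have misread Lemma~\ref{lemma-regard}: the translation $\dag$ \emph{does} eliminate $\circ$. The clause $([\pi\then\rho]\phi)^\dag = ([\pi][\rho]\phi)^\dag$ replaces composition by nested application, and since the outer recursion in that proof is on the number of $\then$ symbols, all compositions are eventually unwound; the inner compositions inside basic actions $\sigma_i\vec{\psi}$ are handled when the $\psi_j$ are themselves translated. This is precisely what the Remark following Lemma~\ref{lemma-regard} records (``the proof of Lemma~\ref{lemma-regard} shows that $\then$ is also eliminable''). The paper's intended argument for Proposition~\ref{prop-nocirc} is therefore simply to take $\phi^* = \phi^\dag$; no appeal to the rewriting system is needed.

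Your detour through $\RR$ and Lemma~\ref{lemma-nf-langzero} is not wrong---$\nf(\phi^\dag)$ is indeed a purely modal sentence for exactly the reasons you give---but it is unnecessary for this proposition, and it duplicates machinery that the proof of Theorem~\ref{theorem-completeness-langaction} will invoke separately in any case. One minor virtue of your route is that it makes explicit why the temporary reintroduction of $\circ$ by rule~(r8) is harmless: compositions only survive into normal form inside subterms of the shape $[\alpha]\necc^*_{\CC}\psi$, and these cannot arise from a $\lang_0(\bSigma)$ input.
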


\begin{theorem} The logical system for $\langaction$
is strongly complete:  For all sets $T \subseteq \lang_0(\bSigma)$,
$T\proves\phi$ iff $T\models\phi$.
\label{theorem-completeness-langaction}
\end{theorem}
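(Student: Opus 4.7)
The plan is to reduce strong completeness of $\lang_0(\bSigma)$ to strong completeness of ordinary multi-modal $K$ by using the normal-form translation. The key observation is that for $\phi \in \lang_0(\bSigma)$, the normal form $\nf(\phi)$ supplied by Corollary~\ref{corollary-nfs} is a \emph{purely modal} sentence: no actions occur because Lemma~\ref{lemma-nf-langzero} shows that the rewriting system $\RR$ eliminates all action constructs from sentences in $\lang_0(\bSigma)$; and no $\necc^*_{\BB}$ occurs because $\lang_0(\bSigma)$ does not contain $\necc^*_{\BB}$ and no rule of $\RR$ introduces it. Moreover, $\proves \phi \iiff \nf(\phi)$ by Corollary~\ref{corollary-nfs}, so by the soundness results of Section~\ref{section-soundness-langzero} we have $\models \phi \iiff \nf(\phi)$.

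Given this, I would argue completeness as follows. Suppose $T \models \phi$. Let $\nf(T) = \{\nf(\psi) : \psi \in T\}$; by the equivalence just noted, $\nf(T) \models \nf(\phi)$. Now $\nf(T) \cup \{\nf(\phi)\}$ lies entirely in ordinary multi-modal $K$ (the fragment whose only non-boolean construct is $\necc_A$). Standard Henkin / canonical-model arguments give strong completeness of multi-modal $K$, hence $\nf(T) \proves_K \nf(\phi)$. But our logical system for $\lang_0(\bSigma)$ in Figure~\ref{figure-logical-system} contains all of $K$ as a subsystem (propositional tautologies, $\necc_A$-normality, modus ponens, $\necc_A$-necessitation), so the $K$-proof is verbatim a proof in our system: $\nf(T) \proves \nf(\phi)$. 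By compactness of that $K$-proof, only finitely many $\psi_1,\dots,\psi_n \in T$ are used. Since $\proves \psi_i \iiff \nf(\psi_i)$ for each $i$, a short propositional manipulation gives $\{\psi_1,\dots,\psi_n\} \proves \nf(\phi)$, and then using $\proves \phi \iiff \nf(\phi)$ again we conclude $T \proves \phi$. The converse direction (soundness) is the content of Section~\ref{section-soundness-langzero} applied to the consequence relation, which is routine.

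The main potential obstacle is making sure the steps above are legitimate. First, one must check that $\nf$ genuinely stays inside the purely modal fragment on $\lang_0(\bSigma)$-inputs, which requires tracing the rewriting rules to confirm that none of them can introduce $\necc^*_{\BB}$ or an action into a sentence that did not already contain one; the only rules that ``create'' new symbols on the right-hand side produce $\Pre$, $\iif$, $\andd$, $\necc_A$, or $\circ$, and after translating $\iif$ away the outcome is modal. Second, one must invoke Proposition~\ref{prop-nocirc} at the outset so that we can regard each $\psi\in T$ as a term of $\lang_1^+(\bSigma)$ on which $\RR$ acts. Beyond these bookkeeping points, the argument is essentially ``modal completeness plus translation,'' with no need for the filtration machinery that will be required when $\necc^*_{\BB}$ is present in $\lang_1(\bSigma)$.
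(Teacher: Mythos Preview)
Your proposal is correct and follows essentially the same route as the paper: eliminate $\union$, $\crash$, $\circ$, $\skipp$ via Proposition~\ref{prop-nocirc}, pass to normal forms (which for $\lang_0(\bSigma)$ are purely modal by Lemma~\ref{lemma-nf-langzero}), invoke strong completeness of multi-modal $K$, and transfer back using $\proves \chi \iiff \nf(\chi)$. The paper's proof is terser---it simply notes that $T\proves\nf(\chi)$ for each $\chi\in T$, so $\nf(T)\proves\nf(\phi)$ immediately yields $T\proves\nf(\phi)$---whereas you spell out the compactness step explicitly, but this is the same argument.
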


\begin{proof}
The soundness half being easy, we only need to
show that if $T\models \phi$, then $T\proves\phi$.

First, we may assume that the symbols
 $\union$, $\crash$, $\circ$, and $\skipp$
 do not occur in $T$ or
$\phi$.  Thus, we may work with
$\lang_0(\bSigma) \cap \lang_1^+(\bSigma)$.   In particular,
we have normal forms.

Next,  for each $\chi$ of $\langaction$,
$\proves \chi \iiff\nf(\chi)$.  
 As a result, $T\proves \nf(\chi)$
for all $\chi\in T$.   

Finally, write $\nf(T)$ for $\set{\nf(\chi)  :\chi\in T}$.
By soundness, $\nf(T)\models\nf(\phi)$.  Since our system extends
the standard complete proof system of modal logic,
  $\nf(T)\proves\nf(\phi)$.  
  So $T\proves\nf(\phi)$.
As we know $\proves\phi\iiff \nf(\phi)$.
So we have our desired conclusion: $T\proves\phi$.
\end{proof}

\subsection{Weak completeness for $\lang_1(\bSigma)$}
\label{subsection-completeness-big-system}

The proof of completeness 
and decidability of 
$\lang_1(\bSigma)$ is based on the filtration
argument for completeness of   PDL due to Kozen and Parikh~\cite{KP}.
We show that every consistent $\phi$ has a finite model,
and that the size of the model is recursive in $\phi$.
As in the last section, 
we depend on the results of 
 Lemmas~\ref{lemma-nfs-statement}-~\ref{lemma-FL}.

\paragraph{The set $\Delta = \Delta(\phi)$} Fix a sentence $\phi$.
We set $\Delta = f(\phi)$ (i.e., we drop $\phi$ from the notation).
This set $\Delta$ is the version for our logic of
the Fischer-Ladner closure of $\pphi$, originating 
in~\cite{FischerLadner77}.
Let  $\Delta =\set{\psi_1,\ldots, \psi_n}$.
Given a maximal consistent set $U$ of $\langactionstar$, let
$$\sem{U}\quadeq
\plusminus \psi_1 \andd\cdots \andd \plusminus \psi_n,$$
where the signs are taken in accordance with membership in  $U$.
That is, if $\psi_i\in U$, then $\psi$ is a conjunct
of $\sem{U}$; but if  $\psi_i\notin U$,
then $\nott\psi_i$ is a conjunct.

Two (standard) observations are in order.
Notice that
if 
$\sem{U}\neq \sem{V}$, then 
$\sem{U}\andd\sem{V}$
is inconsistent.
Also, for all $\psi\in \Delta$,
\begin{equation}
\proves \psi \iiff
 \bigvee\set{\sem{W} :
    W \mbox{ is maximal consistent and }  \psi\in W}.
\label{eq-standard}
\end{equation}
and
\begin{equation}
\proves \neg\psi \iiff
 \bigvee\set{\sem{W} :
    W \mbox{ is maximal consistent and }  \neg\psi\in W}.
\label{eq-standardtwo}
\end{equation}
(The reason is that
$\psi$ is equivalent to the disjunction of {\em all}
complete conjunctions which contain it.
However, some of those complete conjunctions are inconsistent
and these can be dropped
from the big disjunction. The others are consistent
and hence can be extended to maximal consistent sets.)

\begin{definition}
We consider maximal consistent sets $U$
in the logic for $\langactionstar$.
Let $U \equiv V$ iff  $\sem{U} = \sem{V}$
(iff $U\cap \Delta = V\cap \Delta$).
 The {\em filtration $\gothF$}
is the model whose worlds  are the equivalence
classes $[U]$ in this relation.
 Furthermore, we set
\begin{equation}
[U] \arrowA [V] \mbox{ in $\gothF$} \quadiff
\mbox{whenever $\necc_A\psi\in U\cap \Delta$, then also $\psi\in V$}.
\label{equation-pdl-filtration}
\end{equation}
We complete the specification of a state
model with the valuation:
\begin{equation}
\|p\|_{\FF} \quadeq \set{[U] : p\in U\cap \Delta}.
\label{eq-pdl-atomics}
\end{equation}
\end{definition}

The definitions in equations
 (\ref{eq-pdl-atomics}) and
(\ref{equation-pdl-filtration}) are independent of the 
choice of representatives: we use 
part (\ref{part-s}) of
Lemma~\ref{lemma-FL}
to see  that
if $\necc_A\chi\in \Delta$, then also $\chi\in \Delta$.

\begin{proposition} If $\sem{U}\andd\poss_A\sem{V}$ is consistent,
then $[U]\rightarrow_A[V]$. 
\label{proposition-claim}
\end{proposition}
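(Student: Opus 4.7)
The plan is to prove this by a standard filtration argument, using the fact that $\Delta = f(\phi)$ is closed under subsentences (part 1 of Lemma~\ref{lemma-FL}) to ensure that we stay inside $\Delta$ when we pass from a formula $\necc_A \psi$ to $\psi$.

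First I would fix a maximal consistent $U$ and $V$ with $\sem{U}\andd\poss_A\sem{V}$ consistent, together with an arbitrary $\necc_A\psi \in U \cap \Delta$. The goal, by the definition of $\arrowA$ on $\gothF$ given in equation~(\ref{equation-pdl-filtration}), is to show $\psi \in V$. Since $\necc_A\psi \in U \cap \Delta$, the sentence $\necc_A\psi$ appears as a (positive) conjunct of $\sem{U}$, so $\proves \sem{U}\iif \necc_A\psi$, and the assumed consistency of $\sem{U}\andd\poss_A\sem{V}$ therefore implies the consistency of $\necc_A\psi \andd \poss_A\sem{V}$.

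Next I would apply standard modal reasoning in the $\necc_A$-fragment. The $\necc_A$-normality axiom and propositional reasoning yield $\proves (\necc_A\psi \andd \poss_A\sem{V}) \iif \poss_A(\psi \andd \sem{V})$, so $\poss_A(\psi \andd \sem{V})$ is consistent, and hence so is $\psi\andd\sem{V}$. At this point I invoke the crucial closure property: since $\necc_A\psi \in \Delta$ and $\Delta$ is closed under subsentences (Lemma~\ref{lemma-FL}, part 1), we have $\psi \in \Delta$. Therefore $\psi$ is one of the $\psi_i$ appearing in $\sem{V}$, either positively or negatively. If $\nott\psi$ were a conjunct of $\sem{V}$, then $\psi\andd\sem{V}$ would be inconsistent, contradicting what we just established. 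Hence $\psi$ is a positive conjunct of $\sem{V}$, i.e.\ $\psi \in V$, as required.

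The argument is essentially routine for anyone familiar with the Kozen--Parikh style filtration for PDL. There is no real obstacle; the only subtle point is making sure the closure condition on $\Delta$ is used correctly so that $\psi \in \Delta$ whenever $\necc_A \psi \in \Delta$, which is exactly what the subsentence-closure part of Lemma~\ref{lemma-FL} guarantees. Everything else is plain modal K-reasoning plus the basic bookkeeping on the complete conjunctions $\sem{U}$ and $\sem{V}$.
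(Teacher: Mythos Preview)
Your proof is correct and follows essentially the same route as the paper's: both use subsentence closure of $\Delta$ to get $\psi\in\Delta$ from $\necc_A\psi\in\Delta$, then combine $\proves\sem{U}\iif\necc_A\psi$ with information about $\sem{V}$ via basic $K$-reasoning. The only cosmetic difference is that the paper argues by contradiction (assume $\psi\notin V$, derive $\proves\sem{V}\iif\neg\psi$, then obtain the inconsistent $\necc_A\psi\andd\poss_A\neg\psi$), whereas you argue directly using $\proves(\necc_A\psi\andd\poss_A\sem{V})\iif\poss_A(\psi\andd\sem{V})$ and then the consistency of $\psi\andd\sem{V}$; these are two phrasings of the same modal step.
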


\begin{proof}
 Assume $\necc_A\psi\in U\cap\Delta$ and toward a contradiction that
$\psi\not\in V$.
   Since $\psi \in\Delta$ and $\neg\psi\in V$, we have
$\proves\sem{V}\iif\neg\psi$. Thus,
$\proves\poss_A\sem{V}\iif\poss_A\neg\psi$, and so $\proves
\sem{U}\andd\poss_A\sem{V}\iif\necc_A\psi\andd\poss_A\neg\psi$.
Hence $\sem{U}\andd\poss_A\sem{V}$ is inconsistent. 
\end{proof}

\begin{definition} 
Let $\pair{\alpha}\poss_{\CC}^*\psi$ be a normal form.
A {\em good path from $[V_0]$
for $\pair{\alpha}\poss_{\CC}^*\psi$}
is a  path in $\gothF$
\begin{equation}
[V_0]\quad\rightarrow_{A_1}\quad [V_1]\quad\rightarrow_{A_2}\quad
    \cdots\quad \rightarrow_{A_{k-1}} \quad
[V_{k-1}] \quad\rightarrow_{A_k} \quad [V_k]
\label{goodpath}
\end{equation}
such that $k\geq 0$,  each $A_i\in \CC$, and such that
there exist actions
$$\alpha \ = \ \alpha_0\quad\rightarrow_{A_1}\quad
   \alpha_1\quad\rightarrow_{A_2}\quad
    \cdots\quad \rightarrow_{A_{k-1}} \quad
\alpha_{k-1} \quad\rightarrow_{A_k} \quad
\alpha_k$$
such that
  $\Pre(\alpha_i)\in V_i$
 for all $0\leq i\leq k$, 
 and
$\pair{\alpha_k}\psi\in V_k$.
\end{definition}

The idea behind a good path comes from considering 
Lemma~\ref{proposition-reduction-diamond-star} in $\FF$.
Of course, the special case of that result would
require that $\pair{\FF, [V_i]} \models\pre(\alpha_i)$
rather than   $\pre(\alpha_i)\in V_i$,
and similarly for $\pair{\alpha_k}\psi$ and $V_k$.
The exact formulation above was made in order that 
the Truth Lemma will go through 
for sentences of the form $\pair{\alpha}\poss_{\CC}^*\psi$
(see the final paragraphs of the proof of Lemma~\ref{lemma-truth-lemma-pdl}).

\rem{
In connection with this notion and the work below, we note a few facts.
First, if $\pair{\alpha}\poss_{\CC}^*\psi$ is a normal form,
so is $[\alpha]\necc_{\CC}^*\psi$, and vice-versa.
Second, we remind the reader that $\proves \phi\iiff \nf(\phi)$ for
all  $\phi$.  It follows from this that a consistent
set cannot contain both $\nf(\phi)$ and $\nf(\nott\phi)$ for any
$\phi$.
}

\rem{
\begin{lemma} 
Let $\necc_A\psi\in\Delta$.  If $[U]\arrowA [V]$
and $\necc_A\psi\in U$, then $\psi\in V$.
\label{lemma-first-kozen-parikh}
\end{lemma}

\begin{proof}
 If not, then
 $\nott\psi\in V$.
So $\proves\poss_A\sem{V}\iif \poss_A\nott\psi$.
Also, since $\necc_A\psi\in\Delta$,
 $\proves\sem{U}\iif \necc_A\psi$.
Thus
$$\proves(\sem{U}\andd\poss_A\sem{V})
\iif  \necc_A\psi \andd \poss_A\nott\psi.$$
This would contradict the consistency of
$\sem{U}\andd\poss_A\sem{V}$.
\end{proof}
}

\begin{lemma}  Let $[\alpha]\necc_{\CC}^*\psi\in \Delta$.
If there is a good path from $[V_0]$
for  $\pair{\alpha}\poss_{\CC}^*\nott\psi$,
then  $\pair{\alpha}\poss_{\CC}^*\nott\psi\in V_0$.
\label{lemma-good-path}
\end{lemma}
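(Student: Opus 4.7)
The plan is to prove this by induction on the length $k$ of the good path, using strategically chosen applications of the Epistemic Mix Axiom, the closure properties of $\Delta = f(\phi)$ from Lemma~\ref{lemma-FL}, and the syntactic consequences already catalogued in Lemma~\ref{lemma-converse-rule}. Since $V_0$ is a maximal consistent set, our goal $\pair{\alpha}\poss_{\CC}^*\nott\psi \in V_0$ is equivalent to showing $[\alpha]\necc_{\CC}^*\psi \notin V_0$, and the definition of good path gives us $\Pre(\alpha_0) = \Pre(\alpha) \in V_0$ throughout.

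For the base case $k = 0$, the good path reduces to $\pair{\alpha}\nott\psi \in V_0$. From the Epistemic Mix Axiom we get $\proves \necc_{\CC}^*\psi \iif \psi$, hence $\proves \nott\psi \iif \poss_{\CC}^*\nott\psi$, and then by $[\alpha]$-necessitation and normality applied to the contrapositive, $\proves \pair{\alpha}\nott\psi \iif \pair{\alpha}\poss_{\CC}^*\nott\psi$, giving the conclusion.

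For the inductive step, I would pass to the tail subpath $[V_1] \rightarrow_{A_2} \cdots \rightarrow_{A_k} [V_k]$, which is a good path of length $k-1$ from $[V_1]$ for $\pair{\alpha_1}\poss_{\CC}^*\nott\psi$. Here it is crucial to check that $[\alpha_1]\necc_{\CC}^*\psi$ is again in $\Delta$: this is exactly provided by Lemma~\ref{lemma-FL} part (b) applied to $[\alpha]\necc_{\CC}^*\psi \in \Delta$, $\alpha \rightarrow_{A_1} \alpha_1$, and $A_1\in\CC$. The induction hypothesis then yields $\pair{\alpha_1}\poss_{\CC}^*\nott\psi \in V_1$, i.e.\ $[\alpha_1]\necc_{\CC}^*\psi \notin V_1$. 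Since $\necc_{A_1}[\alpha_1]\necc_{\CC}^*\psi$ is also in $\Delta$ (again by Lemma~\ref{lemma-FL}) and since $[V_0]\arrowA_1 [V_1]$, the filtration clause (\ref{equation-pdl-filtration}) forces $\necc_{A_1}[\alpha_1]\necc_{\CC}^*\psi \notin V_0$, i.e.\ $\poss_{A_1}\pair{\alpha_1}\poss_{\CC}^*\nott\psi \in V_0$.

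Finally, I would close the loop via Lemma~\ref{lemma-converse-rule}(\ref{part-hhhh}): with $A = A_1$ and $\beta=\alpha_1$ we have $\proves [\alpha]\necc_{\CC}^*\psi \andd \Pre(\alpha) \iif \necc_{A_1}[\alpha_1]\necc_{\CC}^*\psi$, so by contraposition any maximal consistent set containing both $\Pre(\alpha)$ and $\poss_{A_1}\pair{\alpha_1}\poss_{\CC}^*\nott\psi$ must also contain $\pair{\alpha}\poss_{\CC}^*\nott\psi$. Both hypotheses hold in $V_0$, completing the induction. The main obstacle is not any one derivation but the bookkeeping: one must verify at each step that every sentence appealed to lies in $\Delta$ so that the filtration clause can be invoked, and this is precisely where the careful closure properties of $f(\phi)$ established in Lemma~\ref{lemma-FL} are indispensable.
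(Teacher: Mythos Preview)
Your proof is correct and follows essentially the same route as the paper's: induction on the length of the good path, with the base case handled via Lemma~\ref{lemma-converse-rule}(\ref{part-ffff}) (you phrase it as a direct derivation from the Epistemic Mix Axiom, the paper as a contradiction via $\nf([\alpha]\psi)$), and the inductive step handled by combining the induction hypothesis, the filtration clause~(\ref{equation-pdl-filtration}) applied to $\necc_{A_1}[\alpha_1]\necc_{\CC}^*\psi\in\Delta$, and Lemma~\ref{lemma-converse-rule}(\ref{part-hhhh}). The only cosmetic difference is that the paper argues each case by contradiction and tracks normal forms $\nf(\Pre(\alpha))$, $\nf([\alpha]\psi)$ where you use the provably equivalent unnormalized versions; your observation that the $\Delta$-membership bookkeeping via Lemma~\ref{lemma-FL} is the crux is exactly right.
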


\begin{proof}
By induction on the length $k$ of the path.
If $k=0$, then $\pair{\alpha}\nott\psi\in V_0$.
If $\pair{\alpha}\poss_{\CC}^*\nott\psi\notin V_0$,
then
$[\alpha]\necc_{\CC}^*\psi\in V_0$.   By
Lemma~\ref{lemma-FL}, part (\ref{part-crucial}),
we have  $\nf([\alpha]\psi)\in V_0$.
This is a contradiction.

Assume the result for $k$, and suppose that there is a good
path from $[V_0]$
for  $\pair{\alpha}\poss_{\CC}^*\nott\psi$ of length $k+1$.
We adopt the notation from (\ref{goodpath}) for this good path.
Then there is a good path of length $k$ from $[V_1]$
for $\pair{\alpha_1}\poss_{\CC}^*\nott\psi$.
Also, $[\alpha_1]\necc_{\CC}^*\psi\in \Delta$,
by Lemma~\ref{lemma-FL}, part  (\ref{part-crucial}).
By induction hypothesis,
 $\pair{\alpha_1}\poss_{\CC}^*\nott\psi\in V_1$.

If $\pair{\alpha}\poss_{\CC}^*\nott\psi\notin V_0$,
then $[\alpha]\necc_{\CC}^*\psi\in V_0$.
$V_0$ contains
$[\alpha]\necc_{\CC}^*\psi\andd\nf(\Pre(\alpha))
\iif \necc_A[\alpha_1]\necc_{\CC}^*\psi$. 
(That is, this sentence is valid
by Lemma~\ref{lemma-converse-rule}, part (\ref{part-hhhh}).
Hence it belongs to every maximal consistent set.)
So  
$V_0$ contains $\necc_A[\alpha_1]\necc_{\CC}^*\psi$.
This sentence belongs to $\Delta$
by Lemma~\ref{lemma-FL}, part  (\ref{part-crucial}).
  Now by
definition of $\arrowA$ in $\FF$,
we see that
 $[\alpha_1]\necc_{\CC}^* \psi\in V_1$.
This is a contradiction to our observation at the end of the previous
paragraph.
\end{proof}

\begin{lemma}
If  $\sem{V_0} \andd
    \pair{\alpha}\poss_{\CC}^*\psi$
is consistent, then
there is  a good path from $[V_0]$ for
 $\pair{\alpha}\poss_{\CC}^*\psi$.
\label{lemma-crucial-newsystem}
\end{lemma}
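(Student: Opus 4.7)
The plan is to prove the contrapositive by invoking the Action Rule, which is the only mechanism in the system for producing sentences of the shape $[\alpha]\necc_\CC^*(\cdot)$. Assume there is no good path from $[V_0]$ for $\pair{\alpha}\poss_\CC^*\psi$; I aim to derive $\proves \sem{V_0}\iif [\alpha]\necc_\CC^*\nott\psi$, which by~(\ref{eq-standard}) contradicts consistency of $\sem{V_0}\andd\pair{\alpha}\poss_\CC^*\psi$. For each $\beta$ with $\alpha\rightarrow_\CC^*\beta$ (finitely many, by Proposition~\ref{proposition-locallyfinite}), define
\[
\chi_\beta \quadeq \bigvee\set{\sem{W} : W \mbox{ maximal consistent and $[W]$ admits no good path for $\pair{\beta}\poss_\CC^*\psi$}},
\]
a finite disjunction since $\FF$ has finitely many worlds. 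If I verify the two premises of the Action Rule for this family, then $\proves \chi_\alpha\iif [\alpha]\necc_\CC^*\nott\psi$; because $\sem{V_0}$ is a disjunct of $\chi_\alpha$ by hypothesis, the target implication follows.

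For the first premise $\proves\chi_\beta\iif [\beta]\nott\psi$, a disjunct $\sem{W}$ of $\chi_\beta$ forbids in particular the length-zero good path, so $\pair{\beta}\psi\notin W$ and hence $[\beta]\nott\psi\in W$ by maximal consistency. Since $\nf([\beta]\nott\psi)\in\Delta$ by Lemma~\ref{lemma-FL}(\ref{part-crucial}), this sentence is a conjunct of $\sem{W}$ (up to the canonical provable equivalence to its normal form), delivering the implication. For the second premise $\proves (\chi_\beta\andd\Pre(\beta))\iif \necc_A\chi_\gamma$ when $A\in\CC$ and $\beta\arrowA\gamma$: fix $W$ with $[W]$ in the defining set $X_\beta$ of $\chi_\beta$ and suppose toward a contradiction that $\sem{W}\andd\Pre(\beta)\andd\poss_A\nott\chi_\gamma$ is consistent. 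Expanding $\nott\chi_\gamma$ as in~(\ref{eq-standardtwo}), some $[V]\notin X_\gamma$ makes $\sem{W}\andd\poss_A\sem{V}$ consistent, so by Proposition~\ref{proposition-claim} we have $[W]\arrowA[V]$; but $[V]\notin X_\gamma$ means there is a good path from $[V]$ for $\pair{\gamma}\poss_\CC^*\psi$, which I can prepend with the edges $[W]\arrowA[V]$ and $\beta\arrowA\gamma$ (using $\Pre(\beta)\in W$ to supply the new initial precondition) to obtain a good path from $[W]$ for $\pair{\beta}\poss_\CC^*\psi$, contradicting $[W]\in X_\beta$.

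The principal obstacle is syntactic bookkeeping rather than conceptual depth: one must verify that each $\chi_\beta$ is really a sentence of $\lang_1(\bSigma)$ (yes, as a finite boolean combination of members of the finite set $\Delta\subseteq\NF$), that the auxiliary sentences $\nf(\Pre(\beta))$, $\nf([\beta]\nott\psi)$, and $\necc_A[\gamma]\necc_\CC^*\nott\psi$ all lie in $\Delta$ with their expected normal-form status — exactly what Lemma~\ref{lemma-FL}(\ref{part-crucial}) supplies — and that the equivalences among $\pair{\cdot}$, $[\cdot]$, $\Pre$, and $\necc_\CC^*$ (Lemmas~\ref{lemma-aka-general}, \ref{lemma-pfa-general}, \ref{lemma-converse-rule}) are applied coherently. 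The role of $\Pre(\beta)$ on the left of the Action Rule's second premise is precisely what powers the prepending step above, matching the $\Pre(\alpha_i)\in V_i$ requirement in the definition of a good path; without it the premise would not be provable, and this is what makes the rule, rather than a simpler induction-style rule, the right tool here.
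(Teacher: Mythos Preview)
Your approach is essentially the paper's: define $\chi_\beta$ as the disjunction of all $\sem{W}$ with $[W]$ admitting no good path for $\pair{\beta}\poss_\CC^*\psi$, verify the two premises of the Action Rule, and conclude $\proves\chi_\alpha\iif[\alpha]\necc_\CC^*\nott\psi$, contradicting consistency. The second premise is argued exactly as in the paper.

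One small point on the first premise: you argue that $[\beta]\nott\psi\in W$ and then invoke $\nf([\beta]\nott\psi)\in\Delta$ to make it a conjunct of $\sem{W}$. That membership in $\Delta$ is not guaranteed by the lemma's hypotheses as stated (Lemma~\ref{lemma-FL}(\ref{part-crucial}) yields $\nf([\delta]\chi)$, not $\nf([\delta]\nott\chi)$, and in any case presupposes $[\alpha]\necc_\CC^*(\cdot)\in\Delta$). The paper sidesteps this entirely: it shows directly that $\chi_\beta\andd\pair{\beta}\psi$ is \emph{inconsistent} by extending any such consistent conjunction to a maximal consistent $W$, observing $[W]\in S_\beta$ (since $\chi_\beta\in W$ forces $\sem{W}$ to match some disjunct), and noting that $\pair{\beta}\psi\in W$ together with $\proves\pair{\beta}\psi\iif\Pre(\beta)$ gives a length-zero good path from $[W]$. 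This yields $\proves\chi_\beta\iif[\beta]\nott\psi$ with no appeal to $\Delta$ at all, which is both cleaner and matches the lemma as stated.
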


\begin{proof}
For each $\beta$ such that
$\alpha\rightarrow_{\CC}^*\beta$,
let  $S_{\beta}$ be the (finite) set of all $[W]\in \gothF$ such
that there is {\em no} good path from $[W]$
for  $\pair{\beta}\poss_{\CC}^*\psi$.
We need to see that $[V_0]\notin S_{\alpha}$;
suppose toward a contradiction that
$[V_0]\in S_{\alpha}$.
Let $$\chi_{\beta} \quadeq \bigvee
     \set{\sem{W} : W\in S_{\beta}}.$$
Note that $\nott\chi_{\beta}$  is logically equivalent
to   $\bigvee\set{\sem{X}: [X]\in \FF \mbox{ and }
X\notin S_{\beta}}$.
Since we assumed $[V_0]\in S_{\alpha}$,
 we have $\proves\sem{V_0}\iif \chi_{\alpha}$.

We first claim for  $\beta$ such that
$\alpha\rightarrow_{\CC}^*\beta$,
 $\chi_{\beta}\andd \pair{\beta}\psi$ is inconsistent.
Otherwise, there would
be
$[W]\in S_{\beta}$ such that
 $\chi_{\beta}\andd \pair{\beta}\psi\in W$.
Note that 
by the Partial Functionality Axiom,
$\proves \pair{\beta}\psi\iif \Pre(\beta)$.
But then the one-point path $[W]$ is a good path
from $[W]$ for
$\pair{\beta}\poss_{\CC}^*\psi$.
Thus $[W]\notin S_{\beta}$, and this is a contradiction.
So indeed, $\chi_{\beta}\andd\pair{\beta}\psi$
 is inconsistent.  Therefore,
$\proves\chi_{\beta} \iif [\beta]\nott\psi$.

\medskip


We next show that for all $A\in \CC$ and all
$\beta$ such that
$\beta\rightarrow_{A}\justplain{\gamma}$,
$\chi_{\beta} \andd \Pre(\beta) \andd \poss_{A}
\nott \chi_{\justplain{\gamma}}$ is inconsistent.
Otherwise, there would be $[W]\in S_{\beta}$ 
with $\chi_{\beta}$, $\Pre(\beta)$, and $\poss_{A}
\nott \chi_{\justplain{\gamma}}$ in it. Then 
$\bigvee \set{\poss_A\sem{X} : X\not\in S_{\justplain{\gamma}}}$,
being equivalent to $\poss_A\nott\chi_{\gamma}$,
would belong to $W$. It
follows that $\poss_A\sem{X}\in W$ for some $[X]\not\in 
S_{\justplain{\gamma}}$. By 
Proposition~\ref{proposition-claim}, $[W]\rightarrow_A[X]$. 
Since $[X]\notin S_{\justplain{\gamma}}$, there
is a good path from $[X]$ for
$\pair{\gamma}\poss_{\CC}^*\psi$.   But
since $\beta\rightarrow_{A}\justplain{\gamma}$
and $W$ contains $\Pre(\beta)$,
we also have a good path from $[W]$
for $\pair{\beta}\poss_{\CC}^*\psi$.
This again contradicts $[W]\in S_{\beta}$.
As a result, for all relevant $A$, $\beta$,
and $\justplain{\gamma}$,
$\proves
\chi_{\beta} \andd \Pre(\beta) \iif \necc_{A}
\chi_{\justplain{\gamma}}.$

By the Action Rule,
$\proves\chi_{\alpha} \iif
[\alpha]\necc_{\CC}^*\nott\psi$.
Now $\proves\sem{V_0}\iif
\chi_{\alpha}$.
So  $\proves\sem{V_0}\iif
 [\alpha]\necc_{\CC}^*\nott\psi$.
This contradicts the assumption with which we began
this  proof.
\end{proof}

\begin{lemma}[Truth Lemma]
Consider a sentence $\phi$, and also the set
$\Delta = f(\phi)$.
For all $\chi\in \Delta$ and $[U]\in \gothF$: $\chi\in U$
iff\/ $\pair{\gothF,[U]}\models  \chi$.
\label{lemma-truth-lemma-pdl}
\end{lemma}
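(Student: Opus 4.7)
The plan is to prove the Truth Lemma by strong induction on the well-order $<$ of Lemma~\ref{lemma-property-nfs-statement}, ranging over normal-form sentences in $\Delta = f(\phi)$. The base cases and the Boolean cases are routine: atomic sentences use the definition of $\|p\|_{\FF}$, while $\nott\chi$ and $\chi\andd\chi'$ use maximal consistency together with the closure of $\Delta$ under subsentences (Lemma~\ref{lemma-FL}, part~\ref{part-s}). For $\necc_A\chi$, the $(\Rightarrow)$ direction follows immediately from the defining property of $\arrowA$ in $\FF$ and the IH on $\chi$. For $(\Leftarrow)$, when $\necc_A\chi\notin U$ I would rewrite $\poss_A\nott\chi\in U$ via the equivalence analogous to (\ref{eq-standardtwo}) as $\bigvee\{\poss_A\sem{W} : \nott\chi\in W\}$, pick a summand for which $\sem{U}\andd\poss_A\sem{W}$ is consistent, and invoke Proposition~\ref{proposition-claim} to get $[U]\arrowA[W]$; then apply the IH to $\chi$.

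For $\necc^*_{\BB}\chi$, I would use Epistemic Mix for $(\Rightarrow)$: each $\necc_A\necc^*_{\BB}\chi$ lies in $\Delta$ by the construction of $f$, so $\{[V]: \necc^*_{\BB}\chi\in V\}$ is $\arrowstarBB$-closed and each of its members contains $\chi$, and the IH on $\chi$ finishes the argument. For $(\Leftarrow)$ I would derive an Induction Rule for $\necc^*_{\BB}$ from the Action Rule (taking the simple action part trivially, so the rule reduces to the standard Kozen--Parikh induction step) and then apply it to $\chi_T = \bigvee\{\sem{W}: \text{no }\arrowstarBB\text{-path from } [W]\text{ reaches a world where }\chi\text{ fails}\}$, mimicking the structure of Lemma~\ref{lemma-crucial-newsystem}, in order to contradict $\necc^*_{\BB}\chi\notin U$.

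The main obstacle is the case $[\alpha]\necc^*_{\CC}\chi$. Here the strategy is to reduce both sides of the biconditional to the \emph{same} combinatorial object. On the semantic side, $[U]\models\pair{\alpha}\poss^*_{\CC}\nott\chi$ unpacks, by Lemma~\ref{proposition-reduction-diamond-star} applied to $\FF$, to the existence of a path $[U]\rightarrow_{A_1}\cdots\rightarrow_{A_k}[V_k]$ with matching actions $\alpha=\alpha_0\rightarrow_{A_1}\cdots\rightarrow_{A_k}\alpha_k$, all $A_i\in \CC$, such that $[V_i]\models\Pre(\alpha_i)$ for $i<k$ and $[V_k]\models\pair{\alpha_k}\nott\chi$. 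On the syntactic side, $[\alpha]\necc^*_{\CC}\chi\notin U$ is, by Lemmas~\ref{lemma-crucial-newsystem} and~\ref{lemma-good-path}, equivalent to the existence of a good path from $[U]$ for $\pair{\alpha}\poss^*_{\CC}\nott\chi$, i.e.\ the same combinatorial data but with the syntactic demands $\Pre(\alpha_i)\in V_i$ and $\pair{\alpha_k}\nott\chi\in V_k$. The bridge is the IH applied to $\nf(\Pre(\alpha_i))$ and $\nf([\alpha_k]\chi)$, both of which lie in $\Delta$ by Lemma~\ref{lemma-FL}, part~\ref{part-crucial}, and both of which are strictly smaller than $[\alpha]\necc^*_{\CC}\chi$ in $<$ by Lemma~\ref{lemma-property-nfs-statement}. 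Provable equivalences (the Partial Functionality consequence $\pair{\alpha_k}\nott\chi\equiv \Pre(\alpha_k)\andd\nott[\alpha_k]\chi$ from Lemma~\ref{lemma-pfa-general}, together with $\Pre(\alpha_i)\equiv\nf(\Pre(\alpha_i))$ and $[\alpha_k]\chi\equiv\nf([\alpha_k]\chi)$) let me convert between ``semantically true at $[V_i]$'' and ``syntactically in $V_i$'' for each of these lower-complexity subformulas.

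The hard part is bookkeeping rather than ingenuity: every formula to which I apply the IH must actually be in $\Delta$, and the sentences appearing in a good path or in Lemma~\ref{proposition-reduction-diamond-star} are not literally in $\Delta$ but only provably equivalent to sentences in $\Delta$. Part~\ref{part-crucial} of Lemma~\ref{lemma-FL} is exactly designed to give enough closure so that the required $\nf(\Pre(\delta))$ and $\nf([\delta]\chi)$ are present for every $\delta$ with $\alpha\rightarrow^*_{\CC}\delta$, and the strictness in Lemma~\ref{lemma-property-nfs-statement} guarantees that each such invocation of the IH is on a genuinely smaller element of $<$. Once these closure and ordering facts are lined up, both directions of the $[\alpha]\necc^*_{\CC}\chi$ case reduce to running Lemma~\ref{proposition-reduction-diamond-star} and the good-path machinery in parallel and reading off the IH at each stage.
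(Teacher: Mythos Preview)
Your proposal is correct and follows essentially the same approach as the paper's proof: induction on the well-order $<$, with the routine cases handled as you describe, the $\necc_A$ case via Proposition~\ref{proposition-claim} and the distribution of $\poss_A$ over the disjunction in~(\ref{eq-standardtwo}), the $\necc^*_{\BB}$ case by the standard Kozen--Parikh argument (which the paper also leaves as a parallel to Lemmas~\ref{lemma-good-path} and~\ref{lemma-crucial-newsystem}), and the $[\alpha]\necc^*_{\CC}\chi$ case by bridging the semantic characterization of Lemma~\ref{proposition-reduction-diamond-star} with the syntactic good-path characterization of Lemmas~\ref{lemma-good-path} and~\ref{lemma-crucial-newsystem}, using the IH on the smaller normal forms $\nf(\Pre(\alpha_i))$ and $\nf([\alpha_k]\chi)$ supplied by Lemma~\ref{lemma-FL}~(\ref{part-crucial}). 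Your bookkeeping via Lemma~\ref{lemma-pfa-general} to unpack $\pair{\alpha_k}\nott\chi$ as $\Pre(\alpha_k)\andd\nott[\alpha_k]\chi$ is exactly what is needed, and is arguably a bit cleaner than the paper's phrasing, which routes through $\nf(\nott[\alpha_k]\psi)$.
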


\begin{proof}
  We argue by induction on the wellfounded $<$ that if
$\chi\in\Delta$, then:
$\chi\in U$
iff\/ $\pair{\gothF,[U]}\models  \chi$.
 The case of $\chi$ atomic
is trivial.  Now assume this Truth Lemma for sentences
$< \chi$.  
Recall that $\Delta \subseteq\NF$, our
set of normal forms (see Section~\ref{section-normal-forms}). 
We argue by cases on $\chi$.

The cases that $\chi$ is either 
a negation or conjunction are trivial.

Suppose next that $\chi \equiv \necc_A \psi$.
Suppose $\necc_A\psi\in U$; we show $\pair{\gothF,[U]}\models\necc_A\psi$.
Let $[V]$ be such that  $[U]\arrowA [V]$.
Then by
definition of $\arrowA$,
$\psi\in V$. 
The induction hypothesis applies to $\psi$, since
$\psi < \necc_A\psi$, and since $\psi\in \Delta$
by Lemma~\ref{lemma-FL},
part (\ref{part-s}).
 So by induction hypothesis,
 $\pair{\gothF,[V]}\models\psi$.  This  gives half of our equivalence.
 Conversely,  suppose that
$\pair{\gothF,[U]}\models\necc_A\psi$.
Suppose towards a contradiction that
$\poss_A\nott\psi\in U$.
So $\sem{U}\andd\poss_A\nott\psi$ is consistent.
We use equation (\ref{eq-standardtwo}) and the fact that
  $\poss_A$ distributes over
disjunctions to see that 
$\sem{U}\andd\poss_A\nott\psi$ is logically equivalent
to  $\bigvee(\sem{U}\andd\poss_A\sem{V})$,
where the disjunction is taken over all $V$ which
contain $\nott\psi$.  Since $\sem{U}\andd\poss_A\nott\psi$
is consistent, one of the disjuncts 
$\sem{U}\andd\poss_A\sem{V}$ must be consistent.
The induction hypothesis again applies, and we use it to see that
$\pair{\gothF,[V]}\models\nott\psi$.
By Proposition~\ref{proposition-claim}, 
$[U]\arrowA [V]$. We conclude that
$\pair{\gothF,[U]}\models\poss_A\nott\psi$, and
this is a contradiction.

For 
$\chi$ of the form $\necc_{\CC}^*\psi$, we
use the standard argument for PDL
(see Kozen and Parikh~\cite{KP}).
This is based on lemmas that parallel
 Lemmas~\ref{lemma-good-path} and~\ref{lemma-crucial-newsystem}.
The work is somewhat easier than
what we do below for sentences of the form
$[\alpha]\necc_{\CC}^*\psi$,
and so we omit these details.

We conclude with the case 
when $\chi$ is a normal form  sentence of the form
$[\alpha]\necc_{\CC}^*\psi\in \Delta$.
Assume that $[\alpha]\necc_{\CC}^*\psi\in \Delta$.
First, suppose that
$[\alpha]\necc_{\CC}^*\psi\notin U$.
Then  by Lemma~\ref{lemma-crucial-newsystem},
there is  a good path from  $[U]$ for
$\pair{\alpha}\poss_{\CC}^*\nott\psi$.
We want to apply
Lemma~\ref{proposition-reduction-diamond-star}
in $\gothF$  to assert that
$\pair{\gothF,[U]}\models \pair{\alpha}\poss_{\CC}^*\nott\psi$.
Let $k$ be the length of  the good path.
For $i\leq k$, $\pre(\alpha_i)\in U_i$.
Now each $\nf(\pre(\alpha_i))$ belongs to $\Delta$
 by Lemma~\ref{lemma-FL}, part  (\ref{part-crucial}),
 and
is $< [\alpha]\necc_{\CC}^*\psi$.
So by induction hypothesis,
$\pair{\gothF,[U_i]}\models\nf(\pre(\alpha_i))$.
By soundness, $\pair{\gothF,[U_i]}\models\pre(\alpha_i)$.
We also need to check  that
$\pair{\gothF,[U_k]}\models \pair{\alpha_k}\nott\psi$.
For this, recall from
 Lemma~\ref{lemma-property-nfs-statement}
 that
$\Delta$ contains $\nf(\nott[\alpha_k]\psi)\leq
\nott[\alpha_k]\psi < [\alpha]\necc_{\CC}^*\psi$.
Since the path is good, $U_k$ contains $\pair{\alpha_k}\nott\psi$;
thus it contains $\nott[\alpha_k]\psi$; and finally it
 contains $\nf(\nott[\alpha_k]\psi)$.
By induction hypothesis,  $\pair{\gothF,[U_k]}\models\nf(\nott[\alpha_k]\psi)$.
By soundness,  $\pair{\gothF,[U_k]}\models\nott[\alpha_k]\psi$.
Thus  $\pair{\gothF,[U_k]}\models\pair{\alpha_k}\nott\psi$.
Now it does follow from
Lemma~\ref{proposition-reduction-diamond-star}
that $\pair{\gothF,[U]}\models
\pair{\alpha}\poss_{\CC}^*\nott\psi$.

Going the other way,
suppose that
 $\pair{\gothF,[U]}\models \pair{\alpha}\poss_{\CC}^*\nott\psi$.
By Lemma~\ref{proposition-reduction-diamond-star},
we get a path in $\gothF$ witnessing this.
The  argument of the previous paragraph
shows that this path is a good path
from
$[U]$ for  $\pair{\alpha}\poss_{\CC}^*\nott\psi$.
By Lemma~\ref{lemma-good-path},
$U$ contains
 $\pair{\alpha}\poss_{\CC}^*\nott\psi$.
This completes the proof.
\end{proof}

\begin{theorem} [Completeness]
For all $\pphi$, $\proves\pphi$ iff $\models\pphi$.
Moreover,  this relation is decidable.
\label{theorem-completeness-Kstar}
\end{theorem}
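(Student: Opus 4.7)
The soundness direction, $\proves\phi \Rightarrow \models\phi$, has been established already by verifying each axiom and rule of the system (Propositions~\ref{proposition-atomic-permanence-soundness}--\ref{proposition-plus-soundness} and~\ref{lemma-soundness-action}). So the real work is the completeness direction, and I would argue by contraposition: assume $\not\proves\phi$ and produce a state model and a state at which $\phi$ fails.

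The plan is to reduce to a normal form so that the filtration machinery applies. First, since $\proves\phi \iiff \nf(\phi)$ by Lemma~\ref{lemma-nfs-statement}, the hypothesis $\not\proves\phi$ gives $\not\proves\nf(\phi)$, so $\set{\nott\nf(\phi)}$ is consistent. Extend this to a maximal consistent set $U$. Let $\Delta = f(\nf(\phi))$; by Lemma~\ref{lemma-FL}, $\Delta$ is a \emph{finite} subset of $\NF$ containing $\nf(\phi)$ and closed under the operations needed by the Truth Lemma (subsentences, and the crucial closure properties in part~(\ref{part-crucial})). Form the filtration $\gothF$ of Section~\ref{subsection-completeness-big-system}, whose worlds are the $\equiv$-classes of maximal consistent sets, with arrows and valuation given by (\ref{equation-pdl-filtration}) and (\ref{eq-pdl-atomics}).

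Now apply the Truth Lemma (Lemma~\ref{lemma-truth-lemma-pdl}) to $\nf(\phi)\in\Delta$: since $\nott\nf(\phi)\in U$, consistency of $U$ forces $\nf(\phi)\notin U$, so $\pair{\gothF,[U]}\not\models \nf(\phi)$. Because $\proves\phi\iiff\nf(\phi)$ and the axioms are sound, $\nf(\phi)$ and $\phi$ denote the same epistemic proposition, whence $\pair{\gothF,[U]}\not\models\phi$, giving $\not\models\phi$. This completes the completeness half.

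For decidability, I would observe that the filtration just constructed is finite: the number of worlds is bounded by $2^{|\Delta|}$, and $|\Delta|$ is itself recursive in $\phi$ because $f$ is defined by a primitive recursion over the well-order $<$ and each clause produces a finite set (using the local finiteness of $\bOmega$, Proposition~\ref{proposition-locallyfinite}, to bound the union over $\alpha\rightarrow^*_{\CC}\beta$). This yields the finite model property with a recursive bound $N(\phi)$ on the model size, and validity is then decided by the standard procedure of checking $\phi$ on all pointed state models of size at most $N(\phi)$ — valid iff $\phi$ holds at every such point. There is no single hard step left in this theorem itself; all the genuine difficulty has been absorbed into the Truth Lemma and, beneath it, into Lemmas~\ref{lemma-good-path} and~\ref{lemma-crucial-newsystem}, where the Action Rule is used to convert the absence of good paths into derivability and where the recursive closure properties of $f(\phi)$ (Lemma~\ref{lemma-technical}) are essential. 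Thus the theorem is really a clean corollary of the preceding filtration argument, plus the observation that the same construction bounds model size recursively.
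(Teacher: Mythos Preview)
Your argument is correct and matches the paper's proof essentially line for line: both pass to $\nf(\phi)$ via Lemma~\ref{lemma-nfs-statement}, invoke the Truth Lemma on the filtration $\gothF$ built from $\Delta = f(\nf(\phi))$, and read off decidability from the recursive bound on $|\Delta|$ (the paper additionally remarks that decidability also follows from the PDL translation of Section~\ref{section-upper-bounds}). The only cosmetic difference is that the paper phrases completeness as ``consistent implies satisfiable'' rather than your contrapositive, but the content is identical.
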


\begin{proof}
 By Lemma~\ref{lemma-nfs-statement},
$\proves\phi\iiff\nf(\phi)$.
Let $\phi$ be consistent.
By the Truth Lemma,   $\nf(\phi)$ holds at some world
in the
filtration $\gothF$.
So $\nf(\phi)$ has a model; thus $\phi$ has one, too.
This establishes completeness.  For decidability, note
that the size of the filtration is computable in the
size of the original $\phi$.
(Another  proof  of decidability: 
we show in Section~\ref{section-upper-bounds}
that $\lang_1(\bSigma)$ can be
 translated into propositional dynamic logic
(PDL) fairly directly,
 and that logic is decidable. Neither argument
gives a good estimate of the complexity.)
\end{proof}

\rem{
but it doesn't  seem to have enough to belong in the paper at this point.
\subsection{Some other rewriting systems related to $\lang_1(\bSigma)$}
\label{section-RR-other}

In this section, we mention a few other results
concerning rewriting systems for  $\lang_1(\bSigma)$.

\paragraph{A rewriting system for the logic of public announcements}
In the logic of public announcements, 
the Action-Knowledge Axiom may be stated for simple announcements
in a way which avoids the exponential blowup on the right:
Rather than (\ref{eq-aka-general}), we have
\begin{equation}  
\proves [\alpha]\necc_A\phi
\iiff   (\Pre(\alpha) \iif
 \necc_A [\alpha]\phi)
\label{eq-aka-new}
\end{equation}
 This leads to a rewriting system for the logic, obtained by
taking (r1) - (r6) + (r8) in Figure~\ref{fig-RR} and a variant
of (r7):
$$
\begin{array}{llcll}
\mbox{(r7')}  &
 {} [x]\necc_A y  &  \leadsto &
      \Pre(x) \iif 
        \necc_A [x]y \\
\end{array}
$$
It turns out that there is no reason to adopt (r9) or its converse.
The resulting system may be shown to terminate by a 
{\em polynomial\/} interpretation:
$$\begin{array}{lcl}
\interpretation{p_i} & \quadeq & 1 \\
\interpretation{\nott}(a) & \quadeq &   a  + 1 \\
\interpretation{\andd}(a, b) &  \quadeq &  a  + b  + 1  \\
\interpretation{\iif}(a, b) &  \quadeq &  a  + b   + 4\\
\interpretation{\necc_A}(a) & \quadeq &  a  + 1 \\
\end{array}
\qquad
\begin{array}{lcl}
\interpretation{\necc_\BB^*}(a) & \quadeq &  a  + 1 \\
\interpretation{\app}(a, b) & \quadeq &   6ab + a + 6b + 1  \\
\interpretation{\Pre}(a) & \quadeq &  a  \\
\interpretation{\Pub}(a) 
 & \quadeq &   a +1\\
\interpretation{\then}(a, b) & \quadeq &  
   6ab + 6a + 6b + 5 \\
\end{array}
$$
Indeed, this interpretation also has the property that 
$\andd$ and $\then$ are interpreted by functions which
are both commutative and associative.

\paragraph{An LPO}

For background on the lexicographic partial order (LPO),
see, e.g., Dershowitz~\cite{dersh82} or Plaisted~\cite{plaisted}.
We construct an LPO  using a  wellorder $<$ on the symbols in
$\Delta$.
We  first fix a wellfounded relation $<$ on the function symbols.
We set   $\app$  to be greater  in $<$ than all other symbols.
We also set $\sigma_i \equiv \sigma_j$ for all $i, j$.
In all other cases, distinct function symbols are unordered.

Let $ <$ be the LPO obtained from these choices.

We remind the reader that our rewriting system $\RR$ is
given by 
$$
\begin{array}{lcll}
{}\app(\sigma_i\vec{\psi}, p)  &  \leadsto &  \psi_i \iif p  \\
{}\app(\sigma_i\vec{\psi},\nott\phi)  &  \leadsto  &  
\psi_i \iif \nott\app(\sigma_i\vec{\psi},\phi)\\
{}\app(\sigma_i\vec{\psi},\phi\andd\chi)  &  \leadsto &
 \app(\sigma_i\vec{\psi},\phi)  \andd  \app(\sigma_i\vec{\psi},\chi) \\
 {}\app(\sigma_i\vec{\psi},\necc_A\phi) &  \leadsto &
     \psi_i \iif {}\bigwedge\set{
        \necc_A \app(\sigma_j\vec{\psi},\phi):
\sigma_i\arrowA\sigma_j} \\
{} \app(\alpha\then\beta,\phi) & \leadsto & \app(\alpha,\app(\beta,\phi)) \\
\end{array}
$$

\begin{lemma}
For all  rules $\phi\leadsto \psi$ of $\RR$, $\psi <  \phi$.  
\end{lemma}

\begin{proof}
For the first part, recall that $\rightarrow < \app$.
So we use (LPO3): we must check that both $\psi_i$ and $p$
are $< \app(\sigma_i\vec{\psi}, p) $.  This is by the subterm
property.

In the second part, we argue similarly.  We must check
this time that
$\app(\sigma_i\vec{\psi},\phi) < \app(\sigma_i\vec{\psi},\nott\phi)$.
This is by (LPO1) and the fact that
$\phi < \app(\sigma_i\vec{\psi},\nott\phi)$; this last is by the subterm property.

Turning to the third part,
we check that $ \app(\sigma_i\vec{\psi},\phi)  <  \app(\sigma_i\vec{\psi},\phi\andd\chi)$
This uses  (LPO1) and the fact that
$\phi < \phi\andd\psi$, and also
$\phi <  \app(\sigma_i\vec{\psi},\phi\andd\chi)$.

Here is the proof for the fourth part.
We first check that for all $j$,
$\necc_A \app(\sigma_j\vec{\psi},\phi) < \app(\sigma_i\vec{\psi},\necc_A\phi)
$.
Since $\necc_A < \app$, we need only show that
$\app(\sigma_j\vec{\psi},\phi) < \app(\sigma_i\vec{\psi},\necc_A\phi)$ and
use (LPO3).
For this, we use (LPO1).

Here is the proof for the last part. 
Note first that $\beta < \alpha \then \beta$, and $\phi < \app(\alpha\then\beta,\phi)$.
So by (LPO1), 
$\app(\beta,\phi) <  \app(\alpha\then\beta,\phi)$.
This with $\alpha < \alpha\then \beta$ allows us to conclude
again by (LPO1) that
$\app(\alpha,\app(\beta,\phi)) < \app(\alpha\then\beta,\phi) $.
\end{proof}
}

\subsection{Extensions to the completeness theorem}
\label{section-two-extensions}

We briefly mention   extensions of the Completeness
Theorem~\ref{theorem-completeness-Kstar}.

First, consider the case of S5 (or K45) actions.  We
change our logical system by restricting to these
S5 actions, and we    add the S5 axioms to our
logical system.  We interpret this new system on
S5  models. 
It is easy to check that applying an S5 action to an S5 model
gives another S5 model. 
Further, the S5 actions are closed under composition.
Finally, if $\alpha$ is an S5 action
and $\alpha\arrow_A \beta$, then   $\beta$ also is an S5 action.
These easily imply the soundness
of the new axioms.
For completeness, we need only check that
if we assume the S5 axioms, then the filtration $\FF$
from the previous section has the property that 
each $\arrowA$ is an equivalence relation.  This is
a standard exercise in modal logic (see, e.g., Fagin et al~\cite{fhmv},
Theorem 3.3.1).
\medskip

Second, we also have completeness not only for the
languages $\lang_1(\bSigma)$, but also for
the languages in~\cite{kra} that are constructed from
{\em families\/} $\SS$ of action signatures.
This construction is most significant in the case when
$\SS$ is an infinite set of signatures; for example, $\SS$
might contain a copy of every finite signature.  In that setting,
the language $\lang(\SS)$ would not be the language of any
finite signature.  But for the (weak) completeness result
of this section, the advantage of the extended definition is
lost.
The point is that for single sentences, we may restrict attention
to a finite subset of $\SS$.   And for finite sets $\SS$,
$\lang(\SS)$ is literally the language of the coproduct signature
$$\oplus\set{\bSigma: \bSigma\in\SS}.$$
We already have completeness for such languages.

Our final extension concerns the move from actions
as we have been working them to actions which change
the truth values of atomic sentences.
If we make this move, then the axiom of Atomic Permanence
is no longer sound.  However, it is easy to formulate 
the relevant axioms.  For example, if we have an
action $\alpha$ which  effects the change $p := p\andd \nott q$,
then we would take an  axiom
$[\alpha ]p \iiff (\Pre(\alpha) \iif p\andd \nott q)$.
Having made these changes, all of the rest of the 
work we have done goes through.  In this way,
 we get a completeness
theorem for this logic.\footnote{{\bf from larry}: I think
we need more on this point}

\paragraph{Endnotes}
Special cases of Theorem~\ref{theorem-completeness-langaction}
for some of the target logics are due
to Plaza~\cite{plaza}, Gerbrandy~\cite{gerbrandy98,gerbrandyphd},
and Gerbrandy and Groeneveld~\cite{gerbrandygroeneveld}.
The proofs in these sources also go via translation to modal logic.

\section{Results on expressive power}
\label{section-expressive-power}
\label{section-expressivepower}

In this section,
we study a number of expressive power issues related to
our logics.  The four subsections are for
the most part independent.

\subsection{Translation of $\lang_1(\bSigma)$ into PDL}
\label{section-upper-bounds}

In this section $\bSigma$ is an arbitrary action signature.
In Section~\ref{section-nfs}, we saw normal forms for
$\lang_0(\bSigma)$ and $\lang_1(\bSigma)$.  
We showed in that section that every sentence in 
$\lang_1(\bSigma)$ is provably equivalent to its normal
form, and the normal forms of sentences of $\lang_0(\bSigma)$
are exactly the purely modal sentences.  This proves
that in terms of expressive power,
$\lang_0(\bSigma)$  is equivalent to
$\lang_0$, ordinary modal logic.

Further, we can show that  $\lang_1(\bSigma)$
and indeed the full language $\lang(\bSigma)$
is a sublogic of
$\lang_0^{\omega}$,
 the extension of modal logic with
countable boolean conjunction and disjunction.
The idea is contained in the following clauses:
$$
\begin{array}{lcl}
(\necc_{\BB}^* \phi)^t & \quadeq &
    \bigwedge_{\pair{A_1,\ldots,A_n}\in \BB^*}
(\necc_{A_1}\cdots \necc_{A_n}\phi)^t \\
([\alpha]\necc_{\BB}^*\psi)^t & \quadeq &
\bigwedge_{\pair{A_1,\ldots,A_n}\in \BB^*}
  ([\alpha]\necc_{A_1}\cdots \necc_{A_n}\psi)^t \\
([\pi]^*\phi)^t & \quadeq &
\bigwedge_n ([\pi]^n\phi)^t \\
 \end{array} $$
 
\noindent
It is natural to ask whether there are any finite logics  which
have been previously studied and into which
our logics can be embedded.  One possibility is the Modal
Iteration Calculus ($MIC$) introduced
in Dawar, Gr\"{a}del, and Kreutzer~\cite{dgk}.
The full language
$\lang(\bSigmaPub)$ is a sublanguage
 of $MIC$, see~\cite{miller} for
details. It is likely that this result extends to all other finite action
signatures. In another direction, we ask whether the fragments 
$\lang_1(\bSigma)$ are sublogics of previously-studied systems.

\begin{theorem}
Every sentence of $\lang_1(\bSigma)$ is equivalent to  a sentence of PDL.
\label{theorem-eq-pdl}
\end{theorem}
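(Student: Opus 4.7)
The plan is to first reduce to normal forms via Lemma~\ref{lemma-nfs-statement} (setting $\phi^t := (\nf(\phi))^t$ on arbitrary $\phi$) and then define the translation $(\cdot)^t$ from $\NF$ into PDL by recursion on the well-order $<$ of Lemma~\ref{lemma-property-nfs-statement}. I take PDL to have one atomic program $a_A$ per agent $A$, interpreted in a state model by the relation $\arrowA$. On normal forms the clauses are forced: $p^t = p$, $(\neg\phi)^t = \neg\phi^t$, $(\phi\wedge\psi)^t = \phi^t\wedge\psi^t$, $(\necc_A\phi)^t = [a_A]\phi^t$, and $(\necc^*_\CC\phi)^t = [(\bigcup_{A\in\CC} a_A)^*]\phi^t$. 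The only sentence whose translation is not immediate is $[\alpha]\necc^*_\CC\phi$.

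For this case I exploit the local finiteness of $\bOmega$ from Proposition~\ref{proposition-locallyfinite}: the set $V_\alpha = \{\gamma : \alpha \rightarrow_\CC^* \gamma\}$ is finite. On $V_\alpha$ I build a nondeterministic finite automaton: the nodes are the elements of $V_\alpha$, the start node is $\alpha$, and for every $\beta\arrowA\beta'$ with $\beta,\beta'\in V_\alpha$ and $A\in\CC$ there is an edge $\beta\to\beta'$ labelled by the PDL program $a_A\mathbin{;}?(\nf(\Pre(\beta')))^t$. Fixing each $\gamma\in V_\alpha$ in turn as the unique accepting state and applying the standard conversion of finite automata to regular expressions yields a PDL program $\pi_{\alpha,\gamma}$ recognising exactly the paths from $\alpha$ to $\gamma$ in this automaton; I prepend the test $?(\nf(\Pre(\alpha)))^t$. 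I then set
\[
([\alpha]\necc^*_\CC\phi)^t \;\quadeq\; \bigwedge_{\gamma\in V_\alpha} [\pi_{\alpha,\gamma}]\,(\nf([\gamma]\phi))^t.
\]
Each $\nf(\Pre(\beta))$ with $\beta\in V_\alpha$ and each $\nf([\gamma]\phi)$ is strictly $<$-below $[\alpha]\necc^*_\CC\phi$ by parts (1)--(3) of Lemma~\ref{lemma-property-nfs-statement} (together with Lemma~\ref{lemma-FL}, which tells us that these normal forms are built from pieces already accounted for), so the recursion is well-founded.

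Correctness is then proved by induction on $<$: for every state model $\bS$ and $s\in S$, $s\in\semantics{\phi}{\bS}$ iff $s$ satisfies $\phi^t$ under the PDL interpretation on $\bS$. The boolean and epistemic clauses are routine. For $[\alpha]\necc^*_\CC\phi$, Theorem~\ref{theorem-representation} lets me replace $\semanticsnm{\alpha}$ by $\hat\alpha$, so by Lemma~\ref{proposition-reduction-diamond-star} truth at $s$ becomes a statement about $\CC$-arrow paths in $\bS\otimes\bOmega$ starting from $(s,\alpha)$. Any such path to $(t,\gamma)$ unpacks into a $\CC$-path $\alpha=\beta_0,\dots,\beta_n=\gamma$ in $\bOmega$ together with a matching $\CC$-path $s=s_0,\dots,s_n=t$ in $\bS$ with $s_i\in\semantics{\Pre(\beta_i)}{\bS}$ at each step; this is precisely what $\pi_{\alpha,\gamma}$ was engineered to trace, the inductive hypothesis ensuring that the tests $?(\nf(\Pre(\beta)))^t$ faithfully witness the side conditions. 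The target condition $(t,\gamma)\models\phi$ in $\bS\otimes\bOmega$ is equivalent to $t\models\nf([\gamma]\phi)$, hence by induction to $t$ satisfying $(\nf([\gamma]\phi))^t$ in PDL.

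The hard part is purely organisational: arranging the recursion so that every ingredient on the right-hand side of the clause for $[\alpha]\necc^*_\CC\phi$ has strictly smaller $<$-rank than the left-hand side, and carrying the joint-path equivalence through all the inductive bookkeeping. The well-order of Lemma~\ref{lemma-property-nfs-statement} together with the local finiteness of $\bOmega$ make the first point automatic; the second reduces to unpacking the definition of the update product once the joint-path picture from Lemma~\ref{proposition-reduction-diamond-star} is available.
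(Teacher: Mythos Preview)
Your proposal is correct and follows essentially the same route as the paper's (sketch) proof: induction on the well-order $<$ restricted to normal forms, with the key case $[\alpha]\necc^*_\CC\phi$ handled by building a finite automaton on the finite set $V_\alpha$ (the paper's set $X$), converting to a PDL program via Kleene's theorem, and invoking Lemma~\ref{proposition-reduction-diamond-star} for the semantic correctness. Your write-up is in fact a bit more explicit than the paper's sketch---you spell out the conjunction over all endpoints $\gamma\in V_\alpha$ and the placement of the precondition tests, whereas the paper develops the construction only by example and leaves the endpoint quantification implicit.
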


\begin{proof} (Sketch)
We argue by induction on the wellorder $<$ introduced and studied 
in Section~\ref{section-order-on-lang-1}.  It is sufficient to show
that each sentence in the set $\NF$ of normal forms of $\lang_1(\bSigma)$
is equivalent to a sentence of PDL.  (The normal forms were introduced
in Section~\ref{section-normal-forms}, and the reader may wish
to look back at Lemma~\ref{lemma-nf}.)  We just give the main induction
step.  

Suppose that $[\alpha]\necc^*_{\CC}\phi$ is a normal form sentence.
Our induction hypothesis implies that each $\psi < [\alpha]\necc^*_{\CC}\phi$
is equivalent to some PDL sentence $\psi'$.  We shall show that
$\pair{\alpha}\poss^*_{\CC}$ is itself equivalent to some PDL sentence;
hence also $[\alpha]\necc^*_{\CC}\phi$ has this property.  For this,
we use the semantic equivalent given in Lemma~\ref{proposition-reduction-diamond-star}.
Recall first that there are only finitely many $\beta$ such that
$\alpha\arrowstar\beta$.  Let $X$ be the (finite) set of all such $\beta$ 
such that $\alpha\arrowCC\beta$.
That is, the actions $\beta$ reachable from $\alpha$ by a path
labeled by agents in the set $\CC$.
  We consider $X$ as a 
sub-action structure
of $\bOmega$.

We consider $\CC\cup X$, and we assume that this union is disjoint.
We consider the set $(\CC\cup X)^*$ of all finite words on this set.
We are interested in finite $\CC$-labeled paths through $X$ beginning 
at $\alpha$ and ending at an arbitrary element of $X$.  At this point,
we shall develop our proof only by example.

 Suppose that
$\CC = \set{A,B}$ and that 
 $X = \set{\alpha,\beta}$,
with   
$\alpha\arrowA\alpha$, $\alpha\arrowB\beta$, and $\beta\arrowA\alpha$.
Then one of the paths of interest would be 
$$ \alpha \quad  A \quad 
 \alpha \quad  B \quad  \beta \quad  A \quad \alpha
$$
(Note that this corresponds to $\alpha\arrowA \alpha\arrowB\beta\arrowA \alpha$.)
We are only interested in paths that respect the structure of $\bOmega$.
By Kleene's Theorem, the set $P$ of finite paths of this type
 beginning at our fixed action $\alpha$ is
a regular  language on $\CC\cup X$. 
In our example, $P$ is given by
the regular expression 
$$((\alpha(A\alpha)^*(B\beta{A}))^*(\epsilon + B\beta).
$$
With each such regular expression we associate a PDL program that represents it.
In our example, we would have
$$(?\Pre'(\alpha);((\pair{A};?\Pre'(\alpha))^*;(\pair{B};?\Pre'(\beta);\pair{A}))^*;
(?\true + \pair{B};?\Pre'(\beta)).
$$
\rem{stuff with \gamma below:
 Suppose that
$\CC = \set{A,B}$ and that 
 $X = \set{\alpha,\beta,\gamma}$,
with   $\beta\arrowA\gamma$,
$\gamma\arrowA\gamma$, and $\gamma\arrowB\beta$.
Then one of the paths of interest would be 
$$\gamma \quad  A  \quad  \gamma \quad  B \quad 
 \gamma \quad  B \quad  \beta \quad  A \quad \gamma
$$
(Note that this corresponds to $\gamma\arrowA \gamma\arrowB\beta\arrowA \gamma$.)
We are only interested in paths that respect the structure of $\bOmega$.
By Kleene's Theorem, the set $P$ of finite paths of this type
 beginning at our fixed action $\alpha$ is
a regular  language on $\CC\cup X$. 
To continue our example, suppose that our $\alpha$ is
$\gamma$.  Then $P$ is given by
the regular expression 
$$((\gamma(A\gamma)^*(B\beta{A}))^*(\epsilon + B\beta).
$$
With each such regular expression we associate a PDL program that represents it.
In our example, we would have
$$(?\Pre'(\gamma);((\pair{A};?\Pre'(\gamma))^*;(\pair{B};?\Pre'(\beta);\pair{A}))^*;
(?\true + \pair{B};?\Pre'(\beta)).
$$
}
(The notation $\Pre'(\gamma)$ means the PDL translation   of $\Pre(\gamma)$;
such a PDL sentence exists since $\Pre(\gamma) < [\alpha]\necc^*_{\CC}\phi$.)
Again, this is a PDL program $\pi  = \pi(X,\alpha)$ whose denotation
$\semantics{\pi}{\bS}$ in a state model $\bS$ is the set of pairs $(s,t)$ of
states such that there is  a path
$$s = s_0 \quad \arrow_{A_1} \quad
 s_1 \quad \arrow_{A_2} \quad \cdots \quad \arrow_{A_{k-1}} \quad s_{k-1}
\quad \arrow_{A_{k}} \quad s_{k}
$$
and also a sequence of actions of the same  length $k$,
$$\alpha \ = \ \alpha_0\quad\rightarrow_{A_1}\quad
   \alpha_1\quad\rightarrow_{A_2}\quad
    \cdots\quad \rightarrow_{A_{k-1}} \quad
\alpha_{k-1} \quad\rightarrow_{A_k} \quad
\alpha_k$$
such that each $A_i\in \CC$, each $s_i\in \semantics{\Pre(\alpha_i)}{\bS}$ for all
$0\leq i\leq k$, and finally such that $s_k = t$.
Consider now the PDL sentence $\pair{\pi}(\pair{\alpha_k}\phi)'$.
(As above, $(\pair{\alpha_k}\phi)'$ means the PDL translation 
of $\pair{\alpha_k}\phi$.  This exists because
$[\alpha_k]\phi < [\alpha]\necc^*_{\CC}\phi$; see  Lemma~\ref{lemma-nf}.)
A state $s$ satisfies $\pair{\pi}(\pair{\alpha_k}\phi)'$
in $\bS$ iff there is some $(s,t)\in \semantics{\pi}{\bS}$
such that $t\in \semantics{\pair{\alpha_k}\phi}{\bS}$.  Putting together our
description of  $\semantics{\pi}{\bS}$ with this, we see that
$(s,t)\in \semantics{\pi}{\bS}$ iff $s\in \semantics{\pair{\alpha}\poss^*_{\CC}}{\bS}$; see 
Lemma~\ref{proposition-reduction-diamond-star}.
\end{proof}

At this point, we know that $\lang_1(\bSigma)$ is a sublogic of
PDL.  So it is interesting to ask whether this extends to the 
full logic $\lang(\bSigma)$; recall that this last language
has the operation of {\em program\/} iteration $\pi^*$.
It turns out that $\lang(\bSigma)$ lacks the finite model
property even when $\bSigma$ is as simple as the signature
of public announcements
and indeed when the only sentenced announced publically and
repeatedly
is $\poss\true$ (see Section~\ref{nofmp} below).  
Since PDL has the finite model property,
we see that $\lang(\bSigma)$ is not a sublogic of PDL.
In fact, it also shows that 
$\lang(\bSigma)$ is not even a sublogic of the modal mu-calculus.

\subsection{$\lang_1(\bSigmaPub)$ is more expressive than
$\lang_1$}
\label{section-Cn-results}

Recall that $\lang_1$ in this paper is multi-agent modal logic
together with the common-knowledge operators $\necc^*_{\BB}$ for
sets of agents.
  Our main result here is that 
$\lang_1$  is strictly weaker than $\lang_1(\bSigmaPub)$, the logic
obtained by adding public announcements to $\lang_1$.

We  define a {\em rank} $|\phi|$ on sentences from 
$\lang_1(\bSigmaPri)$. 
Let $|p|=0$ for $p$ atomic, $|\nott \phi| = |\phi|$,
$|\phi\wedge\psi|=\max (|\phi|, |\psi |)$,
 $|\poss_A\phi|= 1+|\phi|$, for all $A\in \Agents$,
and 
$|\poss_{\BB}^*\phi|= 1+|\phi|$ for all $\BB\subseteq \Agents$.


\paragraph{Games for $\lang_1$}

The main technique in the proof is an
adaptation of Fraisse-Ehrenfeucht games to the setting
of  modal  logic. 
Let $(\bS,s)$ and $(\bT,t)$ be states; i.e.,
 model-world pairs.  By recursion on the natural
number
$n$ we define a game 
$G_n((\bS,s), (\bT,t))$. 
For $n=0$,
$\II$ immediately wins if the following holds:
for all $p\in\AtProp$,
$(\bS,s)\models p$ iff $(\bT,t)\models p$.
And if $s$ and $t$ differ on some atomic sentence,
$\I$ immediately wins.  Continuing, here is how we define
$G_{n+1}((\bS,s), (\bT,t))$.
As in the case of the $G_0$ games, we first check if 
$s$ and $t$ differ on some atomic sentence.
If they do, then $\I$ immediately wins. Otherwise,
the play continues. Now $\I$ can make two types of moves. 

\begin{enumerate}
\item A $\poss_A$-move:
$\I$ has a choice of playing from $\bS$ or from $\bT$, and also
some agent $A$.
If $\I$ chooses $\bS$, then $\I$ continues by choosing
some $s'$ such that $s\arrowA s'$ in $\bS$.
Then $\II$ replies with some $t'\in T$ such that
$t\arrowA t'$.  
Of course, if $\I$ had chosen in $\bT$, then $\II$ would have
chosen in $S$.
Either way, points $s'$ and $t'$ 
are determined, and the two players then play
$G_{n}((\bS,s'), (\bT,t'))$.

\item A $\poss^{*}_{\BB}$-move:
$\I$ plays by selecting $\bS$ (or $\bT$, but we ignore this
symmetric case below), and some set $\BB$ of agents,
and  then $\I$ continues by playing some $s'$ (say)
reachable from $s$ in the reflexive-transitive closure
$\arrowBBstar$ of $\arrowBB$; $\II$ responds with a
point $t'$
in  the other model, $\bT$, which is similarly
related to $t$.  
\end{enumerate}

\rem{
The game consists of $n+1$ rounds
(starting with round $0$). In each round players I and II pick worlds
according to the following procedure: In round $0$, I picks a world
$w_I^0$ which can be equal to either $u$ or $v$ and
II picks $w_{II}^0$ which is the corresponding world in the other model.
Now we
describe what happens in round $i+1$. For concreteness assume that the
world picked by I in round $i$ is in $U$ and therefore the world picked  
by II is in $V$. Now, I picks first a path 
$w_I^i=u_0\rightarrow u_1\rightarrow \cdots\rightarrow u_k$ 
and II responds with a
path in $V$, $w_{II}^i=v_0\rightarrow v_1\rightarrow \cdots\rightarrow
v_m$. Now 
I selects as $w_I^{i+1}$ either one of the endpoints $u_k$ or $v_m$ in
which case  II must select the other endpoint as $w_{II}^{i+1}$, or I
selects one of the $v_i$'s with $0<i<m$ and II must then select a $u_j$
with $0<j\leq k$. 
I wins the game precisely when for some $i\leq n$, $w_I^i$ and $w_{II}^i$
differ on atomic sentences. Otherwise, II wins. 
}

We write $(\bS,s)\sim_n (\bT,t)$ if $\II$
has a winning strategy in the game
$G_n((\bS,s), (\bT,t))$.   It is easy to check that
by induction on $m$
that if $(\bS,s)\sim_n (\bT,t)$
and $m < n$, then $(\bS,s)\sim_m (\bT,t)$.

\begin{proposition}
 If $(\bS,s)\sim_n (\bT,t)$,
then for all $\phi$ with $|\phi|\leq n$, 
$(\bS,s)\models \phi$ iff $(\bT,t)\models\phi$. 
\label{proposition-game}
\end{proposition}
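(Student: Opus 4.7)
The plan is to prove this by induction on the structure of $\phi$ (equivalently, on the rank $|\phi|$), using for each sentence $\phi$ the game $G_{|\phi|}$. Concretely, I would show simultaneously for all states $(\bS,s)$ and $(\bT,t)$: \emph{if $(\bS,s)\sim_{|\phi|}(\bT,t)$, then $(\bS,s)\models\phi$ iff $(\bT,t)\models\phi$}. Since the game hierarchy is monotone (a winning strategy at level $n$ restricts to one at any $m\leq n$), this implies the proposition as stated.

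For the base case, if $\phi$ is atomic then $|\phi|=0$, and the winning condition at level $0$ is precisely that $s$ and $t$ agree on all atomic sentences, so we are done. For $\nott\phi'$ and $\phi_1\andd\phi_2$, note that $|\nott\phi'|=|\phi'|$ and $|\phi_1\andd\phi_2|=\max(|\phi_1|,|\phi_2|)$, so by monotonicity of $\sim_n$ the induction hypothesis applies directly to the subsentences, and the boolean connectives are handled in the usual way.

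The interesting steps are the two modal cases. Suppose $\phi = \poss_A\phi'$, so $|\phi|=1+|\phi'|$, and assume $(\bS,s)\sim_{|\phi|}(\bT,t)$. For one direction, if $(\bS,s)\models\poss_A\phi'$, pick a witness $s'$ with $s\arrowA s'$ and $(\bS,s')\models\phi'$, and imagine that $\I$ plays a $\poss_A$-move in $\bS$ choosing this $s'$. By hypothesis, $\II$ has a winning strategy in $G_{|\phi|}((\bS,s),(\bT,t))$, which supplies some $t'$ with $t\arrowA t'$ such that $\II$ wins $G_{|\phi'|}((\bS,s'),(\bT,t'))$. The induction hypothesis applied to $\phi'$ then yields $(\bT,t')\models\phi'$, and hence $(\bT,t)\models\poss_A\phi'$. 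The reverse direction is symmetric, using that $\I$ is allowed to play in either model. The case $\phi = \poss^*_\BB\phi'$ is completely analogous, replacing the $\poss_A$-move with a $\poss^*_\BB$-move: a witnessing $\arrowstarBB$-path in one model is exhibited by $\I$, $\II$'s strategy supplies a matching point in the other model, and the induction hypothesis on $\phi'$ (of rank $|\phi|-1$) finishes the argument.

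There is no real obstacle here: this is the standard Ehrenfeucht--Fra\"iss\'e correspondence, and the design of the game $G_n$ has been tailored so that the $\poss_A$-move matches the semantics of $\poss_A$ and the $\poss^*_\BB$-move matches the semantics of $\poss^*_\BB$. The only point demanding a small amount of care is the bookkeeping of ranks: the boolean cases do not decrement $n$, so one must appeal to monotonicity $\sim_n\subseteq\sim_m$ for $m\leq n$ to keep the induction hypothesis applicable to the subsentences. I would record this monotonicity as a preliminary observation (a straightforward induction on $m$, as already noted in the excerpt) and then use it silently in the boolean steps.
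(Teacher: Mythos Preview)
Your proposal is correct and matches the paper's approach: the paper simply says ``The proof is standard, except perhaps for the easy extra step for $\poss^*$ moves,'' and what you have written is exactly that standard Ehrenfeucht--Fra\"iss\'e argument, with the $\poss^*_\BB$-move handled just as you indicate. Your explicit mention of the monotonicity of $\sim_n$ (which the paper records just before the proposition) to deal with the boolean cases is the right bookkeeping.
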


\noindent The proof is standard, except perhaps for 
the easy extra step for $\poss^*$ moves.

\rem{
\begin{proof}
The proof is by induction on $\phi$.
Let  $\phi$ be atomic.
Suppose $\pair{U,u}\sim_n \pair{V,v}$.
Then since $\II$ has  a winning strategy, the atomic
sentences satisfied by $u$ and $v$ must be the same.
So we are done in this case.

The induction steps for the boolean connectives are trivial.
For $\necc\phi$, suppose that $|\necc\phi|\leq n$,
$\pair{U,u}\sim_n \pair{V,v}$,
and $\pair{U,u}\models \necc\phi$.  Suppose towards a contradiction
that $\pair{V,v}\models\poss\nott\phi$.
Let $v'$ be such that
$v\rightarrow v'$ in $V$ and $\pair{V,v'}\models\nott\phi$.
Let $\I$ begin a play of 
$G_{n-1}(\pair{U,u}, \pair{V,v})$ by choosing to play $v'\in V$.
Then $\II$'s winning strategy responds with some $u'$
such that  $(U,u')\sim_{n-1} (V,v')$.
Since $|\phi| \leq n-1$, our induction hypothesis implies that
 $\pair{U,u'}\models\poss\nott\phi$.
This is a contradiction. 

The argument for $\necc^{*}\phi$ is similar and we leave it to the reader.
\end{proof} 
}

\bigskip

We have two results that show that public announcements add
expressive power to $\lang_1$.  The first is for
one agent on arbitrary models, and the second is
for  two agents on equivalence
relations.

In the first result, let $\Agents$ be a singleton $\set{A}$.
We drop the $A$ from the notation.

\begin{theorem}
The $\lang_1(\bSigmaPub)$ sentence
 $\pair{\Pub\ p}\poss^* q$ is not expressible in $\lang_1$, even
by a set of sentences.
\label{theorem-oneagent-inexpressive}
\end{theorem}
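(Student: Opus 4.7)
Plan. For each integer $n\geq 1$, I will construct a one-agent state model $C_n$ with two points $u_n,v_n\in C_n$ such that Player II has a winning strategy in the $n$-round game $G_n((C_n,u_n),(C_n,v_n))$ introduced in Section~\ref{section-Cn-results}---so by Proposition~\ref{proposition-game} the pointed models $(C_n,u_n)$ and $(C_n,v_n)$ agree on every $\lang_1$ formula of rank $\leq n$---yet $(C_n,u_n)\not\models\pair{\Pub\ p}\poss^* q$ and $(C_n,v_n)\models\pair{\Pub\ p}\poss^* q$. Each individual $\lang_1$ sentence has a fixed finite rank, so this already refutes expressibility by any single formula; the strengthening to sets of sentences will be obtained by passing to an infinite limit of the $C_n$.

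The model $C_n$ is a one-agent symmetric cycle of length $12n$ (undirected successor edges), with $p$ false at exactly two antipodal breakpoints and $q$ true at exactly one state in one of the two arcs cut out by the breakpoints (the ``good arc''; the other is the ``bad arc''). I place $u_n$ at the midpoint of the bad arc and $v_n$ inside the good arc at graph-distance strictly greater than $n$ from every breakpoint and from the $q$-state. Announcing $p$ deletes both breakpoints and hence splits the cycle; from $v_n$ there is a $p$-only walk to $q$ within the good arc, but every walk from $u_n$ to $q$ in $C_n$ must cross a breakpoint, which is gone in the updated model, giving clause (ii). Player II's strategy in $G_n$ is to maintain a partial isomorphism between the local views of the current positions on the two arcs: she answers a $\poss$-move by I with the analogous left/right shift on the opposite arc, and answers a $\poss^*$-move by matching atomic type (the unique $q$-state to itself, each breakpoint to a breakpoint, and a generic $p$-state to a structurally analogous $p$-state on the other arc). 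Because $u_n$ and $v_n$ begin at distance $>n$ from every special state and a $\poss$-move enlarges the visited region by at most one step, in $n$ rounds II is never forced to respond in an asymmetric region; a $\poss^*$-move either lands at the unique $q$-state (where the two subgames trivially coincide) or reduces to a shorter game at structurally identical base points, and the same strategy recurses.

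The main obstacle is the ``set of sentences'' strengthening, because $\sim_n$-equivalence for each individual $n$ does not by itself force agreement on an $\lang_1$-set $T$ containing sentences of unbounded rank. I plan to close this gap by producing a single pair of infinite pointed models that are fully $\lang_1$-equivalent yet still differ on the target: take a bi-infinite analogue $C_\infty$ (universe $\mathbb{Z}$, symmetric successor edges, two fixed $\neg p$ breakpoints, and one $q$-state placed in one of the two resulting infinite arcs), with $u_\infty,v_\infty$ deep in the bad and good infinite arcs respectively. All finite-radius local balls around $u_\infty$ and $v_\infty$ are then isomorphic, and their sets of reachable atomic types coincide, so a back-and-forth argument mirroring II's finite strategy round by round---local shifts for $\poss$-moves and atomic-type matching for $\poss^*$-moves---should yield full $\lang_1$-equivalence, while the target still fails at $u_\infty$ and holds at $v_\infty$ for the same ``no $p$-only path to $q$'' reason as in the finite case. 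The step I expect to require the most care is verifying this back-and-forth rigorously for deeply nested $\lang_1$ formulas that mix $\necc^*$ and $\poss^*$ with the single-step modalities, where a sloppy choice of response at one level could be exposed by a later probe.
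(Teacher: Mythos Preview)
Your plan for single sentences is close to the paper's: the paper also uses a cycle (directed, with one breakpoint rather than two) in which announcing $p$ disconnects the model, and its strategy for II on a $\poss^*$-move is simply to jump to the \emph{same} point as I (the whole cycle is connected, so this is always legal) and then mimic exactly. This is cleaner than your atomic-type matching and spares you from having to analyze the subgame after a $\poss^*$-move lands near a special point.

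Your approach to the set-of-sentences strengthening, however, cannot work. You want a single image-finite model $C_\infty$ (universe $\mathbb{Z}$ with successor edges, so every point has exactly two neighbors) containing points $u_\infty,v_\infty$ that are fully $\lang_1$-equivalent yet disagree on $\pair{\Pub\ p}\poss^* q$. But full $\lang_1$-equivalence entails full modal equivalence, and by the Hennessy--Milner theorem, on an image-finite model modal equivalence coincides with bisimilarity. Since every $\lang_1(\bSigmaPub)$ sentence is bisimulation-invariant (Proposition~\ref{prop-bisim-preserve-again}), bisimilar points must also agree on $\pair{\Pub\ p}\poss^* q$. So no such pair exists in your $C_\infty$. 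Concretely, your claim that ``all finite-radius local balls around $u_\infty$ and $v_\infty$ are isomorphic'' is false: each of $u_\infty,v_\infty$ sits at a \emph{fixed finite} distance from the breakpoints and from the $q$-state, so balls of sufficiently large radius see those special points at different offsets, and a modal formula of the corresponding depth separates them.

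The paper avoids this obstruction not by passing to a single limit model but by a gluing-and-switching argument. It wedges all the finite models $A_n$ together at a common root to form $(\AA,a)$, which fails $\pair{\Pub\ p}\poss^* q$. Given any candidate $\phi\in T$ with $(\AA,a)\models\nott\phi$, set $m=|\phi|$ and form $(\CC,c)$ from $\AA$ by swapping the base point of just the $m$-th wedge from $a^m_1$ to $a^m_{m+3}$. Then $(\CC,c)$ satisfies the target (via the swapped wedge), hence satisfies all of $T$, hence satisfies $\phi$; but II wins $G_m((\AA,a),(\CC,c))$ by copying I exactly in every wedge except the $m$-th and using the single-sentence strategy in that one wedge, forcing $(\AA,a)\models\phi$, a contradiction.
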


\begin{proof}
Fix a number $n$.  We first show that 
 $\pair{\Pub\ p}\poss^* q$ is not expressible by any single
sentence of   $\lang_1$ of rank $n$.
Let $A_n$ be the cycle
$$a_0\ \rightarrow \ a_1
\ \rightarrow \ \cdots\ \rightarrow \  a_{n+2} \ \rightarrow \  a_{n+3} 
\ \rightarrow \  \cdots 
a_{2n+4}\ =\ a_0.$$
We set $p$ true everywhere except $a_{n+2}$ and $q$ true only 
at $a_0$.  

Announcing $p$ means that we delete $a_{n+2}$.
So $A_n(\Pub\ p)$ splits into two disjoint pieces.
this means that in $A_n(\Pub\ p)$, $a_1$ does not satisfy
$\poss^* q$.  But $a_{n+3}$ does satisfy it.

We show that $a_1$ and $a_{n+3}$ agree 
  on
all sentences of $\lang_1$ of rank $\leq n$.
For this, we show that $\II$ has a 
winning strategy  in the $n$-round game on 
between $(A_n,a_{n+1})$ and $(A_n,a_{n+3})$.
Then we appeal to
 Proposition~\ref{proposition-game}.
$\II$'s strategy is as follows: if $\I$ ever makes a 
$\poss^*$ move, $\II$ should make a move on the other side
to the exact same point.  (Recall that $A_n$ is a cycle.)
Thereafter, $\II$ should mimic $\I$'s moves exactly.
Since the play will end with the same point in the two
structures, $\II$ wins.
But if $\I$ never makes a $\poss^*$ move, the play will
consist of $n$ $\poss$-moves. $\II$ should simply
make the same moves in the appropriate structures.
Since $a_{n+2}$ is $n+1$ 
steps from $a_1$, and $a_0 = a_{2n+4}$ is $n+1$ steps from $a_{n+3}$,
 $\II$ will win the play
in this case.

So at this point we conclude that  for all $n$,
 $\pair{\Pub\ p}\poss^* q$ is not expressible by any single
sentence of   $\lang_1$ of rank $n$. 
We conclude by extending this to show that
 $\pair{\Pub\ p}\poss^* q$ is not expressible by 
any {\em set\/} of sentences of   $\lang_1$.  Suppose towards
a contradiction
that $\pair{\Pub\ p}\poss^* q$ were equivalent to the set
$T\subseteq \lang_1$.  Consider the following models $\AA$
and $\BB$:  $\AA = \oplus_{n\geq 0}(A_n,a^n_1)$; i.e., take
the   disjoint union of the models 
$\oplus  A_n$, and then identify all points $a^n_1$.
We also rename the common $a^n_1$ point to be $a$,
and we take this as the
distinguished point.  So we consider $(\AA,a)$.
Similarly, let $\BB = \oplus_{n\geq 0}(A_n,a^n_{n+3})$.
For clarity we'll rename each point $a^m_j$ to be
$b^m_j$.  And we write $b$ for the distinguished point of $\BB$.
The construction 
insures that $(\AA,a)\models \nott\pair{\Pub\ p}\poss^*
q$ and  $(\BB,b)\models \pair{\Pub\ p}\poss^* q$.

By definition of $T$, $(\BB,b)\models \phi$ for all $\phi\in T$.
Let $\phi\in T$ be such that $(\AA,a)\models \nott \phi$.
Let $m = |\phi|$.
Let $$ \CC \quadeq \oplus
(\set{(A_n, a^n_1) : n\neq m}\cup \set{(A_m,a^m_{m+3})}).$$
Notice that we switched exactly one of the identified points.
Just as before, we'll rename the points of $\CC$ and call the overall
distinguished point $c$.  It is easy to check
that 
 $(\CC,c)\models \pair{\Pub\ p}\poss^* q$.
And since this sentence is equivalent to $T$, we also have
$(\CC,c)\models \phi$.  But we now show that
$(\AA,a)$ and $(\CC,c)$ agree on all sentences of rank $\leq m$.
This implies that $(\AA,a)\models \phi$, giving the needed contradiction.

Here is a winning strategy for $\II$ in the $m$-round game between
$(\AA,a)$ and $(\CC,c)$. 
  If $\I$ opens with anything besides $c^m_{m+3}$ or $a^m_{1}$,
$\II$ should play the corresponding point on the other side
and thereafter play in the obvious way.  If $\I$ opens with
$c^m_{m+3}$, $\II$ should play with $a^m_{1}$ and thereafter
play with basically the same strategy as in the first part
of this theorem; similarly, if $\I$ opens with $a^m_1$, $\II$
should reply $c^m_{m+3}$ and thereafter play via the strategy 
in the first part of this proof, where we dealt with single sentences.
\end{proof}

\paragraph{Partition models}
Often in epistemic  logic one is concerned with models
in which every accessibility relation $\arrowA$ is an equivalence
relation.  We can obtain a version of 
Theorem~\ref{theorem-oneagent-inexpressive} which shows that
even on this smaller class of models, public announcements add
expressive power to $\lang_1$.  However, we must use two agents:
 with one agent and an equivalence relation $\necc^*
\phi$ is equivalent to $\necc\phi$.  Thus every sentence in 
 $\lang_1(\bSigmaPub)$ on one-agent equivalence relations is
equivalent to a purely modal sentence.   

\begin{theorem}
The $\lang_1(\bSigmaPub)$ sentence  
 $\pair{\Pub\ p}\poss_{A,B}^* q$ is not expressible 
by any set of sentences of
$\lang_1$, even on the class of models in which $\arrowA$ and $\arrowB$
are equivalence relations.
\label{theorem-S5-negative}
\end{theorem}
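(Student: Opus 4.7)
My plan is to adapt the game argument of Theorem~\ref{theorem-oneagent-inexpressive} using the two-agent cyclic models $C_n$ from Section~\ref{section-Cn}, taken with $\arrowA$ and $\arrowB$ replaced by their reflexive–symmetric–transitive closures. Since the original alternating edges already pair the states into disjoint two-element blocks, this produces genuine equivalence relations, and the calculations of Section~\ref{section-Cn} carry over unchanged (reflexive loops do not alter the truth of the sentences in question). In particular we retain $(C_n,a_{3n+1})\models\pair{\Pub\ p}\poss^*_{A,B} q$ and $(C_n,a_{n+1})\not\models\pair{\Pub\ p}\poss^*_{A,B} q$.

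For a fixed rank $m$, I would choose an even $n>m$ and exhibit a winning strategy for Player~II in the $m$-round game $G_m((C_n,a_{n+1}),(C_n,a_{3n+1}))$ from Section~\ref{section-Cn-results}. The strategy has two phases. While Player~I plays only $\poss_A$- or $\poss_B$-moves, Player~II responds under the shift correspondence $a_i\leftrightarrow a_{i+2n}$: because $n$ is even, this correspondence respects the parity-dictated alternation of $A$- and $B$-edges, hence the induced partitions; and because each such round changes the index by at most one, after $k\le m<n$ rounds both current positions remain in the ``safe zones'' $\{a_{n+1-k},\ldots,a_{n+1+k}\}$ and $\{a_{3n+1-k},\ldots,a_{3n+1+k}\}$, which avoid the special points $a_1,a_{2n+1},a_{4n+1}$, so atomic agreement is preserved. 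The very first time Player~I plays a $\poss^*_{A,B}$-move to some $a_k$, Player~II jumps to the \emph{same} point $a_k$ (legitimate since every point of $C_n$ is $\poss^*_{A,B}$-reachable from every other) and thereafter copies Player~I's moves verbatim, so atomic agreement is trivial in all subsequent positions. By Proposition~\ref{proposition-game}, every $\lang_1$-sentence $\phi$ of rank at most $m$ satisfies $(C_n,a_{n+1})\models\phi\iff(C_n,a_{3n+1})\models\phi$, and thus no such $\phi$ can be equivalent to $\pair{\Pub\ p}\poss^*_{A,B} q$.

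To upgrade to \emph{sets} of sentences, I would mimic the coproduct-and-swap construction at the end of Theorem~\ref{theorem-oneagent-inexpressive}. Form $(\AA,a)=\oplus_n(C_n,a_{n+1})$ and $(\BB,b)=\oplus_n(C_n,a_{3n+1})$ by taking disjoint unions and identifying the distinguished points to $a$ and $b$ respectively; if $T\subseteq\lang_1$ is equivalent to $\pair{\Pub\ p}\poss^*_{A,B} q$, pick $\phi\in T$ failing at $(\AA,a)$ and build $(\CC,c)$ from $\AA$ by replacing the $C_{|\phi|}$-component by the corresponding $\BB$-component. Then $(\CC,c)\models\pair{\Pub\ p}\poss^*_{A,B} q$ (hence $\models\phi$), while the strategy above on the swapped component—combined with verbatim copying on every other component—witnesses $|\phi|$-bisimilarity of $(\CC,c)$ and $(\AA,a)$, giving the required contradiction.

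\textbf{Main obstacle.} The delicate part is the shift-correspondence phase: the shift by $2n$ is \emph{not} a label-preserving automorphism of $C_n$ (it carries $a_{2n+1}\mapsto a_{4n+1}$, confusing the $p$- and $q$-assignments), so the entire argument rests on showing that during the first $m<n$ rounds the play never escapes the safe zones, and that after a $\poss^*_{A,B}$-jump the copying phase is untouched by the earlier shift. The choice $n>m$ is thus tight, and the two-agent case requires slightly more care than the one-agent case because the $\poss_A/\poss_B$ reachability is richer, but the safe-zone bound on visited indices remains the same.
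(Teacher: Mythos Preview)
Your proposal is correct and follows essentially the same route as the paper: the same models $C_n$, the same distinguished points $a_{n+1}$ and $a_{3n+1}$, the same two-phase strategy (though the paper notes that during the non-starred phase $\II$ may actually respond \emph{arbitrarily}, since neither side can leave the safe zone in fewer than $n$ rounds, so your shift correspondence is a harmless over-specification), and the same deferral to the glue-and-swap construction of Theorem~\ref{theorem-oneagent-inexpressive} for the set case. One small omission: you do not say what $\II$ does on a $\poss^*_{\{A\}}$- or $\poss^*_{\{B\}}$-move; the paper observes that because $\arrowA$ and $\arrowB$ are already equivalence relations these collapse to ordinary $\poss_A$- and $\poss_B$-moves, so your shift strategy covers them without change.
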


\begin{proof}
We fix a number $N$ and first show that 
 $\pair{\Pub\ p}\poss^* q$ is not expressible by any single
sentence of   $\lang_1$ of rank $N$.
 Let $n$ be the smallest even number strictly 
larger than $N$.
Let $C_n$ be as defined in Section~\ref{section-Cn} above.
\rem{
  We consider a model $C_n$ which is a cycle of
$5n$ points 	$a_1, \ldots, a_{5n}$ arranged as follows:
$$\xymatrix{
a_1 \ar@{<->}[r]^{A} & a_2 \ar@{<->}[r]^{B} & a_3
& \cdots & a_{5n-1} \ar@{<->}[r]^{A} & a_{5n} \ar@{<->}[r]^{B} & a_1
}
$$
Since $n$ is even, for $1\leq i\leq 5$, the connection is $a_{in-1}\arrowA a_{in}\arrowB
a_{in+1}$. (We are taking subscripts modulo $5n$ here.)
 We also specify that 
$p$ is true at all points except $a_1$ and $a_{2n+1}$,
and $q$ is true {\em only\/} at $a_{4n+1}$.
}
Note that  $C_n$ does have the property that $\arrowA$
and $\arrowB$ are equivalence relations.  In addition,
$(\arrowA)^*$ is the same relation as
$\arrowA$.
So $\necc^*_A\phi$ is equivalent to 
$\necc_A\phi$ (and similarly for $B$).
Finally,  $(\arrowA\cup\arrowB)^*$ is the universal relation.
So if any point whatsoever satisfies a sentence
$\necc^*_{A,B}\phi$, then all points satisfy it.
\rem{
When $p$ is publically announced, $a_1$ and $a_{2n+1}$ 
disappear.  The cycle breaks into two disconnected
components, and in the resulting model
$C_n(\Pub\ p)$,  $a_{n+1}\models\nott\poss^*_{A,B}q$.
On the other hand, $a_{3n+1}$ and $a_{4n+1}$ are in the same
connected component after the update, so 
in $C_n(\Pub\ p)$,
 $a_{3n+1}\models\poss^*_{A,B}q$. 
Returning to $C_n$, }

The analysis of Section~\ref{section-Cn} shows that
the points $a_{n+1}$ and $a_{3n+1}$ differ on
our sentence $\pair{\Pub\ p}\poss_{A,B}^* q$.

  We continue by showing
that $\II$ has a 
winning strategy  in the $n$-round game on $C_n$  
from $a_{n+1}$ and $a_{3n+1}$.  If $\I$ ever makes a 
$\poss^*$ move, then $\II$ should move to the same
point and thereafter mimic $\I$ perfectly.   
Thus we may assume that $\I$ never makes any $\poss^*$ moves.
In this case, $\I$ will never move either point to
$a_1$, $a_{2n+1}$, or $a_{4n+1}$.  In other words, 
all points in the play will satisfy $p\andd\nott q$.
So as long as $\I$ plays $\poss$-moves, $\II$ can
follow arbitrarily.  Once again, since the game goes for $n$ rounds,
this will be a winning strategy.

At this point we know that
 $\pair{\Pub\ p}\poss^* q$ is not expressible by any single
sentence of   $\lang_1$ of rank $n > N$.  We use the same idea
as in Theorem~\ref{theorem-oneagent-inexpressive}
to show that 
 $\pair{\Pub\ p}\poss^* q$ is not expressible by any set
of sentences of  $\lang_1$.  In fact, virtually the same proof
goes through.
\end{proof}

\rem{
Further, we see what needs to be done for $\langactionstar$.
The idea is to use 
Lemma~\ref{proposition-reduction-diamond-star}.
We allow $\I$ to play $\poss$-moves, and also a new
kind of move which we
call an {\em $(\alpha,\CC)$-move}.   In a $(\alpha,\CC)$-move,
 $\I$ begins
by choosing to play either in $U$ or in $V$.
In what follows, we give the details only for  $U$.
Continuing, $\I$ declares
some action $\alpha$ and some set $\CC$ of agents.
Then $\I$ plays a 
sequence of worlds from $U$
$$u \ = \ u_0\quad\rightarrow_{A_1}\quad u_1\quad\rightarrow_{A_2}\quad
    \cdots\quad \rightarrow_{A_{k-1}} \quad
u_{k-1} \quad\rightarrow_{A_k} \quad u_k$$
where $k\geq 0$,
and also a sequence of actions of the same  length $k$,
$$\overline{\alpha} \ = \ \overline{\alpha}_0\quad\rightarrow_{A_1}\quad
   \overline{\alpha}_1\quad\rightarrow_{A_2}\quad
    \cdots\quad \rightarrow_{A_{k-1}} \quad
\overline{\alpha}_{k-1} \quad\rightarrow_{A_k} \quad
\overline{\alpha}_k$$
such that each
 $A_i\in \CC$.
Then $\II$ responds with a  sequence of worlds from $V$
$$v \ = \ v_0\quad\rightarrow_{B_1}\quad v_1\quad\rightarrow_{B_2}\quad
    \cdots\quad \rightarrow_{B_{l-1}} \quad
v_{l-1} \quad\rightarrow_{B_k} \quad v_l$$
where $l\geq 0$ (and $l$ need not be the same as $k$),
and also a sequence of actions of the same  length $l$,
$$\overline{\alpha} \ = \ \overline{\beta}_0\quad\rightarrow_{B_1}\quad
   \overline{\beta}_1\quad\rightarrow_{B_2}\quad
    \cdots\quad \rightarrow_{B_{l-1}} \quad
\overline{\beta}_{l-1} \quad\rightarrow_{B_l} \quad
\overline{\beta}_l$$
such that each
 $B_j\in \CC$.
At this point, $\I$ has two options.
$\I$ may  decide to continue with the endpoints,
and then the 
game continues with a play of
$G_n(\pair{W, w_k},\pair{V,v_l})$.  Or, $\I$ might 
decide to 
 select some 
 $v_i$ (say) with $0 <i<m$; then  $\II$ must then select a $v_j$
with $0 < j \leq k$.  At this point, the game continues with
  a play of  $G_n(\pair{W, w_i},\pair{V,v_j})$.  

In  connection with the game for $\langactionstar$, we define
a rank on normal forms for this logic.  These
are determined in Section~\ref{section-normal-forms}.
We set
$$ |[\alpha]\necc^*_{\CC}\phi|
\quadeq 1 + \max(\set{|\phi|}\cup 
 \set{ |\nf([\beta]\phi)|  :  \alpha \arrowstar_{\CC}\beta })
$$
This is a definition by recursion on the well-order $<$ of
$\langactionstar$.  We are also using
Proposition~\ref{proposition-lpo-appendix},
part~\ref{part-at-end}, and also
Lemma~\ref{lemma-normal-formal-equivalents-appendix}.
Then Lemma~\ref{lemma-game} holds when restricted
to normal forms.  The only interesting step
is the one for $[\alpha]\necc^*_{\CC}\phi$.
This follows from the argument we gave for Lemma~\ref{lemma-game},
except that we also use 
Lemma~\ref{proposition-reduction-diamond-star}.
}

\rem{
In the result below, there will be only one agent $A$, and so we omit the
letter $A$ from the notation. 
Recall that $\lang_1(\bSigmaPub)$  is modal
logic with announcements (to this $A$) and $\poss^* = \poss^*_A$ and
that $\lang_1$ is modal logic with $\necc^*$ but no announcements.

\begin{theorem}
There is a sentence of 
$\lang_1(\bSigmaPub)$ 
which cannot be expressed by
any set of sentences of $\lang_1$.
\label{theorem-negative-expressivity}
\end{theorem}

\begin{proof}
We show first that $[\Pub\ p]\poss^+q = [\Pub\ p]\poss\poss^* q $
cannot be expressed by any single sentence of $\langstar$.
(Incidentally,  the same holds for $[\Pub\ p]\poss^* q$.)
Fix a natural  number $n$.
We define  structures ${\cal A} = {\cal A}_n$ and 
${\cal B}= {\cal B}_n$ as follows.
First ${\cal B}$ has $2n + 3 $  points arranged cyclically
as 
$$ 0 \rightarrow 1 \rightarrow \cdots \rightarrow n
\rightarrow n+1 \rightarrow -n \rightarrow \cdots
 \rightarrow -1 \rightarrow 0.$$
For the atomic sentences, we set $p$ true at all points 
except $n+1$, and $q$ true only at $0$.

The structure ${\cal A}$ is a copy of ${\cal B}$ with $n$ more
points $\topofeight{1}, \ldots, \topofeight{n}$
arranged as
$$ 0\rightarrow \topofeight{1}\rightarrow \cdots 
\rightarrow \topofeight{n}\rightarrow 0
.$$
The shape of ${\cal A}$  is a figure-8.
In both structures, every point is reachable  from
every point by the transitive closure of the $\rightarrow$
relation.
At the  points $\topofeight{i}$, $p$ is true and $q$ is false.
Notice that ${1}\models [\Pub\ p]\poss^+ q$ in ${\cal A}$,
but $1\not\models[\Pub\ p]\poss^+ q$ in ${\cal B}$.

For $0\leq i\leq n$, we let 
$S_i\subseteq {{\cal A}}\times {{\cal B}}$ be the following set
$$
\begin{array}{lcll}
S_i & \quadeq & & 
   \set{(0,0), \ldots, (n,n), (n+1, n+1), (-n, -n), \ldots, (-1,-1)} \\
 & &  \ \cup\ &
\set{(\topofeight{n},-1), (\topofeight{n-1},-2)
\ldots,(\topofeight{2}, -(n-1)), (\topofeight{1},-n)  } \\
 & &  \ \cup\ &
\set{(\topofeight{1},1), \ldots, (\topofeight{n-i},n-i)  } \\
\end{array}
$$
In the case of $i = n$, then the last disjunct is empty.
Note that $S_0 \supset S_1 \supset \cdots \supset S_n$.
Also, for $0\leq i\leq n$,  every point of
one structure is related by $S_i$ to some point of the other.

\begin{claimtwo} If $0\leq i\leq n$  and   
 $(a,b)\in S_i$, then 
$\pair{\AA,a}\sim_{i} \pair{\BB,b}$.
\end{claimtwo}

\medskip
\noindent The proof is by induction on $i$.
  If $i=0$, this is due to the fact  
 that pairs in $S_0$ agree on the atomic formulas.
Assume the statement for $i$,
and that  $i+1\leq n$.
Let $(a,b)\in S_{i+1}$.    We only need to show 
that $\II$ can respond to any play and have
the resulting pair belong to $S_{i}$.
Suppose first that
 $\I$ plays a $\poss$-move.
  Suppose  also that $a = b$, so that $(a,a)$ comes
from the first subset of $S_{i+1}$.
In this case, we only need to notice that 
$(a+1, a+1)\in S_i$ if $|a| \leq n$, 
$(-n, -n)\in S_i $ if $a = n+1$, and 
$(\topofeight{1},1) \in S_i$ if $a=0$,  since $i < n$.
The case of $(a,b)$ from
 the second subset is similar.  Finally, 
if $(\topofeight{a},a)$ belongs to the third subset of $S_{i+1}$,
then $a\leq n-(i+1) = n-i-1$.  So $a+1 \leq n-i$,
and $(\topofeight{a+1}, a+1)$ belongs to the third subset
of $S_{i}$.  This tells $\II$  how to play.

\rem{
We consider the first
round of $G_{i+1}(\pair{\AA,a},\pair{\BB,b})$.

Thus, $\II$ can reply to all $\poss$-moves and preserve
$S_i$. 
By the induction  hypothesis,
  $\pair{\AA,a}\sim_{i+1} \pair{\BB,b}$ in this case.
}

We remarked above that each $S_{i}$ is a total  relation. Moreover, each
world can be reached from any other one in $\cal A$ and in $\cal B$. 
This implies that if $\I$ makes a $\poss^*$-move, $\II$ can respond.
This completes the proof of the claim.  

\medskip

It follows that $\pair{\AA_n, 1} \sim_n \pair{\BB_n, 1}$.
So by Claim 1, for each sentence $\phi\in \langstar$ and all $n\geq
|\phi|$, $\pair{\AA_n, {1}} \models\phi$ iff $\pair{\BB_n, 1} \models
\phi$.  This shows that $[\Pub\ p]\poss^{+}q$ cannot be expressed by a 
single sentence in ${\cal L}(\poss^{*})$. To prove the stronger result as
stated in Theorem~\ref{theorem-negative-expressivity}, we only need
to quote Lemma~\ref{lemma-sixone}.
\end{proof}
}

\rem{
Let $\BB^*$ be the disjoint union of all the $\BB_n$, with
all the $0$-points identified.  Then $\pair{\BB^*,0}$ does not
satisfy $[\Pub\ p]\poss^+ q$.  However, we sketch a proof that 
if $\phi\in \langstar$ has the property that
$[\Pub\ p]\poss^+ q \models \phi$, then $\pair{\BB,0}\models \phi$.
This implies our result.

Fix a sentence $\phi\in\langstar$.  Let $n = |\phi|$.
Let $\AA^*$ be $\BB$ with a new cycle of length $n+1$ added
going through $0$.  Then $\pair{\AA^*,0}\models [\Pub\ p]\poss^+ q$,
so $\pair{\AA^*,0}\models\phi$.
However, it is not hard to see that in the $\poss^*$-game
of length $n$ between $\pair{\BB^*,0}$ and $\pair{\AA^*,0}$,
player $\II$ has a winning strategy.  The argument is essentially
the same as the one we gave above.  It follows that
$\pair{\BB^*,0}\models \phi$.
}

\rem{
\subsection{$\lang_1(\bSigmaPri)$ 
is more expressive than $\lang_1(\bSigmaPub)$}
\label{section-privatepower}
\renewcommand{\langAB}{\lang_1(\bSigmaPub)}
\renewcommand{\langprivateA}{\lang_1(\bSigmaPri)}

It is natural to assume that {\em privacy\/} adds expressive power
to logics of communication.
The next result is the first result we know that establishes this.

\begin{theorem}
The $\langprivateA$-sentence $\pair{\Pri^A\ p,\true} \poss_A^*\poss_B \nott p$
cannot be expressed by any  of sentence  of  $\langAB$. 
\label{theorem-negative-definability}
\end{theorem}

\begin{proof}
Let $n\geq 1$, and recall the model $G_n$ from Section~\ref{section-Cn}.
\rem{
Let $G_n$ be the  model described as follows:
  We begin with
a cycle in $\arrowA$:
\begin{equation}
b\ 
\arrowA \ a_n\ \arrowA\ a_{n-1}\ \arrowA\ \cdots\ \arrowA\ a_2 \ \arrowA\  a_1\
\arrowA \ a_{\infty}\ \arrowA\ b\
\label{eq:cycle}
\end{equation}
We add edges   $a_i\arrowA b$ for
all $i$  (including $i =1$ and $i=\infty$), and also $a_i\arrowA a_{\infty}$
for all $x$ (again including $i =1$ and $i=\infty$).  
The only $\arrowB$ edge is 
 $a_1\arrowB b$.  (See the figure for this frame.)
The atomic sentence $p$ is true at all points except $b$.
The first thing to note is that 
in $G_n(\Pri^A\ p)$, 
$a_i\models\poss_A^*\poss_B \nott p$ for all $i < \infty$.
The relevant path is $a_i \arrowA\cdots\arrowA a_1  \arrowB b$;
the important point is that since the announcement was private, the edge
$a_1 \arrowB b$ survives the update.
On the other  hand, in this same model $G_n(\Pri^A\ p)$,
 $a_{\infty}\models\nott\poss_A^*\poss_B \nott p$.
This is  because the only path in the original model from $a_{\infty}$ to $b$ 
is  $$a_{\infty}\ \arrowA\ b\ \arrowA\ a_n\ 
\arrowA\ \cdots\ \arrowA\ a_1 \ \arrowB\ b,$$ and 
the edge $a_{\infty}\arrowA b$ is lost in the update.

\begin{figure}[t]
\fbox{
\begin{minipage}{6.0in}
$$
\xymatrix{
b \ar[r]^{A} & a_n \ar[r]^{A} & 
 & \cdots & \ar[r]^{A}& a_{i} \ar[r]^{A} \ar@/_2pc/[lllll]_{A}
\ar@/^2pc/[rrrr]^{A}
 &  \cdots & 
 & a_1  \ar[r]^{A} \ar@/^2pc/[llllllll]^{B} 
 & a_{\infty} 
}
$$
\end{minipage}
}
 \label{figure-induced}
\end{figure}
}

Let $\chi = \pair{\Pri^A\ p,\true} \poss_A^*\poss_B \nott p$.
Suppose towards a contradiction that $\chi$
 were equivalent
to
$\phi\in \langAB$.  Let  $n = 1 + |\phi|$.
As we know from our discussion of $ \poss_A^*\poss_B \nott p$,
$(G_n, a_n)\models \chi$
and $(G_n, a_\infty)\models \nott\chi$.
However, this contradicts the claim below.
In the statement, we extend our rank function from $\lang$ to $\langAB$
by adding $|[\Pub\ \phi]\psi| = \max(|\phi|,|\psi|)$.

\begin{claim}
Assume that  $1 < j \leq n$,  
$\phi\in \langAB$ and  $|\phi| < j$.
Then     $(G_n,a_j)\models \phi$
iff $(G_n,a_\infty)\models \phi$.
\label{lemma-easier}
\end{claim}

  The proof is by induction on $\phi$.
For $\phi = p$, the result is clear, as are 
the induction steps for $\nott$ and $\andd$.
For $\poss_A\phi$, suppose that 
$ a_j \models \poss_A\phi$.  Either 
$  a_{\infty} \models \phi$, in which case
$ a_{\infty} \models \poss_A\phi$, or else
 $ a_{j-1} \models \phi$.  In the latter case,
by induction hypothesis, $ a_{\infty} \models \phi$;
whence $a_{\infty} \models \poss_A\phi$. 
The converse is similar.

The case of $\poss_B\phi$ is trivial:
$a_j\models \nott\poss_B\phi$ and $a_\infty\models \nott\poss_B\phi$.

For $\poss^*_A\phi$, note that since we have a cycle
(\ref{eq:cycle}) containing
all points, the truth value of $\poss^*_A\phi$ 
does not depend on the point.
The argument for $\poss^*_{AB}$ is similar, and for $\poss^*_B$
it is the same as for $\poss_B$.

For $[\Pub\ \phi]\psi$, assume the result for $\phi$ and $\psi$,
and let $|[\Pub\ \phi] \psi|   < j$.  Then also
$|\phi| < j$ and $|\psi| < j$.
 Let $H = \set{x : x\models \phi}$ be the updated model,
and recall that 
$(G_n,x) \models [\Pub\ \phi] \psi$ iff
$x\in H$ and $(H,x) \models \psi$.
We have two cases: First, $H = G_n$.
Then
$ (G_n,x) \models [\Pub\ \phi] \psi$ iff $(G_n, x) \models \psi$.
So we are done by the induction hypothesis.  

The other case is when there is some $x\notin H$.
If  $a_k\notin H$ for some  $k \geq j$ or for $k = \infty$,
then {\em all\/} these $a_k$ do not belong to $H$.
In particular,  neither $a_j$ nor $a_{\infty}$ belong.
 And so both $a_j$ and $a_\infty$ satisfy $[\Pub\ \phi] \psi$.
If $b\notin H$, then $H$ is bisimilar to a one-point model.
This is because every $a_i\in H$ would have some $\arrowA$-successor
in $H$ (e.g., $a_{\infty}$), and there would be no $\arrowB$ edges.
So we assume $b\in H$.
Thus  $a_i\notin H$ for some $i < j$.
Let $k$ be least so that for $k \leq l \leq \infty$, $a_l\models \phi$.
Then  $1 < k \leq j$. 
Let $A_{\geq k} = \set{a_l  : k\leq l\leq \infty}$.
The submodels generated by $a_j$ and $a_{\infty}$ contain the same
worlds: all worlds in $A_{\geq k}$ and $b$.
We claim that   
$(A_{\geq k} \times A_{\geq k})\cup \set{\pair{b,b}}$
 is  a bisimulation on $H$.
The verification here is easy.
\rem{
note that each $a_l\in A_{\geq k}$
 has at least one $\arrowA$-successor (e.g., $a_\infty$).
Also, $a_1\notin A_{\geq k}$, so $\arrowB$ is irrelevant.
And since $a_{k-1}\notin H$, every $\arrowA$-successor of each
  $a_l\in A_{\geq l}$ belongs to $A_{\geq l}$.
}

So in $H$, $a_j$ and $a_\infty$ agree on 
the sentences of $\langAB$.  (They agree on 
all sentences of all our languages, by Proposition~\ref{prop-bisim-preserve-again};
indeed they agree on all sentences of  
infinitary modal logic). 
 In particular,
$(H, a_j)\models\psi$ iff
$(H, a_\infty)\models\psi$. This concludes the claim. 
\end{proof}

\rem{  
It remains to strengthen this to sets of sentences.
The   proof is modeled on the gluing-and-switching arguments
that we saw already in 
Theorems~\ref{theorem-oneagent-inexpressive}
and~\ref{theorem-S5-negative}.

Let $\AA = \oplus_{n\geq 1} (G_n, a_{\infty})$.
We rename the common $a_{\infty}$ point to $a$.
So the points of $\AA$ are $a^r_i$ for $1\leq i \leq r <\infty$, $a$,
and $b^r$ for $r\geq 1$.  All of these are different.

Fix   $m\geq 1$, and let 
$$\CC_m
\quadeq \oplus(\set{(G_r, a_{\infty} : 1\leq r\neq m}\cup \set{(G_m, a_m)}).$$
In this model, we rename all the $a$ points to $c$, and the $b$ points to $d$'s.
So the points of $\CC_m$ are $c^r_i$ for $1\leq i \leq r <\infty$, $c$,
and $d^r$ for $r\geq 1$.  All of these are different
except that $c^m_m = c$.
It is easy to check that
that $(\AA,a)\models\nott\chi $ and $(\CC_m,c)\models\chi$. 
We claim that $(\AA,a)$ and $(\CC_m,c)$  agree on all sentences of 
rank $\leq m$.  This follows from the following stronger statement.

\begin{claim} Let $|\phi|\leq m$.   Let $(H_n, h_n)$ be states for
all $n$.  Let $h_m^*\in H_m$ be such that $(H_m,h_m)$ and $(H_m,h^*_m)$
agree on sentences of rank $\leq m$.
Let $(\AA,a) = \oplus (H_n,h_n)$,
and let $$ (\BB,b) \quadeq \oplus
     (\set{(H_n,h_n) : n\neq m}\cup \set{(H_m,h^*_m)}).$$
Then  $(\AA,a)$ and $(\BB,b)$ agree on $\phi$.
\end{claim}
 
\begin{proof}
{\bf I need to write this in}\footnote{Something to note about all
of the ``strong'' results in this section is that they are
improvements  of our earlier  inexpressivity results.
 The earlier ones made use of a lemma that said that under hypotheses,
if a sentence  $\phi$  is not expressible by a single sentence, 
then $\poss\psi$ is not  expressible
even by a set of sentences.  The arguments here eliminate the $\poss$.
But they are not general facts; they depend a lot on the structure of the 
particular models.  And  the sharpened results are a little more difficult
to get right.  So this is why I'm still writing them up.}
}

\rem{
We present a lemma which allows us, in certain circumstances, 
to do the following: from the existence of a sentence in a language ${\cal
L}_1$ which is not equivalent to any sentence in a language ${\cal L}_0$
infer that there exists a sentence in ${\cal L}_1$ not even equivalent to any
{\em theory\/} in ${\cal L}_0$. 

\begin{lemma} Let ${\cal L}_0$ be a language included in $\lang_1(\bSigmaPri)$,
and let $\psi$ be a sentence of $\lang_1(\bSigmaPri)$. Assume that for each $n$
we have models $F_n$ and $G_n$ with some worlds $f_n\in F_n$ and $g_n\in
G_n$ such that $\langle F_n, f_n\rangle$ satisfies $\nott\psi$,
$\langle G_n, g_n\rangle$ satisfies
$\psi$, and $\langle F_n, f_n\rangle$ and $\langle G_n, g_n\rangle$ agree
on all sentences of ${\cal
L}_0$ of rank $\leq n$. Then $\pos_A\psi$ is not equivalent with any theory
in ${\cal L}_0$. 
\label{lemma-sixone}
\end{lemma}

\begin{proof} For a sequence of model-world pairs $(H_n,h_n)$,
$n\in D\subseteq\omega$, we
let 
$\bigoplus_{n\in D}(H_n, h_n)$
be a model-world pair defined as follows. Let $h$ be a new world. Take
disjoint copies of the $H_n$'s and add an $A$-arrow from $h$ to each
$h_n$. All other arrows are within the 
$H_n$'s and stay the same as in
$H_n$.
No atomic sentences are true at $h$. Atomic sentences true in the worlds
belonging to the copy of $H_n$ in $\bigoplus_{n\in D}(H_n, h_n)$
are precisely those true at the corresponding worlds of $H_n$. 

Let $F$ be $\bigoplus_{n\in\omega}(F_n, f_n)$ with the new world denoted
by $f$. Define also $F^m$, for $m\in\omega$, to be
$\bigoplus_{n\in\omega}(H_n, h_n)$ with the new world $f^m$ where
$H_m=G_m$, $h_m=g_m$ and for all $n\not= m$,
$H_n=F_n$ and $h_n=f_n$

Now assume towards a contradiction that $\pos_A\psi$ is equivalent with a
theory $\Phi$ in ${\cal L}_0$. Clearly $\pos_A\psi$ fails in $\langle F,
f\rangle$. Thus some sentence $\phi\in\Phi$ fails in $\langle F,
f\rangle$. On the other hand, each $\langle F^m, f^m\rangle$ satisfies
$\pos_A\psi$, whence $\langle F^m, f^m\rangle$ satisfies $\phi$. Let
$m_0=|\phi|$. The following claim shows that both $\langle F, f\rangle$
and $\langle F^{m_0}, f^{m_0}\rangle$ make $\phi$ true or both of them
make it false, which leads to a contradiction. 

\begin{claim} Let $\phi$ be a sentence in $\Phi$ of rank $\leq m$. Let
$H_n$, $K_n$, $n\in D$, with $h_n\in H_n$ and $k_n\in K_n$ be models
such
that $\langle H_n, h_n\rangle$ and $\langle K_n,k_n\rangle$ agree on
sentences of $\Phi$ of rank $\leq m$. Then $\langle \bigoplus_n(H_n, h_n),
h\rangle$ and $\langle\bigoplus_n(K_n, k_n), k\rangle$ agree on $\phi$. 
\end{claim}

This claim is proved by induction on complexity of $\phi$. It is clear for
atomic sentences. The induction steps for boolean connectives are trivial.
A moment of thought gives the induction step for $\poss$ and $\poss^{*}$
with various subscripts. It remains to consider the case when
$\phi=[\phi_1]_A\phi_2$. (The cases when $\phi=[\phi_1]_B\phi_2$ and
$\phi=[\phi_1]_{AB}\phi_2$ are similar.) 
Fix $H_n$, $K_n$, $h_n\in H_n$, $k_n\in K_n$, with $n\in D$, such that 
$\langle H_n, h_n\rangle$ and $\langle K_n, k_n\rangle$ agree on sentences
of $\Phi$ of rank $\leq m$. Note that, for each $n\in D$, $\langle H_n,
h_n\rangle\models\phi_1$ if and only if $\langle K_n,
k_n\rangle\models\phi_1$.
Let $D_1$ be the set
of all $n\in D$ for which $\langle H_n, h_n\rangle\models\phi_1$. Let
$H_n'$ and $K_n'$ be models obtained by updating $H_n$ and $K_n$ by
$[\phi_1]_A$. By the definition of rank and the fact that $|\phi_1|\leq
m$, we have that $
\langle H_n', h_n\rangle$ and $\langle K_n', k_n\rangle$ agree on
sentences from $\Phi$ of rank
$\leq m$. Therefore, by our inductive hypothesis 
$$
\langle\bigoplus_{n\in D_1}H_n', h\rangle\models\phi_2\quadiff
\langle\bigoplus_{n\in D_1}K_n', k\rangle\models\phi_2. 
$$
However,  
$$
\langle\bigoplus_nH_n, h\rangle\models\phi\quadiff
\langle\bigoplus_{n\in D_1}H_n', h\rangle\models\phi_2
$$ 
and 
$$
\langle\bigoplus_nK_n, k\rangle\models\phi\quadiff
\langle\bigoplus_{n\in D_1}K_n', k\rangle\models\phi_2, 
$$
and we are done. 
\end{proof}
}

\rem{
\paragraph{Returning to our theorem}
We get Theorem~\ref{theorem-negative-definability} directly from the
claim, the observation that $(G_n, a_n)\models \chi$ and $(G_n,
a_\infty)\models \nott\chi$, and Lemma~\ref{lemma-sixone}.
}

}

\subsection{$\lang_1(\bSigmaPri)$ 
is more expressive than $\lang_1(\bSigmaPub)$}
\label{section-privatepower}
\renewcommand{\langAB}{\lang_1(\bSigmaPub)}
\renewcommand{\langprivateA}{\lang_1(\bSigmaPri)}

It is natural to assume that {\em privacy\/} adds expressive power
to logics of communication.
The result of this section
 is the first result we know that establishes this.
In it, we assume that our set of agents is the doubleton $\set{A,B}$.

\begin{theorem}
The $\langprivateA$-sentence $\chi = \pair{\Pri^A\ p,\true} \poss_A^*\poss_B \nott p$
cannot be expressed by any set  $T$
of sentences  of  $\langAB$. 
\label{theorem-negative-definability}
\end{theorem}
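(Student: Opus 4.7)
The plan is to derive the theorem in two stages. First, for each rank bound $n$, exhibit a pair of states that agree on every sentence of $\langAB$ of rank at most $n$ but disagree on $\chi$; second, pass from a single sentence to an arbitrary set of sentences by a disjoint-union-and-switching construction modeled on the end of Theorem~\ref{theorem-oneagent-inexpressive}. For the first stage, fix $n$ and exploit the parametric models $\bS_f$ and $\bT_{f,j}$ of Section~\ref{section-Cn}. Choose a domain $J$ with $|J|$ large (say $|J|>2^n$) and a function $f\colon J\to N^+$ whose values are pairwise distinct and each strictly greater than $n$; pick any $j\in J$. The semantic calculation at the end of Section~\ref{section-Cn} already shows $(\bS_f,a)\not\models\chi$ and $(\bT_{f,j},a)\models\chi$; what remains is to show these two states agree on every $\langAB$-sentence of rank $\leq n$.

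To do this I would set up an Ehrenfeucht--Fra\"{\i}ss\'{e}-style back-and-forth game $G^{\mathrm{pub}}_n$ for $\langAB$: extend the standard game for $\lang_1$ (with $\poss_A$, $\poss_B$, and $\poss^*_\BB$ moves as in Section~\ref{section-Cn-results}) by a new ``announcement move'' in which player I selects a sentence $\phi$ of strictly lower rank, both models are simultaneously restricted to their $\phi$-satisfying submodels, and play proceeds from the chosen state (player II having to verify that her state still satisfies $\phi$). A standard induction on formula complexity, parallel to Proposition~\ref{proposition-game}, then shows that if player II has a winning strategy in $G^{\mathrm{pub}}_n$ starting from $((\bS,s),(\bT,t))$, then the two states agree on all $\langAB$-sentences of rank at most $n$.

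The core of the proof is describing II's winning strategy on the pair $((\bS_f,a),(\bT_{f,j},a))$. The only structural difference is the extra $A$-arrow $a\to c^j_1$ in $\bT_{f,j}$, and detecting it in the private-announcement world requires following it forward to $c^j_{f(j)}$ and then across the $B$-edge to $b$, a total of $f(j)+1>n$ alternations of $A$- and $B$-moves; the ordinary modalities of $\langAB$ cannot force such a walk within $n$ rounds. For $\poss_A$, $\poss_B$, and $\poss^*$ moves player II copies positions by the obvious matching, using an unused chain $c^{j'}_\bullet$ in $\bS_f$ as a stand-in whenever player I forces her into the $c^j_\bullet$ branch of $\bT_{f,j}$; because the index set $J$ is large and the lengths $f(i)$ are all distinct and exceed $n$, II always has a reserve chain of whatever length she needs. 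The delicate case is the announcement move: after any public announcement by a formula $\phi$ of rank $m<n$, the relativized models $\bS_f{\restriction}\phi$ and $\bT_{f,j}{\restriction}\phi$ must still admit a winning strategy for II in the remaining $n-m$ rounds.

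The main obstacle, and the technical heart of the argument, is showing that II's strategy survives arbitrary sequences of announcements. I expect to prove by simultaneous induction on $n$ a stronger invariant of the form: after any finite sequence of announcements by $\langAB$-formulas of total rank at most $m$, the resulting states $(\bS_f{\restriction}\sigma,a)$ and $(\bT_{f,j}{\restriction}\sigma,a)$ still contain ``enough'' fresh $c$-chains of length exceeding $n-m$, and the extra $A$-edge in $\bT_{f,j}$ is still unreachable by alternating $A/B$-paths of length at most $n-m$. Calibrating $|J|$ and the profile of $f$ against $n$ so that this invariant persists through every announcement is the combinatorial crux; an exponential growth bound on $|J|$ in $n$ should suffice, because each announcement of rank $m$ can only ``flag'' the chains via modal depth $m$, hence at most $2^m$ equivalence classes of chains can be distinguished.

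Finally, for the second stage I follow the template of Theorem~\ref{theorem-oneagent-inexpressive}. Suppose toward contradiction $\chi$ is equivalent to some $T\subseteq\langAB$. For each $n$, let $(f_n,j_n)$ be the parameters from stage one and $a^n$ the distinguished point. Form $\AA$ by taking the disjoint union $\bigoplus_n(\bS_{f_n},a^n)$ with all basepoints identified to a single $a$, and $\BB$ similarly from the $(\bT_{f_n,j_n},a^n)$ with common basepoint $b$. Then $(\BB,b)\models\chi$ componentwise so $(\BB,b)\models T$; meanwhile $(\AA,a)\not\models\chi$, so some $\phi\in T$ fails at $(\AA,a)$. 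Let $m=|\phi|$ and form $\CC$ by swapping exactly the $m$-th summand of $\AA$ with the corresponding summand of $\BB$. Then $(\CC,c)\models\chi$, hence $(\CC,c)\models\phi$. But combining the stage-one equivalence on the swapped summand with the fact that player II can copy moves verbatim in all unswapped summands yields $(\AA,a)\equiv^{\mathrm{pub}}_m(\CC,c)$, so $(\AA,a)\models\phi$, a contradiction. The only extra subtlety here is that player I may make $\poss^*$ moves that cross summand boundaries through the common basepoint, but with identified basepoints this reduces to playing inside a single summand, where the stage-one strategy applies.
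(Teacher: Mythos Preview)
Your proposal has a genuine gap, and it also misses the structural shortcut the paper takes.

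\textbf{The gap.} Your stage one is a sketch, not a proof. The ``announcement move'' game is non-standard, and the crucial claim---that II's invariant survives an arbitrary sequence of announcements---is stated as something you ``expect to prove'' with bounds that ``should suffice.'' But a single public announcement acts globally on the model, not chain by chain: one formula $\phi$ of low rank can simultaneously prune every $c^i_\bullet$ chain, so the reserve-chain bookkeeping you describe does not obviously survive even one announcement move. You never identify what the post-announcement models actually look like, and without that the strategy has no foundation. The disjoint-union step in stage two inherits the same problem, compounded by the fact that after gluing at $a$ an announcement sees the entire glued model at once.

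\textbf{What the paper does instead.} The paper avoids games entirely and proves a direct structural lemma (Lemma~\ref{lemma-ainfinity-main}) by induction on $\phi$, using a custom rank with $|[\Pub\ \phi]\psi|=\max(|\phi|,|\psi|)$ and $|\poss^*_\BB\phi|=|\phi|$ (so neither announcements nor $\poss^*$ increase rank; only $\poss_A$ does). The key insight you are missing is closure under announcement: for any public announcement $\phi$, the updated models $\hat\bS_f$ and $\hat\bT_{f,j}$ are bisimilar either to something trivial or to another pair $\bS_g,\bT_{g,j}$ for an explicitly described $g$ (Case~IV of the proof). This makes the induction self-contained with no game machinery. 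Moreover, the paper does not need your gluing-and-switching stage at all: it takes the single model $\bS_f$ with $f$ the identity on all of $N^+$, and for any $\phi\in T$ simply chooses $j=|\phi|+1$; then $(\bT_{f,j},a)$ already refutes $\phi$ while satisfying $\chi$. One infinite-index model handles every rank simultaneously.
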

 
We shall use the models $\bS_{f}$ and $\bT_{f,j}$ 
from Section~\ref{section-Cn}.
It would be good to keep the picture in mind while reading
the results to follow.

\begin{lemma} For all $x\in \bS_{f}$
other than $a$, $(\bS_{f}, x)\equiv (\bT_{f,j},x)$.
That is, there is a bisimulation between $\bS_{f}$ and $\bT_{f,j}$ relating
$x$ to itself.
\label{lemma-almost-total-bisim}
\end{lemma}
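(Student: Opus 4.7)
The plan is to exhibit an explicit bisimulation between $\bS_f$ and $\bT_{f,j}$ that relates each non-$a$ state to itself. Since the state sets of $\bS_f$ and $\bT_{f,j}$ coincide, and their valuations of $p$ agree, and the only arrow present in $\bT_{f,j}$ but missing from $\bS_f$ is $a \arrowA c^j_1$, the natural candidate is just the identity restricted to non-$a$ points:
$$R \quadeq \set{(y,y) : y \in \bS_f,\ y \neq a}.$$
Atomic harmony is then immediate, since $R$ is literally the identity on a subset of a shared state space with a shared valuation.

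For the forth condition, I would argue that since every arrow of $\bS_f$ is also an arrow of $\bT_{f,j}$, any transition $y \arrowX y'$ in $\bS_f$ (for $X \in \set{A,B}$) is automatically available in $\bT_{f,j}$, and one only has to check that $y' \neq a$ whenever $y \neq a$. A brief inspection of the edges of $\bS_f$ (reflexive loops; $b \arrowA c^i_1$; the $A$-chains $c^i_1 \arrowA c^i_2 \arrowA \cdots \arrowA c^i_{f(i)}$; and $c^i_{f(i)} \arrowB b$) shows that no edge whose source is different from $a$ has $a$ as its target. So $y' \neq a$ and $y' R y'$ as required.

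For the back condition, the key observation is that the unique edge added to $\bS_f$ in forming $\bT_{f,j}$ is $a \arrowA c^j_1$, which originates at $a$. Consequently, for $y \neq a$, every transition $y \arrowX y''$ in $\bT_{f,j}$ is already a transition of $\bS_f$, and the same case analysis as above shows $y'' \neq a$, hence $y'' R y''$. Thus $R$ is a bisimulation, and since $(x,x) \in R$ for every $x \neq a$, the lemma follows.

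There is no real obstacle here beyond pinning down the exact arrow set of $\bS_f$: the textual enumeration just before Figure~\ref{fig-model} appears to contain a typo (the clause labeled (4) there), and one should follow the picture in Figure~\ref{fig-model} and the subsequent computation of $\hat{\bS}_f$, which makes it clear that the non-reflexive $A$-arrows among the $c^i_k$ form chains $c^i_k \arrowA c^i_{k+1}$ for $1 \leq k < f(i)$, closed off by a single $B$-arrow $c^i_{f(i)} \arrowB b$. With this reading of the arrow set, the verification above is completely routine.
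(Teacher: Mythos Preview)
Your proof is correct and follows the same approach as the paper: both use the identity relation on non-$a$ points as the bisimulation, resting on the observation that $a$ is never the target of an arrow and that the only arrow distinguishing $\bT_{f,j}$ from $\bS_f$ has source $a$. You simply spell out the forth and back checks (and the typo in clause~(4)) that the paper leaves implicit.
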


\begin{proof}
The point is that $a$ is not the target of any arrows in either
structure.  So we consider the set of all pairs
of points other than $a$.  This relation
is a bisimulation between $\bS_{f}$ and $\bT_{f,j}$ relating
each $x\neq a$ to itself.
\end{proof}

\begin{lemma} Let $\chi$ be the sentence 
$\pair{\Pri^A\ p,\true} \poss_A^*\poss_B \nott p$.
Then for all $f$ and $j$, 
$(\bS_f,a)\models \nott \chi$ but $(\bT_{f,j},b)\models
\chi$.
\label{lemma-ainfinity-sufficient}
\end{lemma}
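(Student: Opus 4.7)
The plan is to unfold the semantics of $\chi = \pair{\Pri^A\ p, \true}\poss_A^*\poss_B\nott p$ directly against the explicit descriptions of the update products $\hat{\bS}_f = \bS_f \otimes (\bSigmaPri,\Pri^A,\propositionp,\trueproposition)$ and $\hat{\bT}_{f,j}$ together with their update relations, both of which we already worked out in Section~\ref{section-Cn}. Since $a\models p$ and the distinguished simple action $\Pri^A$ has precondition $p$, in each of $\bS_f$ and $\bT_{f,j}$ the update relation sends $a$ to $(a,\Pri^A)$ and to nothing else. Hence in either model, membership of $a$ in $\semanticsnm{\chi}$ is equivalent to $(a,\Pri^A)$ witnessing $\poss_A^*\poss_B\nott p$ in the corresponding updated model, and since the unique $\nott p$-world in each updated model is $(b,\skipp)$, the entire question reduces to deciding whether an $\arrowA$-path from $(a,\Pri^A)$ exists ending in a state with an outgoing $\arrowB$-edge to $(b,\skipp)$.

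For the negative half, the key structural fact about $\hat{\bS}_f$ (item~3 of the itemized list in Section~\ref{section-Cn}) is that the only $\arrowA$-successor of $(a,\Pri^A)$ in $\hat{\bS}_f$ is itself: the would-be edge $(a,\Pri^A)\arrowA(b,\Pri^A)$ fails to lift because $b\not\models p$ puts $(b,\Pri^A)$ outside $\hat{\bS}_f$, and $a$ has no other $\arrowA$-successors in $\bS_f$. Consequently every $\arrowA^*$-descendant of $(a,\Pri^A)$ in $\hat{\bS}_f$ is $(a,\Pri^A)$ again, and since $a$ has no outgoing $\arrowB$-edge in $\bS_f$ at all, there is no $\arrowB$-step at $(a,\Pri^A)$ either. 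This yields $(a,\Pri^A)\not\models\poss_A^*\poss_B\nott p$ and hence $(\bS_f,a)\models\nott\chi$.

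For the positive half (which, to match the computations in Section~\ref{section-Cn}, one should read with distinguished world $a$ rather than $b$, since $b\not\models p=\pre(\Pri^A)$ makes $b$ satisfy $[\Pri^A\ p,\true]\false$ vacuously), the additional edge $a\arrowA c^j_1$ present in $\bT_{f,j}$ lifts to $(a,\Pri^A)\arrowA(c^j_1,\Pri^A)$ in $\hat{\bT}_{f,j}$. Iterating along $c^j_1\arrowA c^j_2\arrowA\cdots\arrowA c^j_{f(j)}$ gives an $\arrowA$-path in $\hat{\bT}_{f,j}$, since each $c^j_k\models p$ so every step lifts; and finally $c^j_{f(j)}\arrowB b$ in $\bT_{f,j}$ combines with $\Pri^A\arrowB\skipp$ in the signature to produce $(c^j_{f(j)},\Pri^A)\arrowB(b,\skipp)$, a state satisfying $\nott p$. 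Composing, $(a,\Pri^A)\models\poss_A^*\poss_B\nott p$ in $\hat{\bT}_{f,j}$, so $(\bT_{f,j},a)\models\chi$.

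No serious obstacle arises: the entire argument is a structural unfolding of the update product, relying only on the already-computed descriptions of $\hat{\bS}_f$ and $\hat{\bT}_{f,j}$. The only bookkeeping step to watch is that an arrow $(s,\sigma)\arrowA(s',\sigma')$ in the update product requires \emph{both} endpoints to lie in the product, i.e.\ both $s$ and $s'$ must satisfy the relevant precondition; this is precisely what kills the path in $\hat{\bS}_f$ and what allows it in $\hat{\bT}_{f,j}$.
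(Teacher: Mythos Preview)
Your proof is correct and follows essentially the same approach as the paper's: both arguments unfold the semantics against the explicit description of the update products from Section~\ref{section-Cn}, using that $(a,\Pri^A)$ has only itself as an $\arrowA$-successor in $\hat{\bS}_f$, while in $\hat{\bT}_{f,j}$ the extra edge $a\arrowA c^j_1$ yields the path $(a,\Pri^A)\arrowA(c^j_1,\Pri^A)\arrowA\cdots\arrowA(c^j_{f(j)},\Pri^A)\arrowB(b,\skipp)$. You are also right to flag the typo in the statement: the distinguished world for $\bT_{f,j}$ should be $a$, not $b$ (indeed $b\not\models p$, so $(\bT_{f,j},b)\models\nott\chi$); the paper's own proof silently works with $a$ as well.
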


\begin{proof}  This point was mentioned in Section~\ref{section-Cn},
but we review the matter for the reader.
In the updated structure
$\bS_f\otimes (\Pri^A\ p,\true)$, the only arrow from
$(a,\Pri)$ is the self-loop $(a,\Pri)\arrowA (a,\Pri)$.
So the updated structure does not satisfy 
$\poss_A^*\poss_B \nott p$,
and the original $\bS_f$
does  not satisfy our $\chi$.
But in $\bT_{f,j}\otimes (\Pri^A\ p,\true)$ we have
$$ (a,\Pri^A) \ \arrowA\  (c^{j}_1,\Pri) \ \arrowA\ \cdots \ \arrowA \ 
(c^{j}_{f(j)},\Pri^A) \ \arrowB\ (b, \skipp).
$$
So we see that  $(\bT_{f,j}, a)\models \chi$.
\end{proof}

\rem{ lemma not used
\begin{lemma}
The relations $\arrowAstar$, $\arrowBstar$, $\arrowABstar$
are the same in $\bS_f$  and in $\bT_{f,j}$.
These relate $a$ to $a$; and
for all  $x,y\neq a$, they relate 
  $a$ to   $x$, and $x$ to $y$.
\label{lemma-ainfinity-tc}
\end{lemma}

\begin{proof}
Note that $\arrowB$ is already transitive in both models.
We only need to check that if $x\arrowAstar y$ or
$x\arrowABstar y$ in $\bT_{f,j}$,
then already in $\bS_f$  the same relation holds.
The only interesting case
to consider is when a path in $\bT_{f,j}$ uses $a\arrowA c^{j}_1$,
 and here, note that in $\bS_f$,   $a\arrowAstar c^{j}_1$.
\end{proof}
}

We define a function $|\phi|$ on $\lang(\bSigmaPub)$:
 $|p|=0$ for $p$ atomic, $|\nott\phi| = |\phi|$,
$|\phi\wedge\psi|=\max (|\phi|, |\psi |)$,
 $|\poss_A\phi|= 1+|\phi|$, $|\poss_B\phi|= 1$,
$|\poss_{\BB}^*\phi|=  |\phi|$ for all $\BB\subseteq \Agents$,
and $|[\Pub\ \phi]\psi| = \max(|\phi|,|\psi|)$.

\begin{lemma}
Let  $J\subseteq N^+$,
 $\map{f}{J}{N^+}$,
and $j\in J$. 
Let $\phi\in \lang(\bSigmaPub)$, and let  $1\leq k \leq f(j)$.
If $f(j) - k \geq |\phi|$, then
$$(\bS_f, a) \models \phi
\quadiff
(\bS_f, c^{j}_{k}) \models \phi
\quadiff
(\bT_{f,j}, a) \models \phi.$$
\label{lemma-ainfinity-main}
\end{lemma}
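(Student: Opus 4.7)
The plan is to prove the lemma by induction on $\phi\in\lang(\bSigmaPub)$, keeping $f$, $j$, and $k$ universally quantified at each step. Three distinguished states are compared throughout: $(\bS_f,a)$, $(\bS_f,c^j_k)$, and $(\bT_{f,j},a)$. The atomic case is immediate since $p$ is true at every non-$b$ point, so all three states satisfy $p$; the Boolean cases for $\nott$ and $\andd$ are routine because $|\cdot|$ is preserved.

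For $\poss_A\chi$, the hypothesis $f(j)-k\geq 1+|\chi|$ allows the IH to be invoked on $\chi$ at $c^j_k$ and at $c^j_{k+1}$ (using $f(j)-(k+1)\geq|\chi|$), and also at $c^j_1$ (using $f(j)-1\geq|\chi|$). The $\arrowA$-successors of the three states are $\{a,b\}$, $\{c^j_k,c^j_{k+1},b\}$, and $\{a,b,c^j_1\}$ respectively; by IH the values of $\chi$ at $a$, $c^j_k$, $c^j_{k+1}$, and $c^j_1$ coincide, while Lemma~\ref{lemma-almost-total-bisim} transports the value of $\chi$ at $b$ between $\bS_f$ and $\bT_{f,j}$. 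The three disjunctions therefore reduce to the same Boolean combination. The $\poss_B\chi$ case is trivial because $|\poss_B\chi|=1$ forces $k<f(j)$, so none of the three states is of the form $c^i_{f(i)}$ (the only sources of $\arrowB$-edges), and $\poss_B\chi$ is uniformly false.

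For $\poss^*_{\BB}\chi$ one observes that the $\arrowABstar$-reachable set from $a$ in either model is $\{a,b\}\cup\{c^i_l:i\in J,\,1\leq l\leq f(i)\}$, whereas from $c^j_k$ in $\bS_f$ it is the same set with $\{a\}$ removed. Since $|\poss^*_{\BB}\chi|=|\chi|$, the IH applied at $a$ and $c^j_k$ forces $(\bS_f,a)\models\chi\iff(\bS_f,c^j_k)\models\chi$, so the presence or absence of $a$ in the reachable set contributes nothing new to the disjunction. Combined with Lemma~\ref{lemma-almost-total-bisim} at every non-$a$ reachable point, the three values of $\poss^*_{\BB}\chi$ coincide.

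The hardest case is $[\Pub\ \chi]\psi$, where $|[\Pub\ \chi]\psi|=\max(|\chi|,|\psi|)$. By IH on $\chi$ the three states agree on $\chi$; if this common value is false, the announcement is vacuous at all three states and the equivalence is immediate. Otherwise, let $H=\{x:x\models\chi\}$; Lemma~\ref{lemma-almost-total-bisim} ensures $H$ picks out the same subset of $\bS_f$ and $\bT_{f,j}$. The restricted models $\bS_f\restr H$ and $\bT_{f,j}\restr H$ agree on every edge except possibly the extra $a\arrowA c^j_1$ in $\bT_{f,j}$ when $c^j_1\in H$. The main obstacle is to show that these three restricted structures, viewed from $a$, $c^j_k$, and $a$ respectively, satisfy the same $\psi$: the plan is to produce, from $f$ and $H$, a new parameter $f'$ so that $\bS_f\restr H$ and $\bT_{f,j}\restr H$ are bisimilar to $\bS_{f'}$ and $\bT_{f',j}$ (with $f'(j)$ the length of the initial $H$-segment of the $c^j$-chain containing $c^j_k$), verify $f'(j)-k\geq|\psi|$, and then invoke the IH on $\psi$. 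The delicate point, and where the real combinatorial work will lie, is that $H$ need not be an initial segment of the individual chains $c^i_1,\ldots,c^i_{f(i)}$, so the passage to $\bS_{f'}$ may require a direct bisimulation argument rather than a literal reparameterization.
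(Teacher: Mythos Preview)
Your outline is essentially the paper's proof: the same induction, the same handling of the modal and $\poss^*$ cases (the paper also reduces to $\BB=\{A\}$ via $\arrowB\subseteq\arrowA$), and the same idea of reparametrizing the updated model to another $\bS_{g}$ in order to invoke the induction hypothesis on $\psi$ in the $[\Pub\ \chi]\psi$ step.

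Where the paper is sharper is in the announcement case, which it splits into four explicit subcases according to which of $a$, $b$, and the $c^j_l$ satisfy $\chi$. Only in the last subcase --- $a,b\in H$ and the entire $j$-chain survives --- does the reparametrization-plus-IH plan actually run; the new function $g$ is built from the \emph{final} $H$-segments of the other chains (not initial ones), since the $B$-edge lives at $c^i_{f(i)}$ and must be preserved. When $b\notin H$ every surviving point becomes bisimilar to a single reflexive $p$-world, and when $b\in H$ but the $j$-chain is cut at some $c^j_l$ (necessarily with $l>k$ by the IH on $\chi$) the three updated pointed states are shown to be directly bisimilar, so the IH on $\psi$ is not needed there. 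Your caveat about a ``direct bisimulation argument'' correctly anticipates these two subcases; the paper simply makes them explicit.
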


\begin{proof}
By induction on $\phi$.  The base step for atomic $p$
and the induction steps for the  boolean connectives
are all trivial.

\paragraph{The induction step for  $\poss_A \phi$}
Assume that $f(j) - k \geq |\poss_A\phi|$.
We prove the needed equivalences in a cycle.
\rem{
\begin{equation}
(\bS_f, a) \models \phi
\quadiff
(\bS_f, c^{j}_{k+1}) \models \phi
\quadiff
(\bT_{f,j}, a) \models \phi.
\label{lineabove}
\end{equation}
}

Assume  first that $(\bS_{f},a)\models\poss_A\phi$.
Since $a\arrowA a$ and $a\arrowA b$ in $\bS_f$,
there are two cases. 
If $a$ itself satisfies $\phi$  in $A$, then
  the reflexive arrow on $c^{j}_{k}$
shows that $(\bS_{f},c^{j}_{k})\models\poss_A\phi$.
And if $(\bS_{f},b)\models \phi$, then
clearly $(\bS_{f}, c^{j}_{k}) \models \poss\phi$ via $c^{j}_k\arrowA b$.

Second, assume that 
$(\bS_{f},c^{j}_{k})\models\poss_A\phi$.
We shall not go into details on the subcases parallel to
what we saw in the previous paragraph.
The only subcase worth mentioning is where
$(\bS_{f},c^{j}_{k+1})\models\phi$.  Note here that
$f(j) - (k+1) \geq|\poss_A\phi| - 1  = |\phi|$.
In particular, we have $k+1  \leq f(j)$.
And so by our induction hypothesis, $(\bT_{f,j}, a) \models \phi$.
Thus $(\bT_{f,j},a)\models\poss_A\phi$.

Finally,
assume that  $(\bT_{f,j},a)\models\poss_A\phi$.
 Remembering that $a\arrowA c^{j}_1$ in $B$,
 the important subcase is where
 $(\bT_{f,j},c^{j}_1)\models \phi$. 
By Lemma~\ref{lemma-almost-total-bisim}, 
 $(\bS_{f},c^{j}_1)\models \phi$. 
Also, $f(j) -  1 \geq f(j) - (k+1) \geq |\phi|$.
 So by the induction hypothesis, $(\bS_{f},a)\models \phi$.
And then $(\bS_{f},a)\models\poss_A \phi$.

\paragraph{The induction step for $\poss_B\phi$}
Recall that $|\poss_B\phi| =1$.  Since we assume
$f(j) - k \geq 1$, we have $k < f(j)$.   
In particular, 
$(\bS_{f},c^{j}_{k})\not\models\poss_B\phi$.
All
three needed statements   are automatically false,
so all three are equivalent.

\rem{
Again by the induction hypothesis in (\ref{lineabove}),
$(\bS_{f},c^{j}_{k+1})\models \phi$.   Hence
$(\bS_{f},c^{j}_{k})\models \poss\phi$.
}

\paragraph{The induction step for $\poss_{\BB}^*\phi$}
Assume $f(j) -k > |\poss^*_A\phi| = |\phi|$.
We again have a cycle of equivalences.   
Before we turn to them, we remind the reader that
with two agents, $\BB$ can 
be $\set{A}$, $\set{B}$ or $\set{A,B}$ here.  Note that $\arrowB$ is already
transitive, so all the work for $\poss^*_B\phi$ has been done
above already.   
Also, $\arrowB$ is a subrelation of $\arrowA$ in $\bS_{f}$,
and  $\arrowB$ is a subrelation of $\arrowAstar$ in $\bS_{f}$.
Thus we   only need to
work with $\BB  = \set{A}$  in this induction step. 

\renewcommand{\BB}{A}

Assume first that $(\bS_{f},a)\models \poss_{\BB}^* \phi$.
There are a number of cases.
If $(\bS_{f},a)\models  \phi$, then by induction 
hypothesis, 
$(\bS_{f},c^{j}_k)\models   \phi$.
So $(\bS_{f},c^{j}_k)\models \poss_{\BB}^* \phi$.
The other case is where $a\arrowBB x$ and $x\models \phi$.
Note in this case that $c^{j}_k\arrowBB x$ as well:
all non-null paths from $a$ go through $b$, and $c^{j}\arrowA b$.
Thus in this case
we have $(\bS_{f},c^{j}_k)\models \poss_{\BB}^* \phi$.

Next,  assume that 
$(\bS_{f},c^{j}_k)\models \poss_{\BB}^* \phi$.
Let $x$ be such that
 $c^{j}_k\arrowAstar x$ and 
$(\bS_{f},x)\models  \phi$.  Clearly $x\neq a$.
By the bisimulation noted in Lemma~\ref{lemma-almost-total-bisim}, 
$(\bT_{f,j},x)\models  \phi$. 
Also, $b\arrowAstar c^{j}_k\arrowAstar x$.
So 
$(\bT_{f,j},a)\models \poss_{\BB}^*\phi$.

Suppose finally that $(\bT_{f,j},a)\models \poss_{A}^*\phi$.
If $(\bT_{f,j},a)\models \phi$, then
 by induction hypothesis
we have $(\bS_{f},a)\models  \phi$.  So
in this case $(\bS_{f},a)\models\poss_{\BB}^*\phi$ as well.
The other case is where 
$(\bT_{f,j},x)\models  \phi$ for some $x\neq a$.
By bisimulation again,
$(\bS_{f},x)\models  \phi$.  And in $\bS_f$, $a\arrowAstar x$.
Thus $(\bS_{f},a)\models \poss_A^* \phi$, as desired.

\rem{
and
 $(\bS_{f},a)\models \poss_A^* \phi$,  and then prove
$(\bS_{f},c^{j}_k)\models \poss_A^* \phi$.
 We have several cases as to the
witness for $(\bS_{f},a)\models \poss_A^* \phi$.  If 
$(\bS_{f},b)\models   \phi$, then since $c^{j}_k\arrowA b$, 
$(\bS_{f},c^{j}_k)\models \poss_A^* \phi$.
The same idea works for all points $x$ in the structure 
besides $a$ itself: for all these $x$, $c^{j}_k\arrowAstar x$.
We are left with the possibility that
$(\bS_{f},a)\models  \phi$.  In this case, the induction hypothesis
tells us that 
$(\bS_{f},c^{j}_k)\models \phi$ and so indeed
$(\bS_{f},c^{j}_k)\models \poss_A^* \phi$.
}

\paragraph{The induction step for $[\Pub\ \phi]\psi$}
Assume that $f(j) - k > |[\Pub\ \phi]\psi| = \max(|\phi|,|\psi|)$.
We remind the reader 
of the definitions and of some of the notation used in
Section~\ref{section-Cn}.
Let 
 $$\hat{\bS}_f
\quadeq \bS_f\otimes (\bSigmaPri,\Pri^A,\propositionp,\trueproposition),$$
and let $\hat{\bT}_{f,j}$ be defined similarly.
In this notation,
\begin{equation}
\semantics{[\Pub\ \phi]\psi}{\bS_f}
\quadeq
\set{x\in S_f : 
\mbox{if $x\in \semantics{\phi}{\bS_f}$, then 
$(x,\Pri^A)\in \semantics{\psi}{\hat{\bS}_f}$}}
\label{remind}
\end{equation}
A similar equation holds for $\bT_{f,j}$.

Case I: $(\bS_{f}, a)\not\models \phi$.
Since 
$f(j) - k 
\geq |[\Pub\ \phi]\psi| \geq  |\phi|$,
 we have by induction hypothesis that
 $(\bS_{f}, c^{j}_k)\not\models \phi$, and
$(\bT_{f,j}, a)\not\models \phi$.   So 
 automatically
all of the model-world pairs satisfy our 
sentence $[\Pub\ \phi]\psi$.

Case II: $(\bS_{f}, a)\models \phi$ and $(\bS_{f}, b)\not\models \phi$.
Then by induction hypothesis, 
 $(\bS_{f}, c^{j}_k)\models \phi$, and
$(\bT_{f,j}, a)\models \phi$, and by Lemma~\ref{lemma-almost-total-bisim},
  $(\bT_{f,j}, b)\not\models \phi$.
 $\hat{\bS}_f$ and $\hat{\bT}_{f,j}$ are 
isomorphic to submodels of $\bS_f$ and $\bT_{f,j}$ which do not contain $b$.
Each point satisfies $p$
and has no $\arrowB$-successors, and every point has a
$\arrowA$-successor (itself).  Thus both model-world pairs
are  
bisimilar
to the one-point model
satisfying $p$.  So we are easily done in this case.

Case III: 
$(\bS_{f}, a)\models \phi$, $(\bS_{f}, b)\models \phi$, and 
for some $k$ such that $1\leq k\leq f(j)$, 
$(\bS_{f},c^{j}_k)\not\models \phi$.
Let $l$ be  least with this property.  By induction hypothesis,
we have $l > f(j) - |\phi| \geq k$.   So we see that
$c^j_1$, $c^j_2$, $\ldots$, $c^j_k$, $\ldots$, $c^j_{l-1}$ all
satisfy $\phi$, but $c^j_k$ does not.
We claim that
\begin{equation} (\hat{\bS}_f,(a,\Pri^A)) \quadequiv 
(\hat{\bS}_f,(c^j_k,\Pri^A))
\quadequiv
(\hat{\bT}_{f,j},(a,\Pri^A))
\label{eq-inview}
\end{equation}
The bisimulation showing the first of these two assertions is
the identity, together with the set of pairs
$$\set{((a, \Pri^A),(c^j_l,\Pri^A)) :  (c^j_l,\Pri^A)\in \hat{\bS}_f}.
$$
The bisimulation showing $(\hat{\bS}_f,(c^j_k,\Pri^A))
\equiv
(\hat{\bT}_{f,j},(a,\Pri^A))$ adds some further pairs:
$$\set{((c^j_l, \Pri^A),(c^j_m,\Pri^A)) :  (c^j_l,\Pri^A),
c^j_m\in \hat{\bS}_f}.
$$
In view of the bisimulations in 
the equivalence (\ref{eq-inview}) and the semantics
in (\ref{remind}),
the  equivalence in this case is now immediate.

Case IV: 
$(\bS_{f}, a)\models \phi$, $(\bS_{f}, b)\models \phi$, and 
for all $k$ such that $1\leq k\leq f(j)$, 
$(\bS_{f},c^{j}_k)\models \phi$.
Let 
$$ I \quadeq \set{i\in J : (\bS_{f},c^i_{f(i)}) \models \phi}.
$$
For $i\in J$, let $$g(i) \quadeq \max\set{k :  c^i_{f(i)-k +1},\ldots, c^i_{f(i)}\in
\semantics{\phi}{\bS_f}}.$$ 
Note that $j\in J$ and $g(j) = f(j)$.  For other $i\in J$, though,
$g(i)$ might well be less than $f(i)$.
Also, the key point is that all $\arrowA$-relations  out of 
$\set{c^i_1,\ldots, c^i_{f(i)-k}}\cap \semantics{\phi}{\bS_f}$
either land in this same set, or at $b$.

We claim that $(\hat{\bS}_f,a)  \equiv (\bS_g,a)$, 
$(\hat{\bS}_f,c^j_k)  \equiv (\bS_g,c^j_k)$, and
$(\hat{\bT}_{f,j},a)  \equiv (\bT_{g,j},a)$.
In all three 
of these points, we use the  same   bisimulation relation $R$:
let  $(a,\Pub)\ R\ a$,
 $(b,\Pub)\ R\ b$,
$$(c^i_k,\Pub) \ R \ c^i_{k-f(i) + g(i)}$$ 
for $i\in J$ and ${f(i)-g(i) +1} \leq  k \leq {f(i)}$,
and  
$(c^i_k,0) \ R \ a $ otherwise.

This verifies our claim.
Recall that $f(j) - k \geq|[\Pub\ \phi]\psi| \geq  |\psi|$.
We apply our induction hypothesis to $(J,j,g)$, $\psi$ and 
 $k$.
We see that 
$$(\bS_g, a) \models \phi
\quadiff
(\bS_g, c^{j}_{k}) \models \phi
\quadiff
(\bT_{g,j}, a) \models \phi.$$
In view of the bisimulation and the semantics of $[\Pub\ \phi]\psi$
detailed in (\ref{remind}),
we see that
$$(\bS_f, a) \models [\Pub\ \phi]\psi
\quadiff
(\bS_f, c^{j}_{k}) \models [\Pub\ \phi]\psi
\quadiff
(\bT_{f,j}, a) \models [\Pub\ \phi]\psi.$$
This concludes the proof of Lemma~\ref{lemma-ainfinity-main}.
\end{proof}

\begin{proof} {\bf of Theorem~\ref{theorem-negative-definability}}.
Suppose towards
a contradiction
that $\chi$ were equivalent to the set
$T\subseteq  \langAB$.
Let $f$ be the identity function on $N^+$. 
By Lemma~\ref{lemma-ainfinity-sufficient}, $(\bS_f, a)$ does 
not satisfy all of the sentences in $T$.
Let $\phi\in T$ be such that $(\bS_f, a) \models\nott\phi$.
Let $j = |\nott\phi| + 1$, so that $f(j) -1 \geq  |\nott \phi|$.
By Lemma~\ref{lemma-ainfinity-main}, 
$(\bT_{f,j}, a)\models\nott \phi$.
But  again by  
 Lemma~\ref{lemma-ainfinity-sufficient}, $(\bT_{f,j}, a)$
satisfies all sentences in $T$.  This is a contradiction.
\end{proof}

\subsection{$\lang(\bSigmapub)$ lacks the finite model property}
\label{nofmp}

We conclude with a result on {\em iterating\/} epistemic
actions.  We gave a semantics of sentences of the form
$[\alpha^*]\phi$ earlier in the paper.  The ability to
iterate actions is useful in algorithms and even in 
informal presentations of scenarios.
As it happens,  
the iteration operation on actions makes our systems 
quite expressive, so much so that the finite model
property fails for $\lang_1(\bSigma)$ as soon as $\bSigma$
contains public announcements.

\begin{proposition}
$[\Pub\ \poss\true]^* \poss\necc\false$
is satisfiable, but not in any finite model.
\end{proposition}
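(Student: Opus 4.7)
The intuition is that the action $\Pub\ \poss\true$ publicly announces ``I have a successor,'' which semantically \emph{prunes} all dead-end worlds. The iteration $[\Pub\ \poss\true]^{*}$ therefore corresponds to repeated dead-end pruning. The formula $\poss\necc\false$ asserts the presence of a dead-end successor, so the full sentence $[\Pub\ \poss\true]^{*}\poss\necc\false$ asserts that a dead-end successor is visible at every finite pruning stage. In a finite model such pruning must stabilize, and at stabilization no dead ends remain, forcing the sentence to fail; in an infinite model we can keep regenerating dead ends forever.

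\textbf{Satisfiability.} I would exhibit the state model $\bS$ with carrier $\mathbb{N}\cup\{\infty\}$ (single agent, so we drop the subscript from $\poss$ and $\necc$), with the single accessibility relation $x\rightarrow y$ iff $y<x$ (where $n<\infty$ for all $n\in\mathbb{N}$), and with any atomic valuation at all. Writing $\bS_0=\bS$ and $\bS_{k+1}$ for the update of $\bS_k$ by $\Pub\ \poss\true$, I would prove by induction on $k$ that $\bS_k$ (up to the relevant isomorphism built into the update product) has carrier $\{k,k+1,\ldots\}\cup\{\infty\}$ with the same $<$-style ordering. The key verifications are: $k$ itself has no successors in $\bS_k$, hence witnesses $\necc\false$ there; $\infty$ has $k$ as a successor and therefore witnesses $\poss\necc\false$ in $\bS_k$; and $\infty$ satisfies the precondition $\poss\true$ in every $\bS_k$, so it is never pruned. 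Applying Lemma~\ref{proposition-reduction-diamond-star}-style unwinding of $[\alpha^{*}]$ as a conjunction over all $k$ of $[\alpha^{k}]$, the point $\infty$ satisfies $[\Pub\ \poss\true]^{*}\poss\necc\false$.

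\textbf{No finite model.} Suppose $\bT$ is finite and $t\in T$ satisfies our sentence. Defining $\bT_0=\bT$ and $\bT_{k+1}=\bT_{k}\otimes(\bSigmapub,\Pub,\semanticsnm{\poss\true})$, finiteness of $\bT$ forces the descending sequence of carriers to stabilize at some $k_0$ with $\bT_{k_0}=\bT_{k_0+1}$. The definition of the public announcement update then guarantees that every world of $\bT_{k_0}$ has at least one successor in $\bT_{k_0}$, so no world of $\bT_{k_0}$ satisfies $\necc\false$. Now I would split on whether $t$ survives to stage $k_0$. If yes, $t\in\bT_{k_0}$ has no successor satisfying $\necc\false$, contradicting the $k_0$-th conjunct $[\Pub\ \poss\true]^{k_0}\poss\necc\false$. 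If no, let $k<k_0$ be minimal with $t\notin\bT_{k+1}$; then $t\in\bT_k$ but $t$ has no successor in $\bT_k$, so $t$ violates $\poss\necc\false$ (which requires at least one successor) in $\bT_k$, again contradicting the $k$-th conjunct of $[\Pub\ \poss\true]^{*}\poss\necc\false$.

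\textbf{Main obstacle.} The substantive step is the inductive bookkeeping for the infinite model: showing that the update product $\bS_k\otimes(\bSigmapub,\Pub,\semanticsnm{\poss\true})$ really is (isomorphic to) the subframe obtained by deleting the minimum natural, and that this survives being wrapped inside the syntactic translation of $[\pi^{*}]\phi$ through Lemma~\ref{proposition-reduction-diamond-star}. Once the semantic equality $\semanticsnm{[\Pub\ \poss\true]^{k}\phi}_{\bS}=\{s: s\text{ survives to }\bS_k\text{ and }s\in\semanticsnm{\phi}_{\bS_k}\}$ is checked, both directions of the proposition fall out from the finiteness dichotomy above.
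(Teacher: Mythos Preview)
Your proof is correct and follows essentially the same approach as the paper: the update $\Pub\ \poss\true$ prunes dead ends, finiteness forces the pruning to stabilize with no dead ends remaining, and an infinite model witnesses satisfiability. Your witness $N\cup\{\infty\}$ under $<$ is in fact simpler than the paper's (decreasing finite sequences of naturals with one-point extension as the accessibility relation); one small quibble is that the unwinding of $[\pi^*]\phi$ as $\bigwedge_k [\pi^k]\phi$ follows directly from the semantic definition of Kleene star on updates in Section~\ref{section-updates} rather than from Lemma~\ref{proposition-reduction-diamond-star}.
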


\begin{proof}
A state $s$ in a model $\bS$ is called an
{\em end state\/} if $s$ has no successors.
For each model $\bS$, let $\bS'$ be the same
model, except with the end states removed.
$\bS'$ is isomorphic to what we have written
earlier as $\bS\otimes(\Pub\  \poss\true)$, that
is, the result of publically announcing that
some world is possible.  So we see that
our sentence 
$[\Pub\ \poss\true]^* \poss\necc\false$
holds of $s$ just in case the following holds
for all $n$:
\begin{enumerate}
\item $s\in \bS^{(n)}$, the $n$-fold
application of the derivative operation to $\bS$.
\item $s$ has some child which is an end node
(and hence $t$ would not belong to $\bS^{(n+1)}$).
\end{enumerate}
It is clear that any model of 
$[\Pub\ \poss\true]^*  \poss\necc\false$ 
must be infinite, since the sets
$S^{(n+1)}\setminus S^{(n)}$ are pairwise disjoint
and nonempty.

There are  well-known models of 
$[\Pub\ \poss\true]^*  \poss\necc\false$.
One would be the set of decreasing sequences
of natural numbers, with $s \rightarrow t$ iff
$t$ is a one-point extension of $s$.
The end nodes are the sequences that end in $0$.
For each $n$, $\bS^{(n)}$ is the submodel
consisting of the sequences which end in $n$.
\end{proof}

This has several dramatic
consequences for this work.  First,
it means that the logics cannot be translated into
the modal mu-calculus, since that logic is known to
have the finite model property.  More importantly,  we have the following
extension of this result:

\begin{theorem} [Miller and Moss~\cite{miller}]
Concerning $\lang(\bSigmapub)$:
\begin{enumerate}
\item $\set{\phi : \phi \mbox{ is satisfiable}}$ is $\Sigma^1_1$-complete.
\item
$\set{\phi : \phi \mbox{ is satisfiable on a finite (tree)
model}}$ is $\Sigma^0_1$-complete.
\end{enumerate}
\label{theorem-withmiller}
\end{theorem}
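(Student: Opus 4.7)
The plan is to establish the two completeness results separately, with the classification (upper) bounds coming from direct semantic analysis and the hardness (lower) bounds from encodings of canonical problems at each level of the hierarchy.

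For the upper bound in part~(1), I would show that satisfiability is $\Sigma^1_1$ by noting that a pointed state model $(\bS,s)$ can be coded by a set of natural numbers, and that the satisfaction relation $(\bS,s) \models \phi$ is arithmetic in that set modulo the iteration construct $\pi^*$. The crucial observation is that for $\bSigmapub$ the update $\bS \otimes (\Pub\ \phi)$ is (up to isomorphism) the submodel of $\bS$ consisting of worlds where $\phi$ holds, so every model reachable by iterated public announcement is a submodel of $\bS$; this means that the intermediate models arising in the semantics of $[\pi^*]\phi$ can be quantified over arithmetically (as subsets of $S$) once the initial model is fixed. Hence ``there exists a pointed state model satisfying $\phi$'' takes the form $\exists X \, R(X,\phi)$ with $R$ arithmetic, giving $\Sigma^1_1$. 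For the upper bound in part~(2), satisfaction in a finite model is computable by straightforward evaluation (iteration is a fixpoint computation over finitely many subsets of a finite state set, which terminates), and finite models are effectively enumerable, so the finite-satisfiability set is recursively enumerable.

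For the $\Sigma^0_1$-hardness in part~(2), I would reduce from the halting problem. Given a Turing machine $M$, build a sentence $\phi_M \in \lang(\bSigmapub)$ whose finite tree models are in effect encodings of terminating $M$-computations: atomic sentences would represent tape cells and head positions at successive time steps, and the iteration $[\Pub\ \psi]^*$ would be used to prune models that fail to represent valid computation steps, much as in the proposition preceding the theorem where $[\Pub\ \poss\true]^*\poss\necc\false$ forces well-founded derivative sequences. Arranging the encoding so that only computations reaching the halting state survive the iterated relativization would give $\phi_M$ finitely satisfiable iff $M$ halts.

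The main obstacle is the $\Sigma^1_1$-hardness in part~(1), and it is where the machinery of the preceding proposition on iteration pays off most directly. I would reduce from the classical $\Sigma^1_1$-complete problem of deciding whether a recursive subtree $T$ of $\omega^{<\omega}$ has an infinite branch. Given (an index for) $T$, one constructs a sentence $\phi_T$ which asserts the existence of a pointed model encoding a branch of $T$: the iterated announcement $[\Pub\ \poss\true]^*$ forces, as in the proposition, arbitrarily long descending chains of ``still-alive'' worlds, while additional atomic predicates recording the next-coordinate of a node and further modal constraints ensure that the alive worlds at depth $n$ code sequences in $T$ of length $n$, consistent with the parent-alive worlds. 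A model satisfies $\phi_T$ iff the resulting system of finite coherent approximations can be assembled into an infinite branch; since all the intermediate ``alive'' sets are submodels of the original $\bS$, this matches exactly what a $\Sigma^1_1$ branch-existence statement asks. The delicate step will be writing the coherence constraints between successive derivatives in pure $\lang(\bSigmapub)$ (only atomic-preserving public announcements are available) and verifying that no spurious non-tree model can satisfy $\phi_T$; I would handle this by including enough propositional bookkeeping symbols that the tree structure is definable from the atomic valuations alone, and then using the Action Rule–style interaction between $[\Pub\ \cdot]^*$ and $\necc^*$ from Section~\ref{section-langactionstar} to force the coherence. Once $\phi_T$ is in hand, $T$ has an infinite branch iff $\phi_T$ is satisfiable, which, combined with the $\Sigma^1_1$ upper bound, gives the completeness claim.
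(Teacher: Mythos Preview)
The paper does not prove this theorem; it is stated with attribution to Miller and Moss~\cite{miller} and no argument is given beyond the remark immediately following the statement that ``these negative results go via reduction from domino problems.'' So there is no proof in the paper to compare your proposal against in detail.

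That said, your approach differs from what the paper indicates: you propose reductions from the halting problem (for $\Sigma^0_1$-hardness) and from the infinite-branch problem for recursive subtrees of $\omega^{<\omega}$ (for $\Sigma^1_1$-hardness), whereas the cited work uses domino (tiling) problems. Both families are standard sources of complete problems at these levels, and in principle either route can be made to work. The domino approach tends to be cleaner for modal encodings because local tile-matching constraints map naturally onto modal neighborhood conditions; your tree reduction, by contrast, must encode unbounded branching indices from $\omega^{<\omega}$ using only finitely many atomic sentences, and you do not say how this is to be done.

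One specific correction: your appeal to ``the Action Rule--style interaction between $[\Pub\ \cdot]^*$ and $\necc^*$ from Section~\ref{section-langactionstar}'' is misplaced. The Action Rule is a proof rule in the axiomatic system for $\lang_1(\bSigma)$, used to derive sentences of the form $[\alpha]\necc^*_{\CC}\psi$; it has no bearing on semantic hardness reductions, and $\lang_1(\bSigma)$ does not even contain the iteration construct $\pi^*$ that is doing the work here. Whatever coherence constraints your encoding needs must be expressed as sentences of $\lang(\bSigmapub)$ and verified semantically, not via the proof system.
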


Indeed, these even hold when $\lang(\bSigmapub)$ is
replaced by small fragments, such as the fragment built from
$[\Pub\ \poss\true]$, $\poss^*$, $\poss$, and boolean connectives
(yet without atomic sentences); or,
the fragment built from arbitrary iterated relativizations
and modal logic (without $\poss^*$).
  These negative results go via 
reduction from domino problems.

The upshot is that   logics  which allow for arbitrary finite
iterations of  epistemic actions are not going to be  axiomatizable.

\bigskip

We feel that our  results on expressive power 
are just a sample of what could be done in this area.
We did not investigate the next natural questions:
Do announcements with suspicious outsiders extend the
expressive power of modal logic with all secure private
announcements and common knowledge operators?  And then
do announcements with common knowledge of suspicion
add further expressive power?


\end{document}